\DeclareMathOperator{\diag}{diag}
\DeclareMathOperator{\sinc}{sinc}
\newcommand{\bigo}[1]{\ensuremath{O\left(#1 \right)}}
\newcommand{\exponential}[1]{\ensuremath{{\mathrm e}^{#1}}}
\newcommand{\iunit}{\ensuremath{\mathrm{i}}}
\newcommand{\bydefinition}{\mathrm{def}}
\newcommand{\diff}{\mathrm{d}}
\renewcommand{\vec}[1]{\ensuremath{\mathbf{#1}}}
\renewcommand{\vec}[1]{\ensuremath{\bm{#1}}}%
\newcommand{\tensorq}[1]{\ensuremath{\mathbb{#1}}}      % tensorial quantity
\newcommand{\tensorc}[1]{\ensuremath{\mathrm{#1}}}      % tensorial quantity components  
\newcommand{\transpose}[1]{#1^\top}
\newcommand{\inverse}[1]{#1^{-1}}
\newcommand{\identity}{\ensuremath{\tensorq{I}}} % identity
\newcommand{\generictensor}{{\tensorq{A}}}
\newcommand{\diracdelta}{\delta}
\newcommand{\UnitTriangle}{U_{\mathrm{Triangle}}}
\newcommand{\R}{\ensuremath{{\mathbb R}}}
\newcommand{\N}{\ensuremath{{\mathbb N}}}
\newcommand{\Z}{\ensuremath{{\mathbb Z}}}
\newcommand{\ccontinuous}{c_{\text{C}}}
\newcommand{\pd}[2]{\ensuremath{\frac{\partial {#1}}{\partial {#2}}}}
\newcommand{\ppd}[2]{\ensuremath{\frac{\partial^2 {#1}}{\partial {#2^2}}}}
\newcommand{\dd}[2]{\ensuremath{\frac{\diff {#1}}{\diff {#2}}}}
\newcommand{\ddd}[2]{\ensuremath{\frac{\diff^2 {#1}}{\diff {#2}^2}}}
\newcommand{\vectordot}[2]{\ensuremath{#1 \bullet #2}}
\newcommand{\tensorschur}[2]{\ensuremath{#1 \circ #2}} % Schur/Hadamard product
\newcommand{\fouriertransforms}{{\mathcal F}}
\newcommand{\convolution}[2]{#1 \ast #2}
\newcommand{\FourierTransform}[1]{\fouriertransforms \left[#1\right]}
\newcommand{\InverseFourierTransform}[1]{\inverse{\fouriertransforms} \left[#1\right]}
\newcommand{\FourierTransformSemidiscrete}[1]{\fouriertransforms_h \left[#1\right]}
\newcommand{\InverseFourierTransformSemidiscrete}[1]{\inverse{\fouriertransforms}_h \left[#1\right]}
\newcommand{\discreteconvolution}[2]{#1 \ast_h #2}
\renewcommand{\tensorschur}[2]{\ensuremath{#1 \odot #2}} % Schur/Hadamard product
\newcommand{\FourierTransformDiscrete}[1]{\fouriertransforms_h^h \left[ #1 \right]}
\newcommand{\InverseFourierTransformDiscrete}[1]{{\inverse{\fouriertransforms}}_h^h \left[ #1 \right]}
\newcommand{\discreteperiodicconvolution}[2]{#1 \ast_h^h #2}
\newcommand{\FourierTransformDiscreteMatrix}{\tensorq{F}_h^h}
\newcommand{\sinetransforms}{\mathcal{S}}
\newcommand{\SineTransformDiscrete}[1]{\sinetransforms_h^h \left[ #1 \right]}
\newcommand{\InverseSineTransformDiscrete}[1]{{\inverse{\sinetransforms}}_h^h \left[ #1 \right]}
\newcommand{\SineTransformDiscreteMatrix}{\tensorq{S}_h^h}
\newcommand{\Sdiracdelta}{S}
\newcommand{\jmatrix}{\tensorq{J}}
\declaretheorem[shaded]{theorem}
\declaretheorem[shaded,sibling=theorem]{definition}
\declaretheorem[shaded,sibling=theorem]{corollary}
\declaretheorem[shaded,sibling=theorem]{lemma}
\numberwithin{equation}{section}
\providecommand{\solutionname}{Solution}
\definecolor{codegreen}{rgb}{0,0.6,0}
\definecolor{codegray}{rgb}{0.5,0.5,0.5}
\definecolor{codepurple}{rgb}{0.58,0,0.82}
\definecolor{backcolour}{rgb}{0.95,0.95,0.92}
\lstdefinestyle{mycodestyle}{
backgroundcolor=\color{backcolour},   
commentstyle=\color{codegreen},
keywordstyle=\color{magenta},
numberstyle=\tiny\color{codegray},
stringstyle=\color{codepurple},
frame=single,
% linewidth=0.8\textwidth,
basicstyle=\ttfamily,
numberstyle=\tiny,
breakatwhitespace=false,         
breaklines=true,                 
captionpos=b,                    
keepspaces=true,                 
numbers=left,                    
numbersep=5pt,                  
showspaces=false,                
showstringspaces=false,
showtabs=false,                  
tabsize=2
}
\numberwithin{equation}{section}
\title[Discrete versus continuous]{Discrete versus continuous---linear lattice models and their exact continuous counterparts}
\date{\today}
\author{Lorenzo Fusi}
\address{
  Dipartimento di Matematica e Informatica ``Ulisse Dini''
  Università degli Studi di Firenze\\
  Viale Morgagni 67/A\\
  Firenze\\
  I 50134\\
  Italy
}
\email{lorenzo.fusi@unifi.it}
\author{Oliver K\v{r}enek}
\address{
Faculty of Mathematics and Physics\\
Charles University\\
Sokolovsk\'a 83\\
Praha 8 -- Karl\'{\i}n\\
CZ 186\;75\\
Czech Republic
}
\email{oliver.krenek741@student.cuni.cz}
\author{V\'{\i}t Pr\r{u}\v{s}a}
\address{
Faculty of Mathematics and Physics\\
Charles University\\
Sokolovsk\'a 83\\
Praha 8 -- Karl\'{\i}n\\
CZ 186\;75\\
Czech Republic
}
\email{prusv@karlin.mff.cuni.cz}
\author{Casey Rodriguez}
\address{
University of North Carolina\\
Department of Mathematics\\
329 Phillips Hall \\
Chapel Hill NC 27599 \\
United States of America
}
\email{crodrig@email.unc.edu}
\author{Rebecca Tozzi}
\address{
  Dipartimento di Matematica e Informatica ``Ulisse Dini''
  Università degli Studi di Firenze\\
  Viale Morgagni 67/A\\
  Firenze\\
  I 50134\\Italy
}
\email{rebecca.tozzi@unifi.it}
\author{Martin Vejvoda}
\address{
Faculty of Mathematics and Physics\\
Charles University\\
Sokolovsk\'a 83\\
Praha 8 -- Karl\'{\i}n\\
CZ 186\;75\\
Czech Republic
}
\email{martin.vejvoda@mff.cuni.cz}
\thanks{V\'{\i}it Pr\r{u}\v{s}a thanks the Czech Science Foundation, grant no. 25-16592S, for its support. Martin Vejvoda thanks the Charles University Grant Agency, grant no. 306-10/252731 and grant no. UNCE24/SCI/005, for its support.}
\keywords{mathematical modelling, elasticity, lattice models, vibrations, dispersion relation}
\subjclass[2000]{74B05, %   	Classical linear elasticity
  65M06, %   	Finite difference methods for initial value and initial-boundary value problems involving PDEs
  65M70%   	Spectral, collocation and related methods for initial value and initial-boundary value problems involving PDEs
% See http://www.ams.org/msc/ for Mathematics Subject Classification
}
\begin{document}

\begin{abstract}
  We review and study the correspondence between discrete linear lattice/chain models of interacting particles and their continuous counterparts represented by linear partial differential equations. In particular, we study the correspondence problem for linear nearest neighbour interaction lattice models as well as for linear multiple-neighbour interaction lattice models, while we gradually proceed from infinite lattices to  periodic lattices and finally to finite lattices with fixed ends/zero Dirichlet boundary conditions. The whole study is framed as a systematic specialisation of Fourier analysis tools from the continuous to the discrete setting and vice versa, and the correspondence between the discrete and continuous models is examined primarily with regard to the dispersion relation.

%%% Local Variables:
%%% mode: LaTeX
%%% TeX-master: "discrete-versus-continuous-paper"
%%% End:

\end{abstract}

\maketitle

\begin{mdframed}[hidealllines=true, backgroundcolor=yellow]
  The manuscript has been published as \href{https://doi.org/10.1016/j.ijengsci.2026.104530}{Lorenzo Fusi, Oliver Křenek, Vít Průša, Casey Rodriguez, Rebecca Tozzi, Martin Vejvoda: Discrete versus continuous—Linear lattice models and their exact continuous counterparts, International Journal of Engineering Science, Volume 224, 2026, 104530}, \href{https://doi.org/10.1016/j.ijengsci.2026.104530}{10.1016/j.ijengsci.2026.104530}. The content of this manuscript version is identical to the published version, but with a more detailed explanations at several places. Furthermore, definitions, theorems and lemmas are typeset in visually distinct boxes in order to improve readability. This manuscript version also contains an Appendix on implementation in \textsc{Matlab} and \textsc{Wolfram Language}.
\end{mdframed}

%\addtocontents{toc}{\protect\begin{multicols}{2}} % workaround for table of contents in two columns in amsart documentclass, see also the corresponding \addtocontents{toc}{\protect\end{multicols}} line at the end of the document
\tableofcontents

%\linenumbers

\section{Introduction}
\label{sec:introduction}

We study the correspondence between discrete lattice models and their (quasi)continuous counterparts. This problem is of interest from two viewpoints. First, in continuum mechanics setting, one wants to identify a continuous model (partial differential equation) out of a discrete model describing the interaction between a chain (lattice) of discrete particles provided that interparticle distance tends to zero. This setting is referred to as a \emph{continualisation} problem. Second, in numerical analysis setting, one wants to approximate the solution of a given partial differential equation by the solution to a (finite) system of ordinary differential equations in order to solve the original partial differential equation numerically. This setting is referred to as a \emph{discretisation problem}. Clearly, the two settings are complementary, and it would be interesting to see whether techniques/approaches used in numerical analysis can be exploited in continuum mechanics and \emph{vice versa}. In what follows we answer this call. In particular, we first provide a summary of existing diverse approaches in both fields, and then we gradually proceed to new results concerning the correspondence between the discrete and continuous models. 

In the abstract setting the problem is the following. Assume that the function $u_h(t, x)$ solves a continuous problem
\begin{equation}
  \label{eq:1}
  \ppd{u_h}{t} - \mathcal{L}_h u_h = 0,
\end{equation}
wherein $\mathcal{L}_h$ is a linear operator acting in the spatial/physical domain. Assume further that a (possibly infinite) collection of functions $\{u_{h, j} (t)\}_{j=-\infty}^{+\infty}$ represents grid values of a function, and that the collection $\{u_{h, j} (t)\}_{j=-\infty}^{+\infty}$ solves the discrete system of ordinary differential equations
\begin{subequations}
  \label{eq:2}  
  \begin{equation}
    \label{eq:3}
    \ddd{u_{h, j}}{t} - \sum_{k=-\infty}^{\infty} \tensor{\left(\tensorc{L}_h \right)}{_j_k}u_{h, j} = 0,
  \end{equation}
  wherein $\{\tensor{\left(\tensorc{L}_h \right)}{_j_k}\}_{j, k = -\infty}^{+\infty}$ are elements of (possibly infinite) matrix $\tensorq{L}_h$. (If the elements of collection $\{u_{h, j} (t)\}_{j=-\infty}^{+\infty}$ are interpreted as displacements of discrete particles from their equilibrium positions, then $\tensorc{L}_h$ describes the interactions between the particles in a lattice.)  In the matrix form the problem~\eqref{eq:3} would thus read
  \begin{equation}
    \label{eq:4}
    \ddd{\vec{u}_h}{t} - \tensorq{L}_h \vec{u}_h = \vec{0}.
  \end{equation}
\end{subequations}
Finally, assume that we have a reconstruction procedure that allows us to reconstruct the continuous function~$u_h(t, x)$ out of the discrete grid values $\{u_{h, j} (t)\}_{j=-\infty}^{+\infty}$ and vice versa---one can think of an interpolation and a sampling at a given spatial grid.

If we have such a reconstruction procedure, we can ask whether the function~$u_h(t, x)$ reconstructed from the solution~$\{u_{h, j} (t)\}_{j=-\infty}^{+\infty}$ to the discrete system~\eqref{eq:3} with a given interaction matrix $\tensorq{L}_h$ is an \emph{exact} solution to some partial differential equation of type~\eqref{eq:1}, and whether it is possible to identify the corresponding operator~$\mathcal{L}_h$. This is the~\emph{continualisation problem} wherein the number of particles/distance between the discrete particles $h$ is \emph{a small but yet finite number}, and it is a problem investigated in continuum mechanics.

On the other hand, assume that we want to find a solution $u_h$ to~\eqref{eq:1}, but we can only solve discrete problems of type~\eqref{eq:3}. The question is how should we design the reconstruction procedure and the elements of the interaction matrix~$\tensorq{L}_h$ such that the solution $\tilde{u}_h$ obtained by the reconstruction of the discrete solution inherits qualitative properties of the \emph{exact} solution $u_h$ to the partial differential equation~\eqref{eq:1}. This is the \emph{discretisation problem} encountered in numerical analysis.

Clearly, the \emph{discretisation problem} is similar to the \emph{continualisation problem}. The difference is in what is perceived as the ground truth---the partial differential equation represented by the operator $\mathcal{L}_h$ or the lattice model represented by the matrix~$\tensorq{L}_h$---and what is an approximation---the lattice model represented by the matrix~$\tensorq{L}_h$ or the partial differential equation represented by the operator $\mathcal{L}_h$.

We frame our study of continualisation/discretetisation problem as a study on systematic use of Fourier transform,  while the correspondence between the discrete and continuous models is examined primarily with regard to the dispersion relation/eigenvalues of $\tensorq{L}_h$ and $\mathcal{L}_h$, see Section~\ref{sec:discr-cont-probl} for rationale of this approach. We gradually proceed from infinite lattices, Section~\ref{sec:infinite-lattice}, to periodic lattices, Section~\ref{sec:periodic-lattice}, and finally to finite lattices with fixed ends/zero Dirichlet boundary conditions, Section~\ref{sec:finite-lattice-with}. Most of the Fourier transform based tools used in sections on infinite/periodic lattices are classical, see, for example, \cite{vichnevetsky.r.bowles.jb:fourier}, \cite{boyd.jp:chebyshev} and \cite{trefethen.ln:spectral}, but we introduce them is a coherent form emphasising the interplay between continuous and semidiscrete/discrete versions of Fourier transform. With these known tools we derive \emph{complete} characterisation of the equivalence between continuous and discrete models for \emph{finite} interparticle distance $h$, see Theorem~\ref{thr:2} and Theorem~\ref{thr:3} respectively. To our best knowledge such theorems are not available in the literature.

Interestingly, the Fourier transform based approach can also be extended to \emph{non-periodic finite lattices} with fixed ends/Dirichlet boundary conditions, which is the setting wherein the use of Fourier based methods is rather discouraged, see, for example, \cite[Chapter 3, page 47]{shen.j.tang.t.ea:spectral}. We discuss finite lattices in Section~\ref{sec:finite-lattice-with}, and we again derive \emph{complete} characterisation of the equivalence between continuous and discrete models for \emph{finite} interparticle distance $h$, see Theorem~\ref{thr:4}. (We have complete characterisation for the nearest neighbour interaction model only, for the multiple-neighbour interaction models we only discuss the quality of eigenvalues approximation.) Finally, in Section~\ref{sec:remark-more-complex} we discuss the benefits of Fourier transform based methods in discretisation of a generic regular Sturm--Liouville operator. In particular, we propose a new and rather simple discrete sine transform based method for computation of eigenvalues of Sturm--Liouville operators, while the method seems to be competitive with the state-of-the-art specialised algorithms.

\section{A further look into discretisation and continualisation problems}
\label{sec:discr-cont-probl}
The key issues in the continualisation/discretisation problem are the best documented on a simple one-dimensional infinite lattice model of nearest neighbour interacting particles. Let us thus consider the one-dimensional infinite lattice model of equal mass particles $m$, wherein the nearest neighbours are connected with springs of stiffness $k$, see Figure~\ref{fig:simple-lattice}, while the equilibrium position of $j$-th particle is  $x_{h, j} = jh$, $h$ being the equilibrium interparticle distance. The governing equations for the longitudinal displacements $\{u_{h, j} (t)\}_{j=-\infty}^{+\infty}$ read
\begin{equation}
  \label{eq:5}
  m \ddd{u_{h, j}}{t} = k \left(u_{h, j+1} - 2 u_{h, j} + u_{h, j-1}\right).
\end{equation}
If the individual masses are given in terms of the linear density $\rho$ as $m=\rho h$, where $h$ is the interparticle distance, and if the spring stiffness $k$ scales with the interparticle distance $h$ as $k = \frac{K}{h}$, where $\rho$ and $K$ are some constants, then~\eqref{eq:5} can be rewritten as
\begin{equation}
  \label{eq:6}
  \ddd{u_{h, j}}{t} = \frac{K}{\rho} \frac{u_{h, j+1} - 2 u_{h, j} + u_{h, j-1}}{h^2}
\end{equation}
or in the matrix form as
\begin{equation}
  \label{eq:7}
  \ddd{}{t}
  \begin{bmatrix}
    \vdots \\
    u_{h, -2} \\
    u_{h, -1} \\
    u_{h, 0} \\
    u_{h, 1} \\
    u_{h, 2} \\
    \vdots
  \end{bmatrix}
  =
  \frac{K}{\rho}
  \frac{1}{h^2}
  \begin{bmatrix}
    \ddots & \ddots &  &  &  &  &  \\
    \ddots & -2 & 1 &  &  &  &  \\
           & 1 & -2 & 1 &  &  &  \\
           &  & 1 & -2 & 1 &  &  \\
           &  &  & 1 & -2 & 1 &  \\
           &  &  &  & 1 & -2 & \ddots \\
           &  &  &  &  & \ddots & \ddots
  \end{bmatrix}
  \begin{bmatrix}
    \vdots \\
    u_{h, -2} \\
    u_{h, -1} \\
    u_{h, 0} \\
    u_{h, 1} \\
    u_{h, 2} \\
    \vdots
  \end{bmatrix}
  .
\end{equation}
This is a system of type~\eqref{eq:3}. If the interparticle distance $h$ tends to zero, then one can expect that the behaviour of the discrete lattice model increasingly corresponds to the continuous one-dimensional wave equation
\begin{equation}
  \label{eq:8}
  \ppd{u}{t} = \ccontinuous^2 \ppd{u}{x}
\end{equation}
with the wave speed
\begin{equation}
  \label{eq:9}
  \ccontinuous^2 =_{\bydefinition} \frac{K}{\rho}.  
\end{equation}
(Recall that the fraction on the right-hand side of~\eqref{eq:6} is in fact the centered finite differences formula for the second derivative, hence in the limit $h \to 0+$ it converges to the second derivative $\ppd{}{x}$.) Equation~\eqref{eq:8} is the partial differential equation of type~\eqref{eq:1}. As we have already pointed out, this discrete-to-continuous transition is in the continuum mechanics literature referred to as the \emph{continualisation}, see, for example, \cite{kunin.ia:elastic*1}, \cite{challamel.n.zhang.yp.ea:discrete,challamel.n.picandet.v.ea:revisiting}.

On the other hand, if the problem of interest is the continuous wave equation~\eqref{eq:8}, then~\eqref{eq:6} can be seen as a \emph{spatial discretisation} of the wave equation by the second order centred finite differences scheme, see, for example, \cite{vichnevetsky.r.bowles.jb:fourier}, \cite{trefethen.ln:spectral}, \cite{leveque.rj:finite} or \cite{jovanovic.bs.suli.e:analysis}. (We emphasise that the time variable is kept continuous, thus from the perspective of numerical analysis we deal with a \emph{semi}discretisation of the corresponding equation.) Unfortunately, systems~\eqref{eq:6} and \eqref{eq:8} are equivalent only in the limit $h \to 0+$, while their behaviour is \emph{substantially different} for any finite $h$.

\begin{figure}[t]
  \centering
  \includegraphics[width=0.6\textwidth]{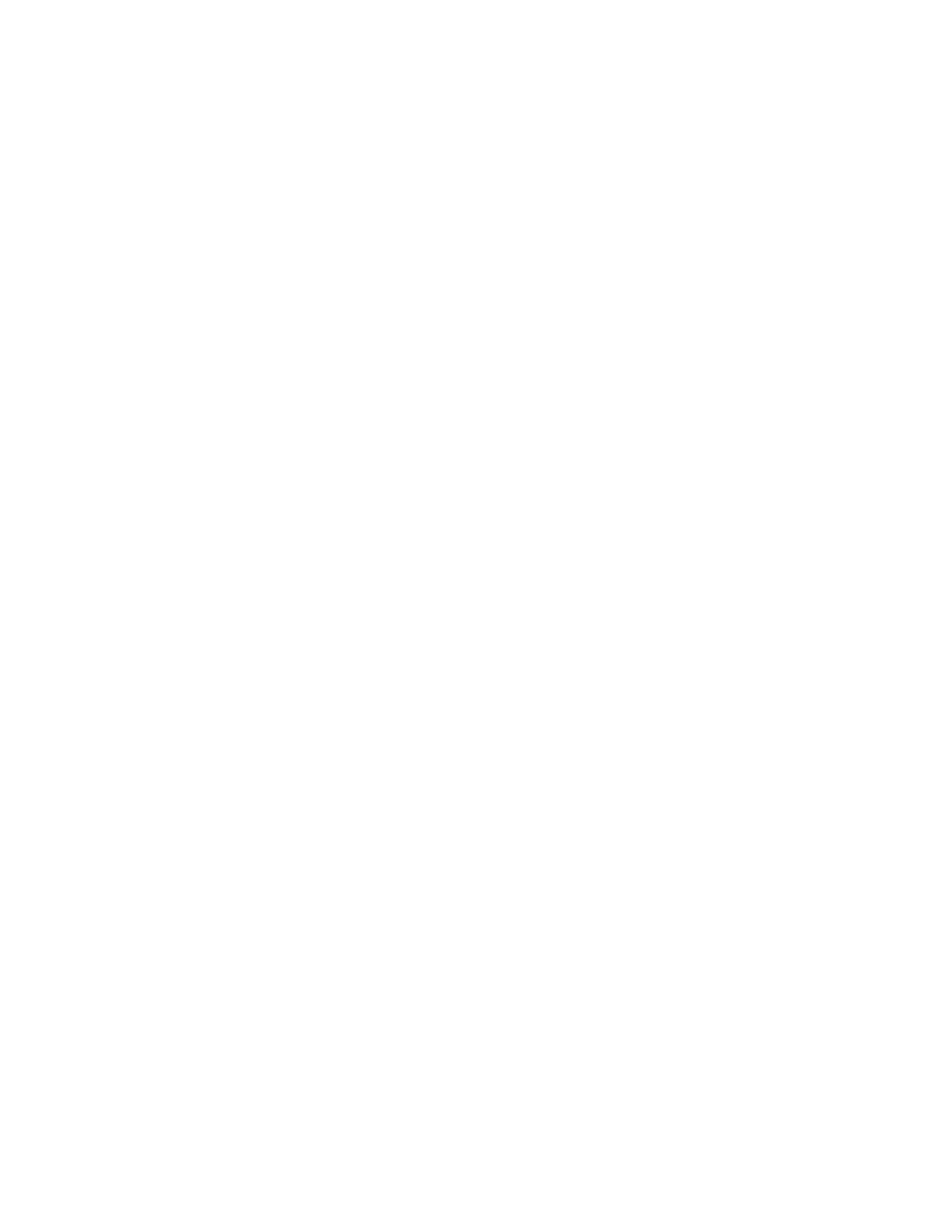}
  \caption{Simple lattice model.}
  \label{fig:simple-lattice}
\end{figure}

\subsection{Dispersion relation and wave propagation}
\label{sec:disp-relat-wave}

The key issue is that the discrete/continuous models differ regarding the wave propagation. The wave propagation is fully characterised by the~\emph{dispersion relation}, see, for example, \cite{brillouin.l:wave*2}, and the \emph{dispersion relation is different} for the discrete model~\eqref{eq:6} and for the continuous model~\eqref{eq:8}. Indeed, if we are interested in the wave of the form $\exponential{\iunit \left(\xi x - \omega t \right)}$, then a simple substitution into~\eqref{eq:6} and~\eqref{eq:8} reveals that the dispersion relation for the discrete model reads
\begin{subequations}
  \label{eq:10}
  \begin{equation}
    \label{eq:11}
    \omega^2 = 2 \ccontinuous^2 \frac{1 - \cos \left( \xi h \right)}{h^2},
  \end{equation}
  while for the continuous model we get
  \begin{equation}
    \label{eq:12}
    \omega ^2 = \ccontinuous^2 \xi^2,
  \end{equation}
\end{subequations}
see Section~\eqref{sec:search-cont-anal} for details. Clearly, if $h$ is small, then we can approximate the discrete dispersion relation~\eqref{eq:11} as $\omega^2 \approx \ccontinuous^2 \xi^2$, which is the same expression as the continuous dispersion relation~\eqref{eq:12}, and in the limit $h \to 0+$ we arrive to the same dispersion relation. \emph{However, the difference between the discrete dispersion relation~\eqref{eq:11} and the continuous~\eqref{eq:12} prevails for any \emph{finite yet arbitrary small} $h$, and, furthermore, the difference is not uniform with respect to $\xi$}. The agreement between the dispersion relations is good for small wavenumbers $\xi$, but it worsens for large wavenumbers $\xi$.

Consequently, if a precise characterisation of wave motion is necessary, then this difference is a serious \emph{qualitative} problem. For example, the dispersion relation~\eqref{eq:11} describes for any finite yet arbitrary small $h$ a \emph{dispersive wave propagation} wherein the waves of different wavelengths travel at different phase speeds, while the dispersion relation~\eqref{eq:12} predicts the same phase speed for all wavelengths.  

The problem regarding the difference in dispersion relations is known, and much effort has been devoted to mitigate it. The approach to the \emph{continualisation} problem, wherein the discrete lattice model is the ground truth, and the task is to find a continuous version of the lattice model~\eqref{eq:6}, moved in the direction of more involved partial differential equations replacing the standard wave equation~\eqref{eq:8}, see, for example, \cite{challamel.n.picandet.v.ea:revisiting,challamel.n.zhang.yp.ea:discrete}. Popular refined versions of the standard wave equation~\eqref{eq:8} are, for example,
\begin{subequations}
  \label{eq:13}
  \begin{align}
    \label{eq:14}
    \left(
    \ppd{}{t}
    -
    \ccontinuous^2
    \left(
    \ppd{}{x}
    +
    \frac{h^2}{12} \pd{^4}{x^4}
    +
    \cdots
    \right)
    \right)
    u_h
    &=
      0
      ,
    \\
    \label{eq:15}
    \left(
    \left( 1 - \frac{h^2}{12} \ppd{}{x} \right)
    \ppd{}{t}
    -
    \ccontinuous^2 \ppd{}{x}
    \right)
    u_h
    &=
      0
      .
  \end{align}
\end{subequations}
(Note that unlike in the simple wave equation~\eqref{eq:8} the interparticle distance $h$ now explicitly enters the corresponding partial differential equation.) As we shall see later, equations~\eqref{eq:13} are indeed better than the simple wave equation~\eqref{eq:8} in the sense that they lead to dispersion relations that are closer to the discrete dispersion relation~\eqref{eq:11}. 

On the other hand, the \emph{discretisation problem} wherein the ground truth is the partial differential equation (wave equation), and the objective is to develop a more sophisticated discretisation scheme that inherits the qualitative properties of the corresponding partial differential equation, has been addressed in works on numerical analysis, see, for example, \cite{trefethen.ln:group} and~\cite{tam.ckw.webb.jc:dispersion-relation-preserving}.

Having briefly introduced the wave equation/nearest neighbour lattice models, we can now make our quest more concrete. We are interested in the correspondence between lattice models of type~\eqref{eq:3} and the continuous models of type~\eqref{eq:1}, while the quantity of interest is the \emph{dispersion relation}. We shall investigate the problem in one spatial dimension, and we shall proceed from the infinite spatial domain to a spatial domain with periodic boundary conditions and finally to a bounded spatial domain with zero Dirichlet boundary conditions.

\subsection{Eigenvalues approximation} We note that the dispersion relation problem in fact boils down to the analysis of eigenvalues of the operator $\mathcal{L}_h$ and the matrix $\tensorq{L}_h$. Indeed, if we assume that the time dependence of the displacement field is of type $\exponential{\iunit \omega t}$, then~\eqref{eq:1} and~\eqref{eq:2} lead to eigenvalue problems
\begin{subequations}
  \label{eq:16}
  \begin{align}
    \label{eq:17}
    - \omega^2 \widehat{u_h} &= \mathcal{L}_h \widehat{u_h}, \\
    \label{eq:18}
    - \omega^2 \widehat{\vec{u}_h} &= \tensorq{L}_h \widehat{\vec{u}_h},
  \end{align}
\end{subequations}
where $\widehat{u_h}$ and $\widehat{\vec{u}_h}$ denote time Fourier transform of $u_h$ and $\vec{u}_h$. Clearly, the dispersion relation is well-preserved if the eigenvalues of the matrix $\tensorq{L}_h$ match the corresponding eigenvalues of $\mathcal{L}_h$. The question how to efficiently discretise spatial operators with regard to eigenvalue approximation has been studied in numerical analysis, see, for example, \cite{paine.jw.hoog.fr.ea:on}, \cite{andrew.al.paine.jw:correction} and \cite{ledoux.v.van-daele.m.ea:efficient}, and we shall exploit this perspective as well.

\section{Fourier transform}
\label{sec:fourier-transform}
Since we are interested in wave propagation and especially the dispersion relation, our main tool is naturally the Fourier transform. % We recall some basic properties of this integral transform.
The Fourier transform and the inverse Fourier transform are defined as follows.
\begin{definition}[Fourier transform]
  \label{dfn:1}
  Let $f$ and $g$ be functions such that the following integrals are well defined. We denote
  \begin{subequations}
  \label{eq:fourier-transform}
  \begin{align}
    \label{eq:19}
    \FourierTransform{f} (\xi)
    &=_{\bydefinition}
      \int_{x = - \infty}^{+ \infty}
      f(x)
      \exponential{-\iunit \xi x}
      \,
      \diff x
      ,
    \\
    \label{eq:20}
    \InverseFourierTransform{g} (x)
    &=_{\bydefinition}
      \frac{1}{2\pi}
      \int_{\xi = - \infty}^{+ \infty}
      g(\xi)
      \exponential{\iunit \xi x}
      \,
      \diff \xi
      ,
  \end{align}
  and we call the function $\FourierTransform{f}$ the Fourier transform of $f$, while $\InverseFourierTransform{g}$ is called the inverse Fourier transform of $g$.
\end{subequations}
\end{definition}
The definition is that same as in~\cite{trefethen.ln:spectral}, and we follow notation and conventions used in~\cite{trefethen.ln:spectral} whenever possible. (Following the practice in applied mathematics, we also formally employ the same definition even for distributions, in particular for the Dirac distribution. For a rigorous treatment see, for example, \cite{schwartz.l:theorie} or~\cite{grafakos.l:classical}.) Besides the Fourier transform we can also introduce the convolution of two functions.
\begin{definition}[Convolution]
  \label{dfn:2}
  Let $f$ and $g$ be functions such that the following integral is well defined,
\begin{equation}
  \label{eq:21}
  \left( \convolution{f}{g} \right) (x) =_{\bydefinition} \int_{y=-\infty}^{\infty} f(x-y) g(y) \, \diff y.
\end{equation}
The function $\convolution{f}{g}$ is called the convolution of $f$ and $g$.
\end{definition}
Further, let~$\diracdelta(x)$ denotes the Dirac distribution concentrated at the origin. (If no confusion can arise, we do not write the arguments of the functions, for example, we write $\diracdelta$ instead of~$\diracdelta(x)$ and so forth.) We also use the notation $\diracdelta_{y}(x)$ or $\diracdelta_{y}$ for the shifted Dirac distribution $\diracdelta(x - y)$ concentrated at point $y$. We recall that the Dirac distribution is an identity element with respect to the convolution operation,
  $
  \convolution{\diracdelta}{f} = f,
  $
  and that
  \begin{equation}
    \label{eq:28}
    \convolution{\diracdelta_y(x)}{f} (x) = f(x-y).
  \end{equation}
  Finally, we recall that convolution is an associative operation, $\convolution{\left( \convolution{f}{g} \right)}{h} = \convolution{f}{\left(\convolution{g}{h}\right)}$. The Fourier transform has these properties.
  \begin{lemma}[Properties of Fourier transform]
    \label{lm:4}
    Let $f$ and $g$ be functions such that the expressions on both sides of the following equations are well defined, then
    \begin{subequations}
      \label{eq:22}
      \begin{align}
        \label{eq:23}
        \FourierTransform{\dd{^nf}{x^n}} &= \left( \iunit \xi \right)^n \FourierTransform{f}, \\
        \label{eq:24}
        \FourierTransform{f(x-y)} &= \exponential{- \iunit \xi y} \FourierTransform{f}, \\
        \label{eq:25}
      \FourierTransform{\convolution{f}{g}} &= \FourierTransform{f} \FourierTransform{g}, \\
        \label{eq:26}
        \FourierTransform{f g}  &= \convolution{\FourierTransform{f}}{\FourierTransform{g}}, \\
        \label{eq:27}
        1 &=  \FourierTransform{ \diracdelta}, \\
        \label{eq:29}
        \FourierTransform{\exponential{\iunit \eta x}} &= 2 \pi \diracdelta_{\eta}.
      \end{align}
    \end{subequations}
  \end{lemma}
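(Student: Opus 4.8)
The plan is to verify each of \eqref{eq:23}--\eqref{eq:29} directly from the defining integrals \eqref{eq:19}, \eqref{eq:20} and \eqref{eq:21}, treating the distributional identities \eqref{eq:27} and \eqref{eq:29} formally through the sifting property of $\diracdelta$, as announced in the text preceding the lemma. Throughout, the hypothesis that ``both sides are well defined'' is read as the statement that $f$ and $g$ decay fast enough at $\pm\infty$ for the integration by parts below to have no boundary contribution and for every application of Fubini's theorem to be legitimate; no attempt is made to pin down optimal function spaces.

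First I would dispatch the two elementary identities. For \eqref{eq:24}, the substitution $x \mapsto x - y$ in $\int f(x-y)\exponential{-\iunit\xi x}\,\diff x$ pulls out the factor $\exponential{-\iunit\xi y}$ and leaves $\FourierTransform{f}(\xi)$. For \eqref{eq:23} with $n = 1$, integrating $\int \dd{f}{x}\,\exponential{-\iunit\xi x}\,\diff x$ by parts once kills the boundary term by the decay assumption, while differentiating $\exponential{-\iunit\xi x}$ supplies the factor $\iunit\xi$; the general case follows by an immediate induction on $n$.

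Next I would treat the two convolution identities by writing out the double integrals. For \eqref{eq:25}, insert $\exponential{-\iunit\xi x} = \exponential{-\iunit\xi(x-y)}\exponential{-\iunit\xi y}$ into $\FourierTransform{\convolution{f}{g}}(\xi) = \int\!\!\int f(x-y)\,g(y)\,\exponential{-\iunit\xi x}\,\diff y\,\diff x$, exchange the order of integration, and substitute $x \mapsto x - y$ in the inner integral; the integral then factors as $\FourierTransform{f}(\xi)\,\FourierTransform{g}(\xi)$. Identity \eqref{eq:26} is the dual: expand one factor of $fg$ by the inversion formula \eqref{eq:20}, interchange integrals, and recognise the remaining inner integral as a Fourier transform, so that multiplication in the $x$-variable becomes convolution in the $\xi$-variable (with the $\tfrac{1}{2\pi}$ weight coming from the normalisation chosen in \eqref{eq:19}--\eqref{eq:20}).

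Finally, \eqref{eq:27} is immediate from the sifting property: $\FourierTransform{\diracdelta}(\xi) = \int \diracdelta(x)\,\exponential{-\iunit\xi x}\,\diff x = \exponential{0} = 1$. Identity \eqref{eq:29} is then its inversion: applying $\InverseFourierTransform{\cdot}$ to $2\pi\diracdelta_{\eta}$ and using the sifting property in the $\xi$-integral of \eqref{eq:20} returns $\exponential{\iunit\eta x}$, whence $\FourierTransform{\exponential{\iunit\eta x}} = 2\pi\diracdelta_{\eta}$; equivalently it restates the completeness relation $\int \exponential{\iunit(\eta - \xi)x}\,\diff x = 2\pi\diracdelta(\xi - \eta)$. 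The only step that needs genuine care rather than symbol pushing is the justification of the Fubini exchanges in \eqref{eq:25}--\eqref{eq:26} together with the meaning of the manipulations in \eqref{eq:27}--\eqref{eq:29}; for the latter the rigorous underpinning is the distribution theory referenced earlier (\cite{schwartz.l:theorie}, \cite{grafakos.l:classical}), which is precisely why the lemma is phrased with a blanket well-definedness hypothesis rather than sharp regularity assumptions. \emph{I expect this to be the main (and essentially the only) obstacle;} the six computations themselves are routine.
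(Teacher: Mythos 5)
The paper never proves Lemma~\ref{lm:4}: it states the identities and defers to standard references (\cite{bracewell.rn:fourier}, \cite{grafakos.l:classical}). Your derivations are exactly the standard ones those references contain --- substitution for \eqref{eq:24}, integration by parts plus induction for \eqref{eq:23}, Fubini for \eqref{eq:25}, inversion plus Fubini for \eqref{eq:26}, and the sifting property for \eqref{eq:27} and \eqref{eq:29} --- and they are correct, so there is nothing to compare at the level of method.

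One concrete point does deserve attention. With the paper's normalisation \eqref{eq:19}--\eqref{eq:21}, the computation you sketch for the multiplication-to-convolution rule yields
\begin{equation*}
  \FourierTransform{f g}(\xi)
  = \frac{1}{2\pi}\int_{\eta=-\infty}^{+\infty} \FourierTransform{f}(\eta)\,\FourierTransform{g}(\xi-\eta)\,\diff\eta
  = \frac{1}{2\pi}\left(\convolution{\FourierTransform{f}}{\FourierTransform{g}}\right)(\xi),
\end{equation*}
so the factor $\frac{1}{2\pi}$ you mention only parenthetically is genuinely there, and it is absent from the printed identity \eqref{eq:26}. Either \eqref{eq:26} carries a typo or a different convolution normalisation is tacitly used on the Fourier side; your write-up should say explicitly which, rather than asserting \eqref{eq:26} verbatim while deriving the version with the prefactor. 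A quick consistency check confirms this: taking $f=g=\exponential{\iunit\eta x}$ and using \eqref{eq:29}, the left-hand side of \eqref{eq:26} is $\FourierTransform{\exponential{2\iunit\eta x}}=2\pi\diracdelta_{2\eta}$ while the right-hand side as printed is $\convolution{2\pi\diracdelta_{\eta}}{2\pi\diracdelta_{\eta}}=4\pi^2\diracdelta_{2\eta}$, off by exactly $2\pi$. Everything else in your proposal --- including your reading of the blanket well-definedness hypothesis and the observation that the only genuine work lies in justifying the Fubini exchanges and the distributional manipulations --- is fine.
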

  Formulae summarised in~\eqref{eq:22} are the well known derivative, shift and convolution-to-multiplication theorems for the Fourier transform, see, for example, \cite{bracewell.rn:fourier} or~\cite{grafakos.l:classical}.

A notable property of the Fourier transform is that it ``diagonalises'' all derivatives in the sense that the differentiation reduces to the multiplication by powers of the wavenumber $\xi$. In other words, we can formally solve the eigenvalue problem
\begin{equation}
  \label{eq:30}
  \dd{^nf}{x^n} = \lambda f
\end{equation}
as follows. We take the Fourier transform of~\eqref{eq:30}, which reduces it to
\begin{equation}
  \label{eq:31}
  \left(\left( \iunit \xi \right)^n - \lambda\right) \FourierTransform{f} = 0,
\end{equation}
where we have used the derivative rule~\eqref{eq:23}. This equation in the Fourier space is formally solved provided that we set
\begin{equation}
  \label{eq:32}
  \FourierTransform{f} = \diracdelta_\eta (\xi), 
\end{equation}
where $\diracdelta_\eta (\xi)$ denotes the shifted Dirac function in the Fourier space, $\diracdelta_\eta (\xi)= \diracdelta(\xi - \eta)$, with $\eta$ being a fixed but otherwise arbitrary wavenumber $\eta \in \R$. The corresponding eigenvalue is then
\begin{equation}
  \label{eq:33}
  \lambda = \left(\iunit \eta \right)^n.  
\end{equation}
The eigenfunctions $f$ in problem~\eqref{eq:30} are then obtained by the inverse Fourier transform of $\diracdelta_\eta (\xi)$, see~\eqref{eq:32}, and we get
\begin{equation}
  \label{eq:34}
  f = \frac{1}{2 \pi} \exponential{\iunit \eta x}.
\end{equation}
Note that \emph{all derivative operators} $\dd{^n}{x^n}$, $n \in \N$, \emph{share the same eigenfunctions} $f$. 

Furthermore, the formal eigenvalue problem for the linear operator defined as the convolution with a given function $g$ reads
\begin{equation}
  \label{eq:35}
  \convolution{g}{f} = \lambda f, 
\end{equation}
and we can solve this eigenvalue problem using the Fourier transform as well. We take the Fourier transform, and we get
\begin{equation}
  \label{eq:36}
  \left(\FourierTransform{g} - \lambda\right) \FourierTransform{f} = 0
\end{equation}
where we have used the convolution-to-multiplication property~\eqref{eq:25}. This equation in the Fourier space is formally solved provided that we again set
\begin{equation}
  \label{eq:37}
  \FourierTransform{f} = \diracdelta_\eta (\xi),
\end{equation}
while the corresponding eigenvalue is now obtained by the evaluation of Fourier transform of $g$ and the corresponding wavenumber,
\begin{equation}
  \label{eq:38}
  \lambda = \left. \FourierTransform{g} \right|_{\xi = \eta}.
\end{equation}
The operator ``convolution with a known function $g$'' thus formally has the same eigenfunctions as (all) derivative operators of arbitrary order, the operators differ in eigenvalues only. This is not surprising since the derivative of a function can be in fact expressed as the convolution with the derivative of the Dirac distribution, for example
\begin{equation}
  \label{eq:39}
  \ddd{f}{x} = \convolution{\ddd{\diracdelta}{x}}{f}.
\end{equation}

Naturally all the manipulations shown above are purely formal as work with the Dirac distribution in a rather naive way. However, the formal infinite dimensional manipulations directly suggest which manipulations could be of interest on the discrete (finite dimensional) level, and they give us hints concerning the eigenvalues/eigenvectors of the corresponding finite dimensional operators (matrices). As we shall see preserving an analogue of the ``diagonalisation'' property of Fourier transform on the discrete level is crucial in order to get a good correspondence between the discrete and continuous dispersion relation. Furthermore, the discrete versions (finite dimensional setting) of the ``diagonalisation'' property are easy to make rigorous.

\section{Infinite lattice}
\label{sec:infinite-lattice}

\subsection{Infinite lattice model of interacting particles}
\label{sec:lattice-model}
We now consider an infinite lattice (chain) of equal mass particles that are in the equilibrium positioned at \emph{equispaced grid points}  $\{x_{h, j}\}_{j=-\infty}^{+\infty}$,
\begin{equation}
  \label{eq:40}
  x_{h, j} = jh,
\end{equation}
$j \in \Z$, with a fixed equilibrium interparticle distance $h$, see Figure~\ref{fig:simple-lattice}. We assume that the longitudinal displacements of the individual particles $\{u_{h, j} (t)\}_{j=-\infty}^{+\infty}$ are given by the following (infinite) system of ordinary differential equations,
\begin{equation}
  \label{eq:41}
  \ddd{u_{h,j}}{t}
  -
  \sum_{m=-\infty}^{+ \infty} c_{h, j-m}u_{h, m}
  =
  0
\end{equation}
with the property $c_{h, i} = c_{h, -i}$, that is by the system
\begin{equation}
  \label{eq:42}
  \ddd{}{t}
  \begin{bmatrix}
    \vdots \\
    u_{h, -2} \\
    u_{h, -1} \\
    u_{h, 0} \\
    u_{h, 1} \\
    u_{h, 2} \\
    \vdots
  \end{bmatrix}
  -
  \begin{bmatrix}
    \ddots & \vdots & \vdots & \vdots & \vdots & \vdots & \reflectbox{$\ddots$} \\
    \cdots & c_{h, 0} & c_{h, 1} & c_{h, 2} & c_{h, 3} & c_{h, 4} & \cdots \\
    \cdots & c_{h, 1} & c_{h, 0} & c_{h, 1} & c_{h, 2} & c_{h, 3} & \cdots \\
    \cdots & c_{h, 2} & c_{h, 1} & c_{h, 0} & c_{h, 1} & c_{h, 2} & \cdots \\
    \cdots & c_{h, 3} & c_{h, 2} & c_{h, 1} & c_{h, 0} & c_{h, 1} & \cdots \\
    \cdots & c_{h, 4} & c_{h, 3} & c_{h, 2} & c_{h, 1} & c_{h, 0} & \cdots \\
    \reflectbox{$\ddots$} & \vdots & \vdots & \vdots & \vdots & \vdots & \ddots
  \end{bmatrix}
  \begin{bmatrix}
    \vdots \\
    u_{h, -2} \\
    u_{h, -1} \\
    u_{h, 0} \\
    u_{h, 1} \\
    u_{h, 2} \\
    \vdots
  \end{bmatrix}
  =
  \begin{bmatrix}
    \vdots \\
    0 \\
    0 \\
    0 \\
    0 \\
    0 \\
    \vdots
  \end{bmatrix}
  ,
\end{equation}
which we also write in the matrix--vector form as
\begin{equation}
  \label{eq:43}
  \ddd{\vec{u}_h}{t} - \tensorq{L}_h^{\text{infinite}} \vec{u}_h = \vec{0}
\end{equation}
with the obvious identification of (infinite) vector $\vec{u}_h$ and matrix $\tensorq{L}_h^{\text{infinite}}$. The symmetry $c_{h, i} = c_{h, -i}$ and the convolution-type sum in~\eqref{eq:41} follow from basic physics considerations. Namely, we want the interparticle interactions to be symmetric meaning that the interaction of $j$-th particle with its $m$-th neighbour to the left should be the same as its interaction to its $m$-th neighbour to the right. Furthermore, the interactions should be translationally invariant meaning that the interaction between the particles should depend only on their mutual position. We note that these physics based requirements yield a convenient structure from the mathematical point of view. The infinite matrix in~\eqref{eq:42} is an \emph{infinite circulant matrix}, see~\cite{davis.pj:circulant}, and it represents a Laurent operator, see, for example, \cite[Chapter 3]{gohberg.i.goldberg.s.ea:basic}.

The coefficients $\left\{ c_{h, i} \right\}_{i=1}^{+\infty}$ can be indeed put into one-to-one correspondence with spring stiffnesses $\left\{ k_{h, i} \right\}_{k=1}^{+\infty}$. If the $j$-th particle in the lattice is connected to its nearest neighbours $j+1$ and $j-1$ with the spring of stiffness $k_{h, 1}$, its second nearest neighbours $j+2$ and $j-2$ with the spring of stiffness $k_{h, 2}$ and so forth, then the governing equation for $j$-th particle displacement reads
\begin{equation}
  \label{eq:44}
  m \ddd{u_{h, j}}{t}
  =
  \sum_{m=1}^{+\infty}
  \left(
    k_{h, m} \left(u_{h, j+m} - u_{h, j} \right)
    -
    k_{h, m} \left(u_{h, j} - u_{h, j-m} \right)
  \right),
\end{equation}
which yields the system of equations
\begin{equation}
  \label{eq:45}
  \ddd{}{t}
  \begin{bmatrix}
    \vdots \\
    u_{h, -2} \\
    u_{h, -1} \\
    u_{h, 0} \\
    u_{h, 1} \\
    u_{h, 2} \\
    \vdots
  \end{bmatrix}
  =
  \begin{bmatrix}
    \ddots & \vdots & \vdots & \vdots & \vdots & \vdots & \reflectbox{$\ddots$} \\
    \cdots & -2 \sum_{j=1}^{+\infty} k_{h, j} & k_{h, 1} & k_{h, 2} & k_{h, 3} & k_{h, 4} & \cdots \\
    \cdots & k_{h, 1} & -2 \sum_{j=1}^{+\infty} k_{h, j} & k_{h, 1} & k_{h, 2} & k_{h, 3} & \cdots \\
    \cdots & k_{h, 2} & k_{h, 1} & -2 \sum_{j=1}^{+\infty} k_{h, j} & k_{h, 1} & k_{h, 2} & \cdots \\
    \cdots & k_{h, 3} & k_{h, 2} & k_{h, 1} & -2 \sum_{j=1}^{+\infty} k_{h, j} & k_{h, 1} & \cdots \\
    \cdots & k_{h, 4} & k_{h, 3} & k_{h, 2} & k_{h, 1} & -2 \sum_{j=1}^{+\infty} k_{h, j} & \cdots \\
    \reflectbox{$\ddots$} & \vdots & \vdots & \vdots & \vdots & \vdots & \ddots
  \end{bmatrix}
  \begin{bmatrix}
    \vdots \\
    u_{h, -2} \\
    u_{h, -1} \\
    u_{h, 0} \\
    u_{h, 1} \\
    u_{h, 2} \\
    \vdots
  \end{bmatrix}
  .
\end{equation}
Comparing~\eqref{eq:42} and~\eqref{eq:45} we thus see that the coefficients $\left\{ c_{h, i} \right\}_{i=1}^{+\infty}$ are related to the spring stiffnesses as
\begin{equation}
  \label{eq:46}
  c_{h, i}
  =
  \begin{cases}
    k_{h, i}, & i = 1, \dots, +\infty, \\
    -2 \sum_{j=1}^{+ \infty} k_{h, j}, & i=0.
  \end{cases}
\end{equation}
This places yet another structural restriction on the coefficients  $\left\{ c_{h, i} \right\}_{i=1}^{+\infty}$.

\subsection{Search for continuous analogue of lattice model---case study for the nearest neighbour interaction lattice model}
\label{sec:search-cont-anal}

For example, if we consider the nearest neighbour interaction as in~\eqref{eq:6}, we see that the coefficients $c_{h, j}$ are given by $c_{h, 0}  = 2 \frac{K}{\rho h^2}$ and $c_{h, \pm 1} = - \frac{K}{\rho h^2}$ and all remaining coefficients vanish. In this case~\eqref{eq:42} thus reduces to the system that involves the well-known banded matrix representing the centred second order finite differences discretisation of the second derivative operator,    
\begin{equation}
  \label{eq:47}
  \ddd{}{t}
  \begin{bmatrix}
    \vdots \\
    u_{h, -2} \\
    u_{h, -1} \\
    u_{h, 0} \\
    u_{h, 1} \\
    u_{h, 2} \\
    \vdots
  \end{bmatrix}
  -
  \frac{\ccontinuous^2}{h^2}
  % \frac{K}{\rho h^2}
  % \begin{bmatrix}
  %   \ddots & \vdots & \vdots & \vdots & \vdots & \vdots & \reflectbox{$\ddots$} \\
  %   \cdots & -2 & 1 & 0 & 0 & 0 & \cdots \\
  %   \cdots & 1 & -2 & 1 & 0 & 0 & \cdots \\
  %   \cdots & 0 & 1 & -2 & 1 & 0 & \cdots \\
  %   \cdots & 0 & 0 & 1 & -2 & 1 & \cdots \\
  %   \cdots & 0 & 0 & 0 & 1 & -2 & \cdots \\
  %   \reflectbox{$\ddots$} & \vdots & \vdots & \vdots & \vdots & \vdots & \ddots
  % \end{bmatrix}
  \begin{bmatrix}
    \ddots & \ddots &  &  &  &  &  \\
    \ddots & -2 & 1 &  &  &  &  \\
           & 1 & -2 & 1 &  &  &  \\
           &  & 1 & -2 & 1 &  &  \\
           &  &  & 1 & -2 & 1 &  \\
           &  &  &  & 1 & -2 & \ddots \\
           &  &  &  &  & \ddots & \ddots
  \end{bmatrix}
  \begin{bmatrix}
    \vdots \\
    u_{h, -2} \\
    u_{h, -1} \\
    u_{h, 0} \\
    u_{h, 1} \\
    u_{h, 2} \\
    \vdots
  \end{bmatrix}
  =
  \begin{bmatrix}
    \vdots \\
    0 \\
    0 \\
    0 \\
    0 \\
    0 \\
    \vdots
  \end{bmatrix}
  ,
\end{equation}
where we have used the notation~\eqref{eq:9}. Let us for a moment focus on this simple system, and let us recall some standard manipulations. Assume that we have a real valued function $u(t, x)$, and let us naively reformulate the equation~\eqref{eq:6} for the function $u(t, x)$ as
\begin{equation}
  \label{eq:48}
  \ppd{}{t} u
  -
  \ccontinuous^2
  % \frac{K}{\rho}
  \convolution{\left( \frac{\diracdelta_{h} - 2 \diracdelta + \diracdelta_{-h}}{h^2} \right)}{u} = 0.    
\end{equation}
This makes sense provided that the function $u(t, x)$ is chosen in such a way that its sampling at the grid points $\left. u(t, x) \right|_{x = x_{h, j}}$ coincides with the values $\{u_{h, j} (t)\}_{j=-\infty}^{+\infty}$, that is
\begin{equation}
  \label{eq:49}
  \left. u(t, x) \right|_{x = x_{h, j}} = u_{h, j} (t).
\end{equation}
If~\eqref{eq:49} holds, then sampling of~\eqref{eq:48} at $x_{h, j}$ gives~\eqref{eq:6}. The application of \emph{continuous} Fourier transform to~\eqref{eq:48} then yields
\begin{equation}
  \label{eq:50}
  \left(
    \ppd{}{t}
    -
    \frac{\ccontinuous^2}{h^2}
    % \frac{K}{\rho h^2}
    \left(
      \exponential{\iunit h \xi}
      -
      2
      +
      \exponential{-\iunit h \xi}
    \right)
  \right)
  \FourierTransform{u}
  =
  0
  ,
\end{equation}
where we have used identities~\eqref{eq:22}, in particular the convolution-to-multiplication property and the shift property that gives
\begin{equation}
  \label{eq:51}
  \FourierTransform{\convolution{\diracdelta_{\pm h}}{u}}
  =
  \FourierTransform{\diracdelta_{\pm h}} \FourierTransform{u}
  =
  \exponential{\mp \iunit h \xi} \FourierTransform{u}
  .
\end{equation}
Equation~\eqref{eq:50} in the Fourier domain can be further rewritten as
\begin{equation}
  \label{eq:52}
  \left(
    \ppd{}{t}
    +
    \ccontinuous^2
    \frac{
      2
      \left(
        1
        -
        \cos \left(h \xi \right)
      \right)
    }
    {
      h^2
    }
  \right)
  \FourierTransform{u}
  =
  0
  ,
\end{equation}
and one more application of Fourier transform---now with respect to the time variable---then yields the equation
\begin{equation}
  \label{eq:53}
  \left(
    -\omega^2
    +
    \ccontinuous^2
    \frac{
      2
      \left(
        1
        -
        \cos \left(h \xi \right)
      \right)
    }
    {
      h^2
    }
  \right)
  \fouriertransforms_{t \to \omega}\left[\FourierTransform{u} \right]
  =
  0
  .
\end{equation}
From~\eqref{eq:53} we get the well-known dispersion relation~\eqref{eq:11}, that is
$
\omega^2 = \ccontinuous^2
\frac{
  2
  \left(
    1
    -
    \cos \left(h \xi \right)
  \right)
}
{
  h^2
}
$,
which can be also rewritten as
\begin{equation}
  \label{eq:54}
  \omega^2
  =
  4
  \ccontinuous^2
  \sin^2 \left( \frac{h \xi}{2} \right)
  .
\end{equation}
(Note that the application of Fourier transform can be equally interpreted as substituting a wave-like \emph{ansatz} $u(t, x) = \widehat{u}(\omega, \xi) \exponential{\iunit \left(\xi x - \omega t \right)}$ into~\eqref{eq:48}.) On the other hand, the Fourier transformed version of the standard wave equation~\eqref{eq:8} reads
\begin{equation}
  \label{eq:55}
  \left(
    -
    \omega^2
    +
    \ccontinuous^2
    \xi^2
  \right)
  \fouriertransforms_{t \to \omega}\left[\FourierTransform{u} \right]
  =
  0
  ,
\end{equation}
and the dispersion relation is~\eqref{eq:12}, that is
\begin{equation}
  \label{eq:56}
  \omega^2
  =
  \ccontinuous^2
  \xi^2
  .
\end{equation}

As we have already noted, the dispersion relations for the lattice model~\eqref{eq:6} and the standard wave equation~\eqref{eq:8} are different. The continuum mechanics community addressing the \emph{continualisation problem} tried to resolve this issue by designing an alternative partial differential equation whose dispersion relation would be closer to the discrete dispersion relation~\eqref{eq:54}. The first straightforward idea in this direction is to use the formal formula for the shift operator, namely 
\begin{equation}
  \label{eq:57}
  u(x + h)
  =
  \exponential{h\pd{}{x}} u(x).
\end{equation}
(We note that this is just a formal way how to rewrite the Taylor expansion $ u(x + h)
=
\sum_{j=0}^{+ \infty}
\frac{1}{j!}
\pd{^ju}{x^j}
h^j$.)
With this formula we can reformulate the discrete system of equations~\eqref{eq:6} as
\begin{equation}
  \label{eq:58}
  \ppd{u}{t}
  -
  \ccontinuous^2
  \frac{
    \exponential{h\pd{}{x}}
    -
    2
    +
    \exponential{-h\pd{}{x}}
  }
  {
    h^2
  }
  u
  =
  0.
\end{equation}
(Compare with convolution based reformulation in~\eqref{eq:48}.) Now we can use only the first few terms in the Taylor expansion of the exponential, which yields
\begin{equation}
  \label{eq:59}
  \left(
    \ppd{}{t}
    -
    \ccontinuous^2
    \left(
      \ppd{}{x}
      +
      \frac{h^2}{12} \pd{^4}{x^4}
      +
      \cdots
    \right)
  \right)
  u
  =
  0
  ,
\end{equation}
which gives us ``enriched continua'' partial differential equations of type~\eqref{eq:14}. Yet another possibility is to rewrite the term with exponentials as
\begin{equation}
  \label{eq:60}
  \exponential{h\pd{}{x}}
  -
  2
  +
  \exponential{-h\pd{}{x}}
  =
  4
  \sinh^2 \left(\frac{h}{2} \pd{}{x} \right)
\end{equation}
and search for more sophisticated approximations beyond Taylor series. For example, Pad\'e approximation of $\sinh^2 x$ of order two reads $\sinh^2 x \approx \frac{x^2}{1 - \frac{x^2}{3}}$ which allows us to formally write 
\begin{equation}
  \label{eq:61}
  \sinh^2 \left(\frac{h}{2} \pd{}{x} \right)
  \approx
  \frac{ \left( \frac{h}{2} \pd{}{x} \right)^2}{ 1 - \frac{\left(\frac{h}{2} \pd{}{x} \right)^2}{3}},
\end{equation}
and using this formula in~\eqref{eq:58} formally yields~\eqref{eq:15}, that is
\begin{equation}
  \label{eq:62}
  \left(
    \left( 1 - \frac{h^2}{12} \ppd{}{x} \right)
    \ppd{}{t}
    -
    \ccontinuous^2 \ppd{}{x}
  \right)
  u
  =
  0
  .
\end{equation}
We note that all manipulations outline above might be done in the Fourier domain as well. In particular, the formal Taylor expansion
\begin{equation}
  \label{eq:63}
  \frac{
    \exponential{h\pd{}{x}}
    -
    2
    +
    \exponential{-h\pd{}{x}}
  }
  {
    h^2
  }
  =
  \ppd{}{x}
  +
  \frac{h^2}{12} \pd{^4}{x^4}
  +
  \cdots
\end{equation}
can be seen as the physical domain counterpart of the Fourier domain expansion
\begin{equation}
  \label{eq:64}
  \frac{
    2
    \left(
      1
      -
      \cos \left(h \xi \right)
    \right)
  }
  {
    h^2
  }
  =
  \xi^2
  -
  \frac{h^2}{12}
  \xi^4
  +
  \cdots
  ,
\end{equation}
see~\eqref{eq:52}. This observation also reveals that the partial differential equations~\eqref{eq:59} and~\eqref{eq:62} indeed lead to better approximations of the discrete dispersion relation~\eqref{eq:11}. In particular, partial differential equation~\eqref{eq:59} yields the dispersion relation
\begin{subequations}
  \label{eq:65}     
  \begin{align}
    \label{eq:66}
    \omega^2 &= \ccontinuous^2
               \left(
               \xi^2
               -
               \frac{h^2}{12}
               \xi^4
               +
               \cdots
               \right)
               ,\\
    \intertext{while partial differential equation~\eqref{eq:62} yields the dispersion relation}
    \label{eq:67}
    \omega^2 &=  \ccontinuous^2
               \frac{\xi^2}{1 + \frac{h^2}{12} \xi^2}.
  \end{align}
\end{subequations}

However, partial differential equations~\eqref{eq:59} and \eqref{eq:62} are still only approximations. If the ground truth is the system of ordinary differential equations~\eqref{eq:6}, then neither the standard wave equation~\eqref{eq:8} nor its improvements \eqref{eq:59} and \eqref{eq:62} \emph{exactly} correspond to the behaviour of the discrete lattice model, and we are no closer to the exact characterisation of the correspondence between the discrete and continuous problems~\eqref{eq:1} and \eqref{eq:3}.

\subsection{Reconstruction procedure in detail---from discrete grid values to a function of real variable via bandwidth limited interpolant}
\label{sec:reconstr-proc-from}
In order to identify the exact continuous counterpart of a lattice model, we first need to carefully discuss the reconstruction procedure, that is the way how to reconstruct a real valued function $u_h(t, x)$ out of the discrete grid values $\vec{u}_h = \{u_{h, j} (t)\}_{j=-\infty}^{+\infty}$. Clearly, the continuous counterpart of a lattice model crucially depends on a particular reconstruction procedure, as different reconstruction procedures lead to different continuous versions of the lattice model. The reconstruction problem is schematically shown in Figure~\eqref{fig:interpolant-construction}, and there exist many possibilities how to solve the reconstruction problem. However, concerning the infinite lattice problem, it turns out that one reconstruction procedure might be denoted as a canonical one due to its extreme convenience.

\begin{figure}
  \centering
  \subfloat[Grid values $\{u_{h, j} (t)\}_{j=-\infty}^{+\infty}$.]{\includegraphics[width=0.35\textwidth]{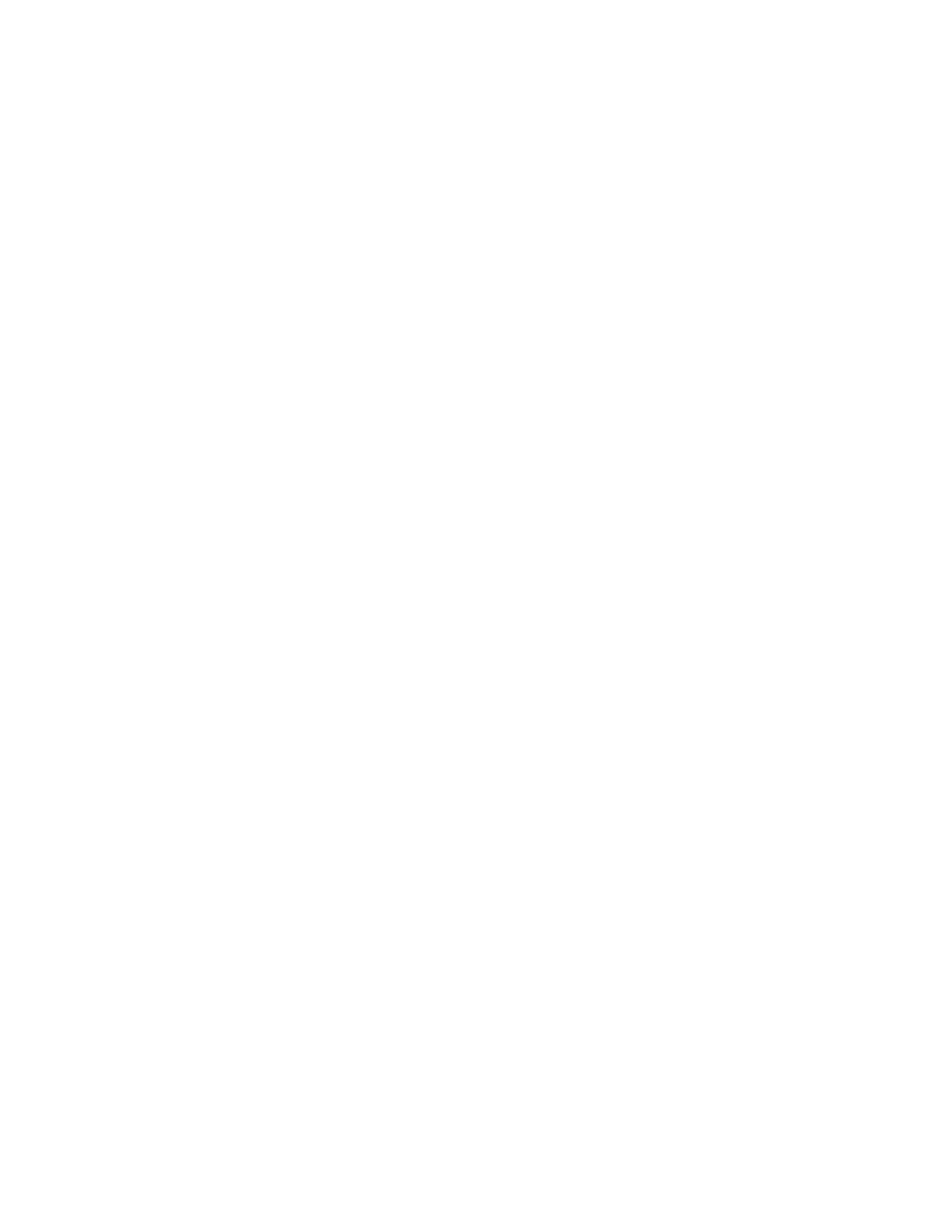}}
  \qquad
  \subfloat[Possible interpolants of grid values.]{\includegraphics[width=0.35\textwidth]{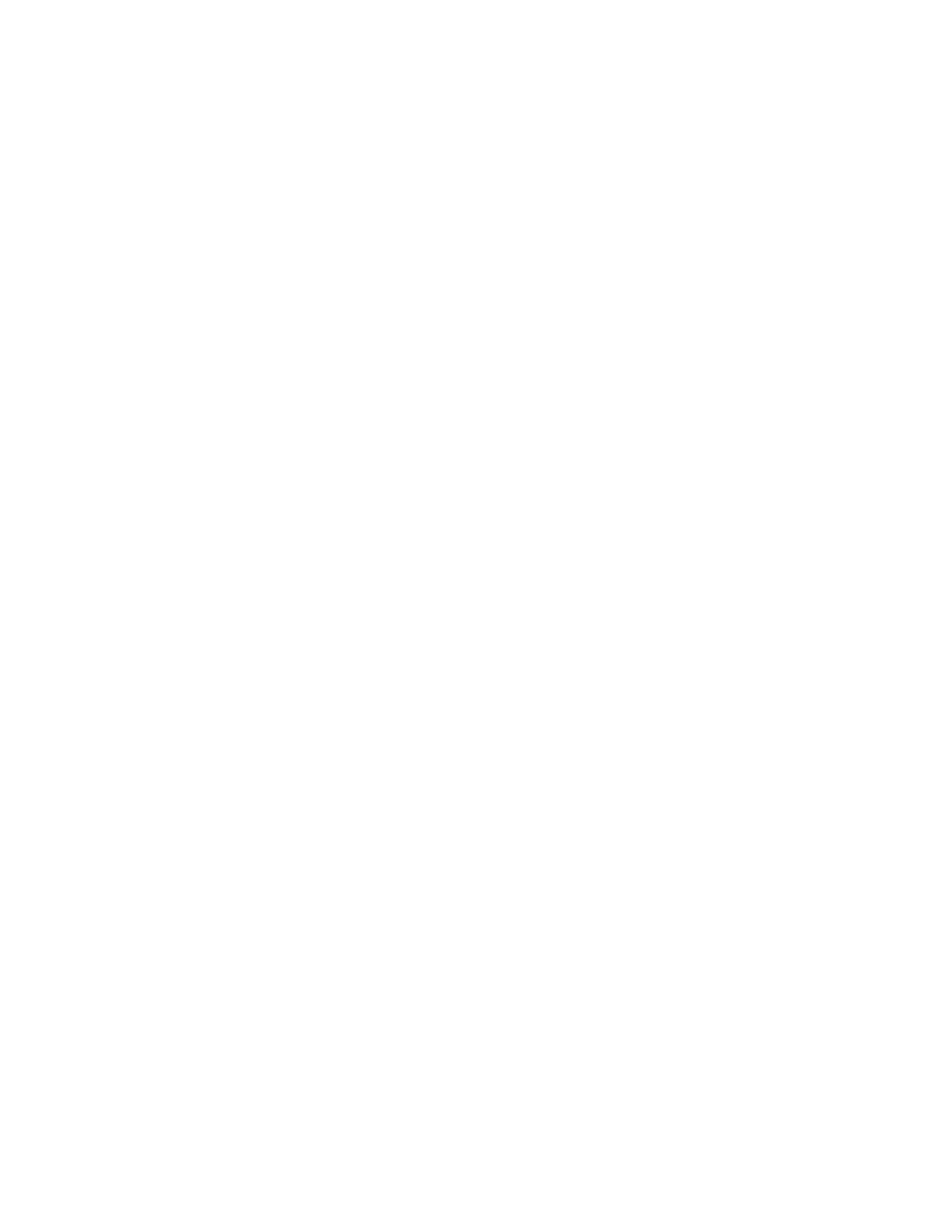}}
  \caption{Construction of a function $u_h(t, x)$ out of grid values.}
  \label{fig:interpolant-construction}
\end{figure}

\subsubsection{Semidiscrete Fourier transform and bandwidth limited interpolant}
\label{sec:bandw-limit-interp}
The convenient reconstruction procedure is based on the concept of \emph{bandwidth limited interpolant} as this reconstruction is \emph{closely related to the Fourier transform}, which is in turn critical for the dispersion relation. The first observation we can make regarding the wave motion is the following. If we have a wave with wavenumber $\xi + l \frac{2\pi}{h}$, where $l \in \Z$, then
\begin{equation}
  \label{eq:68}
  \left.
    \exponential{\iunit \left(\xi +   l \frac{2\pi}{h} \right) x}
  \right|_{x = x_{h, j}}
  =
  \exponential{\iunit \xi x_{h, j} + \iunit   l \frac{2\pi}{h}  j h}
  =
  \exponential{\iunit 2 lj \pi}
  \left.
    \exponential{\iunit \xi  x}
  \right|_{x = x_{h, j}}
  =
  \left.
    \exponential{\iunit \xi  x}
  \right|_{x = x_{h, j}}
\end{equation}
holds for all $j \in \Z$. This means that two waves with wavenumbers $\xi$ and $\xi + l \frac{2\pi}{h}$, where $l \in \Z$, are \emph{indistinguishable on the equispaced spatial grid} $\left\{x_{h, j} \right\}_{j = -\infty}^{+ \infty}$, albeit the two waves \emph{do differ on the full real line}. Consequently, we can restrict our wavenumbers $\xi$ to an interval of length $\frac{2 \pi}{h}$ as these wavenumbers are sufficient to represent wave with any wavenumber provided that we work with sampling on the equispaced grid  $\left\{x_{h, j} \right\}_{j = -\infty}^{+ \infty}$. This observation is in numerical analysis referred to as the \emph{aliasing phenomenon}, see, for example, \cite[Chapter 2]{trefethen.ln:spectral}, while the physics community the observation leads to the concept of \emph{Brillouin zone}, see, for example, \cite[Chapter II]{brillouin.l:wave*1} and~\cite[Section 2.2]{kunin.ia:elastic*1}. Now we can adjust our fully continuous Fourier transform apparatus accordingly.

Instead of continuous Fourier transform~\eqref{eq:fourier-transform} of a function $u$ we introduce the \emph{semidiscrete Fourier transform}, which works with the (infinite) vector of grid values $\vec{u}_h =  \{u_{h, j} \}_{j=-\infty}^{+\infty}$. The semidiscrete Fourier transform and its inverse are defined as follows.
\begin{definition}[Semidiscrete Fourier transform]
  \label{dfn:3}
  Let $\vec{u}_h =  \{u_{h, j} \}_{j=-\infty}^{+\infty}$ be a vector of grid values and let $v_h$ be a function such that the following integrals are well defined. We denote
  \begin{subequations}
    \label{eq:semidiscrete-fourier-transform}
    \begin{align}
      \label{eq:69}
      \FourierTransformSemidiscrete{\vec{u}_h} (\xi)
      &=_{\bydefinition}
        \left(
        h
        \sum_{j= - \infty}^{+ \infty}
        \tensor{\left(\vec{u}_h\right)}{_j}
        \exponential{- \iunit \xi x_{h, j}}
        \right)
        \chi_{\xi \in \left[ -\frac{\pi}{h}, \frac{\pi}{h} \right]},
      \\
      \label{eq:70}
      \tensor{\left( \InverseFourierTransformSemidiscrete{v_h} \right)}{_j}
      &=_{\bydefinition}
        \frac{1}{2\pi}
        \int_{\xi = - \frac{\pi}{h}}^{\frac{\pi}{h}}
        v_h(\xi)
        \exponential{\iunit \xi x_{h, j}}
        \,
        \diff \xi, \qquad j=-\infty, \dots, +\infty,
    \end{align}
  \end{subequations}
  where we use the notation $\tensor{\left(\vec{u}_h\right)}{_j} = u_{h, j}$ for the value at the $j$-th grid point and similarly for $\tensor{\left( \InverseFourierTransformSemidiscrete{v_h} \right)}{_j}$ for the $j$-th wavenumber coefficient. Furthermore, $\chi_{\xi \in \left[ -\frac{\pi}{h}, \frac{\pi}{h} \right]}$ denotes the characteristic function of the interval $\left[ -\frac{\pi}{h}, \frac{\pi}{h} \right]$. Function $\FourierTransformSemidiscrete{\vec{u}_h} (\xi)$ is called the semidiscrete Fourier transform of vector $\vec{u}_h$, while the vector $\InverseFourierTransformSemidiscrete{v_h}$ with the elements $\left\{ \tensor{\left( \InverseFourierTransformSemidiscrete{v_h} \right)}{_j} \right\}_{j=-\infty}^{+\infty}$ is called the inverse semidiscrete Fourier transform of function $v_h$.  
\end{definition}
 Clearly~\eqref{eq:69} is an analogue of continuous Fourier transform~\eqref{eq:19} wherein the integral is replaced by the midpoint rule, and wherein the wavenumbers are restricted to the corresponding interval. The inverse semidiscrete Fourier transform formula~\eqref{eq:70} is just a special version of the full inverse Fourier transform~\eqref{eq:20} restricted to the corresponding wavenumber interval. The fact that the formula~\eqref{eq:70} is indeed an inverse to~\eqref{eq:69} is easy to check by a simple computation, see, for example, \cite{bracewell.rn:fourier} and~\cite{trefethen.ln:spectral}.

The construction of bandwidth limited interpolant then works as shown in Figure~\eqref{fig:semidiscrete-fourier-transform}. We start with the grid values $\vec{u}_h$ and we do \emph{semidiscrete Fourier transform} according to definition~\eqref{eq:69}. The object we get is a function of $\xi$ defined for \emph{all}~$\xi \in \R$. The function is identically zero outside the interval $\left[ -\frac{\pi}{h}, \frac{\pi}{h} \right]$, but it is anyway defined for all $\xi \in \R$. Such a function in the Fourier domain is a legitimate input to the \emph{continuous} inverse Fourier transform. We thus use the continuous inverse Fourier transform~\eqref{eq:20}, and we get a function $u_h$ that is defined \emph{everywhere} in the spatial domain, that is for all~$x \in \R$---this function is the \emph{bandwidth limited interpolant of discrete grid values~$\vec{u}_h$}.

The whole construction of the bandwidth limited interpolant $u_h$ of grid values $\vec{u}_h$ thus reads
\begin{equation}
  \label{eq:bandwidth-limited-interpolant-semidiscrete}
  u_h = \InverseFourierTransform{\FourierTransformSemidiscrete{\vec{u}_h}}.
\end{equation}
We note that the construction of the bandwidth limited interpolant implies that
\begin{equation}
  \label{eq:71}
  \FourierTransform{u_h} = \FourierTransformSemidiscrete{\vec{u}_h}.
\end{equation}

\begin{figure}
  \centering
  \includegraphics[width=\textwidth]{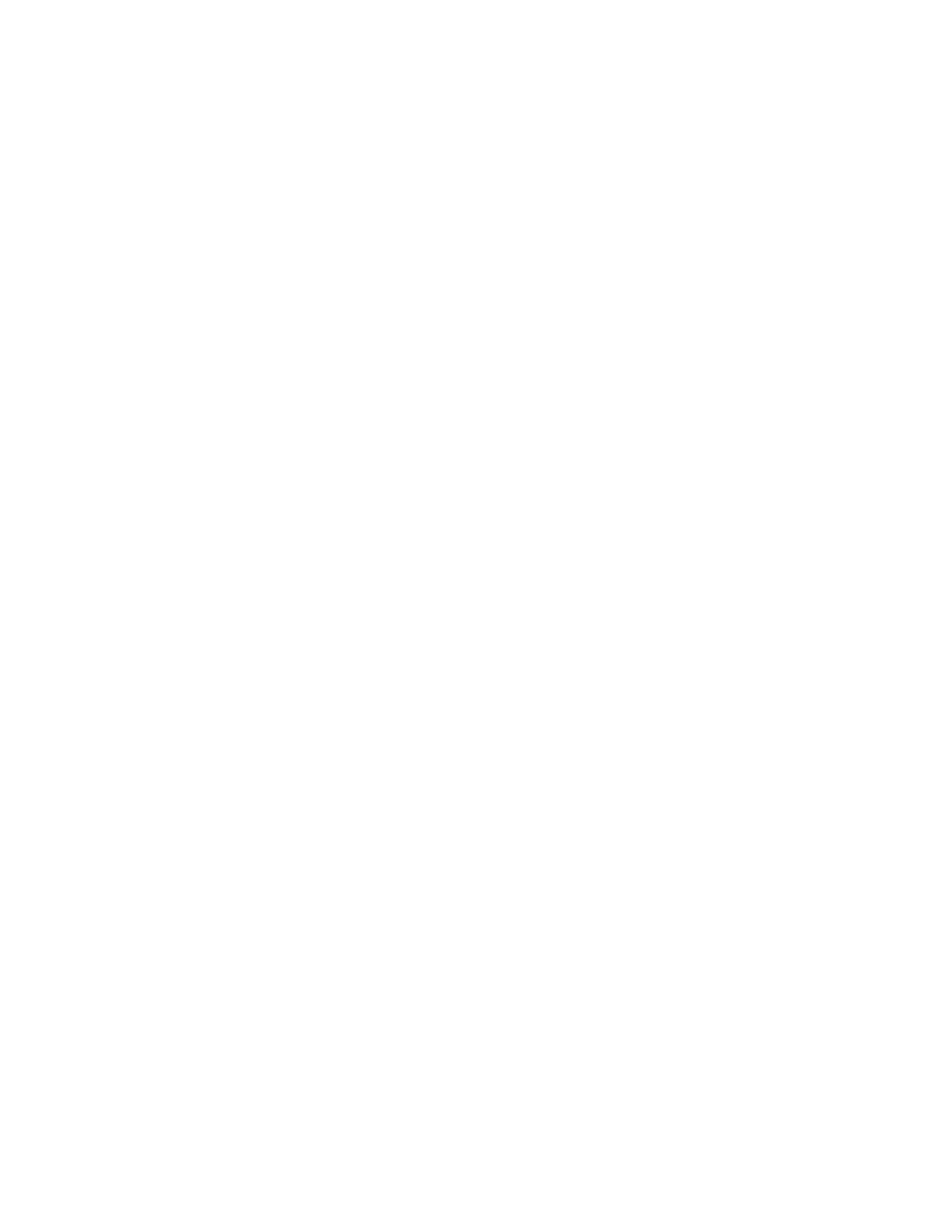}
  \caption{Semidiscrete Fourier transform and construction of bandwidth limited interpolant $u_h$ of grid values vector $\vec{u}_h$, equispaced grid $x_{h, j} = jh$, $j \in \Z$.}
  \label{fig:semidiscrete-fourier-transform}
\end{figure}

As indicated in Figure~\ref{fig:semidiscrete-fourier-transform}, the bandwidth limited interpolant can be constructed without an explicit reference to the Fourier transform as in~\eqref{eq:bandwidth-limited-interpolant-semidiscrete}. The bandwidth limited interpolant $u_h$ is given by the formula
\begin{subequations}
  \label{eq:bandwidth-limited-interpolant-semidiscrete-physical-space}
  \begin{align}
    \label{eq:72}
    u_h(x)
    &=
      \sum_{j=-\infty}^{+ \infty}\tensor{\left(\vec{u}_h\right)}{_j} \sinc_h \left( x - x_{h, j}\right)
      ,
    \\
    \label{eq:73}
    \sinc_h(x) &=_{\bydefinition} \frac{\sin \left( \frac{\pi x}{h} \right)}{\frac{\pi x}{h}},
  \end{align}
\end{subequations}
where the $\sinc$ function, $\sinc x =_{\bydefinition} \frac{\sin x}{x}$, is continuously defined at zero, that is $\left. \sinc x \right|_{x=0} = 1$. This is a classical result, see, for example, \cite[Section 1.10]{vichnevetsky.r.bowles.jb:fourier}, \cite[Chapter 2]{trefethen.ln:spectral}, \cite[Chapter 5]{boyd.jp:chebyshev} to name a few standard reference works. The explicit formula for the bandwidth limited interpolant~\eqref{eq:bandwidth-limited-interpolant-semidiscrete-physical-space} follows from the formula for the semidiscrete Fourier transform of $k$-th discrete Dirac distribution $\vec{\diracdelta}_h^k$, where the $k$-th discrete Dirac distribution $\vec{\diracdelta}_h^k$ is defined as
\begin{equation}
  \label{eq:74}
  \vec{\diracdelta}_h^k =_{\bydefinition}
  \begin{bmatrix}
    \vdots \\
    0 \\
    1 \\
    0 \\
    \vdots \\
  \end{bmatrix}
  ,
\end{equation}
where the nonzero element is placed at the $k$-th position. The semidiscrete Fourier transform~\eqref{eq:69} of $\vec{\diracdelta}_h^k$ yields
\begin{equation}
  \label{eq:75}
  \FourierTransformSemidiscrete{\vec{\diracdelta}_h^k} (\xi)
  =
  h
  \exponential{- \iunit \xi x_{h, k}}
  \chi_{\xi \in \left[ -\frac{\pi}{h}, \frac{\pi}{h} \right]},
\end{equation}
and the continuous inverse Fourier transform then gives
\begin{equation}
  \label{eq:76}
  \InverseFourierTransform{\FourierTransformSemidiscrete{\vec{\diracdelta}_h^k}}(x)
  =
  \InverseFourierTransform{
    h
    \exponential{- \iunit \xi x_{h, k}}
    \chi_{\xi \in \left[ -\frac{\pi}{h}, \frac{\pi}{h} \right]}
  }
  =
  \InverseFourierTransform{
    h
    \exponential{- \iunit \xi x_{h, k}}
    \FourierTransform{ \frac{1}{h} \sinc\left( \frac{\pi x}{h} \right) }
  }
  =
  \sinc_h \left( x - x_{h, k}\right)
  ,
\end{equation}
where we have used the Fourier transformation formula for the unit box and the scaling property of Fourier transform, which give us $\FourierTransform{ \frac{1}{h} \sinc\left( \frac{\pi x}{h} \right) } = \chi_{\xi \in \left[ -\frac{\pi}{h}, \frac{\pi}{h} \right]}$, and the shift property of Fourier transform.

Furthermore, assume that $g_h$ is a bandwidth limited function in the Fourier space meaning that its Fourier image~$\widehat{g_h} = \FourierTransform{g_h}$ contains only wavenumbers in the first Brillouin zone,
\begin{equation}
  \label{eq:77}
  \FourierTransform{g_h} = \FourierTransform{g_h} \chi_{\xi \in \left[ -\frac{\pi}{h}, \frac{\pi}{h} \right]}.
\end{equation}
The following equalities hold for the bandwidth limited function,
\begin{equation}
  \label{eq:78}
  \tensor{\left( \InverseFourierTransformSemidiscrete{\widehat{g_h}} \right)}{_j}
  =
  \left.
    \left( \InverseFourierTransform{\widehat{g_h}} \right)
  \right|_{x = x_{h, j}}
\end{equation}
and
\begin{equation}
  \label{eq:79}
  \convolution{\frac{1}{h} \sinc\left( \frac{\pi x}{h} \right)}{g_h} = g_h.
\end{equation}
The first equality follows from the definition of the continuous/discrete Fourier transform, while the second equality is a consequence of the convolution-to-multiplication property~\eqref{eq:25},
\begin{equation}
  \label{eq:80}
  \convolution{\frac{1}{h} \sinc\left( \frac{\pi x}{h} \right)}{g_h}
  =
  \InverseFourierTransform{\FourierTransform{\frac{1}{h} \sinc\left( \frac{\pi x}{h} \right)}{ \FourierTransform{g_h}}}
  =
  \InverseFourierTransform{
    \chi_{\xi \in \left[ -\frac{\pi}{h}, \frac{\pi}{h} \right]}
    \FourierTransform{g_h}
  }
  =
  \InverseFourierTransform{
    \FourierTransform{g_h}
  }
  =
  g_h.
\end{equation}
Note that the second equality tells us that the function $\frac{1}{h} \sinc\left( \frac{\pi x}{h} \right)$ is the identity for the convolution operator on the bandwidth limited functions. Calculation~\eqref{eq:80} also reveals that convolution of an arbitrary function $f$ with the function $\frac{1}{h} \sinc\left( \frac{\pi x}{h} \right)$ gives us the bandwidth limited version $f_h$ of $f$---in this sense the convolution with $\frac{1}{h} \sinc\left( \frac{\pi x}{h} \right)$ acts as a low pass filter.

We can summarise our findings regarding the bandwidth limited interpolant as a lemma.

\begin{lemma}[Bandwidth limited interpolant, infinite lattice]
  \label{lm:5}
  Let $\vec{u}_h =  \{u_{h, j} \}_{j=-\infty}^{+\infty}$ be a vector of grid values at grid points and let $\left\{x_{h, j}\right\}_{j=-\infty}^{+\infty}$, $x_{h, j} =_{\bydefinition} jh$, $h>0$, be the corresponding grid on the real line $\R$. Let $u_h$ denote the \emph{bandwidth limited interpolant} of grid values $\vec{u}_h$, that is
\begin{equation}
  \label{eq:409}
  u_h =_{\bydefinition} \InverseFourierTransform{\FourierTransformSemidiscrete{\vec{u}_h}}.
\end{equation}
Then the bandwidth limited interpolant $u_h$ is given by the formula
\begin{subequations}
  \label{eq:412}
  \begin{align}
    \label{eq:413}
    u_h(x)
    &=
      \sum_{j=-\infty}^{+ \infty}\tensor{\left(\vec{u}_h\right)}{_j} \sinc_h \left( x - x_{h, j}\right)
      ,
    \\
    \label{eq:419}
    \sinc_h(x) &=_{\bydefinition} \frac{\sin \left( \frac{\pi x}{h} \right)}{\frac{\pi x}{h}},
  \end{align}
\end{subequations}
and where the $\sinc$ function is defined as $\sinc x =_{\bydefinition} \frac{\sin x}{x}$ with the convention $\left. \sinc x \right|_{x=0} = 1$.
\end{lemma}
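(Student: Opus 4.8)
The plan is to reduce the general formula to the single case of a discrete Dirac spike and then invoke linearity. First I would write the grid-value vector as the superposition
\[
  \vec{u}_h
  =
  \sum_{k=-\infty}^{+\infty}
  \tensor{\left(\vec{u}_h\right)}{_k}\,
  \vec{\diracdelta}_h^k
\]
of shifted discrete Dirac distributions~\eqref{eq:74}. Both the semidiscrete Fourier transform~\eqref{eq:69} and the continuous inverse Fourier transform~\eqref{eq:20} are linear, hence so is their composition $\InverseFourierTransform{\FourierTransformSemidiscrete{\cdot}}$, and therefore
\[
  u_h
  =
  \InverseFourierTransform{\FourierTransformSemidiscrete{\vec{u}_h}}
  =
  \sum_{k=-\infty}^{+\infty}
  \tensor{\left(\vec{u}_h\right)}{_k}\,
  \InverseFourierTransform{\FourierTransformSemidiscrete{\vec{\diracdelta}_h^k}},
\]
provided that the two transforms may be exchanged with the infinite sum.

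The core of the argument is then the single identity $\InverseFourierTransform{\FourierTransformSemidiscrete{\vec{\diracdelta}_h^k}} = \sinc_h(\cdot - x_{h,k})$, which is exactly the chain~\eqref{eq:75}--\eqref{eq:76}; I would carry it out in two steps. First, evaluating~\eqref{eq:69} on $\vec{\diracdelta}_h^k$ collapses the defining sum to a single term, $\FourierTransformSemidiscrete{\vec{\diracdelta}_h^k}(\xi) = h\, \exponential{-\iunit \xi x_{h,k}}\, \chi_{\xi \in \left[ -\frac{\pi}{h}, \frac{\pi}{h} \right]}$. Second, I would apply the continuous inverse Fourier transform to this function using two ingredients that are already at hand: the elementary Fourier pair for the box function, namely $\FourierTransform{\frac{1}{h}\sinc\left(\frac{\pi x}{h}\right)} = \chi_{\xi \in \left[ -\frac{\pi}{h}, \frac{\pi}{h} \right]}$ (a one-line computation of $\frac{1}{2\pi}\int_{\xi=-\pi/h}^{\pi/h}\exponential{\iunit \xi x}\,\diff\xi$), together with the shift property~\eqref{eq:24}, which converts the factor $\exponential{-\iunit \xi x_{h,k}}$ into a spatial translation by $x_{h,k}$. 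Combining these gives $\InverseFourierTransform{h\, \exponential{-\iunit \xi x_{h,k}}\, \chi_{\xi \in \left[ -\frac{\pi}{h}, \frac{\pi}{h} \right]}}(x) = \sinc_h(x - x_{h,k})$, and substituting back into the sum of the previous paragraph yields~\eqref{eq:413}. As a consistency check one notes that $\sinc_h$ is continuous at the origin with $\sinc_h(0)=1$ while $\sinc_h(mh)=0$ for every nonzero integer $m$, so~\eqref{eq:413} reproduces $u_h(x_{h,j}) = \tensor{\left(\vec{u}_h\right)}{_j}$, confirming that $u_h$ genuinely interpolates the grid data.

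The only real obstacle is one of rigour: justifying the term-by-term passage of the two transforms through the infinite sum over $k$. In keeping with the paper's conventions the formal route just described is adequate; a fully rigorous version would assume $\vec{u}_h \in \ell^2(\Z)$, use Plancherel to place $\FourierTransformSemidiscrete{\vec{u}_h}$ in $L^2\!\left(\left[ -\frac{\pi}{h}, \frac{\pi}{h} \right]\right)$, and then invoke the fact that $\left\{ \sinc_h(\cdot - x_{h,j}) \right\}_{j\in\Z}$ is an orthogonal system in $L^2(\R)$ whose closed linear span is precisely the subspace of $L^2(\R)$ functions with Fourier support in the first Brillouin zone; the series~\eqref{eq:413} then converges in $L^2(\R)$ to $u_h$. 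This last statement is the classical Whittaker--Shannon sampling theorem, and apart from it the entire proof is the single elementary computation~\eqref{eq:75}--\eqref{eq:76} already displayed above.
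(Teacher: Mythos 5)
Your proposal is correct and follows essentially the same route as the paper: the paper likewise establishes the formula by computing the chain $\InverseFourierTransform{\FourierTransformSemidiscrete{\vec{\diracdelta}_h^k}} = \sinc_h(\cdot - x_{h,k})$ via the box-function Fourier pair and the shift property, and then extends to general $\vec{u}_h$ by linearity. Your added remarks on the interpolation consistency check and the $\ell^2$/Plancherel justification of the term-by-term exchange go slightly beyond the paper's formal treatment but do not change the argument.
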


We can also show that the bandwidth limited function coincides with its bandwidth limited interpolant.
\begin{lemma}[Bandwidth limited function coincides with its bandwidth limited interpolant, infinite lattice]
  \label{lm:11}
  Let $U_h$ be a \emph{bandwidth limited function} in the sense of~\eqref{eq:77}, that is
  \begin{equation}
    \label{eq:448}
    \FourierTransform{U_h} = \FourierTransform{U_h} \chi_{\xi \in \left[ -\frac{\pi}{h}, \frac{\pi}{h} \right]}.
  \end{equation}
  Let $\vec{U}_h =  \{U_{h, j} \}_{j=-\infty}^{+\infty}$ be a vector of grid values $U_h$ at grid points $\vec{x}_{h} = \left\{x_{h, j}\right\}_{j=-\infty}^{+\infty}$, $x_{h, j} =_{\bydefinition} jh$, $h>0$, on the real line~$\R$. Let~$u_h$ denote the \emph{bandwidth limited interpolant} of grid values $\vec{U}_h$. Then
  \begin{equation}
    \label{eq:449}
    u_h = U_h
  \end{equation}
  everywhere on the real line $\R$.
\end{lemma}
\begin{proof}
The lemma follows from a simple manipulation. Since $U_h$ is a bandwidth limited function, we have
\begin{equation}
  \label{eq:451}
  U_h = \InverseFourierTransform{\FourierTransform{U_h} \chi_{\xi \in \left[ -\frac{\pi}{h}, \frac{\pi}{h} \right]}},
\end{equation}
which implies that the vector of grid values is given as
\begin{equation}
  \label{eq:452}
  \vec{U}_h
  =
  \left.
    \InverseFourierTransform{\FourierTransform{U_h} \chi_{\xi \in \left[ -\frac{\pi}{h}, \frac{\pi}{h} \right]}}
  \right|_{x = \vec{x}_{h}}.
\end{equation}
We however know that
\begin{equation}
  \label{eq:453}
  \left.
    \InverseFourierTransform{\FourierTransform{U_h} \chi_{\xi \in \left[ -\frac{\pi}{h}, \frac{\pi}{h} \right]}}
  \right|_{x = \vec{x}_{h}}
  =
  \InverseFourierTransformSemidiscrete{\FourierTransform{U_h}}, 
\end{equation}
which is a direct consequence of the definition of Fourier transform and the definition of semidiscrete Fourier transform, see also~\eqref{eq:78}. This equality implies that the grid values vector $\vec{U}_h$ can be obtained from $U_h$ as
\begin{equation}
  \label{eq:455}
  \vec{U}_h = \InverseFourierTransformSemidiscrete{\FourierTransform{U_h}}.
\end{equation}

On the other hand, the bandwidth limited interpolant $u_h$ is by definition constructed as
\begin{equation}
  \label{eq:450}
  u_h =_{\bydefinition} \InverseFourierTransform{\FourierTransformSemidiscrete{\vec{U}_h}}.
\end{equation}
Substituting \eqref{eq:455} into~\eqref{eq:450} to get
\begin{equation}
  \label{eq:454}
  u_h
  =
  \InverseFourierTransform{\FourierTransformSemidiscrete{\vec{U}_h}}
  =
  \InverseFourierTransform{\FourierTransformSemidiscrete{\InverseFourierTransformSemidiscrete{\FourierTransform{U_h}}}}
  =
  U_h,
\end{equation}
which was to prove.
\end{proof}

\subsubsection{Semidiscrete Fourier transform and discrete convolution}
\label{sec:semid-four-transf}
The semidiscrete Fourier transform inherits many properties of the continuous Fourier transform. In particular, if we define the discrete convolution of grid values vectors $\vec{f}_h$ and $\vec{g}_h$ as a vector whose $m$-th component is given by a discrete analogue of convolution~\eqref{eq:21}. In particular, we define
\begin{definition}[Discrete convolution, infinite lattice]
  \label{dfn:4}
  Let $\vec{f}_h =  \{f_{h, j} \}_{j=-\infty}^{+\infty}$ and $\vec{g}_h =  \{ g_{h, j} \}_{j=-\infty}^{+\infty}$ be vectors of grid values such as  the following sum is well defined,
  \begin{equation}
    \label{eq:81}
    \tensor{\left( \discreteconvolution{\vec{f}_h}{\vec{g}_h} \right)}{_m}
    =_{\bydefinition}
    h
    \sum_{j = -\infty}^{+ \infty}
    f_{h, m - j} g_{h, j}
    .
  \end{equation}
  (Note the factor $h$ in the definition~\eqref{eq:81}.) The vector $\discreteconvolution{\vec{f}_h}{\vec{g}_h}$ with components given by the formula~\eqref{eq:81} is called the discrete convolution of vectors $\vec{f}_h$ and $\vec{g}_h$. 
\end{definition}
If we further define the elementwise multiplication of vectors as
\begin{equation}
  \label{eq:82}
  \tensor{\left( \tensorschur{\vec{f}_h}{\vec{g}_h} \right)}{_m}
  =_{\bydefinition}
  f_{h, m} g_{h, m},
\end{equation}
then we recover the convolution-to-multiplication formulae~\eqref{eq:25} and~\eqref{eq:26} in the sense of the following lemma, which is straightforward to prove by direct substitution.
\begin{lemma}[Convolution-to-multiplication property, infinite lattice]
  \label{lm:6}
   Let $\vec{f}_h =  \{f_{h, j} \}_{j=-\infty}^{+\infty}$ and $\vec{g}_h =  \{ g_{h, j} \}_{j=-\infty}^{+\infty}$ be vectors of grid values such that the expressions on both sides of the following equations are well defined, then
  \begin{subequations}
    \begin{align}
      \label{eq:83}
      \FourierTransformSemidiscrete{\discreteconvolution{\vec{f}_h}{\vec{g}_h}}
      &=
        \FourierTransformSemidiscrete{\vec{f}_h}\FourierTransformSemidiscrete{\vec{g}_h},
      \\
      \label{eq:84}
      \FourierTransformSemidiscrete{\tensorschur{\vec{f}_h}{\vec{g}_h}}
    &=
      \discreteconvolution{\FourierTransformSemidiscrete{\vec{f}_h}}{\FourierTransformSemidiscrete{\vec{g}_h}}
      .
  \end{align}  
\end{subequations}
\end{lemma}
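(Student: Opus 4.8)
The plan is to prove both identities by direct substitution of the definitions, the only genuine analytic content being the interchange of the order of summation in~\eqref{eq:83} and the interchange of a summation with an integration in~\eqref{eq:84}; both are licensed by Fubini's theorem once one assumes enough absolute summability/integrability, which is exactly what the clause ``such that the expressions on both sides of the following equations are well defined'' is there to guarantee. Throughout I write $\chi$ for the characteristic function $\chi_{\xi \in \left[ -\frac{\pi}{h}, \frac{\pi}{h} \right]}$ and recall that $\chi^{2} = \chi$.

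For~\eqref{eq:83}, I would insert the definition~\eqref{eq:81} of the discrete convolution into the definition~\eqref{eq:69} of the semidiscrete Fourier transform to get $\FourierTransformSemidiscrete{\discreteconvolution{\vec{f}_h}{\vec{g}_h}}(\xi) = h^{2}\sum_{m}\sum_{j} f_{h,\,m-j}\, g_{h,\,j}\, \exponential{-\iunit \xi x_{h,m}}\,\chi$. Exchanging the order of the two sums and, for each fixed $j$, reindexing the $m$-sum by $n = m - j$ --- so that $x_{h,m} = mh = (n+j)h = x_{h,n} + x_{h,j}$ and hence $\exponential{-\iunit \xi x_{h,m}} = \exponential{-\iunit \xi x_{h,n}}\exponential{-\iunit \xi x_{h,j}}$ --- makes the double sum factor as $\bigl(h\sum_{n} f_{h,n}\exponential{-\iunit \xi x_{h,n}}\bigr)\bigl(h\sum_{j} g_{h,j}\exponential{-\iunit \xi x_{h,j}}\bigr)\chi$. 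Since $\chi = \chi^{2}$, one copy of $\chi$ may be attached to each factor, giving precisely $\FourierTransformSemidiscrete{\vec{f}_h}(\xi)\,\FourierTransformSemidiscrete{\vec{g}_h}(\xi)$, which is~\eqref{eq:83}.

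For~\eqref{eq:84}, I would start from $\FourierTransformSemidiscrete{\tensorschur{\vec{f}_h}{\vec{g}_h}}(\xi) = h\sum_{m} f_{h,m}\, g_{h,m}\, \exponential{-\iunit \xi x_{h,m}}\,\chi$ and substitute for the factor $g_{h,m}$ its representation via the inverse semidiscrete Fourier transform~\eqref{eq:70}, namely $g_{h,m} = \frac{1}{2\pi}\int_{\eta = -\pi/h}^{\pi/h}\FourierTransformSemidiscrete{\vec{g}_h}(\eta)\,\exponential{\iunit \eta x_{h,m}}\,\diff\eta$. Interchanging $\sum_{m}$ with $\int\diff\eta$ leaves $\frac{1}{2\pi}\int_{\eta = -\pi/h}^{\pi/h}\bigl(h\sum_{m} f_{h,m}\exponential{-\iunit (\xi-\eta) x_{h,m}}\bigr)\FourierTransformSemidiscrete{\vec{g}_h}(\eta)\,\diff\eta$, where the inner sum is the $\tfrac{2\pi}{h}$-periodic extension of $\FourierTransformSemidiscrete{\vec{f}_h}$ evaluated at $\xi - \eta$. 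Recognising the resulting integral over the first Brillouin zone as the periodic convolution $\discreteconvolution{\FourierTransformSemidiscrete{\vec{f}_h}}{\FourierTransformSemidiscrete{\vec{g}_h}}$ of the two semidiscrete transforms then delivers~\eqref{eq:84}.

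The main (and really rather mild) obstacle is bookkeeping in the second identity: since $\xi$ and $\eta$ each range over $[-\tfrac{\pi}{h},\tfrac{\pi}{h}]$, the argument $\xi-\eta$ of the inner sum generally falls outside the first Brillouin zone, so one must either track the $\tfrac{2\pi}{h}$-periodisation of $\FourierTransformSemidiscrete{\vec{f}_h}$ explicitly, or fix once and for all the convention that the convolution $\ast_h$ acting on functions supported in the Brillouin zone denotes the periodic convolution of their $\tfrac{2\pi}{h}$-periodic extensions, normalised by $\tfrac{1}{2\pi}$ and integrated over one period. Once that convention is in place the computation is routine, and the two Fubini interchanges are the only points where the standing well-definedness hypothesis is actually used.
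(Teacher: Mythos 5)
Your proof is correct and follows the same direct-substitution route the paper intends (the paper merely declares the lemma ``straightforward to prove by direct substitution'' and carries out the analogous computation explicitly only for the periodic case, in the proof of Lemma~\ref{lm:8}). Your treatment of~\eqref{eq:84} is in fact slightly more careful than the paper's, since you correctly flag that the symbol $\ast_h$ on the right-hand side must be read as a $\tfrac{2\pi}{h}$-periodic convolution over the Brillouin zone with the $\tfrac{1}{2\pi}$ normalisation --- a convention the paper never spells out for functions in the Fourier domain.
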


Furthermore, the properties of continuous/semidiscrete Fourier transform can be used to evaluate the bandwidth limited interpolant $u_h(y)$ at any point $y \in \R$ in the physical space \emph{without the need to sum the series representation}~\eqref{eq:bandwidth-limited-interpolant-semidiscrete-physical-space}.  In order to evaluate $u_h(y)$ we can proceed indirectly in the Fourier space
\begin{equation}
  \label{eq:85}
  u_h(y)
  =
  \left. u_h(x + y) \right|_{x=0}
  =
  \left. \left(  \InverseFourierTransform{\FourierTransform{u_h(x + y)}} \right) \right|_{x=0}
  =
  \left. \left(  \InverseFourierTransform{\exponential{\iunit \xi y} \FourierTransform{u_h} } \right) \right|_{x=0}
  =
  \left. \left(  \InverseFourierTransform{\exponential{\iunit \xi y} \FourierTransformSemidiscrete{\vec{u}_h} } \right) \right|_{x=0}
  ,
\end{equation}
where we have exploited the shift property of continuous Fourier transform and the fact that the continuous Fourier transform of the bandwidth limited interpolant $u_h$ coincides with the semidiscrete Fourier transform of the grid values vector $\vec{u}_h$.

\subsubsection{Semidiscrete Fourier transform diagonalises infinite circulant matrices/Laurent operators}
\label{sec:semid-four-transf-2}

As we have seen the continuous Fourier transform ``diagonalises'' the operator ``convolution with a given function $g$''. This property is inherited in the semidiscrete setting. The action of the Laurent operator $\tensorq{L}_h^{\text{infinite}}$ (infinite circulant matrix) on vector $\vec{f}_h$, see Section~\ref{sec:lattice-model}, is in fact the discrete convolution of (infinite) vector $\vec{c}_h$ which represents the middle row/column of $\tensorq{L}_h^{\text{infinite}}$,
\begin{equation}
  \label{eq:86}
  \vec{c}_h =_{\bydefinition}
  \begin{bmatrix}
    \vdots \\ c_{h, 2} \\ c_{h, 1} \\ c_{h, 0} \\ c_{h, 1} \\ c_{h, 2} \\  \vdots
  \end{bmatrix}
\end{equation}
with the (infinite) vector $\vec{f}_h$, that is
\begin{equation}
  \label{eq:87}
  \tensorq{L}_h^{\text{infinite}}\vec{f}_h = \discreteconvolution{\vec{c}_h}{\vec{f}_h}.
\end{equation}
Consequently, if we want to solve the eigenvalue problem
\begin{equation}
  \label{eq:88}
  \tensorq{L}_h^{\text{infinite}}\vec{f}_h = \lambda \vec{f}_h, 
\end{equation}
we can exploit the discrete convolution-to-multiplication rule almost in the same way as on the continuous level, see Section~\eqref{eq:fourier-transform}. We rewrite the left-hand side of~\eqref{eq:88} using the discrete convolution, we apply the semidiscrete Fourier transform, and we use the discrete convolution-to-multiplication rule~\eqref{eq:83}, which yields
\begin{equation}
  \label{eq:89}
  \FourierTransformSemidiscrete{\vec{c}_h}\FourierTransformSemidiscrete{\vec{f}_h} = \lambda \FourierTransformSemidiscrete{\vec{f}_h}.
\end{equation}
Following the procedure outlined is Section~\ref{sec:fourier-transform}, we rewrite the last equation as
\begin{equation}
  \label{eq:90}
  \left(\FourierTransformSemidiscrete{\vec{c}_h} - \lambda \right) \FourierTransformSemidiscrete{\vec{f}_h} = 0,
\end{equation}
which has to hold for all $\xi \in \left[ -\frac{\pi}{h}, \frac{\pi}{h} \right]$. (Note that here we deviate from the continuous setting.) Consequently, we identify the eigenfunction--eigenvalue pair as
\begin{subequations}
  \label{eq:91}
  \begin{align}
    \label{eq:92}
    \FourierTransformSemidiscrete{\vec{f}_h} &= \diracdelta_\eta (\xi) \chi_{\xi \in \left[ -\frac{\pi}{h}, \frac{\pi}{h} \right]}, \\
    \label{eq:93}
    \lambda &= \left. \FourierTransformSemidiscrete{\vec{c}_h} \right|_{\xi = \eta},
  \end{align}
\end{subequations}
where $\diracdelta_\eta (\xi)$ denotes the shifted Dirac function in the Fourier space, $\diracdelta_\eta (\xi)= \diracdelta(\xi - \eta)$. The spectrum thus contains infinitely many eigenvalues $\eta$ which are generated by the semidiscrete Fourier transform of the middle row/column vector~$\vec{c}_h$ of matrix~$\tensorq{L}_h^{\text{infinite}}$, that is by the vector $\FourierTransformSemidiscrete{\vec{c}_h}$. (Note that $\eta \in \left[ -\frac{\pi}{h}, \frac{\pi}{h} \right]$. This characterisation of spectra of Laurent operators is a standard one---see \cite[Corollary 1.3]{gohberg.i.goldberg.s.ea:basic}.) If we want to explicitly identify the eigenfunctions/eigenvectors~$\vec{f}_h$, then we must take the inverse semidiscrete Fourier transform in~\eqref{eq:92},
\begin{equation}
  \label{eq:94}
  \vec{f}_h = \InverseFourierTransformSemidiscrete{ \diracdelta_\eta (\xi) \chi_{\xi \in \left[ -\frac{\pi}{h}, \frac{\pi}{h} \right]}},
\end{equation}
which can be done using the identity~\eqref{eq:78} as follows
\begin{multline}
  \label{eq:95}
  \tensor{\left( \vec{f}_h \right)}{_j}
  =
  \tensor{\left( \InverseFourierTransformSemidiscrete{ \diracdelta_\eta (\xi) \chi_{\xi \in \left[ -\frac{\pi}{h}, \frac{\pi}{h} \right]}} \right)}{_j}
  =
  \left.
    \left(
      \InverseFourierTransform{ \diracdelta_\eta (\xi) \chi_{\xi \in \left[ -\frac{\pi}{h}, \frac{\pi}{h} \right]}}
    \right)
  \right|_{x = x_{h, j}}
  =
  \left.
    \left(
      \InverseFourierTransform{
        \frac{1}{2\pi} \FourierTransform{\exponential{\iunit \eta x}}
        \FourierTransform{ \frac{1}{h} \sinc\left( \frac{\pi x}{h} \right)}
      }
    \right)
  \right|_{x = x_{h, j}}
  \\
  =
  \left.
    \left(
      \convolution{
        \frac{1}{2\pi} \exponential{\iunit \eta x}
      }
      {
        \frac{1}{h} \sinc\left( \frac{\pi x}{h} \right)
      }
    \right)
  \right|_{x = x_{h, j}}
  =
  \left.
    \left(
      \frac{1}{2 \pi}
      \int_{y=-\infty}^{+\infty}
      \exponential{\iunit \eta \left(x -y\right)}
      \frac{1}{h} \sinc\left( \frac{\pi y}{h} \right)
      \,
      \diff y
    \right)
  \right|_{x = x_{h, j}}
  \\
  =
  \left.
    \frac{\exponential{\iunit \eta x}}{2 \pi}
  \right|_{x = x_{h, j}}
  \FourierTransform{\frac{1}{h} \sinc\left( \frac{\pi y}{h} \right)}(\eta)
  =
  \left.
    \frac{\exponential{\iunit \eta x}}{2 \pi}
  \right|_{x = x_{h, j}}
  \chi_{\eta \in \left[ -\frac{\pi}{h}, \frac{\pi}{h} \right]}.
\end{multline}
The same calculation shows that the corresponding bandwidth limited interpolant $f_h$ of $\vec{f}_h$ is given by the formula
\begin{equation}
  \label{eq:96}
  f_h
  =
  \convolution{
    \frac{1}{2\pi} \exponential{\iunit \eta x}
  }
  {
    \frac{1}{h} \sinc\left( \frac{\pi x}{h} \right)
  },
\end{equation}
where $\eta \in \left[ -\frac{\pi}{h}, \frac{\pi}{h} \right]$. Equality~\eqref{eq:95} shows that the eigenvectors for the Laurent operator (infinite circulant matrix) are obtained by \emph{sampling of the eigenfunctions of the corresponding continuous operator} ``convolution with a given function'', see Section~\ref{sec:fourier-transform}. (Provided that we work with eigenfunctions in the Brillouin zone.) Furthermore, if we restrict ourselves to eigenfunctions in the Brillouin zone, then the eigenvalues for the discrete problem with the generating vector $\vec{c}_h$,
\begin{equation}
  \label{eq:97}
  \discreteconvolution{\vec{c}_h}{\vec{f}_h} = \lambda \vec{f}_h,
\end{equation}
coincide with the eigenvalues for the corresponding continuous problem
\begin{equation}
  \label{eq:98}
  \convolution{g}{f} = \lambda f,
\end{equation}
where we set $g = \InverseFourierTransform{\FourierTransformSemidiscrete{\vec{c}_h}}$, see~\eqref{eq:38} and~\eqref{eq:93}.  

\subsubsection{Semidiscrete Fourier transform and differentiation}
\label{sec:semid-four-transf-1}

Concerning the differentiation of a function reconstructed from the grid values, there are again many possibilities how to introduce a suitable concept of differentiation. (One can for example think of finite differences type approximations and so forth.) Since we have decided to work with the bandwidth limited interpolant, we can introduce the discrete differentiation naturally via the differentiation of the bandwidth limited interpolant. We simply \emph{define} the symbol $\dd{^n}{x^n}$ acting on the vector of grid values $\vec{f}_h$ via an analogue of the differentiation formula~\eqref{eq:23}
\begin{equation}
  \label{eq:99}
  \FourierTransformSemidiscrete{\dd{^n}{x^n} \vec{f}_h}
  =_{\bydefinition}
  \left( \iunit \xi \right)^n \FourierTransformSemidiscrete{\vec{f}_h},
\end{equation}
or, more explicitly, as
\begin{equation}
  \label{eq:100}
  \dd{^n}{x^n} \vec{f}_h = \InverseFourierTransformSemidiscrete{\left( \iunit \xi \right)^n \FourierTransformSemidiscrete{\vec{f}_h}}. 
\end{equation}
For further reference we thus have the following definition.
\begin{definition}[Fourier transform based differentiation, infinite lattice]
  \label{dfn:7}
  Let $\vec{f}_h =  \{f_{h, j} \}_{j=-\infty}^{+\infty}$ be a vector of grid values. We set
  \begin{equation}
    \label{eq:422}
    \dd{^n}{x^n} \vec{f}_h =_{\bydefinition} \InverseFourierTransformSemidiscrete{\left( \iunit \xi \right)^n \FourierTransformSemidiscrete{\vec{f}_h}}.
  \end{equation}
\end{definition}
If we define the symbol $\dd{^n}{x^n}$ in this way, then it gives us an approximation of the derivative at the grid points, and, moreover, the discrete differentiation $\dd{^n}{x^n}$ inherits the properties of continuous differentiation with respect to the Fourier transform. Namely, if we are interested in the eigenvalue problem
\begin{equation}
  \label{eq:101}
  \dd{^n}{x^n} \vec{f}_h = \lambda \vec{f}_h,
\end{equation}
then we can solve it using the same procedure as in Section~\ref{sec:fourier-transform}, where we analyse the continuous eigenvalue problem~\eqref{eq:30}. We first apply the semidiscrete Fourier transform to~\eqref{eq:101} and we get the equation
\begin{equation}
  \label{eq:102}
  \left( \iunit \xi \right)^n \FourierTransformSemidiscrete{\vec{f}_h} = \lambda \FourierTransformSemidiscrete{\vec{f}_h},
\end{equation}
which should hold for all $\xi \in \left[ -\frac{\pi}{h}, \frac{\pi}{h} \right]$. Formal solution of this equation is
\begin{equation}
  \label{eq:103}
  \FourierTransformSemidiscrete{\vec{f}_h} = \diracdelta_\eta (\xi) \chi_{\xi \in \left[ -\frac{\pi}{h}, \frac{\pi}{h} \right]}, 
\end{equation}
with $\eta$ fixed at arbitrary wavenumber $\eta \in \left[ -\frac{\pi}{h}, \frac{\pi}{h} \right]$, and the corresponding eigenvalue is $\lambda = \left(\iunit \eta \right)^n$. The corresponding (infinite) eigenvector $\vec{f}_h$ is then obtained by the inverse semidiscrete Fourier transform, see~\eqref{eq:94} for explicit calculation. We see that even at the semidiscrete level the differentiation operator again shares the same eigenfunctions with the ``convolution with a given function'' operator, the operators again differ by eigenvalues only, see Section~\ref{sec:fourier-transform} for the same observation on the continuous level. Finally, we note that~\eqref{eq:103} and~\eqref{eq:94} imply that the \emph{eigenvector $\vec{f}_h$ elements are obtained by sampling of the continuous eigenfunction $f$ for the continuous differentiation operator}, see~\eqref{eq:34}. Furthermore, we see that such $\vec{f}_h$ is an eigenvector for an arbitrary order semidiscrete differential operator $\dd{^n}{x^n}$.

\subsection{Search for continuous analogue of lattice model---revisiting the nearest neighbour interaction lattice model}
\label{sec:search-cont-anal-1}

Having thoroughly discussed the reconstruction procedure, we can go back to the nearest neighbour interaction lattice model. This time we want to identify exact counterpart of the discrete model \emph{provided that the reconstruction is done using the bandwidth limited interpolant}.

We start with the $j$-th discrete equation~\eqref{eq:6}, and we multiply the $j$-th equation by $\exponential{- \iunit \xi x_{h, j}}$. This gives
\begin{equation}
  \label{eq:104}
  \ddd{}{t}\left( u_{h, j}\exponential{- \iunit \xi x_{h, j}} \right) - \ccontinuous^2 \frac{u_{h, j+1}\exponential{- \iunit \xi x_{h, j}} - 2 u_{h, j}\exponential{- \iunit \xi x_{h, j}} + u_{h, j-1}\exponential{- \iunit \xi x_{h, j}}}{h^2} = 0.
\end{equation}
Then we exploit the fact that
\begin{equation}
  \label{eq:105}
  \exponential{- \iunit \xi x_{h, j}}
  =
  \exponential{- \iunit \xi x_{h, j \pm 1} \pm \iunit \xi h}
  =
  \exponential{- \iunit \xi x_{h, j \pm 1}} \exponential{\pm \iunit \xi h},
\end{equation}
and we get
\begin{equation}
  \label{eq:106}
  \ddd{}{t}\left( u_{h, j}\exponential{- \iunit \xi x_{h, j}} \right) - \ccontinuous^2 \frac{ \left(u_{h, j+1}\exponential{- \iunit \xi x_{h, j+1}} \right) \exponential{\iunit \xi h} - 2 u_{h, j}\exponential{- \iunit \xi x_{h, j}} + \left(u_{h, j-1}\exponential{- \iunit \xi x_{h, j-1}}\right) \exponential{-\iunit \xi h}}{h^2} = 0.
\end{equation}
Now we sum all the equations with respect to $j$, which leads to
\begin{equation}
  \label{eq:107}
  \ddd{}{t}\FourierTransformSemidiscrete{\vec{u}_h}
  -
  \frac{\ccontinuous^2}{h^2}
  \left(\exponential{\iunit \xi h} - 2 +  \exponential{-\iunit \xi h} \right)
  \FourierTransformSemidiscrete{\vec{u}_h}
  =
  0
  .
\end{equation}
We are however using the bandwidth limited interpolant, hence we can rewrite the previous equation as
\begin{equation}
  \label{eq:108}
  \ppd{}{t}\FourierTransform{u_h}
  -
  \frac{\ccontinuous^2}{h^2}
  \left(\exponential{\iunit \xi h} - 2 +  \exponential{-\iunit \xi h} \right)
  \FourierTransform{u_h}
  =
  0
  .
\end{equation}
(Semidiscrete Fourier transform of grid values is identical to the continuous Fourier transform of the corresponding bandwidth limited interpolant, see Figure~\eqref{fig:semidiscrete-fourier-transform}.) Transition from~\eqref{eq:107} to \eqref{eq:108} is the key step in our analysis---it embodies the jump from the discrete to the continuous setting. Equation~\eqref{eq:108} can be manipulated using the standard machinery of \emph{continuous} Fourier transform. First we rewrite equation~\eqref{eq:108} as
\begin{equation}
  \label{eq:109}
  \ppd{}{t}
  \FourierTransform{u_h}
  +
  2
  \ccontinuous^2
  \frac{1 - \cos \left(\xi h\right)}{h^2}{\FourierTransform{u_h}}
  =
  0
  .
\end{equation}
Since we want to manipulate the equation to a form similar to the Fourier transformed standard wave equation, we want to see the Fourier image of the second derivative at the right position, hence we rewrite~\eqref{eq:109} as
\begin{equation}
  \label{eq:110}
  \ppd{}{t}
  \FourierTransform{u_h}
  -
  2
  \ccontinuous^2
  \frac{1 - \cos \left(\xi h\right)}{\xi^2 h^2}\left( -\xi^2 \FourierTransform{u_h} \right)
  =
  0
  ,
\end{equation}
which can be further manipulated into the form
\begin{equation}
  \label{eq:111}
  \ppd{}{t}
  \FourierTransform{u_h}
  -
  \ccontinuous^2
  \left(\frac{\sin \left( \frac{\xi h}{2} \right)}{\frac{\xi h}{2}} \right)^2 \FourierTransform{\ppd{u_h}{x}}
  =
  0.
\end{equation}
The factor in the second term can be rewritten as Fourier transform of a known function. It holds
\begin{equation}
  \label{eq:112}
  \FourierTransform{\UnitTriangle \left( \frac{x}{h} \right)} = h \sinc^2 \left( \frac{h \xi}{2} \right),
\end{equation}
where $\UnitTriangle$ is the unit triangle function defined as
\begin{equation}
  \label{eq:113}
  \UnitTriangle(x)
  =_\bydefinition
  \begin{cases}
    0,& x \in (-\infty, -1), \\
    x+1, & x \in [-1, 0], \\
    -x+1, & x \in (0, 1], \\
    0, & x \in (1, +\infty),
  \end{cases}
\end{equation}
see also Figure~\ref{fig:unit-triangle}.
\begin{figure}[t]
  \centering
  \includegraphics[width=0.25\textwidth]{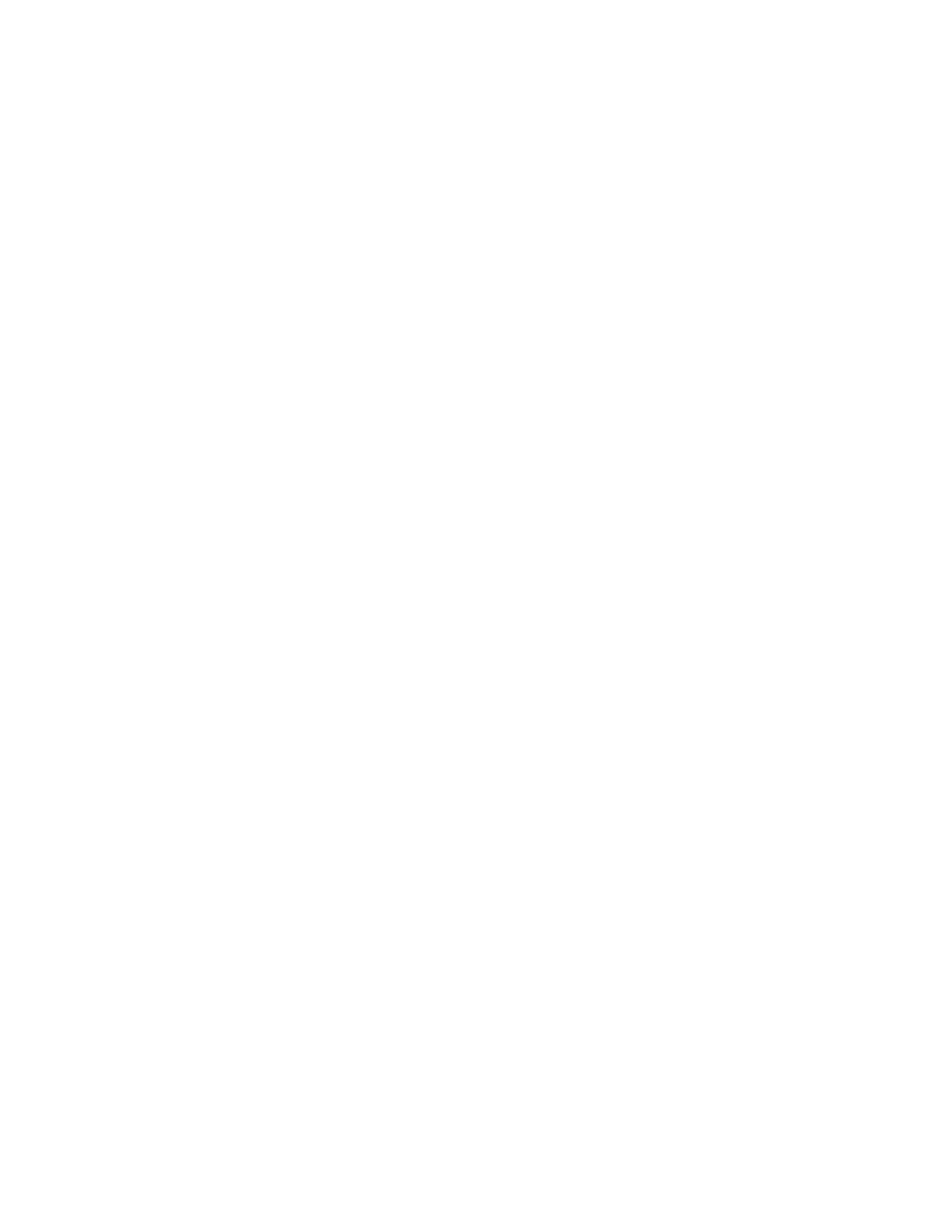}
  \caption{Function $\frac{1}{h}\UnitTriangle \left( \frac{x}{h} \right)$.}
  \label{fig:unit-triangle}
\end{figure}
Consequently, we can rewrite~\eqref{eq:111} as
\begin{equation}
  \label{eq:114}
  \ppd{}{t}
  \FourierTransform{u_h}
  -
  \ccontinuous^2
  \left( \frac{1}{h}\FourierTransform{\UnitTriangle \left( \frac{x}{h} \right)} \right)
  \FourierTransform{\ppd{u_h}{x}}
  =
  0
  .
\end{equation}
Using again the convolution theorem we get
\begin{equation}
  \label{eq:115}
  \ppd{}{t}
  \FourierTransform{u_h}
  -
  \ccontinuous^2
  \FourierTransform{
    \convolution{
      \left( \frac{1}{h}\UnitTriangle \left( \frac{x}{h} \right) \right)
    }
    {
      \ppd{u_h}{x}
    }
  }
  =
  0,
\end{equation}
and upon the transformation back to the physical domain we finally arrive at
\begin{equation}
  \label{eq:116}
  \ppd{u_h}{t}
  -
  \ccontinuous^2
  \convolution{
    \left( \frac{1}{h}\UnitTriangle \left( \frac{x}{h} \right) \right)
  }
  {
    \ppd{u_h}{x}
  }
  =
  0
  .
\end{equation}
We can thus claim that if grid values $\vec{u}_h(t) = \left\{ u_{h, j} (t)\right\}_{j=-\infty}^{+\infty}$ solve the system of ordinary differential equations~\eqref{eq:6}, then the corresponding bandwidth limited interpolant $u_h(x, t)$ solves~\eqref{eq:116}. Note that the sequence $\frac{1}{h}\UnitTriangle \left( \frac{x}{h} \right)$ is a $\diracdelta$-sequence, that is
\begin{equation}
  \label{eq:117}
  \frac{1}{h}\UnitTriangle \left( \frac{x}{h} \right) \stackrel{h \to 0+}{\longrightarrow} \diracdelta(x)
\end{equation}
in the sense of distributions, hence in the limit $h \to 0+$ we indeed recover the standard wave equation.

Furthermore, if $u$ is \emph{any} function---not necessarily a bandwidth limited interpolant of some grid values---that solves
\begin{equation}
  \label{eq:118}
  \ppd{u}{t}
  -
  \ccontinuous^2
  \convolution{
    \left( \frac{1}{h}\UnitTriangle \left( \frac{x}{h} \right) \right)
  }
  {
    \ppd{u}{x}
  }
  =
  0
  ,
\end{equation}
then the integration by parts in the convolution term in~\eqref{eq:118} yields
\begin{equation}
  \label{eq:119}
  \ppd{u}{t}
  -
  \ccontinuous^2
  \convolution{
    \left(
      \ppd{}{x}
      \left( \frac{1}{h}\UnitTriangle \left( \frac{x}{h} \right) \right)
    \right)
  }
  {
    u
  }
  =
  0.
\end{equation}
The (distributional) derivative of the unit triangle function reads $\ppd{}{x} \left( \frac{1}{h}\UnitTriangle \left( \frac{x}{h} \right) \right) = \frac{1}{h^2} \left( \diracdelta(x+h) - 2\diracdelta(x) +  \diracdelta(x+h) \right)$, hence~\eqref{eq:119} reduces to
\begin{equation}
  \label{eq:120}
  \ppd{u}{t}(x,t)
  -
  \frac{\ccontinuous^2}{h^2}
  \left(
    u(x+h)
    -
    2u(x)
    +
    u(x-h)
  \right)
  =
  0.
\end{equation}
Consequently, we see that \emph{any} solution $u$ to~\eqref{eq:118} sampled to the grid $\left\{x_{h, j} \right\}_{j = -\infty}^{+ \infty}$ solves the discrete system~\eqref{eq:6}. (So far we do not know whether $u$ coincides with the bandwidth limited interpolant of these values, but we might expect this. We show this in the proof of Theorem~\ref{thr:2} which deals with a more general problem.) We thus have the following \emph{equivalence between the discrete system of ordinary differential equations and the corresponding continuous partial differential equation}.

\begin{theorem}[Equivalence between a discrete system of ordinary differential equations for grid values and the corresponding partial differential equation---nearest neighbour interaction, infinite lattice]
  \label{thr:1}
  Let $h>0$, and let $\left\{x_{h, j}\right\}_{j=-\infty}^{+\infty}$, $x_{h, j} =_{\bydefinition} jh$, be the corresponding grid on the real line $\R$. Let~$\vec{u}_h(t) = \left\{ u_{h, j} (t)\right\}_{j=-\infty}^{+\infty}$ be a collection of grid values on the grid $\left\{x_{h, j} \right\}_{j=-\infty}^{+\infty}$, and let~$u_h(x,t)$ be the corresponding bandwidth limited interpolant of~$\vec{u}_h(t)$, that is
  \begin{subequations}
    \label{eq:121}
    \begin{equation}
      \label{eq:122}
      u_h(x, t)
      =
      \sum_{j=-\infty}^{+ \infty}  u_{h, j} (t) \sinc_h \left( x - x_{h, j}\right)
      ,
    \end{equation}
    where
    \begin{equation}
      \label{eq:123}
      \sinc_h(x) =_{\bydefinition} \frac{\sin \left( \frac{\pi x}{h} \right)}{\frac{\pi x}{h}}.
    \end{equation}
  \end{subequations}
  Let the grid values $\vec{u}_h(t)  = \left\{ u_{h, j} (t)\right\}_{j=-\infty}^{+\infty}$ solve the initial value problem for the system of ordinary differential equations
  \begin{subequations}
    \label{eq:124}
    \begin{align}
      \label{eq:125}
      \ddd{u_{h, j}}{t} - \ccontinuous^2 \frac{u_{h, j+1} - 2 u_{h, j} + u_{h, j-1}}{h^2} &= 0, \\
      \label{eq:126}
      \left. u_{h, j} \right|_{t=0} &= u_{h, j}^0,\\
      \label{eq:127}
      \left. \dd{u_{h, j}}{t} \right|_{t=0} &= v_{h, j}^0.
    \end{align}
  \end{subequations}
  Let $u$ be a function that solves, for $x \in \R$ on the real line, the initial value problem 
  \begin{subequations}
    \label{eq:128}
    \begin{align}
      \label{eq:129}
      \ppd{u(x, t)}{t}
      -
      \ccontinuous^2
      \convolution{
      \left( \frac{1}{h}\UnitTriangle \left( \frac{x}{h} \right) \right)
      }
      {
      \ppd{u(x, t)}{x}
      }
      &=
        0
        ,
      \\
      \label{eq:130}
      \left. u \right|_{t=0} &= u_h^0,\\
      \label{eq:131}
      \left. \dd{u}{t} \right|_{t=0} &= v_h^0,
    \end{align}
  \end{subequations}
  where $\UnitTriangle$ denotes the unit triangle function~\eqref{eq:113}, that is
  \begin{equation}
    \label{eq:132}
    \UnitTriangle(x)
    =_\bydefinition
    \begin{cases}
      0,& x \in (-\infty, -1), \\
      x+1, & x \in [-1, 0], \\
      -x+1, & x \in (0, 1], \\
      0, & x \in (1, +\infty),
    \end{cases}
  \end{equation}
  and where the initial data $u_h^0$ and $v_h^0$ in~\eqref{eq:130} and \eqref{eq:131} are the bandwidth limited interpolants of the initial data~\eqref{eq:126} and \eqref{eq:127}. Then the function $u$ is the solution to the continuous problem~\eqref{eq:128} if and only if $u = u_h$, where $u_h$ is the bandwidth limited interpolant $u_h$ of grid values $\vec{u}_h(t)  = \left\{ u_{h, j} (t)\right\}_{j=-\infty}^{+\infty}$ that solve the discrete problem~\eqref{eq:124}.
\end{theorem}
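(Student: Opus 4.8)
The plan is to reduce both directions of the equivalence to a single observation: after Fourier transformation in $x$, the continuous problem \eqref{eq:128} and the discrete problem \eqref{eq:124} collapse onto the \emph{same} one-parameter family of scalar ordinary differential equations. Indeed, by \eqref{eq:112} convolution with $\frac{1}{h}\UnitTriangle\!\left(\frac{x}{h}\right)$ becomes multiplication by $\sinc^2\!\left(\frac{h\xi}{2}\right)$ and $\ppd{}{x}$ becomes multiplication by $-\xi^2$, so applying $\FourierTransform{\cdot}$ to \eqref{eq:129} gives, for each fixed $\xi\in\R$,
\[
  \ppd{}{t}\FourierTransform{u} + \ccontinuous^2\,\frac{2\bigl(1-\cos(\xi h)\bigr)}{h^2}\,\FourierTransform{u} = 0 ;
\]
applying instead the \emph{semidiscrete} Fourier transform to \eqref{eq:125} --- which is precisely the passage carried out in \eqref{eq:104}--\eqref{eq:107} --- yields the identical equation for $\FourierTransformSemidiscrete{\vec{u}_h}$, now for $\xi\in\left[-\frac{\pi}{h},\frac{\pi}{h}\right]$ only. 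Writing $\omega_h^2(\xi)=\ccontinuous^2\frac{2(1-\cos(\xi h))}{h^2}$, in both settings each mode $y$ solves $\ddot y+\omega_h^2(\xi)\,y=0$, a linear equation whose solution is uniquely fixed by $y(0)$ and $\dot y(0)$ (also where $\omega_h(\xi)=0$, e.g.\ at $\xi=0$); and the two problems carry the same Cauchy data, because by hypothesis $u_h^0,v_h^0$ are the bandwidth limited interpolants of the sequences $\vec{u}_h^0=\{u_{h,j}^0\}$, $\vec{v}_h^0=\{v_{h,j}^0\}$, so that $\FourierTransform{u_h^0}=\FourierTransformSemidiscrete{\vec{u}_h^0}$ and likewise for the velocities, cf.\ \eqref{eq:71}.

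The easy implication is ``$u=u_h\Rightarrow u$ solves \eqref{eq:128}''. The chain of manipulations \eqref{eq:104}--\eqref{eq:116} already shows that the bandwidth limited interpolant of any solution of \eqref{eq:125} satisfies the partial differential equation \eqref{eq:129}; it then remains only to check the two initial conditions, which is immediate on evaluating the series \eqref{eq:122} and its $t$-derivative at $t=0$, since $u_h(x,0)=\sum_j u_{h,j}^0\,\sinc_h(x-x_{h,j})=u_h^0(x)$ and $\partial_t u_h(x,0)=\sum_j v_{h,j}^0\,\sinc_h(x-x_{h,j})=v_h^0(x)$ by the definitions of $u_h^0$ and $v_h^0$. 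Thus $u_h$ solves the full initial value problem \eqref{eq:128}.

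The substantive direction is the converse. Given any solution $u$ of \eqref{eq:128}, I would Fourier-transform in $x$ and use the first paragraph: $\FourierTransform{u}$, viewed as a function of $\xi$ parametrised by $t$, solves mode by mode $\ddot y+\omega_h^2(\xi)y=0$ with data $\FourierTransform{u_h^0}$, $\FourierTransform{v_h^0}$, both supported in $\left[-\frac{\pi}{h},\frac{\pi}{h}\right]$ because $u_h^0,v_h^0$ are bandwidth limited interpolants. Since the evolution is diagonal in $\xi$, $\FourierTransform{u}$ stays supported in $\left[-\frac{\pi}{h},\frac{\pi}{h}\right]$ for every $t$, i.e.\ $u(\cdot,t)$ is a bandwidth limited function for every $t$; by the reconstruction/sampling identities \eqref{eq:78}--\eqref{eq:79} of Lemma~\ref{lm:5} it therefore equals the bandwidth limited interpolant of its own grid samples, $u(x,t)=\sum_j w_j(t)\,\sinc_h(x-x_{h,j})$ with $w_j(t):=u(x_{h,j},t)$. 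On the other hand, the distributional integration by parts in \eqref{eq:118}--\eqref{eq:120} shows that sampling \eqref{eq:129} on the grid forces $\{w_j(t)\}$ to solve \eqref{eq:125}, while $w_j(0)=u_h^0(x_{h,j})=u_{h,j}^0$ and $\dot w_j(0)=v_h^0(x_{h,j})=v_{h,j}^0$ again by \eqref{eq:78}. Hence $\{w_j\}$ and $\{u_{h,j}\}$ solve the same initial value problem \eqref{eq:124}; by uniqueness --- transparent once the semidiscrete Fourier transform has reduced \eqref{eq:124} to the scalar ODE above --- we get $w_j\equiv u_{h,j}$, and therefore $u=u_h$ by \eqref{eq:122}.

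The one point that really needs care, and that I would nail down before the rest, is the functional-analytic framework that legitimises the Fourier manipulations and, above all, the \emph{uniqueness} claims: a Cauchy problem for an infinite ODE system like \eqref{eq:124} need not be unique without a growth restriction, and the convolution kernel in \eqref{eq:129} is a measure (the distributional derivative of a triangle). The natural remedy is to fix at the outset the class in which everything lives --- grid values in $\ell^2$, equivalently bandwidth limited interpolants in $L^2(\R)$, or more generally tempered distributions --- on which the semidiscrete and continuous Fourier transforms are isomorphisms intertwining the two settings; there all the interchanges of $\sum$, $\int$ and $\partial_t$ used above are valid, each Fourier mode genuinely solves $\ddot y+\omega_h^2(\xi)y=0$, and uniqueness is automatic. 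The general form of this reduction is the content of the proof of Theorem~\ref{thr:2}.
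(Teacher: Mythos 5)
Your proposal is correct and follows essentially the same route as the paper: the paper does not prove Theorem~\ref{thr:1} in isolation but derives it as the special case~\eqref{eq:161}--\eqref{eq:163} of Theorem~\ref{thr:2}, whose proof is exactly your Fourier-transform reduction (semidiscrete transform of the lattice system, identification with the continuous transform of the bandwidth limited interpolant via~\eqref{eq:71}, propagation of band-limitedness, then sampling via the distributional integration by parts~\eqref{eq:118}--\eqref{eq:120}). The one place where you genuinely deviate, and where your argument is arguably cleaner for this particular theorem, is the converse direction: the paper's Theorem~\ref{thr:2} proof establishes band-limitedness of $u$ by observing that the multiplier $\frac{1}{h}\FourierTransformSemidiscrete{\vec{c}_h}$ \emph{vanishes} outside $\left[-\frac{\pi}{h},\frac{\pi}{h}\right]$ (it carries the characteristic function by construction), whereas the kernel appearing in~\eqref{eq:129} is the bare triangle, whose symbol $\ccontinuous^2\xi^2\sinc^2\!\left(\frac{h\xi}{2}\right)$ does \emph{not} vanish outside the Brillouin zone; the paper bridges this by inserting an extra $\sinc$ convolution in~\eqref{eq:162}--\eqref{eq:163} and then appealing to the identity property~\eqref{eq:79}, which presupposes the band-limitedness it is meant to deliver. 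Your mode-by-mode argument --- for $\left|\xi\right|>\frac{\pi}{h}$ the scalar problem $\ddot y+\omega_h^2(\xi)y=0$ with zero Cauchy data has only the zero solution, so the support of $\FourierTransform{u}$ cannot grow --- closes that small gap directly and works for the triangle kernel as written. Your closing remark on fixing a function class ($\ell^2$ grid values, $L^2$ or tempered-distribution interpolants) to justify uniqueness for the infinite ODE system and the distributional manipulations is well taken; the paper is silent on this point as well. One cosmetic slip: identities~\eqref{eq:78}--\eqref{eq:79} live in the discussion preceding Lemma~\ref{lm:5}, not in the lemma itself.
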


\subsection{Correspondence between discrete and continuous models}
\label{sec:corr-betw-discr}
The findings regarding the nearest neighbour interaction are straightforward to generalise to the multiple-neighbours setting~\eqref{eq:41}, wherein the interaction between the particles is not limited to nearest neighbour interaction---all particles can interact with all neighbours. Namely, we prove the following theorem.

\begin{theorem}[Equivalence between discrete system of ordinary differential equations for grid values and the corresponding partial differential equation---general interaction, infinite lattice]
  \label{thr:2}
  Let $h>0$, and let $\left\{x_{h, j}\right\}_{j=-\infty}^{+\infty}$, $x_{h, j} =_{\bydefinition} jh$, be the corresponding grid on the real line $\R$. Let~$\vec{u}_h(t) = \left\{ u_{h, j} (t)\right\}_{j=-\infty}^{+\infty}$ be a collection of grid values on the grid $\left\{x_{h, j} \right\}_{j=-\infty}^{+\infty}$, and let~$u_h(x,t)$ be the corresponding bandwidth limited interpolant of~$\vec{u}_h(t)$, that is
  \begin{subequations}
    \label{eq:133}
    \begin{equation}
      \label{eq:134}
      u_h(x, t)
      =
      \sum_{j=-\infty}^{+ \infty}  u_{h, j} (t) \sinc_h \left( x - x_{h, j}\right)
      ,
    \end{equation}
    where
    \begin{equation}
      \label{eq:135}
      \sinc_h(x) =_{\bydefinition} \frac{\sin \left( \frac{\pi x}{h} \right)}{\frac{\pi x}{h}}.
    \end{equation}
  \end{subequations}
  Let the grid values $\vec{u}_h(t)  = \left\{ u_{h, j} (t)\right\}_{j=-\infty}^{+\infty}$ solve the initial value problem for the system of ordinary differential equations
  \begin{subequations}
    \label{eq:136}
    \begin{align}
      \ddd{u_{h,j}}{t}
      -
      \sum_{m=-\infty}^{+ \infty} c_{h, j-m}u_{h, m}
      &=
        0,
      \\
      \label{eq:137}
      \left. u_{h, j} \right|_{t=0} &= u_{h, j}^0,\\
      \label{eq:138}
      \left. \dd{u_{h, j}}{t} \right|_{t=0} &= v_{h, j}^0,
    \end{align}
  \end{subequations}
  with the property $c_{h, i} = c_{h, -i}$, that is the system
  \begin{equation}
    \label{eq:139}
    \ddd{}{t}
    \begin{bmatrix}
      \vdots \\
      u_{h, -2} \\
      u_{h, -1} \\
      u_{h, 0} \\
      u_{h, 1} \\
      u_{h, 2} \\
      \vdots
    \end{bmatrix}
    -
    \begin{bmatrix}
      \ddots & \vdots & \vdots & \vdots & \vdots & \vdots & \reflectbox{$\ddots$} \\
      \cdots & c_{h, 0} & c_{h, 1} & c_{h, 2} & c_{h, 3} & c_{h, 4} & \cdots \\
      \cdots & c_{h, 1} & c_{h, 0} & c_{h, 1} & c_{h, 2} & c_{h, 3} & \cdots \\
      \cdots & c_{h, 2} & c_{h, 1} & c_{h, 0} & c_{h, 1} & c_{h, 2} & \cdots \\
      \cdots & c_{h, 3} & c_{h, 2} & c_{h, 1} & c_{h, 0} & c_{h, 1} & \cdots \\
      \cdots & c_{h, 4} & c_{h, 3} & c_{h, 2} & c_{h, 1} & c_{h, 0} & \cdots \\
      \reflectbox{$\ddots$} & \vdots & \vdots & \vdots & \vdots & \vdots & \ddots
    \end{bmatrix}
    \begin{bmatrix}
      \vdots \\
      u_{h, -2} \\
      u_{h, -1} \\
      u_{h, 0} \\
      u_{h, 1} \\
      u_{h, 2} \\
      \vdots
    \end{bmatrix}
    =
    \begin{bmatrix}
      \vdots \\
      0 \\
      0 \\
      0 \\
      0 \\
      0 \\
      \vdots
    \end{bmatrix}
    .
  \end{equation}
  Let $u$ be a function that solves, for $x \in \R$ on the real line, the initial value problem 
  \begin{subequations}
    \label{eq:140}
    \begin{align}
      \label{eq:141}
      \ppd{u}{t}
      +
      \convolution{
      \left(
      \frac{1}{h}
      \InverseFourierTransform{\frac{\FourierTransformSemidiscrete{\vec{c}_h}}{\xi^2}}
      \right)
      }
      {
      \ppd{u}{x}
      }
      &=
        0
        ,
      \\
      \label{eq:142}
      \left. u \right|_{t=0} &= u_h^0,\\
      \label{eq:143}
      \left. \dd{u}{t} \right|_{t=0} &= v_h^0,
    \end{align}
  \end{subequations}
  where
  \begin{equation}
    \label{eq:144}
    \vec{c}_h =_{\bydefinition}
    \begin{bmatrix}
      \vdots \\ c_{h, 2} \\ c_{h, 1} \\ c_{h, 0} \\ c_{h, 1} \\ c_{h, 2} \\  \vdots
    \end{bmatrix}
  \end{equation}
  denotes the vector of coefficients in~\eqref{eq:136} and \eqref{eq:139} respectively, and $\FourierTransformSemidiscrete{\cdot}$ and $\InverseFourierTransform{\cdot}$ denote the \emph{semidiscrete} Fourier transform~\eqref{eq:69} and the \emph{continuous} inverse Fourier transform~\eqref{eq:20},  and where the initial data $u_h^0$ and $v_h^0$ in~\eqref{eq:142} and \eqref{eq:143} are bandwidth limited interpolants of the initial data~\eqref{eq:137} and \eqref{eq:138}. Then the function $u$ is a solution to the continuous problem~\eqref{eq:140} if and only if $u = u_h$, where $u_h$ is the bandwidth limited interpolant $u_h$ of grid values $\vec{u}_h(t)  = \left\{ u_{h, j} (t)\right\}_{j=-\infty}^{+\infty}$ that solve the discrete problem~\eqref{eq:136}.
\end{theorem}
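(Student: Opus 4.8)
The plan is to run, in the Fourier domain, the same line of argument that precedes Theorem~\ref{thr:1}, the concrete kernel $\frac1h\UnitTriangle\!\left(\tfrac{x}{h}\right)$ now being replaced by $\frac1h\InverseFourierTransform{\FourierTransformSemidiscrete{\vec c_h}/\xi^{2}}$. First I would show that \emph{both} sides of the asserted equivalence reduce to one and the same scalar ordinary differential equation in the time variable, parametrised by the wavenumber $\xi$. On the discrete side, multiplying the $j$-th equation of~\eqref{eq:136} by $h\exponential{-\iunit\xi x_{h,j}}$, summing over $j\in\Z$, and using $x_{h,j}=x_{h,j-m}+x_{h,m}$ to factorise the resulting double sum, one obtains, with the notation of Definition~\ref{dfn:3},
\begin{equation*}
  \ddd{}{t}\FourierTransformSemidiscrete{\vec u_h}(\xi)
  - \frac1h\,\FourierTransformSemidiscrete{\vec c_h}(\xi)\,\FourierTransformSemidiscrete{\vec u_h}(\xi)
  = 0 ,
  \qquad \xi\in\left[-\tfrac{\pi}{h},\tfrac{\pi}{h}\right]
\end{equation*}
(the symmetry $c_{h,i}=c_{h,-i}$ enters only through the fact that $\FourierTransformSemidiscrete{\vec c_h}$ is then real and even). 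On the continuous side, applying the continuous Fourier transform to~\eqref{eq:141}, using the derivative rule~\eqref{eq:23} with $n=2$ and the convolution theorem~\eqref{eq:25}, one gets
\begin{equation*}
  \ppd{}{t}\FourierTransform{u}(\xi)
  + \frac1h\,\frac{\FourierTransformSemidiscrete{\vec c_h}(\xi)}{\xi^{2}}\bigl(-\xi^{2}\bigr)\FourierTransform{u}(\xi)
  = \ppd{}{t}\FourierTransform{u}(\xi)
  - \frac1h\,\FourierTransformSemidiscrete{\vec c_h}(\xi)\,\FourierTransform{u}(\xi)
  = 0 ,
\end{equation*}
so the two problems share the identical Fourier-space equation.

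Next I would settle the implication from a discrete solution to a continuous one. Let $\vec u_h(t)$ solve~\eqref{eq:136}--\eqref{eq:138} and let $u_h$ be its bandwidth limited interpolant~\eqref{eq:134}. By~\eqref{eq:71}, $\FourierTransform{u_h}=\FourierTransformSemidiscrete{\vec u_h}$, so $\FourierTransform{u_h}$ satisfies the Fourier-space equation above on $\left[-\tfrac{\pi}{h},\tfrac{\pi}{h}\right]$; since both $\FourierTransform{u_h}$ and $\FourierTransformSemidiscrete{\vec c_h}$ vanish outside that interval, the equation holds for all $\xi\in\R$, which is precisely the continuous Fourier transform of~\eqref{eq:141}. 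As the interpolation~\eqref{eq:134} is linear and commutes with $\dd{}{t}$, the grid samples of $u_h(\cdot,0)$ and of $\dd{}{t}u_h(\cdot,0)$ are $u_{h,j}^0$ and $v_{h,j}^0$, so these two functions are the bandwidth limited interpolants $u_h^0$ and $v_h^0$; hence $u_h$ satisfies~\eqref{eq:142}--\eqref{eq:143} and $u=u_h$ solves~\eqref{eq:140}.

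For the converse, let $u$ solve~\eqref{eq:140}. By the first step, for each fixed $\xi$ the map $t\mapsto\FourierTransform{u}(\xi,t)$ solves the scalar linear second-order ODE $\ppd{}{t}\FourierTransform{u}-\tfrac1h\FourierTransformSemidiscrete{\vec c_h}\FourierTransform{u}=0$ with data $\FourierTransform{u}(\xi,0)=\FourierTransform{u_h^0}(\xi)$ and $\dd{}{t}\FourierTransform{u}(\xi,0)=\FourierTransform{v_h^0}(\xi)$. Since $u_h^0$ and $v_h^0$ are bandwidth limited interpolants, this data vanishes for $\xi\notin\left[-\tfrac{\pi}{h},\tfrac{\pi}{h}\right]$, whence uniqueness for the ODE forces $\FourierTransform{u}(\xi,t)=0$ there for all $t$; therefore $u(\cdot,t)$ is bandwidth limited and, by Lemma~\ref{lm:5} together with~\eqref{eq:78}, coincides with the bandwidth limited interpolant of its own grid samples $u_{h,j}(t):=u(x_{h,j},t)$. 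Using once more that~\eqref{eq:78} yields $\FourierTransformSemidiscrete{\vec u_h(t)}=\FourierTransform{u(\cdot,t)}$ and reading the factorisation of the first step in reverse, the Fourier-space equation turns into $\ddd{u_{h,j}}{t}-\sum_{m}c_{h,j-m}u_{h,m}=0$, while the grid samples of the initial data of $u$ are $u_{h,j}^0$ and $v_{h,j}^0$; thus $\vec u_h$ solves~\eqref{eq:136}--\eqref{eq:138} and $u=u_h$. Combining this with uniqueness of the infinite linear system~\eqref{eq:136}--\eqref{eq:138} (valid, for instance, when $\sum_i|c_{h,i}|<\infty$) shows in addition that~\eqref{eq:140} has a unique solution, namely $u_h$.

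The genuinely delicate point is the factor $1/\xi^{2}$: one must guarantee that $\FourierTransformSemidiscrete{\vec c_h}/\xi^{2}$ is a bona fide (at least tempered) object, so that the kernel $\frac1h\InverseFourierTransform{\FourierTransformSemidiscrete{\vec c_h}/\xi^{2}}$, the convolution in~\eqref{eq:141}, and the cancellation $\xi^{-2}\cdot(-\xi^{2})=-1$ used above are all legitimate. This is exactly where the physical structure of $\vec c_h$ enters: the structural relation~\eqref{eq:46}, which makes the row sums of the interaction matrix vanish, gives $\FourierTransformSemidiscrete{\vec c_h}(0)=h\sum_i c_{h,i}=0$, and the symmetry $c_{h,i}=c_{h,-i}$ gives $\partial_\xi\FourierTransformSemidiscrete{\vec c_h}(0)=0$, so that, under a mild decay assumption on $\vec c_h$ (automatic for finite-range interactions), $\FourierTransformSemidiscrete{\vec c_h}(\xi)/\xi^{2}$ extends continuously to $\xi=0$. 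As a consistency check, for the nearest-neighbour coefficients~\eqref{eq:112} gives $\FourierTransformSemidiscrete{\vec c_h}(\xi)/\xi^{2}=-\ccontinuous^{2}\FourierTransform{\UnitTriangle(x/h)}\,\chi_{\xi\in[-\pi/h,\,\pi/h]}$, so that on bandwidth limited functions~\eqref{eq:141} reduces to~\eqref{eq:129} and Theorem~\ref{thr:1} is recovered as a special case. Everything else is the Fourier-transform bookkeeping already rehearsed there.
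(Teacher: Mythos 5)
Your proposal is correct and follows essentially the same route as the paper's proof: semidiscrete Fourier transform plus convolution-to-multiplication on the discrete side, the identification $\FourierTransform{u_h}=\FourierTransformSemidiscrete{\vec{u}_h}$ to jump to the continuous setting, and propagation of bandwidth-limitedness from the well-prepared initial data in the converse direction. The only substantive deviations are that you return to the discrete system by inverting the semidiscrete transform in Fourier space rather than via the paper's physical-space integration-by-parts computation of the convolution term, and that you explicitly flag that the kernel $\frac{1}{h}\InverseFourierTransform{\FourierTransformSemidiscrete{\vec{c}_h}/\xi^{2}}$ is only a legitimate object because $\FourierTransformSemidiscrete{\vec{c}_h}$ vanishes to second order at $\xi=0$ (via the row-sum condition~\eqref{eq:46} and the symmetry $c_{h,i}=c_{h,-i}$) --- a genuine hypothesis that the theorem statement and the paper's proof leave implicit.
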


We note that~\eqref{eq:141} can be also rewritten as
\begin{equation}
  \label{eq:145}
  \ppd{u_h}{t}
  +
  \convolution{
    \left(
      \frac{1}{h}
      \InverseFourierTransform{\frac{\FourierTransform{c_h}}{\xi^2}}
    \right)
  }
  {
    \ppd{u_h}{x}
  }
  =
  0
  ,
\end{equation}
where $c_h$ is the bandwidth limited interpolant of grid values $\vec{c}_h$. This observation follows from the properties of bandwidth limited interpolant, see~\eqref{eq:71}. As we shall see from the proof, the \emph{theorem indeed holds only if the initial conditions are well prepared}. The initial conditions $u_h^0$ and $v_h^0$ for the continuous case must be bandwidth limited functions constructed as bandwidth limited interpolants of the discrete values $\left\{ u_{h, j}^0\right\}_{j=-\infty}^{+\infty}$ and $\left\{ v_{h, j}^0\right\}_{j=-\infty}^{+\infty}$.

\begin{proof}
  We proceed as in Section~\eqref{sec:search-cont-anal-1}. First we note that~\eqref{eq:136} can be rewritten using the discrete convolution~\eqref{eq:81} operator as
  \begin{equation}
    \label{eq:146}
    \ddd{}{t} \vec{u}_h - \frac{1}{h} \discreteconvolution{\vec{c}_h}{\vec{u}_h} = \vec{0}.
  \end{equation}
  We take the discrete Fourier transform of~\eqref{eq:146}, and we use the discrete convolution-to-multiplication property~\eqref{eq:83}, and we get
  \begin{equation}
    \label{eq:147}
    \ddd{}{t} \FourierTransformSemidiscrete{\vec{u}_h} - \frac{1}{h} \FourierTransformSemidiscrete{\vec{c}_h} \FourierTransformSemidiscrete{\vec{u}_h} = 0.
  \end{equation}
  Now we exploit the fact that the semidiscrete Fourier transform of grid values vector is identical to the continuous Fourier transform of the corresponding bandwidth limited interpolant, see~\eqref{eq:71}, and we rewrite~\eqref{eq:147} as
  \begin{equation}
    \label{eq:148}
    \ppd{}{t} \FourierTransform{u_h} - \frac{1}{h} \FourierTransformSemidiscrete{\vec{c}_h} \FourierTransform{u_h} = 0,
  \end{equation}
  which bring us from the discrete setting to the continuous setting. We rearrange the terms to see the Fourier image of the second derivative,
  \begin{equation}
    \label{eq:149}
    \ppd{}{t} \FourierTransform{u_h} + \frac{1}{h} \frac{\FourierTransformSemidiscrete{\vec{c}_h}}{\xi^2} \left(- \xi^2 \FourierTransform{ u_h} \right) = 0,
  \end{equation}
  which upon taking the inverse Fourier transform yields
  \begin{equation}
    \label{eq:150}
    \ppd{u_h}{t}
    +
    \convolution{
      \left(
        \frac{1}{h}
        \InverseFourierTransform{\frac{\FourierTransformSemidiscrete{\vec{c}_h}}{\xi^2}}
      \right)
    }
    {
      \ppd{u_h}{x}
    }
    =
    0
    ,
  \end{equation}
  where we have also used the differentiation property~\eqref{eq:23}.
  
  On the other hand, assume that $u$ is a function that solves the equation
  \begin{equation}
    \label{eq:151}
    \ppd{u}{t}
    +
    \convolution{
      \left(
        \frac{1}{h}
        \InverseFourierTransform{\frac{\FourierTransformSemidiscrete{\vec{c}_h}}{\xi^2}}
      \right)
    }
    {
      \ppd{u}{x}
    }
    =
    0
  \end{equation}
  with the initial conditions~\eqref{eq:142} and \eqref{eq:143}, that is
  \begin{subequations}
    \label{eq:152}
    \begin{align}
      \label{eq:153}
      \left. u \right|_{t=0} &= u_h^0,\\
      \label{eq:154}
      \left. \dd{u}{t} \right|_{t=0} &= v_h^0.
    \end{align}
  \end{subequations}
  Note that we do not assume \emph{a priori} that $u$ is the bandwidth limited interpolant of some grid values or a bandwidth limited function. We first observe that the definition of continuous/semidiscrete Fourier transform, see~\eqref{eq:fourier-transform} and~\eqref{eq:semidiscrete-fourier-transform}, imply that
  \begin{equation}
    \label{eq:155}
    \FourierTransformSemidiscrete{\vec{c}_h}
    =
    h \FourierTransform{\sum_{n = -\infty}^{+\infty} c_{h, n} \diracdelta(x - x_{h, n})}
    \chi_{\xi \in \left[ -\frac{\pi}{h}, \frac{\pi}{h} \right]}
    .
  \end{equation}
  Using the convolution-to-multiplication formula~\eqref{eq:25} and the differentiation formula~\eqref{eq:22} we see that the continuous Fourier transform of~\eqref{eq:151} reads
  \begin{equation}
    \label{eq:156}
    \ppd{}{t}
    \FourierTransform{u}
    -
    \frac{1}{h}
    \FourierTransformSemidiscrete{\vec{c}_h}
    \FourierTransform{u}
    =
    0
    .
  \end{equation}
  This equation must hold for all $\xi \in \R$. In virtue of~\eqref{eq:155} we however see that the second term on the left-hand side of~\eqref{eq:156} vanishes for all $\xi \in \R \setminus \left[ -\frac{\pi}{h}, \frac{\pi}{h} \right]$, which implies that
  \begin{equation}
    \label{eq:157}
    \ppd{}{t}
    \FourierTransform{u}
    \chi_{\xi \in \R \setminus \left[ -\frac{\pi}{h}, \frac{\pi}{h} \right]}
    =
    0.
  \end{equation}
  Since the functions $u_h^0$ and $v_h^0$ in the \emph{initial} conditions~\eqref{eq:153} and \eqref{eq:154} are bandwidth limited functions, we see that the function $u$ starting from these initial conditions must be a bandwidth limited function at $t=0$. Equation~\eqref{eq:157} then implies that $u$ must remain a bandwidth limited function during the evolution, that is for all $t>0$.

  Having shown that the solution to \eqref{eq:151} must be a bandwidth limited function, we can continue in analysis of~\eqref{eq:151}. We now denote $u$ as $u_h$ in order to indicate that we are in fact working with a bandwidth limited function. The integration by parts in the convolution term in \eqref{eq:151} then yields
  \begin{multline}
    \label{eq:158}
    \convolution{
      \left(
        \frac{1}{h}
        \InverseFourierTransform{\frac{\FourierTransformSemidiscrete{\vec{c}_h}}{\xi^2}}
      \right)
    }
    {
      \ppd{u_h}{x}
    }
    =
    \convolution{
      \left(
        \frac{1}{h}
        \ppd{}{x}
        \InverseFourierTransform{\frac{\FourierTransformSemidiscrete{\vec{c}_h}}{\xi^2}}
      \right)
    }
    {
      u_h
    }
    =
    \convolution{
      \left(
        \frac{1}{h}
        \InverseFourierTransform{- \xi^2 \frac{\FourierTransformSemidiscrete{\vec{c}_h}}{\xi^2}}
      \right)
    }
    {
      u_h
    }
    =
    -
    \convolution{
      \left(
        \frac{1}{h}
        \InverseFourierTransform{ \FourierTransformSemidiscrete{\vec{c}_h}}
      \right)
    }
    {
      u_h
    }
    \\
    =
    -
    \convolution{
      \left(
        \InverseFourierTransform{
          \FourierTransform{\sum_{n = -\infty}^{+\infty} c_{h, n} \diracdelta(x - x_{h, n})}
          \chi_{\xi \in \left[ -\frac{\pi}{h}, \frac{\pi}{h} \right]}
        }
      \right)
    }
    {
      u_h
    }
    =
    -
    \convolution{
      \left(
        \InverseFourierTransform{
          \FourierTransform{\sum_{n = -\infty}^{+\infty} c_{h, n} \diracdelta(x - x_{h, n})}
          \FourierTransform{\frac{1}{h} \sinc\left( \frac{\pi x}{h} \right)}
        }
      \right)
    }
    {
      u_h
    }
    \\
    =
    \convolution{\left( \convolution{\sum_{n = -\infty}^{+\infty} c_{h, n} \diracdelta(x - x_{h, n})}{\frac{1}{h} \sinc\left( \frac{\pi x}{h} \right)} \right)}{u_h}
    =
    \convolution{\left(\sum_{n = -\infty}^{+\infty} c_{h, n} \diracdelta(x - x_{h, n}) \right)}{ \left(\convolution{\frac{1}{h} \sinc\left( \frac{\pi x}{h} \right)}{u_h}\right)}
    \\
    =
    \convolution{\left(\sum_{n = -\infty}^{+\infty} c_{h, n} \diracdelta(x - x_{h, n}) \right)}{u_h}
    =
    -
    \sum_{n = -\infty}^{+\infty} c_{h, n} u_h(x - x_{h, n})
    ,
  \end{multline}
  where we have used the fact that~$\frac{1}{h} \sinc\left( \frac{\pi x}{h} \right)$ is the identity on the space of bandwidth limited functions, see~\eqref{eq:79}. Consequently, equation \eqref{eq:151} reduces to
  \begin{equation}
    \label{eq:159}
    \ppd{u_h}{t}(x,t)
    -
    \sum_{n = -\infty}^{+\infty} c_{h, n} u_h(x - x_{h, n}, t)
    =
    0,
  \end{equation}
  which upon sampling at $x_{h, j}$ yields the system of equations
  \begin{equation}
    \label{eq:160}
    \ppd{u_{h, j}}{t}
    -
    \sum_{n = -\infty}^{+\infty} c_{h, n} u_{h, j - n}
    =
    0.   
  \end{equation}
  (Recall that on the equispaced grid we have $x_{h, j} - x_{h, n} = x_{h, j-n}$ and that the bandwidth limited function $u_h$ is in one-to-one correspondence with its grid values at the grid $\left\{ x_{h,j}\right\}_{j=-\infty}^{+\infty}$.) System~\eqref{eq:160} is the same as system~\eqref{eq:136}, it suffices to relabel the summation index.
\end{proof}

Theorem~\ref{thr:2} concludes our search for exact correspondence between the lattice and continuous models. First we observe that Theorem~\eqref{thr:1} is just a special case of Theorem~\ref{thr:2}. Indeed, if we choose
\begin{equation}
  \label{eq:161}
  \vec{c}_h =
  \begin{bmatrix}
    \vdots \\ c_{h, 2} \\ c_{h, 1} \\ c_{h, 0} \\ c_{h, 1} \\ c_{h, 2} \\  \vdots
  \end{bmatrix}
  =_{\bydefinition}
  \frac{\ccontinuous^2}{h^2}
  \begin{bmatrix}
    \vdots \\ 0 \\ 1 \\ -2 \\ 1 \\ 0 \\  \vdots
  \end{bmatrix},
\end{equation}
that is if we consider only the nearest neighbour interactions, then the system of differential equations~\eqref{eq:136} reduces to the well-known form~\eqref{eq:47}, and quick calculation reveals that
\begin{multline}
  \label{eq:162}
  \frac{1}{h}
  \InverseFourierTransform{\frac{\FourierTransformSemidiscrete{\vec{c}_h}}{\xi^2}}
  =
  \ccontinuous^2
  \InverseFourierTransform{ \frac{\exponential{\iunit \xi h} - 2 + \exponential{-\iunit \xi h}}{h^2 \xi^2} \chi_{\xi \in \left[ -\frac{\pi}{h}, \frac{\pi}{h} \right]}}
  \\
  =
  -
  \ccontinuous^2
  \InverseFourierTransform{
    \FourierTransform{\frac{1}{h} \UnitTriangle \left( \frac{x}{h} \right)}
    \FourierTransform{\frac{1}{h} \sinc\left( \frac{\pi x}{h} \right)}
  }
  =
  -
  \ccontinuous^2 \convolution{\frac{1}{h} \UnitTriangle \left( \frac{x}{h} \right)}{\frac{1}{h} \sinc\left( \frac{\pi x}{h} \right)}.
\end{multline}
Consequently, the corresponding continuous problem~\eqref{eq:151} reads
\begin{equation}
  \label{eq:163}
  \ppd{u}{t}
  -
  \ccontinuous^2
  \convolution{
    \left(
      \frac{1}{h} \convolution{\UnitTriangle \left( \frac{x}{h} \right)}{\frac{1}{h} \sinc\left( \frac{\pi x}{h} \right)}
    \right)
  }
  {
    \ppd{u}{x}
  }
  =
  0
  .
\end{equation}
In virtue of the associativity of the convolution and the identity property for bandwidth limited functions~\eqref{eq:79} we then see that~\eqref{eq:163} reduces to~\eqref{eq:129}. (Recall that from Theorem~\eqref{thr:2} we know that $u$ is necessarily a bandwidth limited function.) Higher order finite differences type approximations of the second derivative operator can be analysed in a similar manner.

An important corollary of Theorem~\ref{thr:2} deals with the standard wave equation. If we are interested in the solution of
\begin{equation}
  \label{eq:164}
  \ppd{u}{t} - \ccontinuous^2 \ppd{u}{x} = 0,
\end{equation}
then the discretisation scheme~\eqref{eq:136} that gives the bandwidth limited interpolant $u_h$ that \emph{exactly} solves the standard wave equation~\eqref{eq:164} \emph{everywhere on the real line} is obtained by the choice of coefficients $\vec{c}_h$ such that
\begin{equation}
  \label{eq:165}
  \left(
    \frac{1}{h}
    \InverseFourierTransform{\frac{\FourierTransformSemidiscrete{\vec{c}_h}}{\xi^2}}
  \right)
  =
  -
  \ccontinuous^2
  \diracdelta.
\end{equation}
Indeed, if we prepare the coefficients in this way, then~\eqref{eq:140} reduces to
\begin{equation}
  \label{eq:166}
  \ppd{u_h}{t} - \ccontinuous^2 \convolution{\diracdelta}{\ppd{u_h}{x}} = 0,
\end{equation}
which is the same as~\eqref{eq:164}. From~\eqref{eq:165} we see that the coefficients $\vec{c}_h$ can be calculated as
\begin{equation}
  \label{eq:167}
  \vec{c}_h
  =
  -
  \ccontinuous^2 h \InverseFourierTransformSemidiscrete{\xi^2}
  .
\end{equation}
For the $j$-th element of the coefficients vector we thus have
\begin{equation}
  \label{eq:168}
  c_{h, j}
  =
  -
  \frac{\ccontinuous^2 h}{2\pi}
  \int_{\xi = - \frac{\pi}{h}}^{\frac{\pi}{h}}
  \xi^2
  \exponential{\iunit \xi x_{h, j}}
  \,
  \diff \xi
  =
  -
  \frac{\ccontinuous^2 h}{2\pi}
  \int_{\xi = - \frac{\pi}{h}}^{\frac{\pi}{h}}
  \xi^2
  \exponential{\iunit j \xi h}
  \,
  \diff \xi
  =
  \begin{cases}
    -
    \frac{\ccontinuous^2 h}{2\pi}
    \left(
    (-1)^j
    \frac{4 \pi}{h^3 j^2}
    \right)
    =
    (-1)^{j+1}
    \ccontinuous^2
    \frac{2}{h^2j^2},  &\qquad j \not = 0,\\
    - \ccontinuous^2 \frac{\pi^2}{3 h^2}, &\qquad j = 0,
  \end{cases}
\end{equation}
and we get the following corollary of Theorem~\eqref{thr:2}.

\begin{corollary}[Exact lattice model for the standard wave equation]
  \label{crl:1}
  Let $h>0$, and let $\left\{x_{h, j}\right\}_{j=-\infty}^{+\infty}$, $x_{h, j} =_{\bydefinition} jh$, be the corresponding grid on the real line $\R$. Let~$\vec{u}_h(t) = \left\{ u_{h, j} (t)\right\}_{j=-\infty}^{+\infty}$ be a collection of grid values on the grid $\left\{x_{h, j} \right\}_{j=-\infty}^{+\infty}$, and let~$u_h(x,t)$ be the corresponding bandwidth limited interpolant of~$\vec{u}_h(t)$. 
  The grid values $\vec{u}_h(t)  = \left\{ u_{h, j} (t)\right\}_{j=-\infty}^{+\infty}$ solve the system of ordinary differential equations
  \begin{equation}
    \label{eq:169}
    \ddd{u_{h,j}}{t}
    -
    \sum_{m=-\infty}^{+ \infty} c_{h, j-m}u_{h, m}
    =
    0
  \end{equation}
  with
  \begin{equation}
    \label{eq:170}
    c_{h, j}
    =
    \ccontinuous^2
    \begin{cases}
      (-1)^{j+1}
      \frac{2}{h^2j^2}
      , & \qquad j \not = 0, \\
      - \frac{\pi^2}{3 h^2}, &\qquad j = 0,
    \end{cases}
  \end{equation}
  if and only if the bandwidth limited interpolant $u_h$ solves
  \begin{equation}
    \label{eq:171}
    \ppd{u_h}{t} - \ccontinuous^2 \ppd{u_h}{x} = 0
  \end{equation}
  provided that the initial conditions for the continuous problem~\eqref{eq:171} are bandwidth limited interpolants of initial conditions for the discrete problem~\eqref{eq:169}, see Theorem~\ref{thr:2}. 
\end{corollary}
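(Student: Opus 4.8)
The plan is to read this off as a direct specialisation of Theorem~\ref{thr:2}: with the coefficient vector $\vec{c}_h$ fixed by~\eqref{eq:170}, I claim the associated continuous problem~\eqref{eq:140} collapses to the standard wave equation~\eqref{eq:171}, after which the asserted ``if and only if'' is precisely the equivalence already furnished by Theorem~\ref{thr:2} (with the initial data matched as required there). So the whole task reduces to identifying the convolution kernel $\frac{1}{h}\InverseFourierTransform{\frac{\FourierTransformSemidiscrete{\vec{c}_h}}{\xi^2}}$ of~\eqref{eq:141} for this particular $\vec{c}_h$ and checking it acts on bandwidth limited functions as multiplication by $-\ccontinuous^2$.

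First I would recognise the coefficients~\eqref{eq:170} as the inverse semidiscrete Fourier transform of the symbol $-\ccontinuous^2 h \xi^2$, i.e. $c_{h,j} = \tensor{\left(\InverseFourierTransformSemidiscrete{-\ccontinuous^2 h \xi^2}\right)}{_j}$. This is exactly the elementary integral~\eqref{eq:168}: one evaluates $\int_{\xi=-\pi/h}^{\pi/h}\xi^2 \exponential{\iunit j \xi h}\,\diff\xi$ by integrating by parts twice, obtaining $(-1)^j\frac{4\pi}{h^3 j^2}$ for $j\neq 0$ and $\frac{2\pi^3}{3h^3}$ for $j=0$, and multiplying by $-\frac{\ccontinuous^2 h}{2\pi}$ reproduces~\eqref{eq:170}. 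Since the semidiscrete Fourier transform always outputs a function supported in the Brillouin zone $\left[-\frac{\pi}{h},\frac{\pi}{h}\right]$, this gives $\FourierTransformSemidiscrete{\vec{c}_h} = -\ccontinuous^2 h \xi^2 \chi_{\xi\in\left[-\frac{\pi}{h},\frac{\pi}{h}\right]}$, hence $\frac{\FourierTransformSemidiscrete{\vec{c}_h}}{\xi^2} = -\ccontinuous^2 h\,\chi_{\xi\in\left[-\frac{\pi}{h},\frac{\pi}{h}\right]}$, and therefore, using $\FourierTransform{\frac{1}{h}\sinc\left(\frac{\pi x}{h}\right)} = \chi_{\xi\in\left[-\frac{\pi}{h},\frac{\pi}{h}\right]}$,
\[
  \frac{1}{h}\InverseFourierTransform{\frac{\FourierTransformSemidiscrete{\vec{c}_h}}{\xi^2}}
  = -\ccontinuous^2\,\frac{1}{h}\sinc\left(\frac{\pi x}{h}\right),
\]
which is $-\ccontinuous^2$ times the bandwidth limited Dirac, cf.~\eqref{eq:165}.

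With the kernel in hand I would invoke Theorem~\ref{thr:2}. By that theorem any solution $u$ of~\eqref{eq:140} is necessarily a bandwidth limited function; differentiation preserves bandwidth limitedness, so $\ppd{u}{x}$ is bandwidth limited as well, and the identity property~\eqref{eq:79} gives $\convolution{\left(\frac{1}{h}\sinc\left(\frac{\pi x}{h}\right)\right)}{\ppd{u}{x}} = \ppd{u}{x}$. Hence for $\vec{c}_h$ as in~\eqref{eq:170} the continuous problem~\eqref{eq:140} reduces to $\ppd{u}{t} - \ccontinuous^2 \ppd{u}{x} = 0$, i.e. to~\eqref{eq:171}. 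Both directions of the corollary then follow from the two directions of Theorem~\ref{thr:2}: the grid values solve~\eqref{eq:169} exactly when their bandwidth limited interpolant solves~\eqref{eq:140}, which we have just shown is the same problem as~\eqref{eq:171}, under the stated matching of initial data.

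I expect no substantive obstacle here; the argument is a one-line consequence of Theorem~\ref{thr:2} together with the integral~\eqref{eq:168}. The only points requiring care are bookkeeping: keeping track of the characteristic function $\chi_{\xi\in\left[-\frac{\pi}{h},\frac{\pi}{h}\right]}$ so that the kernel is understood as the bandwidth limited Dirac $\frac{1}{h}\sinc\left(\frac{\pi x}{h}\right)$ rather than the genuine $\diracdelta$ (the two act identically only because $u$ is bandwidth limited), and noting that the coefficients $c_{h,n}=O(n^{-2})$ decay just fast enough for the series $\sum_n c_{h,n}u_{h,j-n}$ and the underlying semidiscrete Fourier transform manipulations to be legitimate.
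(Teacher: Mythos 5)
Your proposal is correct and follows essentially the same route as the paper: the paper derives the coefficients~\eqref{eq:170} from the requirement~\eqref{eq:165} via the integral~\eqref{eq:168} and then appeals to Theorem~\ref{thr:2}, which is exactly your argument read in the opposite direction. Your explicit remark that the kernel is the bandwidth limited Dirac $-\ccontinuous^2\frac{1}{h}\sinc\left(\frac{\pi x}{h}\right)$ rather than the genuine $\diracdelta$ (the two agreeing only on bandwidth limited functions, cf.~\eqref{eq:79}) is a small but welcome sharpening of the paper's statement~\eqref{eq:165}.
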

The coefficients~\eqref{eq:170} are in fact the coefficients in the (infinite) second order spectral differentiation matrices known in the numerical analysis, see, for example, \cite[Chapter 2]{trefethen.ln:spectral}. We note that the structural condition $c_{h, 0} = - 2 \sum_{j=1}^{+ \infty} c_{h, j} $ that follows from~\eqref{eq:46} is satisfied for the choice~\eqref{eq:170}. This is not surprising. The matrix representing the convolution in~\eqref{eq:169} is a differentiation matrix corresponding to the second derivative and the derivative of a constant is equal to zero---we expect this to be preserved on the discrete level as well, which means that the sum of matrix row elements should be zero.

\section{Periodic lattice}
\label{sec:periodic-lattice}

The lattice (chain) model now describes $N$ equal mass particles in the spatial domain $[0, 2\pi]$ that are in the equilibrium positioned at equispaced grid
$\{x_{h, j}\}_{j=1}^{N}$,
\begin{subequations}
  \label{eq:172}
  \begin{equation}
    \label{eq:173}
    x_{h, j} = jh,
  \end{equation}
  where
  \begin{equation}
    \label{eq:174}
    h = \frac{2 \pi}{N},
  \end{equation}
\end{subequations}
see~Figure~\ref{fig:periodic-lattice}. Since the particular formulae might slightly differ for $N$ odd and even, we for the sake of simplicity assume that $N$ is even. (This allows us to follow, for example, the presentation in~\cite[Chapter 3]{trefethen.ln:spectral}.) The displacements of the individual particles are denoted as
\begin{equation}
  \label{eq:175}
  \{u_{h, j} (t)\}_{j=1}^{N}.
\end{equation}
The whole lattice is assumed periodic in space with the period $2 \pi$; we formally identify the particle at position $x_N = 2 \pi$ (the right chain end) with the particle at $x_0 =_\bydefinition 0$ (the left chain end) and so on, and we can formally write $\{u_{h, j} (t)\}_{j=-\infty}^{\infty}$ with
\begin{equation}
  \label{eq:176}
  u_{h, j+kN} =_{\bydefinition} u_{h, j}
\end{equation}
with $k \in \Z$ and $j = 1, \dots, N$. (Note that the point $x_0$ is not a part of the grid in the sense that we are not interested in the displacement at this point because the displacement at this point is due to periodicity identical to the displacement at point~$x_N$.) This manipulation yields a periodic extension of the finite length lattice to the infinite length lattice. Consequently, the analysis of the periodic lattice setting might build on already obtained results for the infinite lattice.

\begin{figure}
  \centering
  \subfloat[\label{fig:periodic-lattice} Finite lattice, periodic boundary conditions.]{\includegraphics[width=0.4\textwidth]{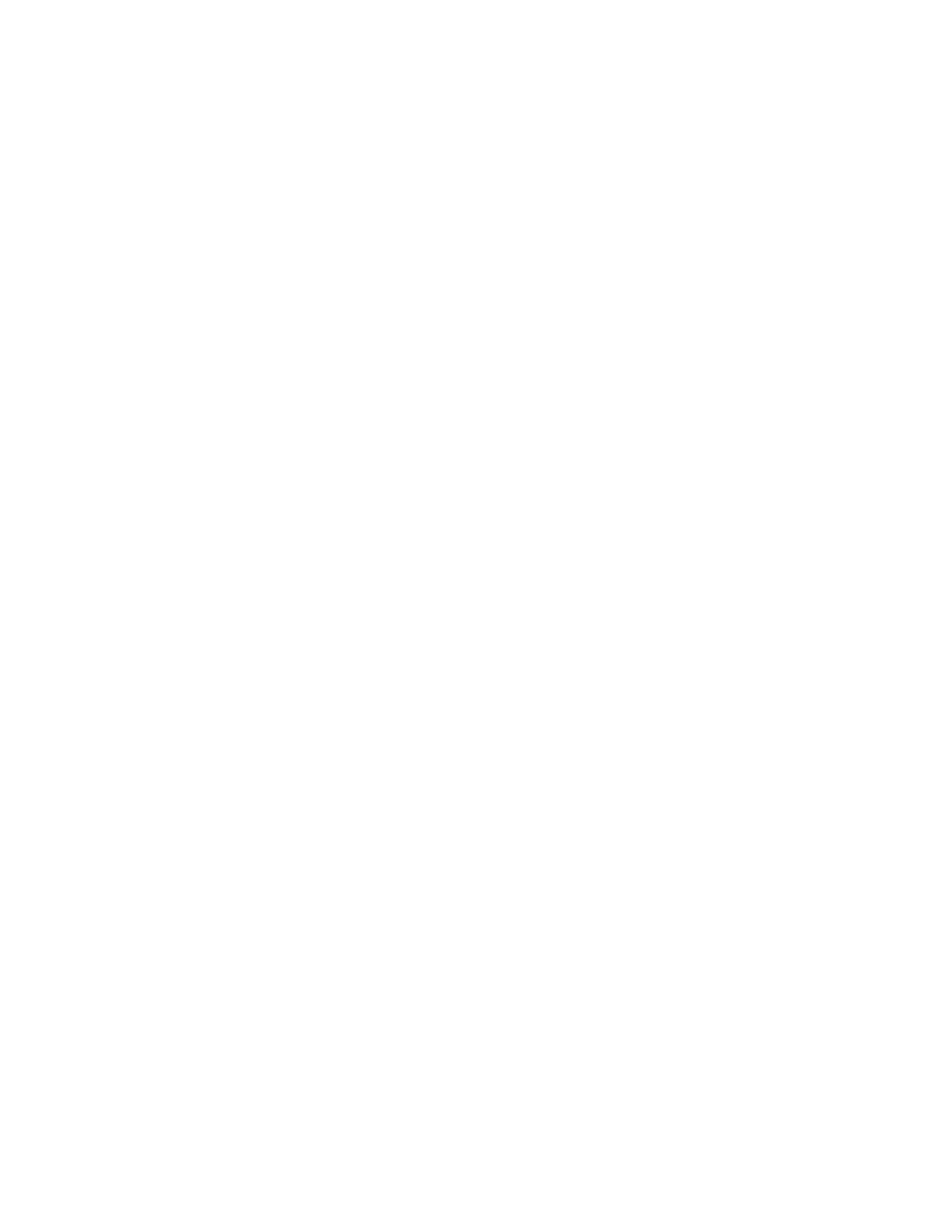}}
  \\
  \subfloat[\label{fig:dirichlet-lattice} Finite lattice, zero Dirichlet boundary conditions.]{\includegraphics[width=0.7\textwidth]{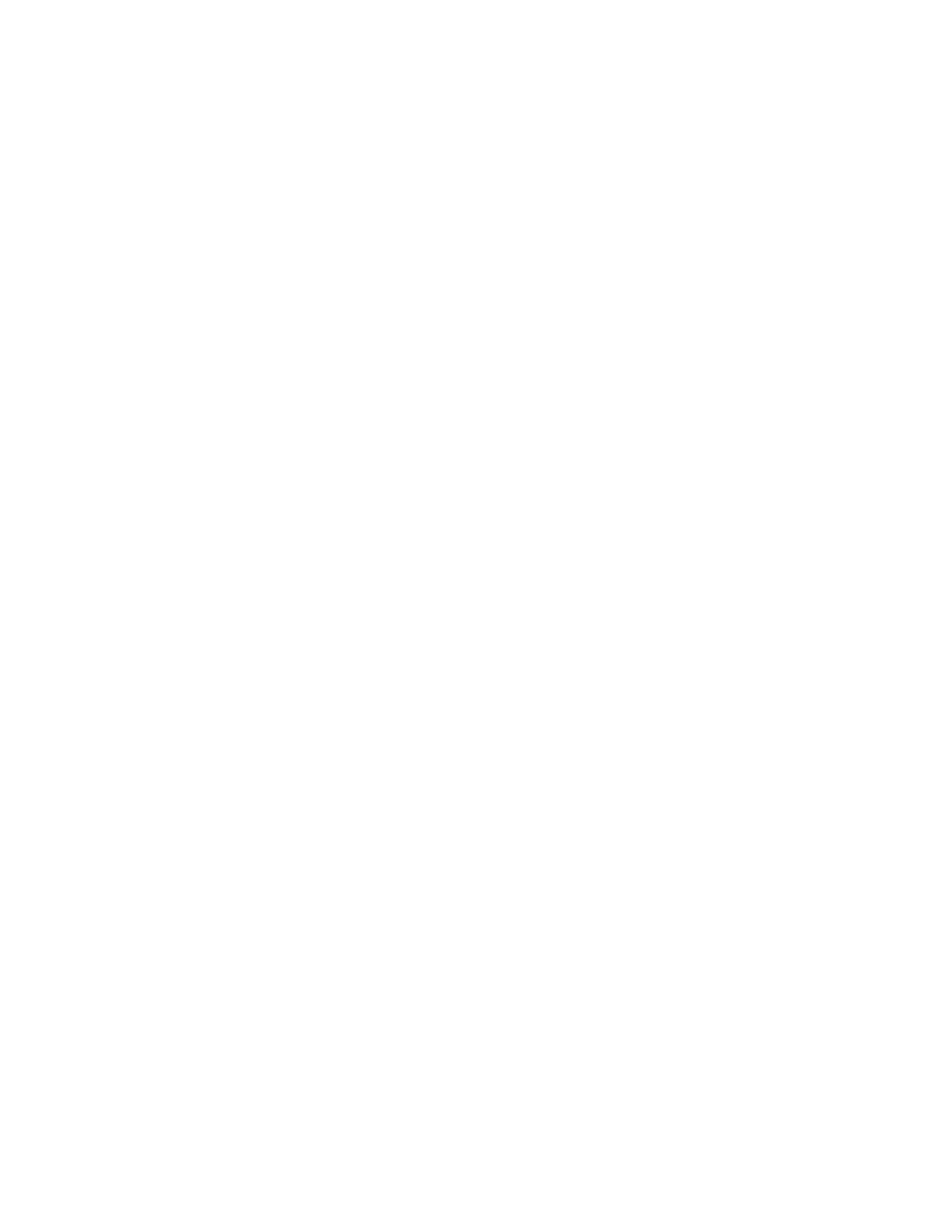}}
  \caption{Lattices.}
\end{figure}

\subsection{Periodic lattice model of interacting particles}
\label{sec:periodic-model}
The governing equations for the particles displacements $\{u_{h, j} (t)\}_{j=1}^{N}$ now read
\begin{subequations}
  \label{eq:177}  
  \begin{equation}
    \label{eq:178}
    \ddd{u_{h,j}}{t}
    -
    \sum_{m=1}^{N} \tilde{c}_{h, j-m}u_{h, m}
    =
    0
    ,
  \end{equation}
  where the coefficients $\left\{ \tilde{c}_{h, i} \right\}_{i=0}^{N-1}$ are given as
  \begin{equation}
    \label{eq:179}
    \tilde{c}_{h, i}
    =
    \begin{cases}
      c_{h, i}, & i=0, \dots, \frac{N}{2}, \\
      c_{h, N-i}, & i= \frac{N}{2}+1, \dots, N-1,
    \end{cases}
  \end{equation}
  which means that the coefficients vector is populated as
  \begin{equation}
    \label{eq:180}
    \widetilde{\vec{c}}_h
    =_{\bydefinition}
    \begin{bmatrix}
      c_{h, 0} \\ c_{h, 1} \\ \vdots \\ c_{h, \frac{N}{2} - 2} \\ c_{h, \frac{N}{2} - 1} \\ c_{h, \frac{N}{2}} \\ c_{h, \frac{N}{2} -1} \\ \vdots \\ c_{h, 2} \\ c_{h, 1} 
    \end{bmatrix}
    .
  \end{equation}
  The coefficients $\left\{ c _{h, i} \right\}_{i=0}^{\frac{N}{2}}$ are given numbers characterising interparticle interactions, wherein we always set
  \begin{equation}
    \label{eq:181}
    c_{h, \frac{N}{2}}=_{\bydefinition}0,
  \end{equation}
  to avoid self-interactions. The sum in~\eqref{eq:178} is interpreted using the $N$-periodicity assumption for the coefficients~$\left\{ \tilde{c}_{h, i} \right\}_{i=0}^{N-1}$ meaning that whenever the index~$j$ in~$\tilde{c}_{h, j}$ is negative, then we use
  \begin{equation}
    \label{eq:182}
    \tilde{c}_{h, j} =_{\bydefinition} \tilde{c}_{h, j + N}.
  \end{equation}
\end{subequations}
For example, if $N=6$, then $\tilde{c}_{h, -4} =_{\bydefinition} \tilde{c}_{h, 2}$ and so forth. Unlike in the previous case of infinite chain, the system~\eqref{eq:177} is now represented by a finite matrix,
\begin{equation}
  \label{eq:183}
  \ddd{}{t}
  \begin{bmatrix}
    u_{h, 1} \\
    u_{h, 2} \\
    \vdots \\
    u_{h, \frac{N}{2} - 1} \\
    u_{h, \frac{N}{2}} \\
    u_{h, \frac{N}{2} + 1} \\
    \vdots \\
    u_{h, N-1} \\
    u_{h, N}
  \end{bmatrix}
  -
  \begin{bmatrix}
    c_{h, 0} & c_{h, 1} & \cdots &  c_{h, \frac{N}{2} - 1} &  c_{h, \frac{N}{2}} &  c_{h, \frac{N}{2} - 1} & \cdots & c_{h, 2} & c_{h, 1} \\
    c_{h, 1} & c_{h, 0} & \cdots &  c_{h, \frac{N}{2} - 2} &  c_{h, \frac{N}{2} - 1} &  c_{h, \frac{N}{2} } & \cdots & c_{h, 3} & c_{h, 2} \\
    \vdots & \vdots & \ddots & \vdots & \vdots & \vdots & \reflectbox{$\ddots$} & \vdots & \vdots \\
    c_{h, \frac{N}{2} - 2} & c_{h, \frac{N}{2} - 1} & \cdots & c_{h, 0}  &  c_{h, 1} & c_{h, 2} & \cdots & c_{h, \frac{N}{2}} & c_{h, \frac{N}{2} - 1} \\

    c_{h, \frac{N}{2} - 1} & c_{h, \frac{N}{2} - 2} & \cdots & c_{h, 1}  &  c_{h, 0} & c_{h, 1} & \cdots & c_{h, \frac{N}{2} - 1} & c_{h, \frac{N}{2}} \\
    c_{h, \frac{N}{2}} & c_{h, \frac{N}{2} - 1} & \cdots & c_{h, 2}  &  c_{h, 1} & c_{h, 0} & \cdots & c_{h, \frac{N}{2} - 2} & c_{h, \frac{N}{2} - 1} \\
    \vdots & \vdots & \reflectbox{$\ddots$} & \vdots & \vdots & \vdots & \ddots & \vdots & \vdots \\
    c_{h, 2} & c_{h, 3} & \cdots &  c_{h, \frac{N}{2}-1} &  c_{h, \frac{N}{2}-2} &  c_{h, \frac{N}{2} - 1} & \cdots & c_{h, 0} & c_{h, 1} \\
    c_{h, 1} & c_{h, 2} & \cdots &  c_{h, \frac{N}{2}} &  c_{h, \frac{N}{2}-1} &  c_{h, \frac{N}{2} - 2} & \cdots & c_{h, 1} & c_{h, 0} \\
  \end{bmatrix}
  \begin{bmatrix}
    u_{h, 1} \\
    u_{h, 2} \\
    \vdots \\
    u_{h, \frac{N}{2} - 1} \\
    u_{h, \frac{N}{2}} \\
    u_{h, \frac{N}{2} + 1} \\
    \vdots \\
    u_{h, N-1} \\
    u_{h, N}
  \end{bmatrix}
  =
  0,
\end{equation}
which we also write in the matrix--vector form as
\begin{equation}
  \label{eq:184}
  \ddd{\vec{u}_h}{t} - \tensorq{L}_h^{\text{periodic}} \vec{u}_h = \vec{0}
\end{equation}
with the obvious identification of vector $\vec{u}_h$ and matrix $\tensorq{L}_h^{\text{periodic}}$. (The matrix form~\eqref{eq:183} of~\eqref{eq:178}, see the middle row of~\eqref{eq:183}, clearly shows that the coefficients $\left\{ c _{h, i} \right\}_{i=0}^{\frac{N}{2}}$ indeed characterise interactions of $i$-th neighbours.) For example, in the case of lattice formed by six particles with the nearest neighbour interaction we set $N=6$, and the coefficients are given as
\begin{equation}
  \label{eq:185}
  c_{h, i} =_{\bydefinition}
  \frac{\ccontinuous^2}{h^2}
  % \frac{K}{\rho h^2}
  \begin{cases}
    -2,&  i=0, \\
    1,& i=1, \\
    0,& i=2, 3,
  \end{cases}
\end{equation}
which yields the standard matrix
\begin{equation}
  \label{eq:186}
  \ddd{}{t}
  \begin{bmatrix}
    u_{h, 1} \\
    u_{h, 2} \\
    u_{h, 3} \\
    u_{h, 4} \\
    u_{h, 5} \\
    u_{h, 6} \\
  \end{bmatrix}
  -
  \frac{\ccontinuous^2}{h^2}
  % \frac{K}{\rho h^2}
  % \begin{bmatrix}
  %   -2 & 1 & 0 & 0 & 0 & 1 \\
  %   1 & -2 & 1 & 0 & 0 & 0 \\
  %   0 & 1 & -2 & 1 & 0 & 0 \\
  %   0 & 0 & 1 & -2 & 1 & 0 \\
  %   0 & 0 & 0 & 1 & -2 & 1 \\
  %   1 & 0 & 0 & 0 & 1 & -2 \\
  % \end{bmatrix}
  \begin{bmatrix}
    -2 & 1 &  &  &  & 1 \\
    1 & -2 & 1 &  &  &  \\
       & 1 & -2 & 1 &  &  \\
       &  & 1 & -2 & 1 &  \\
       &  &  & 1 & -2 & 1 \\
    1 &  &  &  & 1 & -2 \\
  \end{bmatrix}
  \begin{bmatrix}
    u_{h, 1} \\
    u_{h, 2} \\
    u_{h, 3} \\
    u_{h, 4} \\
    u_{h, 5} \\
    u_{h, 6} \\
  \end{bmatrix}
  =
  \begin{bmatrix}
    0 \\
    0 \\
    0 \\
    0 \\
    0 \\
    0
  \end{bmatrix}
  .
\end{equation}
We note that this matrix also arises in the discretisation of the second derivative operator $\ddd{}{x}$ (with periodic boundary condition) by the centred second order finite difference scheme. Finally, we observe that matrix of interest in~\eqref{eq:183} is now a finite dimensional \emph{circulant matrix}---compare with the infinite lattice setting, see~\eqref{eq:42}, where we have been dealing with an infinite circulant matrix/Laurent operator. 

\subsection{Reconstruction procedure---from discrete grid values to a periodic function of real variable}
\label{sec:reconstr-proc-from-2}
In order to identify the exact continuous counterpart of a lattice model, we again first need to carefully discuss the reconstruction procedure, that is the way how to reconstruct a real valued function $u_h(t, x)$ out of the collection of discrete grid values $\vec{u}_h = \{u_{h, j} (t)\}_{j=1}^{N}$. Following the previous section we stick to an analogue of bandwidth limited interpolant, but due to periodicity we restrict ourselves to \emph{discrete} wavenumbers, and, furthermore, we consider only \emph{finite number} of wavenumbers. Indeed, if we want the \emph{wavelike} \emph{ansatz} $\exponential{\iunit \xi x}$ to be a $2\pi$ periodic function, which means that we want $\exponential{\iunit \xi x} = \exponential{\iunit \xi (x + 2 \pi)}$, then we get the condition~$\xi = k$, $k \in \Z$. Furthermore, if we want to stay in the first Brillouin zone $k \in \left[ -\frac{\pi}{h}, \frac{\pi}{h} \right]$, then we must set
\begin{equation}
  \label{eq:187}
  k = -\frac{N}{2} + 1, \dots, \frac{N}{2},
\end{equation}
see the formula for the interparticle distance~\eqref{eq:174}.

\subsubsection{Discrete Fourier transform and bandwidth limited interpolant}
\label{sec:discr-four-transf}
In the periodic setting we thus make one step further in the discretisation of formulae for the Fourier transform, and we introduce the \emph{discrete Fourier transform} and its inverse as follows.
\begin{definition}[Discrete Fourier transform]
\label{dfn:5}
\begin{subequations}
  \label{eq:discrete-fourier-transform}
  Let $\vec{u}_h =  \{u_{h, j} \}_{j=1}^{N}$ be a collection of grid values at the grid $\{x_{h, j}\}_{j=1}^{N}$ in the interval $[0, 2\pi]$, see~\eqref{eq:172}, and let $\vec{v}_h = \{v_{h, j} \}_{k = -\frac{N}{2} + 1}^{\frac{N}{2}}$ be a collection of values in the wavenumber space. We denote
  \begin{align}
    \label{eq:188}
    \tensor{\left( \FourierTransformDiscrete{\vec{u}_h} \right)}{_k}
    &=_{\bydefinition}
      h
      \sum_{j= 1}^{N}
      \tensor{\left(\vec{u}_h\right)}{_j}
      \exponential{- \iunit k x_{h, j}}
      ,
      \qquad
      k = -\frac{N}{2} + 1, \dots, \frac{N}{2},
    \\
    \label{eq:189}
    \tensor{\left( \InverseFourierTransformDiscrete{\vec{v}_h} \right)}{_j}
    &=_{\bydefinition}
      \frac{1}{2\pi}
      \sum_{k = -\frac{N}{2} + 1}^{\frac{N}{2}}
      \tensor{\left(\vec{v}_h\right)}{_k}
      \exponential{\iunit k x_{h, j}}
      ,
      \qquad
      j = 1, \dots, N.
  \end{align}
  and we call the collection $\FourierTransformDiscrete{\vec{u}_h}$ the dicrete Fourier transform of $\vec{u}_h$, while $\InverseFourierTransformDiscrete{\vec{v}_h}$ is called the inverse discrete Fourier transform of $\vec{v}_h$.
\end{subequations}
\end{definition}
(Compare with~\eqref{eq:fourier-transform} and~\eqref{eq:semidiscrete-fourier-transform}.) We emphasise that we use the notation $\tensor{\left(\vec{u}_h\right)}{_j} = u_{h, j}$ for the value at the $j$-th grid point, in particular the index $j$ is not necessarily an index corresponding to an arrangement of grid values is a column vector. Similarly, $\tensor{\left( \InverseFourierTransformDiscrete{v_h} \right)}{_j}$ denotes the $j$-th wavenumber coefficient, not necessarily an index corresponding to an arrangement of wavenumber coefficients in a column vector. This is further emphasised by the fact that we refer to~$\vec{u}_h$ as a \emph{collection} of grid values and not as a \emph{vector} of grid values.  We note that both transforms work with discrete collections of length~$N$, and that we \emph{include} the factor $h$ in the definition of discrete Fourier transform. In computer codes for discrete Fourier transform the factor~$h$ is typically not included, and, furthermore, the computer codes assume a particular arrangement of the collection $\vec{u}_h$ in a corresponding column vector.

If we want to obtain (an analogue of) the bandwidth limited interpolant~$u_h$ of the grid values~$\vec{u}_h$, then we can follow the procedure known from the semidiscrete case, see Figure~\eqref{fig:semidiscrete-fourier-transform}---we first take the forward discrete Fourier transform, and then we apply the inverse continuous Fourier transform. The \emph{forward discrete} Fourier transform of grid values $\vec{u}_h$ yields
\begin{equation}
  \label{eq:190}
  \widehat{\vec{u}}_h =_{\bydefinition} \FourierTransformDiscrete{\vec{u}_h}.
\end{equation}
In order to take the \emph{inverse continuous} Fourier transform of $\widehat{\vec{u}}_h$, we need to interpret this collection of values as a function in the Fourier space, that is as a function of variable $\xi$. A naive guess might be the following---we could have set
\begin{equation}
  \label{eq:191}
  \widehat{\vec{u}}_h (\xi) \stackrel{?}{=}_{\bydefinition} \sum_{k=-\frac{N}{2}+1}^{\frac{N}{2}} \tensor{\left( \widehat{\vec{u}}_h \right)}{_k} \diracdelta_k (\xi),
\end{equation}
which would, upon the application of inverse continuous Fourier transform, yield
\begin{equation}
  \label{eq:192}
  u_h \stackrel{?}{=}
  \InverseFourierTransform{ \sum_{k=-\frac{N}{2}+1}^{\frac{N}{2}} \tensor{\left( \widehat{\vec{u}}_h \right)}{_k} \diracdelta_k (\xi)}
  =
  \frac{1}{2\pi}
  \sum_{k = -\frac{N}{2} + 1}^{\frac{N}{2}}
  \tensor{\left(\widehat{\vec{u}}_h\right)}{_k}
  \exponential{\iunit k x}
  ,
\end{equation}
where we have used formulae~\eqref{eq:22}. Clearly, formula~\eqref{eq:192} can lead to complex functions, which we would like to avoid.

In order to avoid this possibility, we need to make a subtle change in~\eqref{eq:191} to fix the asymmetry in the wavenumber domain. (The wavenumber takes values $k= - \frac{N}{2} + 1, \dots, \frac{N}{2}$ and this collection is not symmetric with respect to the origin. We would like to see the wavenumber $-\frac{N}{2}$ as well.) We thus change~\eqref{eq:191} to its symmetrised version
\begin{equation}
  \label{eq:193}
  \widehat{\vec{u}}_h (\xi) =_{\bydefinition}
  \frac{1}{2}\tensor{\left( \widehat{\vec{u}}_h \right)}{_{\frac{N}{2}}} \diracdelta_{-\frac{N}{2}} (\xi)
  +
  \sum_{k=-\frac{N}{2}+1}^{\frac{N}{2}-1} \tensor{\left( \widehat{\vec{u}}_h \right)}{_k} \diracdelta_{k}(\xi)
  +
  \frac{1}{2}\tensor{\left( \widehat{\vec{u}}_h \right)}{_{\frac{N}{2}}} \diracdelta_{\frac{N}{2}} (\xi),
\end{equation}
that is we split the last element in the sum~\eqref{eq:191} as
\begin{multline}
  \label{eq:194}
  \left.
    \tensor{\left( \widehat{\vec{u}}_h \right)}{_k} \diracdelta \left( \xi - k \right)
  \right|_{k = \frac{N}{2}}
  =
  \left.
    \left(
      h
      \sum_{j= 1}^{N}
      \tensor{\left(\vec{u}_h\right)}{_j}
      \exponential{- \iunit k x_{h, j}}
    \right)
    \diracdelta \left( \xi - k \right)
  \right|_{k = \frac{N}{2}}
  =
  \frac{1}{2}
  \left(
    h
    \sum_{j= 1}^{N}
    \tensor{\left(\vec{u}_h\right)}{_j}
    \exponential{- \iunit \frac{N}{2} x_{h, j}}
  \right)
  +
  \frac{1}{2}
  \left(
    h
    \sum_{j= 1}^{N}
    \tensor{\left(\vec{u}_h\right)}{_j}
    \exponential{- \iunit \frac{N}{2} x_{h, j}}
  \right)
  \\
  =
  \frac{1}{2}
  \left(
    h
    \sum_{j= 1}^{N}
    \tensor{\left(\vec{u}_h\right)}{_j}
    \exponential{\iunit \frac{N}{2} x_{h, j}}
  \right)
  +
  \frac{1}{2}
  \left(
    h
    \sum_{j= 1}^{N}
    \tensor{\left(\vec{u}_h\right)}{_j}
    \exponential{- \iunit \frac{N}{2} x_{h, j}}
  \right)
  =
  \frac{1}{2}
  \left.
    \tensor{\left( \widehat{\vec{u}}_h \right)}{_k}
    \diracdelta \left( \xi - k \right)
  \right|_{k = -\frac{N}{2}}
  +
  \frac{1}{2}
  \left.
    \tensor{\left( \widehat{\vec{u}}_h \right)}{_k}
    \diracdelta \left( \xi - k \right)
  \right|_{k = \frac{N}{2}}
  ,
\end{multline}
where we have used the fact that
\begin{equation}
  \label{eq:195}
  \exponential{- \iunit \frac{N}{2}  x_{h,j}} = \exponential{- \iunit \pi j} = \exponential{\iunit \pi j} = \exponential{\iunit \frac{N}{2}  x_{h,j}}.
\end{equation}
Now take the inverse continuous Fourier transform of $\widehat{\vec{u}}_h (\xi)$ defined in~\eqref{eq:193}, and we get the sought formula for the bandwidth limited interpolant in the periodic case,
\begin{subequations}
  \label{eq:bandwidth-limited-intrepolant-periodic}
  \begin{equation}
    \label{eq:196}
    u_h =_{\bydefinition}
    \frac{1}{2\pi}
    \left(
      \frac{1}{2}
      \tensor{\left(\widehat{\vec{u}}_h\right)}{_{-\frac{N}{2}}}
      \exponential{- \iunit \frac{N}{2}  x}
      +
      \sum_{k = -\frac{N}{2} + 1}^{\frac{N}{2} - 1}
      \tensor{\left(\widehat{\vec{u}}_h\right)}{_k}
      \exponential{\iunit k x}
      +
      \frac{1}{2}
      \tensor{\left(\widehat{\vec{u}}_h\right)}{_{\frac{N}{2}}}
      \exponential{\iunit \frac{N}{2}  x}
    \right)
    ,
    \qquad
    x \in [0, 2 \pi],
  \end{equation}
  where we define the originally missing element $\tensor{\left(\widehat{\vec{u}}_h\right)}{_{-\frac{N}{2}}}$ using the $\frac{N}{2}$ element of the original collection $\widehat{\vec{u}}_h$, that is
  \begin{equation}
    \label{eq:197}
    \tensor{\left(\widehat{\vec{u}}_h\right)}{_{-\frac{N}{2}}}
    =_{\bydefinition}
    \tensor{\left(\widehat{\vec{u}}_h\right)}{_{\frac{N}{2}}}
    ,
  \end{equation}
  and where the collection $\widehat{\vec{u}}_h$ is obtained by the discrete Fourier transform of the grid values $\vec{u}_h$, that is
  \begin{equation}
    \label{eq:198}
    \widehat{\vec{u}}_h =_{\bydefinition} \FourierTransformDiscrete{\vec{u}_h}.
  \end{equation}
\end{subequations}

We note that if evaluate~\eqref{eq:196} at arbitrary point $x = x_{h, j}$, where $j \in \left\{1, \dots, N \right\}$, then~\eqref{eq:196} with the convention \eqref{eq:197} yields exactly the $j$-th element of inverse discrete Fourier transform collection $\vec{u}_h$, see~\eqref{eq:189}. Consequently, the bandwidth limited interpolant formula~\eqref{eq:196} is indeed a decent extension of the inverse discrete Fourier transform~\eqref{eq:189} for arbitrary point $x$.  (Note however that the analogue of the bandwidth limited interpolant \emph{is not obtained} by blindly replacing $x_{h, j}$ in~\eqref{eq:189} by continuous variable $x$.) The symmetrisation trick is essential only if we want to work with the bandwidth limited interpolants, it does not require us to extend the concept of discrete Fourier transform to include one more wavenumber, the discrete Fourier transform machinery works as usual.

We see that~\eqref{eq:bandwidth-limited-intrepolant-periodic} in fact defines a $2\pi$ periodic function on the whole real line, and this function is effectively a periodic extension of the interpolant on the original interval $[0, 2\pi]$. The \emph{continuous} Fourier transform applied to the (periodic extension of) bandwidth limited interpolant $u_h$ behaves as expected,
\begin{multline}
  \label{eq:199}
  \FourierTransform{u_h}
  =
  \FourierTransform{
    \frac{1}{2\pi}
    \left(
      \frac{1}{2}
      \tensor{\left(\widehat{\vec{u}}_h\right)}{_{-\frac{N}{2}}}
      \exponential{- \iunit \frac{N}{2}  x}
      +
      \sum_{k = -\frac{N}{2} + 1}^{\frac{N}{2} - 1}
      \tensor{\left(\widehat{\vec{u}}_h\right)}{_k}
      \exponential{\iunit k x}
      +
      \frac{1}{2}
      \tensor{\left(\widehat{\vec{u}}_h\right)}{_{\frac{N}{2}}}
      \exponential{\iunit \frac{N}{2}  x}
    \right)
  }
  \\
  =
  \frac{1}{2}
  \tensor{\left(\widehat{\vec{u}}_h\right)}{_{-\frac{N}{2}}}
  \diracdelta_{-\frac{N}{2}}
  +
  \sum_{k = -\frac{N}{2} + 1}^{\frac{N}{2} - 1}
  \tensor{\left(\widehat{\vec{u}}_h\right)}{_k}
  \diracdelta_{k}
  +
  \frac{1}{2}
  \tensor{\left(\widehat{\vec{u}}_h\right)}{_{\frac{N}{2}}}
  \diracdelta_{\frac{N}{2}}
  =
  \sideset{}{^\prime}\sum_{k = -\frac{N}{2}}^{\frac{N}{2}}
  \tensor{\left(\widehat{\vec{u}}_h\right)}{_k}
  \diracdelta_{k}
  =
  \sideset{}{^\prime}\sum_{k = -\frac{N}{2}}^{\frac{N}{2}}
  \tensor{\left( \FourierTransformDiscrete{\vec{u}_h} \right)}{_k}
  \diracdelta_{k}
  ,
\end{multline}
with the convention that
\begin{equation}
  \label{eq:200}
  \sideset{}{^\prime}\sum_{k = -\frac{N}{2}}^{\frac{N}{2}}
  \tensor{\left(\widehat{\vec{u}}_h\right)}{_k}
  \diracdelta_{k}
  =_{\bydefinition}
  \frac{1}{2}
  \tensor{\left(\widehat{\vec{u}}_h\right)}{_{-\frac{N}{2}}}
  \diracdelta_{-\frac{N}{2}}
  +
  \sum_{k = -\frac{N}{2} + 1}^{\frac{N}{2} - 1}
  \tensor{\left(\widehat{\vec{u}}_h\right)}{_k}
  \diracdelta_{k}
  +
  \frac{1}{2}
  \tensor{\left(\widehat{\vec{u}}_h\right)}{_{\frac{N}{2}}}
  \diracdelta_{\frac{N}{2}}
  .
\end{equation}
The primed sum symbol $\sideset{}{^\prime}\sum_{k = -\frac{N}{2}}^{\frac{N}{2}}$ means that the terms with the lowest/highest wavenumbers $-\frac{N}{2}$ and $\frac{N}{2}$ are premultiplied by the factor $\frac{1}{2}$, and the element $\tensor{\left(\widehat{\vec{u}}_h\right)}{_{-\frac{N}{2}}}$  that is missing in the Fourier space collection $\widehat{\vec{u}}_h$ is defined as in~\eqref{eq:197}. We note that~\eqref{eq:199} can be also rewritten as
\begin{equation}
  \label{eq:201}
  u_h = \InverseFourierTransform{
    \sideset{}{^\prime}\sum_{k = -\frac{N}{2}}^{\frac{N}{2}}
    \tensor{\left( \FourierTransformDiscrete{\vec{u}_h} \right)}{_k}
    \diracdelta_{k}
  },
\end{equation}
which is an analogue of~\eqref{eq:bandwidth-limited-interpolant-semidiscrete}. Using the primed sum notation we can start completing the diagram showing the construction of the bandwidth limited interpolant, see Figure~\eqref{fig:discrete-fourier-transform}.

An analogue of the explicit formula~\eqref{eq:bandwidth-limited-interpolant-semidiscrete-physical-space} for the bandwidth limited interpolant is obtained by the same manipulation as in Section~\ref{sec:bandw-limit-interp}. In particular, we define the $m$-th discrete Dirac distribution $\vec{\diracdelta}_h^{h, m}$ on the space of grid values as
\begin{equation}
  \label{eq:202}
  \vec{\diracdelta}_h^{h, m}
  =_{\bydefinition}
  \begin{bmatrix}
    0 \\
    \vdots \\
    0 \\
    1 \\
    0 \\
    \vdots \\
    0
  \end{bmatrix}
  ,  
\end{equation}
where the nonzero element is placed at the $m$-th position in this vector of length $N$. (Meaning that the only nonzero element is the grid value at point $x_{h, m}$.) The forward discrete Fourier transform~\eqref{eq:188} of $\vec{\diracdelta}_h^{h, m}$ then yields
\begin{equation}
  \label{eq:203}
  \tensor{
    \left(
      \FourierTransformDiscrete{\vec{\diracdelta}_h^{h, m}}
    \right)
  }{_k}
  =
  h
  \exponential{- \iunit k x_{h, m}}
  .
\end{equation}
Subsequent application of the continuous inverse Fourier transform then gives the bandwidth limited interpolant of the Dirac function placed at the $m$-th grid point $x_{h, m}$,
\begin{multline}
  \label{eq:204}
  \diracdelta_h^{h, m}
  =
  \InverseFourierTransform{
    \FourierTransformDiscrete{\vec{\diracdelta}_h^{h, m}} (\xi)
  }
  =
  \InverseFourierTransform{
    \sideset{}{^\prime}\sum_{k = -\frac{N}{2}}^{\frac{N}{2}}
    \tensor{\left( \FourierTransformDiscrete{\vec{\diracdelta}_h^{h, m}} \right)}{_k}
    \diracdelta_{k}
  }
  \\
  =
  \InverseFourierTransform{
    \frac{1}{2}
    \tensor{
      \left(
        \FourierTransformDiscrete{\vec{\diracdelta}_h^{h, m}}
      \right)
    }{_{-\frac{N}{2}}}
    \diracdelta_{-\frac{N}{2}}
    +
    \sum_{k = -\frac{N}{2} + 1}^{\frac{N}{2} - 1}
    \tensor{
      \left(
        \FourierTransformDiscrete{\vec{\diracdelta}_h^{h, m}}
      \right)
    }{_k}
    \diracdelta_{k}
    +
    \frac{1}{2}
    \tensor{
      \left(
        \FourierTransformDiscrete{\vec{\diracdelta}_h^{h, m}}
      \right)
    }{_{\frac{N}{2}}}
    \diracdelta_{\frac{N}{2}}
  }
  \\
  =
  \frac{h}{2\pi}
  \left(
    \frac{1}{2}
    \exponential{-\iunit \frac{N}{2} \left( x - x_{h, m}\right)}
    +
    \sum_{k = -\frac{N}{2} + 1}^{\frac{N}{2} - 1}
    \exponential{\iunit k \left( x - x_{h, m}\right)}
    +
    \frac{1}{2}
    \exponential{\iunit \frac{N}{2} \left( x - x_{h, m}\right)}
  \right)
  \\
  =
  \frac{h}{2\pi}
  \cos \left( \frac{x - x_{h, m}}{2} \right)
  \frac{\sin \left( \frac{N}{2} \left( x - x_{h, m}\right) \right)}{\sin \left( \frac{x - x_{h, m}}{2}\right)}
  ,
\end{multline}
where the last equality follows from the standard manipulation based on the formula for the sum of power series. We can thus conclude that the bandwidth limited interpolant $u_h$ of the grid values $\vec{u}_h$ is given as
\begin{subequations}
  \label{eq:bandwidth-limited-interpolant-periodic-physical-space}
  \begin{align}
    \label{eq:205}
    u_h(x) &= \sum_{j=1}^{N}\tensor{\left(\vec{u}_h\right)}{_j} \Sdiracdelta_h^h \left( x - x_{h, j}\right), \\
    \label{eq:206}
    \Sdiracdelta_h^h \left(x \right)
           &=_{\bydefinition}
             \frac{h}{2\pi}
             \cos \left( \frac{x}{2} \right)
             \frac{\sin \left( \frac{\pi}{h} x \right)}{\sin \left( \frac{x}{2} \right)}
             .
  \end{align}
\end{subequations}  
This is the well known periodic domain analogue to the infinite domain formula~\eqref{eq:bandwidth-limited-interpolant-semidiscrete-physical-space}, see, for example, \cite[Chapter 3]{trefethen.ln:spectral} and \cite[Chapter 5]{boyd.jp:chebyshev}.

The construction of bandwidth limited interpolant is schematically summarised in Figure~\ref{fig:discrete-fourier-transform}. Note that in the right column in Figure~\ref{fig:discrete-fourier-transform} we slightly abuse the notation---we formally write the discrete Fourier transform as a function of (continuous) wavenumber $\xi$ and we pick the right (discrete) wavenumbers via the multiplication with the Dirac distribution. We can formulate our findings as a Lemma.

\begin{lemma}[Bandwidth limited interpolant, periodic lattice]
  \label{lm:7}
  Let $\vec{u}_h =  \{u_{h, j} \}_{j=1}^{N}$ be a collection of grid values at the grid $\{x_{h, j}\}_{j=1}^{N}$ in the interval $[0, 2\pi]$, see~\eqref{eq:172}. Let $u_h$ be the bandwidth limited interpolant of $\vec{u}_h$, that is a $2\pi$ periodic function~$u_h$ constructed as
\begin{equation}
  \label{eq:427}
  u_h =_{\bydefinition}
  \InverseFourierTransform{
    \sideset{}{^\prime}\sum_{k = -\frac{N}{2}}^{\frac{N}{2}}
    \tensor{\left( \FourierTransformDiscrete{\vec{u}_h} \right)}{_k}
    \diracdelta_{k}
  },
\end{equation}
where the primed sum symbol denotes the following operation
\begin{equation}
  \label{eq:439}
  \sideset{}{^\prime}\sum_{k = -\frac{N}{2}}^{\frac{N}{2}} a_k =_{\bydefinition} \frac{1}{2}a_{-\frac{N}{2}} +  \sideset{}{^\prime}\sum_{k = -\frac{N}{2}+1}^{\frac{N}{2}-1} a_k + \frac{1}{2}a_{\frac{N}{2}}.
\end{equation}
Then the bandwidth limited interpolant $u_h$ is given by the formula
\begin{subequations}
  \label{eq:436}
  \begin{align}
    \label{eq:437}
    u_h(x) &= \sum_{j=1}^{N}\tensor{\left(\vec{u}_h\right)}{_j} \Sdiracdelta_h^h \left( x - x_{h, j}\right), \\
    \label{eq:438}
    \Sdiracdelta_h^h \left(x \right)
           &=_{\bydefinition}
             \frac{h}{2\pi}
             \cos \left( \frac{x}{2} \right)
             \frac{\sin \left( \frac{\pi}{h} x \right)}{\sin \left( \frac{x}{2} \right)}
             .
  \end{align}
\end{subequations}  
\end{lemma}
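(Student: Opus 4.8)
The plan is to formalise the computation already sketched in~\eqref{eq:202}--\eqref{eq:206}, organised around the discrete Dirac distributions $\vec{\diracdelta}_h^{h,m}$ of~\eqref{eq:202}. The first observation is that both sides of~\eqref{eq:437} are linear in the grid values $\vec{u}_h$: the claimed right-hand side manifestly so, and the left-hand side because the discrete Fourier transform~\eqref{eq:188}, the symmetrisation~\eqref{eq:193} together with the convention~\eqref{eq:197}, and the continuous inverse Fourier transform~\eqref{eq:20} are all linear operations, so the map $\vec{u}_h \mapsto u_h$ defined by~\eqref{eq:427} is linear. Writing $\vec{u}_h = \sum_{m=1}^{N} \tensor{\left(\vec{u}_h\right)}{_m}\, \vec{\diracdelta}_h^{h,m}$, it therefore suffices to establish the identity for a single discrete Dirac, i.e.\ to show that the bandwidth limited interpolant of $\vec{\diracdelta}_h^{h,m}$ equals $\Sdiracdelta_h^h(x - x_{h,m})$.

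For that I would first compute $\FourierTransformDiscrete{\vec{\diracdelta}_h^{h,m}}$: by~\eqref{eq:188} its $k$-th entry is $h\,\exponential{-\iunit k x_{h,m}}$, as in~\eqref{eq:203}, and the convention~\eqref{eq:197} supplies the missing coefficient at $k = -\tfrac{N}{2}$, which by~\eqref{eq:195} equals $h\,\exponential{-\iunit \frac{N}{2} x_{h,m}}$ as well. Feeding this into the defining formula~\eqref{eq:427} and applying the continuous inverse Fourier transform, using $\InverseFourierTransform{\diracdelta_k} = \tfrac{1}{2\pi}\exponential{\iunit k x}$ (which is just~\eqref{eq:29} read from right to left), gives
\begin{equation*}
  \diracdelta_h^{h,m}(x)
  =
  \frac{h}{2\pi}
  \left(
    \tfrac{1}{2}\exponential{-\iunit \frac{N}{2}(x - x_{h,m})}
    + \sum_{k = -\frac{N}{2}+1}^{\frac{N}{2}-1} \exponential{\iunit k (x - x_{h,m})}
    + \tfrac{1}{2}\exponential{\iunit \frac{N}{2}(x - x_{h,m})}
  \right).
\end{equation*}

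The next step is to sum the geometric progression. Setting $y = x - x_{h,m}$, the bracket is the symmetric partial sum $\tfrac12 \exponential{-\iunit \frac{N}{2}y} + \sum_{k=-\frac{N}{2}+1}^{\frac{N}{2}-1}\exponential{\iunit k y} + \tfrac12 \exponential{\iunit \frac{N}{2}y}$; multiplying and dividing by $\exponential{\iunit y/2} - \exponential{-\iunit y/2}$, so that the sum telescopes, and then using $\sin\theta = \tfrac{1}{2\iunit}(\exponential{\iunit\theta} - \exponential{-\iunit\theta})$, collapses it to $\cos(y/2)\,\sin(\tfrac{N}{2}y)/\sin(y/2)$, with the removable singularities at $y \in 2\pi\Z$ filled in by the limit, whose value there is $N$. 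Since $Nh = 2\pi$ by~\eqref{eq:174}, we have $\tfrac{N}{2}y = \tfrac{\pi}{h}y$, so the bracket equals $\cos(y/2)\sin(\pi y/h)/\sin(y/2)$ and hence $\diracdelta_h^{h,m}(x) = \Sdiracdelta_h^h(x - x_{h,m})$ with $\Sdiracdelta_h^h$ exactly as in~\eqref{eq:438}. Multiplying by $\tensor{\left(\vec{u}_h\right)}{_m}$ and summing over $m$, using linearity once more in the reverse direction, then yields~\eqref{eq:436}--\eqref{eq:438}.

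I do not expect a genuine obstacle here: the only points requiring care are (i) applying the symmetrisation~\eqref{eq:193}--\eqref{eq:197} consistently, so that the endpoints $k = \pm\tfrac{N}{2}$ genuinely carry weight $\tfrac12$ and the phase identity~\eqref{eq:195} is invoked exactly once, and (ii) checking that the closed-form kernel is continuous at the grid points, where one should recover $\Sdiracdelta_h^h(0) = \tfrac{h}{2\pi}\cdot N = 1$, consistent with $\diracdelta_h^{h,m}(x_{h,m}) = 1$. The substantive ingredient is the Dirichlet-kernel-type summation identity, which is classical (see, e.g., \cite{trefethen.ln:spectral}); one also uses throughout that $N$ is even, as assumed for the periodic lattice.
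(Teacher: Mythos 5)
Your proposal is correct and follows essentially the same route as the paper, which establishes the lemma via the computation in~\eqref{eq:202}--\eqref{eq:206}: reduce to a single discrete Dirac $\vec{\diracdelta}_h^{h,m}$ by linearity, compute its discrete Fourier transform, apply the inverse continuous Fourier transform with the primed-sum convention, and collapse the resulting trigonometric sum to the kernel $\Sdiracdelta_h^h$. Your write-up merely makes explicit two steps the paper leaves implicit --- the linear decomposition $\vec{u}_h = \sum_m \tensor{\left(\vec{u}_h\right)}{_m}\vec{\diracdelta}_h^{h,m}$ and the Dirichlet-kernel summation (which the paper dismisses as ``the standard manipulation based on the formula for the sum of power series'') --- and both are carried out correctly.
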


\begin{figure}
  \centering
  \includegraphics[width=\textwidth]{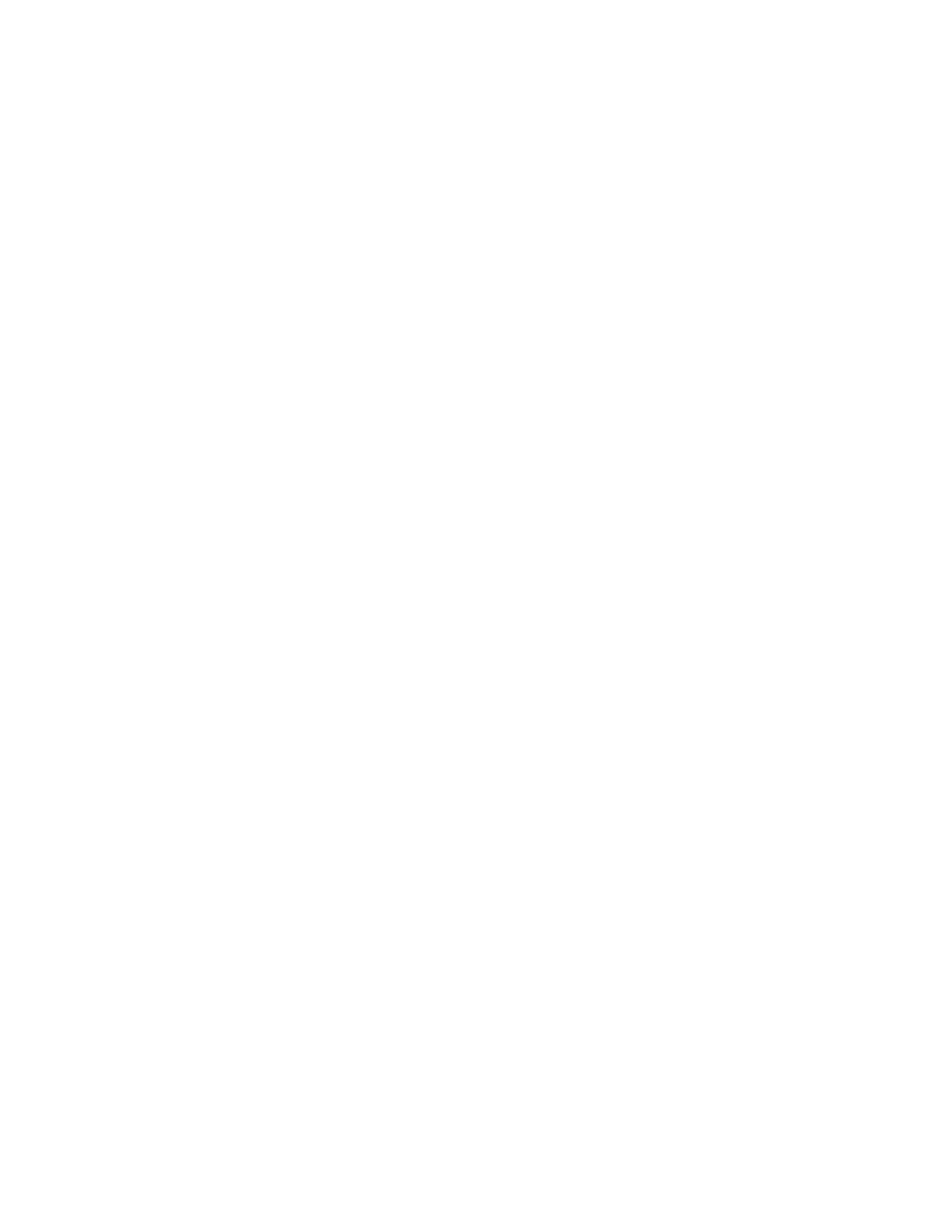}
  \caption{Discrete Fourier transform and construction of bandwidth limited interpolant $u_h$ of grid values vector $\vec{u}_h$, equispaced grid $x_{h, j} = jh$, $j = 1, \dots, N$, $h = \frac{2\pi}{N}$.}
  \label{fig:discrete-fourier-transform}
\end{figure}

\subsubsection{Discrete Fourier transform and evaluation of bandwidth limited interpolant at arbitrary point}
\label{sec:discr-four-transf-4}
The properties of discrete Fourier transform can be again used to evaluate the bandwidth limited interpolant $u_h(y)$ at any point $y \in [0, 2 \pi]$ in the physical space \emph{without the need to sum the series representation}~\eqref{eq:bandwidth-limited-interpolant-periodic-physical-space}. In order to evaluate~$u_h(y)$ we can indeed proceed indirectly in the Fourier space. By the definition of bandwidth limited interpolant, see~\eqref{eq:196}, we have
\begin{multline}
  \label{eq:207}
  u_h (y)=
  \frac{1}{2\pi}
  \left(
    \frac{1}{2}
    \tensor{\left(\widehat{\vec{u}}_h\right)}{_{-\frac{N}{2}}}
    \exponential{- \iunit \frac{N}{2}  y}
    +
    \sum_{k = -\frac{N}{2} + 1}^{\frac{N}{2} - 1}
    \tensor{\left(\widehat{\vec{u}}_h\right)}{_k}
    \exponential{\iunit k y}
    +
    \frac{1}{2}
    \tensor{\left(\widehat{\vec{u}}_h\right)}{_{\frac{N}{2}}}
    \exponential{\iunit \frac{N}{2}  y}
  \right)
  \\
  =
  \frac{1}{2\pi}
  \left(
    \frac{1}{2}
    \tensor{\left(\widehat{\vec{u}}_h\right)}{_{-\frac{N}{2}}}
    \exponential{- \iunit \frac{N}{2}  \left(y - x_{h, j}\right)}\exponential{- \iunit \frac{N}{2}  x_{h, j}}
    +
    \sum_{k = -\frac{N}{2} + 1}^{\frac{N}{2} - 1}
    \tensor{\left(\widehat{\vec{u}}_h\right)}{_k}
    \exponential{\iunit k \left(y - x_{h, j}\right)}\exponential{\iunit k  x_{h, j}}
    +
    \frac{1}{2}
    \tensor{\left(\widehat{\vec{u}}_h\right)}{_{\frac{N}{2}}}
    \exponential{\iunit \frac{N}{2}  \left(y - x_{h, j}\right)}\exponential{\iunit \frac{N}{2}  x_{h, j}}
  \right)
  \\
  =
  \frac{1}{2\pi}
  \left(
    \sum_{k = -\frac{N}{2} + 1}^{\frac{N}{2} - 1}
    \left[
      \tensor{\left(\widehat{\vec{u}}_h\right)}{_k}
      \exponential{\iunit k \left(y - x_{h, j}\right)}
    \right]
    \exponential{\iunit k  x_{h, j}}
    +
    \left[
      \frac{1}{2}
      \tensor{\left(\widehat{\vec{u}}_h\right)}{_{\frac{N}{2}}}
      \left(
        \exponential{- \iunit \frac{N}{2}  \left(y - x_{h, j}\right)} + \exponential{\iunit \frac{N}{2}  \left(y - x_{h, j}\right)}
      \right)
    \right]
    \exponential{\iunit \frac{N}{2}  x_{h, j}}
  \right)
  ,
\end{multline}
where $x_{h, j}$ is an arbitrary grid point. However, the last formula in~\eqref{eq:207} is in fact the $j$-th component of the inverse discrete Fourier transform of the collection $\vec{v}_h$ defined as
\begin{subequations}
  \label{eq:208}
  \begin{align}
    \label{eq:209}
    \tensor{\left( \vec{v}_h \right)}{_k} &=_{\bydefinition}
                                            \tensor{\left(\widehat{\vec{u}}_h\right)}{_k}
                                            \exponential{\iunit k \left(y - x_{h, j}\right)}, \qquad k = -\frac{N}{2} + 1, \dots, \frac{N}{2} -1, \\
    \tensor{\left( \vec{v}_h \right)}{_{\frac{N}{2}}} &=_{\bydefinition}     \frac{1}{2}
                                                        \tensor{\left(\widehat{\vec{u}}_h\right)}{_{\frac{N}{2}}}
                                                        \left(
                                                        \exponential{- \iunit \frac{N}{2}  \left(y - x_{h, j}\right)} + \exponential{\iunit \frac{N}{2}  \left(y - x_{h, j}\right)}
                                                        \right).
  \end{align}
\end{subequations}
In order to evaluate the bandwidth limited interpolant at an arbitrary point $y$ we can thus take the discrete Fourier transform of grid values, which gives us collection $\widehat{\vec{u}}_h$, form collection $\vec{v}_h$ via multiplication of the elements of $\widehat{\vec{u}}_h$ collection by given numbers, see~\eqref{eq:208}, and take the inverse discrete Fourier transform. The $j$-th component of such obtained vector gives us the value $u_h(y)$. This procedure is the counterpart of~\eqref{eq:85}, but now it is not just a curiosity, but a practical tool. The bandwidth limited interpolant~\eqref{eq:bandwidth-limited-interpolant-periodic-physical-space} can be evaluated at arbitrary point by taking two Fourier transforms and some multiplication, which is particularly convenient computationally provided that the discrete Fourier transform is done using a fast Fourier transform algorithm.

\subsubsection{Discrete Fourier transform and discrete periodic convolution}
\label{sec:discr-four-transf-1}
Now we proceed with the definition of discrete periodic convolution of collections $\vec{f}_h$ and $\vec{g}_h$ of length $M$. This is the discrete periodic domain analogue of discrete convolution formulae~\eqref{eq:21} (continuous case) and~\eqref{eq:81} (infinite lattice).
\begin{definition}[Discrete convolution, periodic lattice]
  \label{dfn:6}
  Let $\vec{f}_h =  \{f_{h, j} \}_{j=1}^{M}$ and $\vec{g}_h =  \{ g_{h, j} \}_{j=1}^{M}$ be collections of length $M$. We denote
  \begin{equation}
    \label{eq:210}
    \tensor{\left( \discreteperiodicconvolution{\vec{f}_h}{\vec{g}_h} \right)}{_j}
    =_{\bydefinition}
    h
    \sum_{m = 1}^{M}
    f_{h, j - m} g_{h, m}
    ,
  \end{equation}
  where $h =_{\bydefinition} \frac{2 \pi}{M}$. (Note the factor $h$ in the definition~\eqref{eq:210}.) The vector $\discreteconvolution{\vec{f}_h}{\vec{g}_h}$ of length $M$ with components $j=1, \dots, M$ given by~\eqref{eq:210} is called the \emph{discrete periodic convolution} of collections $\vec{f}_h$ and $\vec{g}_h$.
\end{definition}
  
The collections $\vec{f}_h$ and $\vec{g}_h$ can be interpreted as vectors, where the vector components are numbered from one to $M$ and the vectors could be vectors both in the physical domain or in the Fourier domain. (Note also the factor $h$ in the definition of convolution~\eqref{eq:210}.) The sum in~\eqref{eq:210} goes over all elements of the given vector, and the indices out of the range $1, \dots, M$ are obtained by the periodic extension, that is
\begin{equation}
  \label{eq:211}
  f_{h, -l} = _{\bydefinition} f_{h, -l + M}, \qquad l = 0, \dots, M - 1.
\end{equation}
If we further define the elementwise multiplication of vectors of length $M$ as a vector of length $M$ with the $m$-th element given by
\begin{equation}
  \label{eq:212}
  \tensor{\left( \tensorschur{\vec{f}_h}{\vec{g}_h} \right)}{_m}
  =
  f_{h, m} g_{h, m},
\end{equation}
then we recover the convolution formulae~\eqref{eq:25} and~\eqref{eq:26} (continuous case) and~\eqref{eq:83} and~\eqref{eq:84} (semidiscrete case) in the sense that the discrete Fourier transform of discrete periodic convolution is again product of discrete Fourier transforms and so forth. We have the following lemma.

\begin{lemma}[Convolution-to-multiplication property, periodic lattice]
  \label{lm:8}
  Let $\vec{f}_h =  \{f_{h, j} \}_{j=1}^{M}$ and $\vec{g}_h =  \{ g_{h, j} \}_{j=1}^{M}$ be given collections of grid values. Then
  \begin{subequations}
    \begin{align}
      \label{eq:213}
      \FourierTransformDiscrete{\discreteperiodicconvolution{\vec{f}_h}{\vec{g}_h}}
      &=
        \tensorschur{
        \FourierTransformDiscrete{\vec{f}_h}
        }
        {
        \FourierTransformDiscrete{\vec{g}_h}
        },
      \\
      \label{eq:214}
      \FourierTransformDiscrete{\tensorschur{\vec{f}_h}{\vec{g}_h}}
      &=
        \discreteperiodicconvolution{\FourierTransformDiscrete{\vec{f}_h}}{\FourierTransformDiscrete{\vec{g}_h}}
        .
    \end{align}  
  \end{subequations}
\end{lemma}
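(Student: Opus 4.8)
The plan is to verify both identities by \emph{direct substitution} of the definitions, exactly as in the infinite-lattice case (Lemma~\ref{lm:6}); the only genuinely non-mechanical ingredient is the elementary orthogonality relation for the discrete exponentials,
  \begin{equation}
    \label{eq:lm8-orthogonality}
    \sum_{j=1}^{N} \exponential{\iunit n x_{h,j}}
    =
    \begin{cases}
      N, & n \equiv 0 \pmod{N}, \\
      0, & \text{otherwise},
    \end{cases}
  \end{equation}
  which follows by summing a geometric series, together with the observation that $\exponential{-\iunit k x_{h,j}}$ is $N$-periodic in $j$ since $k \in \Z$ and $x_{h, j+N} = x_{h,j} + 2\pi$.

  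For~\eqref{eq:213} I would insert the definition~\eqref{eq:210} of the discrete periodic convolution into the forward transform~\eqref{eq:188}, interchange the resulting finite sums over $j$ and $m$, and substitute $l = j - m$ in the inner sum. Because $f_{h, \cdot}$ and $\exponential{-\iunit k x_{h, \cdot}}$ are both $N$-periodic, summing $l$ over the shifted range $\{1-m, \dots, N-m\}$ is the same as summing over $\{1, \dots, N\}$, and using $x_{h, l+m} = x_{h,l} + x_{h,m}$ the inner sum factors as $\exponential{-\iunit k x_{h,m}}\,\tfrac{1}{h}\,\tensor{\left(\FourierTransformDiscrete{\vec{f}_h}\right)}{_k}$. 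Substituting back, the remaining sum over $m$ reassembles $\tfrac{1}{h}\,\tensor{\left(\FourierTransformDiscrete{\vec{g}_h}\right)}{_k}$, the two explicit factors of $h$ from~\eqref{eq:188} and~\eqref{eq:210} cancel the two reciprocal ones, and one is left with $\tensor{\left(\FourierTransformDiscrete{\vec{f}_h}\right)}{_k}\tensor{\left(\FourierTransformDiscrete{\vec{g}_h}\right)}{_k}$, i.e.\ the $k$-th component of $\tensorschur{\FourierTransformDiscrete{\vec{f}_h}}{\FourierTransformDiscrete{\vec{g}_h}}$.

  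For~\eqref{eq:214} I would proceed dually: write $f_{h,j}$ and $g_{h,j}$ via the inverse transform~\eqref{eq:189}, so that $\tensor{\left(\tensorschur{\vec{f}_h}{\vec{g}_h}\right)}{_j} = f_{h,j}\,g_{h,j}$ becomes a double sum over wavenumbers $p,q$ of $\tensor{\left(\FourierTransformDiscrete{\vec{f}_h}\right)}{_p}\tensor{\left(\FourierTransformDiscrete{\vec{g}_h}\right)}{_q}\exponential{\iunit(p+q)x_{h,j}}$. Applying the forward transform~\eqref{eq:188} and carrying out the sum over $j$ triggers~\eqref{eq:lm8-orthogonality}: only the terms with $p+q\equiv k \pmod{N}$ survive. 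Since the admissible wavenumbers $\left\{-\tfrac{N}{2}+1,\dots,\tfrac{N}{2}\right\}$ form a complete residue system modulo $N$, for each $p$ there is exactly one such $q$, namely the representative of $k-p$ obtained through the periodic-extension convention~\eqref{eq:211}; the surviving single sum over $p$ is then precisely the cyclic sum defining $\discreteperiodicconvolution{\FourierTransformDiscrete{\vec{f}_h}}{\FourierTransformDiscrete{\vec{g}_h}}$.

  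The step I expect to require the most care is the bookkeeping of the normalisation factors ($h$, $2\pi$ and $N$) together with the slight asymmetry of the periodic wavenumber range: one must check that the extra endpoint wavenumber $-\tfrac{N}{2}$ supplied by the symmetrisation~\eqref{eq:197} is treated consistently on both sides, so that the cyclic indexing built into~\eqref{eq:210} agrees with the reduction ``$q\equiv k-p \pmod{N}$'' coming out of the orthogonality relation. Once the index sets are aligned via~\eqref{eq:211} and the constants collected, both identities follow.
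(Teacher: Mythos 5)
Your argument for~\eqref{eq:213} is exactly the paper's proof: the paper verifies it by the same direct substitution, interchange of the finite sums, and the shift $x_{h,j}-x_{h,m}=x_{h,j-m}$ together with $N$-periodicity, with each of the two explicit factors of $h$ being absorbed into one of the two transforms on the right-hand side (there are no ``reciprocal'' factors to cancel, but that is only loose phrasing on your part). The paper gives no argument for~\eqref{eq:214}, and your dual computation via the inverse transform and the orthogonality relation is the right one; note only that if you carry out the normalisation bookkeeping you flagged, with the conventions~\eqref{eq:188}--\eqref{eq:189} and~\eqref{eq:210} the left-hand side comes out as $\frac{1}{2\pi}\sum_{p}\tensor{\left(\FourierTransformDiscrete{\vec{f}_h}\right)}{_p}\tensor{\left(\FourierTransformDiscrete{\vec{g}_h}\right)}{_{k-p}}$ rather than $h\sum_{p}(\cdots)$, so~\eqref{eq:214} holds only up to a constant normalisation factor --- the same looseness already present in the continuous identity~\eqref{eq:26} --- and this is a defect of the statement rather than of your proof.
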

\begin{proof}
 The convolution-to-multiplication formulae are straightforward to prove by direct substitution. For example, concerning~\eqref{eq:213} we proceed as follows
\begin{multline}
  \label{eq:215}
  \tensor{
    \left(
      \FourierTransformDiscrete{\discreteperiodicconvolution{\vec{f}_h}{\vec{g}_h}}
    \right)
  }{_k}
  =
  \tensor{
    \left(
      \FourierTransformDiscrete{
        h
        \sum_{m = 1}^{M}
        f_{h, j - m} g_{h, m}
      }
    \right)
  }{_k}
  =
  h
  \sum_{j= 1}^{M}
  \left(
    h
    \sum_{m = 1}^{M}
    f_{h, j - m} g_{h, m}
  \right)
  \exponential{- \iunit k x_{h, j}}
  \\
  =
  h^2
  \sum_{j= 1}^{M}
  \sum_{m = 1}^{M}
  f_{h, j - m}
  \exponential{- \iunit k \left(x_{h, j} - x_{h, m}\right)}
  g_{h, m}
  \exponential{- \iunit k x_{h, m}}
  =
  h
  \sum_{m = 1}^{M}
  \left(
    h
    \sum_{j= 1}^{M}
    f_{h, j - m}
    \exponential{- \iunit k x_{h, j-m}}
  \right)
  g_{h, m}
  \exponential{- \iunit k x_{h, m}}
  \\
  =
  h
  \sum_{m = 1}^{M}
  \tensor{\left(\FourierTransformDiscrete{\vec{f}_h} \right)}{_k}
  g_{h, m}
  \exponential{- \iunit k x_{h, m}}
  =
  \tensor{\left(\FourierTransformDiscrete{\vec{f}_h} \right)}{_k}
  \left(
    h
    \sum_{m = 1}^{M}
    g_{h, m}
    \exponential{- \iunit k x_{h, m}}
  \right)
  \\
  =
  \tensor{\left(\FourierTransformDiscrete{\vec{f}_h} \right)}{_k}
  \tensor{\left(\FourierTransformDiscrete{\vec{g}_h} \right)}{_k}
  =
  \tensor{
    \left(
      \tensorschur{\FourierTransformDiscrete{\vec{f}_h}}{\FourierTransformDiscrete{\vec{f}_h}}
    \right)
  }{_k}
  ,
\end{multline}
where we have exploited the formula for the grid points~\eqref{eq:173} and the invariance of the discrete Fourier transform with respect to the shift.
\end{proof}

\subsubsection{Discrete Fourier transform diagonalises arbitrary circulant matrix}
\label{sec:discr-four-transf-3}
Since the discrete Fourier transform and the discrete periodic convolution inherit the properties of their continuous counterparts, we can expect that we recover, even at the discrete level, the continuous level results based on the convolution-to-multiplication formula. In particular, we show that \emph{the discrete Fourier transform can be used to solve eigenvalue problem for the discrete periodic convolution operator.} (For the continuous version of the same see the discussion following equation~\eqref{eq:35}.) Indeed, assume that $\vec{a}_h$ is a given collection $\vec{a}_h =_{\bydefinition} \left\{a_i\right\}_{i=1}^N$,
\begin{equation}
  \label{eq:216}
  \vec{a}_h
  =
  \begin{bmatrix}
    a_1 \\
    a_2 \\
    \vdots \\
    a_{N-1} \\
    a_{N}
  \end{bmatrix}
  ,
\end{equation}
and that we want to solve the eigenvalue problem
\begin{equation}
  \label{eq:217}
  \frac{1}{h}
  \discreteperiodicconvolution{\vec{a}_h}{\vec{g}_h} = \lambda \vec{g}_h
\end{equation}
for the eigenvalue $\lambda$ and the eigenvector $\vec{g}_h$. (Recall in the convolution sum we use the convention that the indices out of range $1, \dots, N$ are defined by the periodic extension, that is $a_0 = a_N$, $a_{-1} = a_{N-1}$ and so forth, see~\eqref{eq:211}.) In the matrix from the eigenvalue problem~\eqref{eq:217} reads
\begin{equation}
  \label{eq:218}
  \underbrace{
    \begin{bmatrix}
      a_N & a_{N-1} & a_{N-2} & \dots & a_1 \\
      a_1 & a_N & a_{N-1} & \dots & a_2 \\
      a_2 & a_1 & a_N & \dots & a_3 \\
      \vdots & \vdots & \vdots& \ddots & \vdots \\
      a_{N-1} & a_{N-2} & a_{N-3} & \dots & a_N
    \end{bmatrix}
  }_{\tensorq{A}_h = _{\bydefinition}}
  \begin{bmatrix}
    g_1 \\
    g_2 \\
    g_3 \\
    \vdots \\
    g_N
  \end{bmatrix}
  =
  \lambda
  \begin{bmatrix}
    g_1 \\
    g_2 \\
    g_3 \\
    \vdots \\
    g_N
  \end{bmatrix}
  ,
\end{equation}
wherein the matrix of interest $\tensorq{A}_h$ is a circulant matrix. Taking the discrete Fourier transform of~\eqref{eq:217} yields
\begin{equation}
  \label{eq:219}
  \frac{1}{h}
  \tensorschur{
    \FourierTransformDiscrete{\vec{a}_h}
  }
  {
    \FourierTransformDiscrete{\vec{g}_h}
  }
  =
  \lambda
  \FourierTransformDiscrete{\vec{g}_h}
  .
\end{equation}
This system of equations has $N$ solutions (eigenvector--eigenfunction pairs) $\vec{g}_{h, m}$, $\lambda_{h, m}$ of the form
\begin{subequations}
  \label{eq:220}
  \begin{align}
    \label{eq:221}
    \FourierTransformDiscrete{\vec{g}_{h,m}} &=   \vec{\diracdelta}_h^{h, m}, \\
    \label{eq:222}
    \lambda_{h, m} &= \frac{1}{h} \tensor{\left(\FourierTransformDiscrete{\vec{a}_h} \right)}{_m}.
  \end{align}
\end{subequations}
(Recall that the discrete Dirac distribution in the physical/Fourier space, $\vec{\diracdelta}_h^{h, m}$, is defined as a collection of zeros except of the $m$-th element that is equal to one, see also~\eqref{eq:202}.) The eigenvalues of the circulant matrix $\tensorq{A}_h$ are thus given as components of the discrete Fourier transform of the generating collection $\vec{a}_h$, while the eigenvectors are obtained as inverse discrete Fourier transform of discrete Dirac distributions in the Fourier space,
\begin{equation}
  \label{eq:223}
  \vec{g}_{h,m} = \InverseFourierTransformDiscrete{\vec{\diracdelta}_h^{h, m}}.
\end{equation}

Eigenvalue/eigenvector characterisation~\eqref{eq:220} is a well known fact usually phrased as ``columns of Fourier transform matrix are eigenvectors of circulant matrices'' or ``Fourier matrix diagonalises circulant matrices'', see, for example, \cite[Section 3.2]{davis.pj:circulant}. Indeed, the ``matrix version'' of the diagonalisation property is straightforward to show using the following manipulation. Let $\FourierTransformDiscreteMatrix$ denote the matrix representing the application of discrete Fourier transform of collection $\vec{f}_h$,  $\FourierTransformDiscrete{\vec{f}_h}$, that is $\FourierTransformDiscreteMatrix \vec{f}_h = \FourierTransformDiscrete{\vec{f}_h}$, and let $\vec{g}$ be an eigenvector associated to the eigenvalue $\lambda$. The application of discrete Fourier transform to the eigenvalue problem~\eqref{eq:217} and its equivalent matrix reformulation~\eqref{eq:218}, that is
\begin{subequations}
  \label{eq:224}
  \begin{align}
    \label{eq:225}
    \tensorq{A}_h \vec{g}_h &= \lambda \vec{g}_h, \\
    \label{eq:226}
    \frac{1}{h}
    \discreteperiodicconvolution{\vec{a}_h}{\vec{g}_h} &= \lambda \vec{g}_h,
  \end{align}
\end{subequations}
yields
\begin{subequations}
  \label{eq:227}
  \begin{align}
    \label{eq:228}
    \FourierTransformDiscreteMatrix \tensorq{A}_h \vec{g}_h &= \lambda \left( \FourierTransformDiscreteMatrix \vec{g}_h \right), \\
    \label{eq:229}
    \frac{1}{h}
    \tensorschur{\left( \FourierTransformDiscreteMatrix \vec{a}_h  \right)}{\left( \FourierTransformDiscreteMatrix \vec{g}_h  \right) } &=  \lambda \left( \FourierTransformDiscreteMatrix \vec{g}_h \right),
  \end{align}
\end{subequations}
which is the same as
\begin{subequations}
  \label{eq:230}
  \begin{align}
    \label{eq:231}
    \left( \FourierTransformDiscreteMatrix \tensorq{A}_h \inverse{ \left( \FourierTransformDiscreteMatrix \right)} \right) \FourierTransformDiscreteMatrix \vec{g}_h &= \lambda \left( \FourierTransformDiscreteMatrix \vec{g}_h \right), \\
    \label{eq:232}
    \frac{1}{h}
    \left( \diag \left( \FourierTransformDiscreteMatrix \vec{a}_h \right)  \right) \FourierTransformDiscreteMatrix \vec{g}_h  &=  \lambda \left( \FourierTransformDiscreteMatrix \vec{g}_h \right),
  \end{align}
\end{subequations}
where $\diag \vec{w}$ is the operator that takes vector $\vec{w}$ and returns a square diagonal matrix with the elements of vector $\vec{w}$ on the main diagonal, and where $\inverse{\left(\FourierTransformDiscreteMatrix\right)}$ denotes the inverse matrix to the matrix $\FourierTransformDiscreteMatrix$ representing the application of discrete Fourier transform. From~\eqref{eq:230} we see that
\begin{subequations}
  \label{eq:233}
  \begin{align}
    \label{eq:234}
    \FourierTransformDiscreteMatrix \tensorq{A}_h \inverse{ \left( \FourierTransformDiscreteMatrix \right) } &= \frac{1}{h} \diag \left( \FourierTransformDiscreteMatrix \vec{a}_h \right),\\
    \intertext{as well as the matrix counterpart of~\eqref{eq:220}}
    \label{eq:235}
    \FourierTransformDiscreteMatrix \vec{g}_{h, m} &= \vec{\diracdelta}_h^{h, m}, \\
    \label{eq:236}
    \lambda_{h, m} &= \frac{1}{h} \tensor{\left( \FourierTransformDiscreteMatrix \vec{a}_h \right)}{_m}.
  \end{align}
\end{subequations}
The diagonalisation property~\eqref{eq:234} of circulant matrices is usually proved by a direct computation, but here we see it as a straightforward consequence of simple properties of \emph{continuous} Fourier transform, see the discussion following~\eqref{eq:35}, that are mirrored in the discrete setting. For further reference, we can summarise our findings as a lemma.

\begin{lemma}[Eigenvalues and eigenvectors of circulant matrices]
  \label{lm:9}
  Let $\tensorq{A}_h$ be a circulant matrix generated by the vector $\vec{a}_h$,
  \begin{equation}
    \label{eq:421}
      \vec{a}_h
  =
  \begin{bmatrix}
    a_1 \\
    a_2 \\
    \vdots \\
    a_{N-1} \\
    a_{N}
  \end{bmatrix}
  ,
\end{equation}
that is
  \begin{equation}
    \label{eq:411}
      \tensorq{A}_h
      =
      \begin{bmatrix}
        a_N & a_{N-1} & a_{N-2} & \dots & a_1 \\
        a_1 & a_N & a_{N-1} & \dots & a_2 \\
        a_2 & a_1 & a_N & \dots & a_3 \\
        \vdots & \vdots & \vdots& \ddots & \vdots \\
        a_{N-1} & a_{N-2} & a_{N-3} & \dots & a_N
      \end{bmatrix}
      .
    \end{equation}
    Then the matrix $\tensorq{A}_h$ is diagonalised by the matrix representing the discrete Fourier transform~\eqref{eq:discrete-fourier-transform} and the eigenvalues of $\tensorq{A}_h$ are obtained by the discrete Fourier transform of the vector $\vec{a}_h$. In particular it holds
    \begin{equation}
      \label{eq:423}
       \FourierTransformDiscreteMatrix \tensorq{A}_h \inverse{ \left( \FourierTransformDiscreteMatrix \right) } = \frac{1}{h} \diag \left( \FourierTransformDiscreteMatrix \vec{a}_h \right).
    \end{equation}
\end{lemma}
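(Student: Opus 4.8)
The plan is to obtain~\eqref{eq:423} as a formal consequence of the convolution-to-multiplication property of Lemma~\ref{lm:8}, so that no explicit manipulation with roots of unity is needed. The first --- and only genuinely substantive --- step is to identify the action of the circulant matrix $\tensorq{A}_h$ with a scaled discrete periodic convolution: for every collection $\vec{g}_h$ one has
\begin{equation*}
  \tensorq{A}_h \vec{g}_h = \frac{1}{h}\,\discreteperiodicconvolution{\vec{a}_h}{\vec{g}_h}.
\end{equation*}
This follows by comparing the $j$-th row of $\tensorq{A}_h$ in~\eqref{eq:411} with the definition of the discrete periodic convolution~\eqref{eq:210} and using the periodic extension convention~\eqref{eq:211}, so that $a_0 = a_N$, $a_{-1} = a_{N-1}$, and so forth; the entry of $\tensorq{A}_h$ in position $(i,j)$ is $a_{i-j}$ read cyclically, which is exactly the kernel appearing in~\eqref{eq:210}. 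This is pure index bookkeeping, but it is the only place where the circulant structure of $\tensorq{A}_h$ is used.

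Next I would apply the discrete Fourier transform to both sides and invoke Lemma~\ref{lm:8}:
\begin{equation*}
  \FourierTransformDiscrete{\tensorq{A}_h \vec{g}_h}
  =
  \frac{1}{h}\,\FourierTransformDiscrete{\discreteperiodicconvolution{\vec{a}_h}{\vec{g}_h}}
  =
  \frac{1}{h}\,\tensorschur{\FourierTransformDiscrete{\vec{a}_h}}{\FourierTransformDiscrete{\vec{g}_h}}.
\end{equation*}
Writing $\FourierTransformDiscreteMatrix$ for the matrix of the discrete Fourier transform and noting that elementwise multiplication by the fixed collection $\FourierTransformDiscrete{\vec{a}_h}$ is the same as left multiplication by $\diag\!\left(\FourierTransformDiscrete{\vec{a}_h}\right)$, this reads $\FourierTransformDiscreteMatrix \tensorq{A}_h \vec{g}_h = \frac{1}{h}\,\diag\!\left(\FourierTransformDiscreteMatrix \vec{a}_h\right)\FourierTransformDiscreteMatrix \vec{g}_h$ for all $\vec{g}_h$, hence $\FourierTransformDiscreteMatrix \tensorq{A}_h = \frac{1}{h}\,\diag\!\left(\FourierTransformDiscreteMatrix \vec{a}_h\right)\FourierTransformDiscreteMatrix$ as an identity between matrices.

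Finally I would use that $\FourierTransformDiscreteMatrix$ is invertible --- its inverse is the matrix implementing the inverse discrete Fourier transform~\eqref{eq:189}, which one checks directly by summing a geometric series --- and right-multiply by $\inverse{\left(\FourierTransformDiscreteMatrix\right)}$ to obtain $\FourierTransformDiscreteMatrix \tensorq{A}_h \inverse{\left(\FourierTransformDiscreteMatrix\right)} = \frac{1}{h}\,\diag\!\left(\FourierTransformDiscreteMatrix \vec{a}_h\right)$, which is~\eqref{eq:423}. The main obstacle is the bookkeeping in the first step; everything after it is automatic given Lemma~\ref{lm:8} and the invertibility of the discrete Fourier transform. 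As a byproduct, reading the column relation $\FourierTransformDiscreteMatrix \vec{g}_{h,m} = \vec{\diracdelta}_h^{h,m}$ backwards recovers the explicit eigenvectors $\vec{g}_{h,m} = \InverseFourierTransformDiscrete{\vec{\diracdelta}_h^{h,m}}$ and the eigenvalues $\lambda_{h,m} = \frac{1}{h}\tensor{\left(\FourierTransformDiscreteMatrix \vec{a}_h\right)}{_m}$, i.e.\ the components of the discrete Fourier transform of the generating vector $\vec{a}_h$.
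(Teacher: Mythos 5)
Your proof is correct and follows essentially the same route as the paper: identify the action of the circulant matrix with the scaled discrete periodic convolution $\frac{1}{h}\,\discreteperiodicconvolution{\vec{a}_h}{\vec{g}_h}$, apply the convolution-to-multiplication property of Lemma~\ref{lm:8}, and recognise elementwise multiplication by $\FourierTransformDiscreteMatrix\vec{a}_h$ as a diagonal matrix acting in the Fourier domain. Your variant of establishing the matrix identity $\FourierTransformDiscreteMatrix \tensorq{A}_h = \frac{1}{h}\,\diag\left(\FourierTransformDiscreteMatrix\vec{a}_h\right)\FourierTransformDiscreteMatrix$ for arbitrary $\vec{g}_h$ before reading off the eigenpairs is, if anything, a slightly cleaner ordering of the same argument the paper gives in~\eqref{eq:224}--\eqref{eq:233}.
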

The matrix $\FourierTransformDiscreteMatrix$ in Lemma~\eqref{lm:9} is the matrix that represents the discrete Fourier transform~\eqref{eq:discrete-fourier-transform}. In particular, in interpreting the discrete Fourier transform as an action of a matrix to a column vector we have to respect a particular arrangement of vector elements in the given column vector. This is the point where computer implementations of Fourier transform might differ from the convention used in~\eqref{eq:discrete-fourier-transform} and therefore care must be taken during implementation.

We note that \emph{all circulants share the same eigenvectors}---the formula~\eqref{eq:223} for the eigenvectors is independent of the generating vector $\vec{a}_h$. The circulant matrices however differ in eigenvalues. Finally, if the circulant matrix has an additional structure as in~\eqref{eq:183} wherein the ``middle'' row is ``symmetric'' with respect to $c_{h, 0}$, then the discrete Fourier transform of the first column is guaranteed to produce real eigenvalues, and we are in fact dealing with a variant of discrete cosine transform of values $c_{h, 0}, \dots, c_{h, \frac{N}{2}}$. In particular, we have the following.

\begin{lemma}[Eigenvalues of circulant matrices $\tensorq{C}_{\left(2M + 2\right) \times \left(2M + 2\right)}$]
  \label{lm:1}
  Let $M \in \N$ and let $N =_{\bydefinition} 2M + 2$. Let $\tensorq{C}_{\left(2M + 2\right) \times \left(2M + 2\right)}$ be the $\left(2M + 2\right) \times \left(2M + 2\right)$ circulant matrix generated by the collection
  \begin{equation}
    \vec{c}_h =_{\bydefinition}
    \begin{bmatrix}
      c_{h, 1} \\ \vdots \\ c_{h, \frac{N}{2} - 2} \\ c_{h, \frac{N}{2} - 1} \\ c_{h, \frac{N}{2}} \\ c_{h, \frac{N}{2} -1} \\ \vdots \\ c_{h, 2} \\ c_{h, 1} \\ c_{h, 0}
    \end{bmatrix},
  \end{equation}
  as
  \begin{equation}
    \label{eq:237}
    \tensorq{C}_{\left(2M + 2\right) \times \left(2M + 2\right)}
    =_{\bydefinition}
    \begin{bmatrix}
      c_{h, 0} & c_{h, 1} & \cdots &  c_{h, \frac{N}{2} - 1} &  c_{h, \frac{N}{2}} &  c_{h, \frac{N}{2} - 1} & \cdots & c_{h, 2} & c_{h, 1} \\
      c_{h, 1} & c_{h, 0} & \cdots &  c_{h, \frac{N}{2} - 2} &  c_{h, \frac{N}{2} - 1} &  c_{h, \frac{N}{2} } & \cdots & c_{h, 3} & c_{h, 2} \\
      \vdots & \vdots & \ddots & \vdots & \vdots & \vdots & \reflectbox{$\ddots$} & \vdots & \vdots \\
      c_{h, \frac{N}{2} - 2} & c_{h, \frac{N}{2} - 1} & \cdots & c_{h, 0}  &  c_{h, 1} & c_{h, 2} & \cdots & c_{h, \frac{N}{2}} & c_{h, \frac{N}{2} - 1} \\

      c_{h, \frac{N}{2} - 1} & c_{h, \frac{N}{2} - 2} & \cdots & c_{h, 1}  &  c_{h, 0} & c_{h, 1} & \cdots & c_{h, \frac{N}{2} - 1} & c_{h, \frac{N}{2}} \\
      c_{h, \frac{N}{2}} & c_{h, \frac{N}{2} - 1} & \cdots & c_{h, 2}  &  c_{h, 1} & c_{h, 0} & \cdots & c_{h, \frac{N}{2} - 2} & c_{h, \frac{N}{2} - 1} \\
      \vdots & \vdots & \reflectbox{$\ddots$} & \vdots & \vdots & \vdots & \ddots & \vdots & \vdots \\
      c_{h, 2} & c_{h, 3} & \cdots &  c_{h, \frac{N}{2}-1} &  c_{h, \frac{N}{2}-2} &  c_{h, \frac{N}{2} - 1} & \cdots & c_{h, 0} & c_{h, 1} \\
      c_{h, 1} & c_{h, 2} & \cdots &  c_{h, \frac{N}{2}} &  c_{h, \frac{N}{2}-1} &  c_{h, \frac{N}{2} - 2} & \cdots & c_{h, 1} & c_{h, 0} \\
    \end{bmatrix},
  \end{equation}
  and let $c_{h, \frac{N}{2}} =_{\bydefinition} 0$. Then the eigenvalues of~$\tensorq{C}_{\left(2M + 2\right) \times \left(2M + 2\right)}$ are real, and they are given by the formula
  \begin{equation}
    \label{eq:238}
    \lambda_{h, m}
    =
    c_{h, 0}
    +
    2
    \sum_{j= 1}^{M}
    c_{h, j}
    \cos \left(j mh \right)
    ,
    \qquad
    m = -\frac{N}{2} + 1, \dots, \frac{N}{2},
  \end{equation}
  where $h = \frac{\pi}{M+1}$.

  Eigenvalue formula~\eqref{eq:238} implies that the circulant matrix~$\tensorq{C}_{\left(2M + 2\right) \times \left(2M + 2\right)}$ with the structure \eqref{eq:237} has $M$ eigenvalues~$\lambda_{h, m}$, $m = 1, \dots, M$, of algebraic and geometric multiplicity two, and two eigenvalues of algebraic and geometric multiplicity one, namely the eigenvalues $\lambda_{h, 0}$ and $\lambda_{h, M+1}$. The diagonal form~$\tensorq{D}_{\tensorq{C}_{\left(2M + 2\right) \times \left(2M + 2\right)}}$ of the circulant matrix~$\tensorq{C}_{\left(2M + 2\right) \times \left(2M + 2\right)}$, in a properly chosen basis, thus reads
  \begin{equation}
    \label{eq:239}
    \tensorq{D}_{\tensorq{C}_{\left(2M + 2\right) \times \left(2M + 2\right)}}
    =
    \begin{bmatrix}
      \lambda_{h, 1} & & & & & & & \\
                     & \ddots & & & & & & \\ 
                     & & \lambda_{h, M} & & & & & \\
                     & & & \lambda_{h, M+1} & & & & \\
                     & & & & \lambda_{h, M} & & & \\
                     & & & & & \ddots & & \\
                     & & & & & & \lambda_{h, 1} & \\
                     & & & & & & & \lambda_{h, 0} 
    \end{bmatrix}
    .
  \end{equation}
\end{lemma}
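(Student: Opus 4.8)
The plan is to obtain everything from Lemma~\ref{lm:9}, which already tells us that a circulant matrix is diagonalised by the discrete Fourier transform and that its eigenvalues are the (normalised) discrete Fourier transform of its generating vector. Applying Lemma~\ref{lm:9} to $\tensorq{A}_h = \tensorq{C}_{\left(2M+2\right)\times\left(2M+2\right)}$ with generating vector $\vec{a}_h = \vec{c}_h$ gives, with $N = 2M+2$ and $h = 2\pi/N = \pi/(M+1)$,
\[
\lambda_{h,m}
=
\frac{1}{h}\tensor{\left(\FourierTransformDiscreteMatrix \vec{c}_h\right)}{_m}
=
\sum_{j=1}^{N}\tensor{\left(\vec{c}_h\right)}{_j}\exponential{-\iunit m j h},
\]
where the $\frac{1}{h}$ from~\eqref{eq:423} cancels the $h$ in the definition~\eqref{eq:188} of $\FourierTransformDiscrete{\cdot}$, and where the index $m$ runs over any complete residue system modulo $N$; since $\exponential{-\iunit m j h}$ is $N$-periodic in $m$ we take $m = -\frac{N}{2}+1, \dots, \frac{N}{2}$.

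Next I would exploit the palindromic structure of $\vec{c}_h$. Reading off its definition, $\tensor{\left(\vec{c}_h\right)}{_j} = c_{h, j}$ for $1 \le j \le \frac{N}{2}$, $\tensor{\left(\vec{c}_h\right)}{_j} = c_{h, N-j}$ for $\frac{N}{2} < j \le N-1$, and $\tensor{\left(\vec{c}_h\right)}{_N} = c_{h, 0}$; equivalently, $c_{h,0}$ sits at position $N$, $c_{h,N/2}$ at position $N/2$, and each $c_{h,j}$ with $1 \le j \le M = \frac{N}{2}-1$ occurs at the two positions $j$ and $N-j$. Splitting the sum accordingly, the term $j = N$ contributes $c_{h,0}\exponential{-\iunit m N h} = c_{h,0}$ because $Nh = 2\pi$; the term $j = N/2$ contributes $c_{h,N/2}\exponential{-\iunit m\pi}$, which vanishes since $c_{h,N/2} = 0$ by hypothesis; and the pair of positions $j$ and $N-j$ contributes $c_{h,j}\bigl(\exponential{-\iunit m j h} + \exponential{-\iunit m (N-j) h}\bigr) = c_{h,j}\bigl(\exponential{-\iunit m j h} + \exponential{\iunit m j h}\bigr) = 2c_{h,j}\cos(jmh)$, again using $\exponential{-\iunit m N h} = 1$. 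Summing over $j = 1, \dots, M$ yields exactly~\eqref{eq:238}, and reality of the $\lambda_{h,m}$ is immediate since the right-hand side is a finite real combination of cosines.

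For the multiplicity statement I would use that $\cos(jmh)$ is even in $m$, hence $\lambda_{h,-m} = \lambda_{h,m}$. On the index set $\left\{-M, \dots, M+1\right\}$ this pairs $m$ with the distinct index $-m$ for $m = 1, \dots, M$, while $m = 0$ and $m = M+1$ stay unpaired (their negatives being, respectively, the same index and outside the chosen residue system). Because a circulant matrix is diagonalised by $\FourierTransformDiscreteMatrix$, algebraic and geometric multiplicities coincide, and each equals the number of indices $m$ realising the value, so the structural $\pm m$ pairing gives multiplicity two for $m = 1, \dots, M$ and multiplicity one for $m = 0$ and $m = M+1$; reordering the eigenpairs so that the paired eigenvalues lie symmetrically about $\lambda_{h, M+1}$ with $\lambda_{h,0}$ placed last produces the form~\eqref{eq:239}. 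To keep the diagonalising basis real one replaces the complex eigenvectors $\InverseFourierTransformDiscrete{\vec{\diracdelta}_h^{h, \pm m}}$ by their real and imaginary parts, which is legitimate precisely because $\lambda_{h,m} = \lambda_{h,-m}$. I expect no real obstacle here; the only delicate point is bookkeeping the index conventions (which residue system modulo $N$ the wavenumbers live in, and the position-to-value map of $\vec{c}_h$), and being clear that ``multiplicity two'' refers to the intrinsic $\pm m$ pairing, not to the absence of accidental coincidences $\lambda_{h,m} = \lambda_{h,m'}$ for special choices of the coefficients $c_{h,j}$.
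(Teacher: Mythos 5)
Your proposal is correct and follows essentially the same route as the paper's proof: eigenvalues of the circulant via the discrete Fourier transform of the generating collection (Lemma~\ref{lm:9}), a split of the sum exploiting the palindromic placement of the $c_{h,j}$ together with $\exponential{-\iunit m N h}=1$ to produce the cosine formula~\eqref{eq:238}, and evenness in $m$ for the $\pm m$ pairing behind~\eqref{eq:239}. You actually spell out the multiplicity argument (which the paper dismisses as following ``straightforwardly'') and your caveat that ``multiplicity two'' refers to the structural $\pm m$ pairing rather than the absence of accidental coincidences among the $\lambda_{h,m}$ is a fair and worthwhile clarification.
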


\begin{proof}
  We know, see~\eqref{eq:220}, that the eigenvalues of circulant matrices are given by the discrete Fourier transform of the collection $\vec{c}_h$ representing the circulant matrix. In our case we thus have
  \begin{equation}
    \label{eq:240}
    \lambda_{h, m}
    =
    \frac{1}{h}
    \tensor{\left( \FourierTransformDiscrete{\vec{c}_h} \right)}{_m}
    =
    \sum_{j= 1}^{N}
    \tensor{\left(\vec{c}_h\right)}{_j}
    \exponential{- \iunit m x_{h, j}}
    ,
    \qquad
    m = -\frac{N}{2} + 1, \dots, \frac{N}{2},
  \end{equation}
  which yields
  \begin{multline}
    \label{eq:241}
    \lambda_{h, m}
    =
    \sum_{j= 1}^{\frac{N}{2} - 1}
    c_{h, j}
    \exponential{- \iunit m x_{h, j}}
    +
    c_{h, \frac{N}{2}}
    \exponential{- \iunit m x_{h, \frac{N}{2}}}
    +
    \sum_{j= 1}^{\frac{N}{2}-1}
    c_{h, j}
    \exponential{- \iunit m x_{h, N-j}}
    +
    c_{h, 0}
    \exponential{- \iunit m x_{h, N}}
    \\
    =
    c_{h, 0}
    \exponential{- \iunit m x_{h, N}}
    +
    \sum_{j= 1}^{\frac{N}{2} - 1}
    c_{h, j}
    \left(
      \exponential{- \iunit m x_{h, j}}
      +
      \exponential{- \iunit m x_{h, N-j}}
    \right)
    \\
    =
    c_{h, 0}
    \exponential{- \iunit m x_{h, N}}
    +
    \sum_{j= 1}^{\frac{N}{2} - 1}
    c_{h, j}
    \left(
      \exponential{- \iunit m x_{h, j}}
      +
      \exponential{ \iunit m x_{h, j}}
    \right)
    =
    c_{h, 0}
    +
    2
    \sum_{j= 1}^{\frac{N}{2} - 1}
    c_{h, j}
    \cos \left(j mh \right)
    ,
  \end{multline}
  where we have used $x_{h, j} = jh = j \frac{\pi}{M+1} = j \frac{2 \pi}{N}$. (We recall that $N = 2M +2$ and that $c_{h, \frac{N}{2}} =_{\bydefinition} 0$.) This is the sought formula~\eqref{eq:238}. The remaining part of Lemma~\eqref{lm:1} follows straightforwardly from~\eqref{eq:238}. 
\end{proof}

We note that the formula for the eigenvalues~\eqref{eq:238} is in fact a variant of discrete cosine transform of the vector $\vec{c}_h$.

\subsubsection{Discrete Fourier transform and differentiation}
\label{sec:discr-four-transf-2}
Concerning the differentiation of a function reconstructed from the grid values $\vec{u}_h = \{u_{h, j} (t)\}_{j=1}^{N}$, we again define the differentiation via the differentiation of the corresponding bandwidth limited interpolant $u_h$. The grid values of the derivative are thus defined as
\begin{equation}
  \label{eq:242}
  \tensor{\left( \dd{^n}{x^n} \vec{u}_h \right)}{_j} =_{\bydefinition} \left. \dd{^n}{x^n} u_h \right|_{x = x_{h, j}},
\end{equation}
which in virtue of~\eqref{eq:bandwidth-limited-intrepolant-periodic} leads to
\begin{multline}
  \label{eq:243}
  \tensor{\left( \dd{^n}{x^n} \vec{u}_h \right)}{_j}
  =
  \left.
    \frac{1}{2\pi}
    \left(
      \frac{1}{2}
      \left(
        - \iunit \frac{N}{2}
      \right)^n
      \tensor{\left(\widehat{\vec{u}}_h\right)}{_{-\frac{N}{2}}}
      \exponential{- \iunit \frac{N}{2}  x}
      +
      \sum_{k = -\frac{N}{2} + 1}^{\frac{N}{2} - 1}
      \left(\iunit k \right)^n
      \tensor{\left(\widehat{\vec{u}}_h\right)}{_k}
      \exponential{\iunit k x}
      +
      \frac{1}{2}
      \left(
        \iunit \frac{N}{2}
      \right)^n
      \tensor{\left(\widehat{\vec{u}}_h\right)}{_{\frac{N}{2}}}
      \exponential{\iunit \frac{N}{2}  x}
    \right)
  \right|_{x = x_{h, j}}
  \\
  =
  \frac{1}{2 \pi}
  \left(
    \sum_{k = -\frac{N}{2} + 1}^{\frac{N}{2} - 1}
    \left(\iunit k \right)^n
    \tensor{\left(\widehat{\vec{u}}_h\right)}{_k}
    \exponential{\iunit k x_{h, j}}
    +
    \left(
      \iunit \frac{N}{2}
    \right)^n
    \tensor{\left(\widehat{\vec{u}}_h\right)}{_{\frac{N}{2}}}
    \exponential{\iunit \frac{N}{2} x_{h, j}}
    \frac{1}{2}
    \Bigl(
    (-1)^n
    \exponential{-\iunit N x_{h, j}}
    +
    1
    \Bigr)
  \right)
  \\
  =
  \begin{cases}
    \frac{1}{2 \pi}
    \sum_{k = -\frac{N}{2} + 1}^{\frac{N}{2} - 1}
    \left(\iunit k \right)^n
    \tensor{\left(\widehat{\vec{u}}_h\right)}{_k}
    \exponential{\iunit k x_{h, j}}
    , & \text{$n$ odd}
    \\
    \frac{1}{2 \pi}
    \sum_{k = -\frac{N}{2} + 1}^{\frac{N}{2}}
    \left(\iunit k \right)^n
    \tensor{\left(\widehat{\vec{u}}_h\right)}{_k}
    \exponential{\iunit k x_{h, j}}
    , & \text{$n$ even}
        ,
  \end{cases}
\end{multline}
where we have used the formula for the grid points~\eqref{eq:172}. (See also~\cite[Chapter 3]{trefethen.ln:spectral} for an alternative reasoning regarding the difference between the odd and even powers of $n$.) A brief inspection of the last formula thus \emph{almost} allows to reproduce the derivative formula known from the semidiscrete case~\eqref{eq:100}. In particular, we can write
\begin{subequations}
  \label{eq:244}
  \begin{equation}
    \label{eq:245}
    \dd{^n}{x^n} \vec{u}_h = \InverseFourierTransformDiscrete{\left( \tensorschur{\iunit \vec{k} \right)^n}{\FourierTransformDiscrete{\vec{u}_h}}},
  \end{equation}
  where the elements of the symbol $\left(\iunit \vec{k} \right)^n$, $m=-\frac{N}{2}+1, \dots, \frac{N}{2}$, read
  \begin{equation}
    \label{eq:246}
    \tensor{
      \left(
        \left(\iunit \vec{k} \right)^n
      \right)
    }{_m}
    =_{\bydefinition}
    \left(\iunit m\right)^n,
  \end{equation}
  and wherein we for $n$ odd set
  \begin{equation}
    \label{eq:247}
    \tensor{
      \left(
        \left(\iunit \vec{k} \right)^n
      \right)
    }{_{\frac{N}{2}}}
    =_{\bydefinition}
    0
    .
  \end{equation}
\end{subequations}
We can summarise our findings in the following definition. 

\begin{definition}[Fourier transform based differentiation, periodic lattice]
  \label{dfn:8}
  Let $\vec{f}_h =  \{f_{h, j} \}_{j=1}^{N}$ be a vector of grid values at the grid $\{x_{h, j}\}_{j=1}^{N}$ in the interval $[0, 2\pi]$, see~\eqref{eq:172}. We set
  \begin{equation}
    \label{eq:440}
    \dd{^n}{x^n} \vec{f}_h =_{\bydefinition} \InverseFourierTransformDiscrete{\left( \tensorschur{\iunit \vec{k} \right)^n}{\FourierTransformDiscrete{\vec{f}_h}}},
  \end{equation}
  where the elements of the Fourier space collection $\left(\iunit \vec{k} \right)^n$, $m=-\frac{N}{2}+1, \dots, \frac{N}{2}$, read
  \begin{equation}
    \label{eq:441}
    \tensor{
      \left(
        \left(\iunit \vec{k} \right)^n
      \right)
    }{_m}
    =_{\bydefinition}
    \left(\iunit m\right)^n,
  \end{equation}
  and wherein for $n$ odd we set
  $
    \tensor{
      \left(
        \left(\iunit \vec{k} \right)^n
      \right)
    }{_{\frac{N}{2}}}
    =_{\bydefinition}
    0
    $.
\end{definition}

\subsection{Correspondence between discrete and continuous models}
\label{sec:corr-betw-discr-1}

Now we are in a position to prove an analogue of Theorem~\eqref{thr:2} on the equivalence between discrete and continuous models.

\begin{theorem}[Equivalence between discrete system of ordinary differential equations for grid values and the corresponding partial differential equation---general interaction, periodic lattice]
  \label{thr:3}
  Let $N \in \N$ be an even number, and let $\left\{x_{h, j}\right\}_{j=1}^{N}$, where $x_{h, j} =_{\bydefinition} jh$ with $h =_{\bydefinition} \frac{2 \pi}{N}$, be the corresponding grid on the interval $[0, 2\pi]$. Let~$\vec{u}_h(t) = \left\{ u_{h, j} (t)\right\}_{j=1}^{N}$ be a vector of grid values on the grid $\left\{x_{h, j} \right\}_{j=1}^{N}$, and let~$u_h(x,t)$ be the corresponding bandwidth limited interpolant of~$\vec{u}_h(t)$, that is
  \begin{subequations}
    \label{eq:248}
    \begin{equation}
      \label{eq:249}
      u_h(x, t)
      =
      \sum_{j=1}^{N}u_{h, j} \Sdiracdelta_h^h \left( x - x_{h, j}\right)
      ,
    \end{equation}
    where
    \begin{equation}
      \label{eq:250}
      \Sdiracdelta_h^h \left(x \right)
      =_{\bydefinition}
      \frac{h}{2\pi}
      \cos \left( \frac{x}{2} \right)
      \frac{\sin \left( \frac{\pi}{h} x \right)}{\sin \left( \frac{x}{2} \right)}.
    \end{equation}
  \end{subequations}
  Let the grid values $\vec{u}_h(t)  = \left\{ u_{h, j} (t)\right\}_{j=1}^{N}$ solve the initial value problem for the system of ordinary differential equations
  \begin{subequations}
    \label{eq:251}
    \begin{align}
      \label{eq:252}
      \ddd{u_{h,j}}{t}
      -
      \sum_{m=1}^{N} \tilde{c}_{h, j-m}u_{h, m}
      &=
        0,
      \\
      \label{eq:253}
      \left. u_{h, j} \right|_{t=0} &= u_{h, j}^0,\\
      \label{eq:254}
      \left. \dd{u_{h, j}}{t} \right|_{t=0} &= v_{h, j}^0,
    \end{align}
  \end{subequations}
  with the coefficients $\left\{ \tilde{c}_{h, i} \right\}_{i=0}^{N-1}$ generated out of coefficients $\left\{ c _{h, i} \right\}_{i=0}^{\frac{N}{2}}$ as in~\eqref{eq:177}, meaning that the matrix form of~\eqref{eq:251} reads
  \begin{equation}
    \label{eq:255}
    \ddd{}{t}
    \begin{bmatrix}
      u_{h, 1} \\
      u_{h, 2} \\
      \vdots \\
      u_{h, \frac{N}{2} - 1} \\
      u_{h, \frac{N}{2}} \\
      u_{h, \frac{N}{2} + 1} \\
      \vdots \\
      u_{h, N-1} \\
      u_{h, N}
    \end{bmatrix}
    -
    \begin{bmatrix}
      c_{h, 0} & c_{h, 1} & \cdots &  c_{h, \frac{N}{2} - 1} &  c_{h, \frac{N}{2}} &  c_{h, \frac{N}{2} - 1} & \cdots & c_{h, 2} & c_{h, 1} \\
      c_{h, 1} & c_{h, 0} & \cdots &  c_{h, \frac{N}{2} - 2} &  c_{h, \frac{N}{2} - 1} &  c_{h, \frac{N}{2} } & \cdots & c_{h, 3} & c_{h, 2} \\
      \vdots & \vdots & \ddots & \vdots & \vdots & \vdots & \reflectbox{$\ddots$} & \vdots & \vdots \\
      c_{h, \frac{N}{2} - 2} & c_{h, \frac{N}{2} - 1} & \cdots & c_{h, 0}  &  c_{h, 1} & c_{h, 2} & \cdots & c_{h, \frac{N}{2}} & c_{h, \frac{N}{2} - 1} \\

      c_{h, \frac{N}{2} - 1} & c_{h, \frac{N}{2} - 2} & \cdots & c_{h, 1}  &  c_{h, 0} & c_{h, 1} & \cdots & c_{h, \frac{N}{2} - 1} & c_{h, \frac{N}{2}} \\
      c_{h, \frac{N}{2}} & c_{h, \frac{N}{2} - 1} & \cdots & c_{h, 2}  &  c_{h, 1} & c_{h, 0} & \cdots & c_{h, \frac{N}{2} - 2} & c_{h, \frac{N}{2} - 1} \\
      \vdots & \vdots & \reflectbox{$\ddots$} & \vdots & \vdots & \vdots & \ddots & \vdots & \vdots \\
      c_{h, 2} & c_{h, 3} & \cdots &  c_{h, \frac{N}{2}-1} &  c_{h, \frac{N}{2}-2} &  c_{h, \frac{N}{2} - 1} & \cdots & c_{h, 0} & c_{h, 1} \\
      c_{h, 1} & c_{h, 2} & \cdots &  c_{h, \frac{N}{2}} &  c_{h, \frac{N}{2}-1} &  c_{h, \frac{N}{2} - 2} & \cdots & c_{h, 1} & c_{h, 0} \\
    \end{bmatrix}
    \begin{bmatrix}
      u_{h, 1} \\
      u_{h, 2} \\
      \vdots \\
      u_{h, \frac{N}{2} - 1} \\
      u_{h, \frac{N}{2}} \\
      u_{h, \frac{N}{2} + 1} \\
      \vdots \\
      u_{h, N-1} \\
      u_{h, N}
    \end{bmatrix}
    =
    \begin{bmatrix}
      0 \\
      0 \\
      \vdots \\
      0 \\
      0 \\
      0 \\
      \vdots \\
      0 \\
      0
    \end{bmatrix}
    .
  \end{equation}
  Let $u$ solve, for $x \in \R$ on the real line, the initial value problem 
  \begin{subequations}
    \label{eq:256}
    \begin{align}
      \label{eq:257}
      \ppd{u}{t}
      +
      \convolution{
      \left(
      \frac{1}{h}
      \InverseFourierTransform{\frac{\FourierTransformSemidiscrete{\vec{C}_h}}{\xi^2}}
      \right)
      }
      {
      \ppd{u}{x}
      }
      &=
        0
        ,
      \\
      \label{eq:258}
      \left. u \right|_{t=0} &= u_h^0,\\
      \label{eq:259}
      \left. \dd{u}{t} \right|_{t=0} &= v_h^0,
    \end{align}
  \end{subequations}
  where the symbol $\vec{C}_h$ denotes the infinite collection~$\left\{ C_{h, i}\right\}_{i=-\infty}^{+\infty}$ constructed from the first column  $\widetilde{\vec{c}}_h$ of matrix in~\eqref{eq:255},
  \begin{equation}
    \label{eq:260}
    \widetilde{\vec{c}}_h =_{\bydefinition}
    \begin{bmatrix}
      % c_{h, \frac{N}{2} - 1} \\ c_{h, \frac{N}{2} - 2} \\ \vdots \\ c_{h, 1}  \\  c_{h, 0} \\ c_{h, 1} \\ \vdots \\ c_{h, \frac{N}{2} - 1} \\ c_{h, \frac{N}{2}}
      c_{h, 0} \\ c_{h, 1} \\ \vdots \\ c_{h, \frac{N}{2} - 2} \\ c_{h, \frac{N}{2} - 1} \\ c_{h, \frac{N}{2}} \\ c_{h, \frac{N}{2} -1} \\ \vdots \\ c_{h, 2} \\ c_{h, 1}
    \end{bmatrix},
  \end{equation}
  via the procedure defined in~\eqref{eq:263}, and let the initial data $u_h^0$ and $v_h^0$ in~\eqref{eq:258} and \eqref{eq:259} be the bandwidth limited interpolants of the initial data~\eqref{eq:253} and \eqref{eq:254}. The symbols $\FourierTransformSemidiscrete{\cdot}$ and $\InverseFourierTransform{\cdot}$ denote the \emph{semidiscrete} Fourier transform~\eqref{eq:69} and the \emph{continuous} inverse Fourier transform~\eqref{eq:20}.  Then the function $u$ is the solution to the continuous problem~\eqref{eq:256} if and only if $u = u_h$, where $u_h$ is the bandwidth limited interpolant $u_h$ of grid values $\vec{u}_h(t)  = \left\{ u_{h, j} (t)\right\}_{j=1}^{N}$ that solve the discrete problem~\eqref{eq:251}.
\end{theorem}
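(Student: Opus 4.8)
The plan is to run the proof of Theorem~\ref{thr:2} again, replacing the semidiscrete Fourier transform and discrete convolution of the infinite lattice by the discrete Fourier transform~\eqref{eq:discrete-fourier-transform} and the discrete periodic convolution~\eqref{eq:210} of Section~\ref{sec:periodic-lattice}, and with the bridge between the discrete and the continuous worlds now supplied by~\eqref{eq:199}, which identifies the continuous Fourier transform of the periodic bandwidth limited interpolant with the (primed) Dirac sum built from the discrete Fourier transform of its grid values. The generating collection of the circulant matrix in~\eqref{eq:255} is $\widetilde{\vec{c}}_h$ from~\eqref{eq:260}, and the only property of the infinite collection $\vec{C}_h$ produced by~\eqref{eq:263} that the argument uses is that $\FourierTransformSemidiscrete{\vec{C}_h}$ is supported in the first Brillouin zone and, evaluated at the discrete wavenumbers $k=-\frac{N}{2}+1,\dots,\frac{N}{2}$, reproduces $\tensor{\left(\FourierTransformDiscrete{\widetilde{\vec{c}}_h}\right)}{_k}$ (with the value at $-\frac{N}{2}$ taken equal to that at $\frac{N}{2}$, in accordance with~\eqref{eq:195}).

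\emph{From the lattice system to the partial differential equation.} By Lemma~\ref{lm:9} the right-hand side of~\eqref{eq:252} is the action of the circulant matrix in~\eqref{eq:255}, which by~\eqref{eq:210} equals $\frac{1}{h}\discreteperiodicconvolution{\widetilde{\vec{c}}_h}{\vec{u}_h}$, so~\eqref{eq:251} reads $\ddd{}{t}\vec{u}_h-\frac{1}{h}\discreteperiodicconvolution{\widetilde{\vec{c}}_h}{\vec{u}_h}=\vec{0}$, the exact analogue of~\eqref{eq:146}. Applying the discrete Fourier transform and the convolution-to-multiplication identity~\eqref{eq:213} gives
\begin{equation*}
  \ddd{}{t}\FourierTransformDiscrete{\vec{u}_h}
  -
  \frac{1}{h}\tensorschur{\FourierTransformDiscrete{\widetilde{\vec{c}}_h}}{\FourierTransformDiscrete{\vec{u}_h}}
  =
  0,
\end{equation*}
i.e. the $N$ scalar equations $\ddd{}{t}\tensor{\left(\FourierTransformDiscrete{\vec{u}_h}\right)}{_k}=\frac{1}{h}\tensor{\left(\FourierTransformDiscrete{\widetilde{\vec{c}}_h}\right)}{_k}\tensor{\left(\FourierTransformDiscrete{\vec{u}_h}\right)}{_k}$. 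Now I would insert~\eqref{eq:199}: differentiating $\FourierTransform{u_h}=\sideset{}{^\prime}\sum_{k=-\frac{N}{2}}^{\frac{N}{2}}\tensor{\left(\FourierTransformDiscrete{\vec{u}_h}\right)}{_k}\diracdelta_k$ term by term and using these scalar equations together with the property of $\vec{C}_h$ recalled above yields $\ppd{}{t}\FourierTransform{u_h}-\frac{1}{h}\FourierTransformSemidiscrete{\vec{C}_h}\FourierTransform{u_h}=0$, the step that carries us from the discrete to the continuous setting, the counterpart of~\eqref{eq:148}. Writing $\FourierTransformSemidiscrete{\vec{C}_h}=\frac{\FourierTransformSemidiscrete{\vec{C}_h}}{\xi^2}\xi^2$ to expose the Fourier image $-\xi^2\FourierTransform{u_h}$ of the second spatial derivative~\eqref{eq:23}, and taking the inverse continuous Fourier transform, produces~\eqref{eq:257}. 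The factor $\frac{1}{\xi^2}$ is harmless because $\FourierTransformSemidiscrete{\vec{C}_h}$ vanishes to second order at $\xi=0$ — equivalently $\lambda_{h,0}=0$ in Lemma~\ref{lm:1}, which is the structural constraint $c_{h,0}=-2\sum_{j\ge 1}c_{h,j}$, the periodic counterpart of~\eqref{eq:46}.

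\emph{From the partial differential equation to the lattice system.} Conversely, let $u$ solve~\eqref{eq:257} with the prescribed initial data~\eqref{eq:258}--\eqref{eq:259}, which, being periodic bandwidth limited interpolants, are finite Dirac sums supported on the wavenumbers $-\frac{N}{2}+1,\dots,\frac{N}{2}$ (with the conventional $\frac{1}{2}$ at $\pm\frac{N}{2}$). Taking the continuous Fourier transform of~\eqref{eq:257} via~\eqref{eq:25} and~\eqref{eq:23}, the $\xi^2$ cancels and $\ppd{}{t}\FourierTransform{u}-\frac{1}{h}\FourierTransformSemidiscrete{\vec{C}_h}\FourierTransform{u}=0$. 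I would then split $\FourierTransform{u}=\sideset{}{^\prime}\sum_{k=-\frac{N}{2}}^{\frac{N}{2}}a_k(t)\diracdelta_k+w$, where $w$ carries the rest of the spectrum; multiplication by the bounded function $\FourierTransformSemidiscrete{\vec{C}_h}$ preserves this splitting, so the equation decouples into the scalar ODEs $\ddd{a_k}{t}=\frac{1}{h}\tensor{\left(\FourierTransformDiscrete{\widetilde{\vec{c}}_h}\right)}{_k}a_k$ and the homogeneous equation $\ppd{}{t}w=\frac{1}{h}\FourierTransformSemidiscrete{\vec{C}_h}w$ with vanishing initial data, whence $w\equiv 0$ — exactly the mechanism of the argument around~\eqref{eq:157}. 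Thus $\FourierTransform{u}$ remains, for all $t$, a primed Dirac sum on the discrete wavenumbers, so $u=u_h$ is the periodic bandwidth limited interpolant~\eqref{eq:249} of the grid values $\vec{u}_h(t)$ with $\tensor{\left(\FourierTransformDiscrete{\vec{u}_h}\right)}{_k}=a_k$. Inverting the discrete Fourier transform in the $a_k$-equations and applying Lemma~\ref{lm:8} turns them into $\ddd{}{t}\vec{u}_h-\frac{1}{h}\discreteperiodicconvolution{\widetilde{\vec{c}}_h}{\vec{u}_h}=\vec{0}$, that is~\eqref{eq:251}; the initial conditions match because sampling~\eqref{eq:249} at the grid reproduces its defining grid values. (Alternatively, once $u=u_h$ is known to be a periodic bandwidth limited interpolant one can integrate by parts in the convolution term of~\eqref{eq:257} and use the identity property~\eqref{eq:79} of the reconstruction kernel verbatim as in the proof of Theorem~\ref{thr:2}, reducing to an infinite translation-invariant system whose $N$-periodic data force an $N$-periodic solution, i.e. to~\eqref{eq:252} after sampling.)

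\emph{Expected main obstacle.} The one genuinely new point, absent from the infinite-lattice Theorem~\ref{thr:2}, is the bookkeeping around the highest wavenumber $k=\frac{N}{2}$: since the discrete wavenumber set is not symmetric about the origin, the bandwidth limited interpolant and its continuous Fourier transform carry the symmetrisation~\eqref{eq:193}--\eqref{eq:199}, and one must verify that the construction~\eqref{eq:263} of $\vec{C}_h$ is compatible with it — that the $\frac{1}{2}$ weights on the end terms of the primed sum, the behaviour of the cutoff to the first Brillouin zone at $\pm\frac{\pi}{h}=\pm\frac{N}{2}$, and the folding $\exponential{-\iunit\frac{N}{2}x_{h,j}}=\exponential{\iunit\frac{N}{2}x_{h,j}}$ of~\eqref{eq:195} all fit together, so that the two reductions above genuinely land on the circulant system~\eqref{eq:255}. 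A second, routine, check is that $\FourierTransformSemidiscrete{\vec{C}_h}$ vanishes to second order at the origin, making the kernel $\frac{1}{h}\InverseFourierTransform{\FourierTransformSemidiscrete{\vec{C}_h}/\xi^2}$ in~\eqref{eq:257} a legitimate object. Everything else is a transcription of the infinite-lattice argument; indeed the statement can also be read as an instance of Theorem~\ref{thr:2} once the periodic lattice is regarded, through~\eqref{eq:176}, as an infinite lattice with generating collection $\vec{C}_h$, the reduction anticipated in the introduction.
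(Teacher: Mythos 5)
Your proposal is correct and follows essentially the same route as the paper's proof: rewrite the circulant system as a discrete periodic convolution, pass to the discrete Fourier transform, lift to the continuous Fourier transform of the bandwidth limited interpolant via the primed Dirac sum and the compatibility of $\FourierTransformSemidiscrete{\vec{C}_h}$ with $\FourierTransformDiscrete{\widetilde{\vec{c}}_h}$ at the discrete wavenumbers, and then argue the converse by showing the solution stays a periodic bandwidth limited interpolant before reducing back to the lattice system. The only cosmetic difference is in the converse direction, where the paper first proves $2\pi$-periodicity by a translation argument and then restricts the wavenumbers using the support of $\FourierTransformSemidiscrete{\vec{C}_h}$, whereas you establish both at once by splitting off the complementary spectrum and killing it with the zero-initial-data argument---the same mechanism as the paper's equation~\eqref{eq:157}.
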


\begin{proof}
  We proceed as in proof of Theorem~\ref{thr:2}, but we must take into account the different construction of the interpolant and the discrete nature of objects in Fourier space. First we note that~\eqref{eq:252} can be rewritten using the discrete periodic convolution operator, see~\eqref{eq:210}, in the form
  \begin{equation}
    \label{eq:261}
    \ddd{\vec{u}_h}{t} - \frac{1}{h} \discreteperiodicconvolution{\widetilde{\vec{c}}_h}{\vec{u}_h} = \vec{0}.
  \end{equation}
  We take the discrete Fourier transform of~\eqref{eq:261}, we use the convolution-to-multiplication property of discrete Fourier transform, see~\eqref{eq:213}, and we get system of equations
  \begin{equation}
    \label{eq:262}
    \ddd{}{t}
    \FourierTransformDiscrete{\vec{u}_h}
    -
    \frac{1}{h} \tensorschur{\FourierTransformDiscrete{\widetilde{\vec{c}}_h}}{\FourierTransformDiscrete{\vec{u}_h}} = \vec{0}
  \end{equation}
  for $k = - \frac{N}{2}+1, \dots, \frac{N}{2}$.  Let us now formally extend the properly aligned collection $\left\{ \widetilde{c}_{h, m} \right\}_{m=0}^{N-1}$ by leading and trailing zeros. In particular, we define the collection~$\left\{ C_{h, i}\right\}_{i=-\infty}^{+\infty}$ as
  \begin{equation}
    \label{eq:263}
    \vec{C}_h =
    \begin{bmatrix}
      \vdots \\ C_{h, -2} \\ C_{h, -1} \\ C_{h, 0} \\ C_{h, 1} \\ C_{h, 2} \\  \vdots \\ C_{h, \frac{N}{2} - 1} \\ C_{h, \frac{N}{2}} \\ C_{h, \frac{N}{2} + 1} \\ C_{h, \frac{N}{2} + 2} \\ \vdots \\ C_{h, N-1} \\ C_{h, N} \\ C_{h, N+1} \\ C_{h, N+2} \\ \vdots    
    \end{bmatrix}
    =_{\bydefinition}
    \begin{bmatrix}
       \vdots \\ 0 \\ 0 \\ 0 \\ c_{h, 0} \\ c_{h, 1} \\ \vdots \\ c_{h, \frac{N}{2} - 2} \\ c_{h, \frac{N}{2} - 1} \\ c_{h, \frac{N}{2}} \\ c_{h, \frac{N}{2} -1}\\ \vdots \\ c_{h, 2} \\ c_{h, 1} \\ 0 \\ 0 \\ \vdots
    \end{bmatrix},
  \end{equation}
  that is
  \begin{equation}
    \label{eq:264}
    \tensor{\left( \vec{C}_h \right)}{_m}
    =
    \bydefinition
    \begin{cases}
      \widetilde{c}_{h, m-1}, & m = 1, \dots, N, \\
      0, & \text{otherwise}.
    \end{cases}
  \end{equation}
  The semidiscrete Fourier transform of $\vec{C}_h$, see~\eqref{eq:69}, reads
  \begin{equation}
    \label{eq:265}
    \FourierTransformSemidiscrete{\vec{C}_h} (\xi)
    =
    \left(
      h
      \sum_{j= - \infty}^{+ \infty}
      \tensor{\left(\vec{C}_h\right)}{_j}
      \exponential{- \iunit \xi x_{h, j}}
    \right)
    \chi_{\xi \in \left[ -\frac{\pi}{h}, \frac{\pi}{h} \right]},
    =
    \left(
      h
      \sum_{j= 1}^{N}
      \tensor{\left(\widetilde{\vec{c}}_h\right)}{_j}
      \exponential{- \iunit \xi x_{h, j}}
    \right)
    \chi_{\xi \in \left[ -\frac{\pi}{h}, \frac{\pi}{h} \right]}
    ,
  \end{equation}
  while for the discrete Fourier transform of $\widetilde{\vec{c}}_h$, see~\eqref{eq:188}, we get
  \begin{equation}
    \label{eq:266}
    \tensor{\left( \FourierTransformDiscrete{\widetilde{\vec{c}}_h} \right)}{_k}
    =_{\bydefinition}
    h
    \sum_{j= 1}^{N}
    \tensor{\left(\widetilde{\vec{c}}_h\right)}{_j}
    \exponential{- \iunit k x_{h, j}}
    , \qquad k = - \frac{N}{2} + 1, \dots, \frac{N}{2}.
  \end{equation}
  We can thus conclude that
\begin{equation}
  \label{eq:267}
  \left.
    \FourierTransformSemidiscrete{\vec{C}_h}
  \right|_{\xi = k} =  \tensor{\left( \FourierTransformDiscrete{\widetilde{\vec{c}}_h} \right)}{_k},
\end{equation}
holds for all wavenumbers $k = - \frac{N}{2} +1, \dots , \frac{N}{2}$. Furthermore, we note that if we formally define $ \tensor{\left( \FourierTransformDiscrete{\widetilde{\vec{c}}_h} \right)}{_{-\frac{N}{2}}} =_{\bydefinition}  \tensor{\left( \FourierTransformDiscrete{\widetilde{\vec{c}}_h} \right)}{_{\frac{N}{2}}}$, which is necessary for the symmetrisation trick, then~\eqref{eq:267} holds also for $k = - \frac{N}{2}$. Equality~\eqref{eq:266} is essential for the transition form the discrete setting to the continuous setting, which is our next step.

We go back to equation~\eqref{eq:262} and we rewrite it as a system of equations for wavenumbers $k$ ranging from $-\frac{N}{2}$ to $\frac{N}{2}$,
\begin{equation}
  \label{eq:268}
  \ddd{}{t}
  \tensor{
    \left(
      \FourierTransformDiscrete{\vec{u}_h}
    \right)
  }{_k}
  -
  \frac{1}{h}
  \left.
    \left( \FourierTransformSemidiscrete{\vec{C}_h} \right)
  \right|_{\xi = k}
  \tensor{
    \left(
      \FourierTransformDiscrete{\vec{u}_h}
    \right)
  }{_k}
  =
  0.
\end{equation}
Multiplication by $\diracdelta(\xi -k)$ then yields
\begin{equation}
  \label{eq:269}
  \ppd{}{t}
  \tensor{
    \left(
      \FourierTransformDiscrete{\vec{u}_h}
    \right)
  }{_k}
  \diracdelta_k
  -
  \frac{1}{h}
  \FourierTransformSemidiscrete{\vec{C}_h}
  \tensor{
    \left(
      \FourierTransformDiscrete{\vec{u}_h}
    \right)
  }{_k}
  \diracdelta_k
  =
  0,
\end{equation}
which upon taking the primed sum with respect to all wavenumbers $k$, see the convention introduced in~\eqref{eq:200}, gives us the single equation for the continuous Fourier transform of the bandwidth limited interpolant $u_h$,
\begin{equation}
  \label{eq:270}
  \ppd{}{t}
  \FourierTransform{u_h}
  -
  \frac{1}{h}
  \FourierTransformSemidiscrete{\vec{C}_h}
  \FourierTransform{u_h}
  =
  0
  .
\end{equation}
This brings us to the fully continuous setting, and we are in fact at the same position as in the proof of Theorem~\eqref{thr:2}, equation~\eqref{eq:148}. Following the same steps as in the proof of~Theorem~\eqref{thr:2} we thus end up with
   \begin{equation}
     \label{eq:271}
     \ppd{u_h}{t}
     +
     \convolution{
       \left(
         \frac{1}{h}
         \InverseFourierTransform{\frac{\FourierTransformSemidiscrete{\vec{C}_h}}{\xi^2}}
       \right)
     }
     {
       \ppd{u_h}{x}
     }
     =
     0,
   \end{equation}
   which was to prove.

   On the other hand, assume that $u$ is a function that solves the equation
   \begin{equation}
     \label{eq:272}
     \ppd{u}{t}
     +
     \convolution{
       \left(
         \frac{1}{h}
         \InverseFourierTransform{\frac{\FourierTransformSemidiscrete{\vec{C}_h}}{\xi^2}}
       \right)
     }
     {
       \ppd{u}{x}
     }
     =
     0
   \end{equation}
   with the initial conditions~\eqref{eq:258} and \eqref{eq:259}. Subtracting the evolution equations for $u(t, x)$ and $u(t, x+2\pi)$ we get an evolution equation for the difference $u(t, x) - u(t, x+2\pi)$. The initial conditions for the difference are \emph{zero} initial conditions since the original initial conditions~\eqref{eq:258} and \eqref{eq:259} are $2 \pi$ periodic functions. (They are bandwidth limited interpolants with integer wavenumbers.) Consequently, the difference $u(t, x) - u(t, x+2\pi)$ is equal to zero thorough the evolution, meaning that the function $u$ that solves~\eqref{eq:272} is a $2\pi$ periodic function. Since $u$ is a $2\pi$ periodic function it must have a Fourier series expansion of type $u = \sum_{l = - \infty}^{+ \infty} a_m \exponential{\iunit l x}$, which means that its continuous Fourier transform contains only \emph{discrete} wavenumbers $l = -\infty, \dots, + \infty$. Next we observe that the definition of continuous/semidiscrete Fourier transform, see~\eqref{eq:fourier-transform} and~\eqref{eq:semidiscrete-fourier-transform}, imply that
  \begin{equation}
    \label{eq:273}
    \FourierTransformSemidiscrete{\vec{C}_h}
    =
    h \FourierTransform{\sum_{n = 1}^{N} C_{h, n} \diracdelta(x - x_{h, n})}
    \chi_{\xi \in \left[ -\frac{\pi}{h}, \frac{\pi}{h} \right]}
    .
  \end{equation}
  (Compare with the same step in Theorem~\ref{thr:2}, formula~\eqref{eq:155}. In the current setting we work with the finite sum $\sum_{j=1}^N$ instead of the infinite sum $\sum_{n=-\infty}^{+ \infty}$.) The rest of the proof follows the proof of Theorem~\eqref{thr:2}. Since we have~\eqref{eq:273}, we conclude that $u$ has wavenumbers restricted to the interval $\left[ - \frac{\pi}{h}, \frac{\pi}{h} \right]$ which in virtue of~\eqref{eq:174} translates to $\left[ - \frac{N}{2}, \frac{N}{2} \right]$. The image of $u$ in the Fourier space thus contains only discrete wavenumbers $l = -\frac{N}{2}, \dots, \frac{N}{2}$, which means that the solution $u$ is in fact a bandwidth limited interpolant of a periodic function in the sense of definition~\eqref{eq:bandwidth-limited-intrepolant-periodic}. We now denote $u$ as $u_h$ in order to indicate that we are in fact working with a bandwidth limited function. Using a calculation analogous to~\eqref{eq:158} we find that
  \begin{equation}
    \label{eq:274}
        \convolution{
      \left(
        \frac{1}{h}
        \InverseFourierTransform{\frac{\FourierTransformSemidiscrete{\vec{C}_h}}{\xi^2}}
      \right)
    }
    {
      \ppd{u_h}{x}
    }
    =
    -
    \sum_{n = 1}^{N} C_{h, n} u_h(x - x_{h, n})
    .
  \end{equation}
Consequently, equation~\eqref{eq:272} reduces to
  \begin{equation}
    \label{eq:275}
    \ppd{u_h}{t}(x,t)
    -
    \sum_{n = 1}^{N} C_{h, n} u_h(x - x_{h, n}, t)
    =
    0,
  \end{equation}
  which upon sampling at $x_{h, j}$ and due to the periodicity of $u_h$ yields the system of equations
  \begin{equation}
    \label{eq:276}
    \ppd{u_{h, j}}{t}
    -
    \sum_{n = 1}^{N} \widetilde{c}_{h, n} u_{h, j - n}
    =
    0.   
  \end{equation}
  (Recall that on the equispaced grid we have $x_{h, j} - x_{h, n} = x_{h, j-n}$ and that the bandwidth limited function periodic function~$u_h$ is in one-to-one correspondence with its grid values at the grid $\left\{ x_{h,j}\right\}_{j=1}^{N}$. Furthermore, the coefficients $C_{h, m}$ are related to coefficients $\widetilde{c}_{h, m}$ via~\eqref{eq:263}.) System~\eqref{eq:276} is the same as system~\eqref{eq:252}, it suffices to relabel the summation index and use the periodicity.
  
 \end{proof}

\section{Finite lattice model with fixed ends/zero Dirichlet boundary conditions}
\label{sec:finite-lattice-with}

The lattice (chain) model now describes $M$ equal mass particles in the spatial domain $[0, \pi]$ that are in the equilibrium positioned at equispaced grid $\{x_{h, j}\}_{j=1}^{M}$,
\begin{subequations}
  \label{eq:277}
\begin{equation}
  \label{eq:278}
  x_{h, j} = jh,
\end{equation}
where
\begin{equation}
  \label{eq:279}
  h = \frac{\pi}{M+1},
\end{equation}
\end{subequations}
see~Figure~\ref{fig:dirichlet-lattice}. The (unknown) longitudinal displacements of the individual particles are denoted as
\begin{equation}
  \label{eq:280}
   \{u_{h, j} (t)\}_{j=1}^{M}.
\end{equation}
The lattice (chain) is fixed at the left end, $x_{h, 0} =_{\bydefinition} 0 $, and at the right end, $x_{h, M+1} =_\bydefinition \pi$, meaning that the displacement of the leftmost and the rightmost \emph{virtual} particles is known,
\begin{subequations}
  \label{eq:281}
  \begin{align}
    \label{eq:282}
    u_{h, 0} (t) =_{\bydefinition} 0, \\    
    \label{eq:283}
    u_{h, M+1} (t) =_{\bydefinition} 0.
  \end{align}
\end{subequations}
Later on we shall work with an odd extension of the lattice (chain) along the right endpoint $x_{h, M+1}$ followed by the periodic extension into the whole real line, see Figure~\ref{fig:dirichlet-lattice}. This gives us a periodic lattice (chain) with the fundamental period $2\pi$ and $N = 2(M + 1)$ particles, which is the setting similar to that in Section~\eqref{sec:periodic-lattice}. Consequently, the analysis of the fixed ends finite lattice setting might build on already obtained results for the periodic lattice.

\subsection{Lattice model of nearest neighbour interacting particles with fixed ends/zero Dirichlet boundary conditions}
\label{sec:latt-model-inter}

Since we are now dealing with a finite length lattice with fixed ends the concept of multiple-neighbours interaction becomes a peculiar one. If a particle is somewhere in the middle of the lattice it would typically have enough neighbours to the left and to the right. This is however not true if we are interested in a particle close to the lattice end. For example, the leftmost particle $u_{h, 1}$ can interact with the nearest neighbour to the left---the virtual wall particle $u_{h, 0}$, see~\eqref{eq:282}---but the interaction can not reach further left. At the moment we thus restrict ourselves to the nearest neighbour interactions, and we shall speculate on possible multiple-neighbours interactions later.

The governing equations for the particles displacements $\{u_{h, j} (t)\}_{j=1}^{M}$ now read
\begin{equation}
  \label{eq:284}
  \ddd{u_{h,j}}{t}
  -
  \ccontinuous^2
  \left(
    \frac{u_{h, j+1} - 2u_{h, j} + u_{h, j-1}}{h^2}
  \right)
  =
  0
  ,
\end{equation}
with the convention that whenever we encounter the virtual wall particle $u_{h, 0}$ or $u_{h, M+1}$, then we use~\eqref{eq:281}. (Recall that we denote $\ccontinuous^2 = \frac{K}{\rho}$.) The system of ordinary differential equations can be also rewritten in the matrix--vector form as
  \begin{equation}
    \label{eq:285}
    \ddd{\vec{u}_h}{t} - \tensorq{L}_h^{\text{Dirichlet}} \vec{u}_h = \vec{0}
  \end{equation}
  with the obvious identification of vector $\vec{u}_h$ and matrix $\tensorq{L}_h^{\text{Dirichlet}}$. \emph{The matrix in \eqref{eq:285} is a banded symmetric matrix}. For example, if we consider the six particles lattice with the nearest neighbour interactions, then we get the following system of ordinary differential equations,
  \begin{equation}
    \label{eq:286}
        \ddd{}{t}
    \begin{bmatrix}
      u_{h, 1} \\
      u_{h, 2} \\
      u_{h, 3} \\
      u_{h, 4} \\
      u_{h, 5} \\
      u_{h, 6} \\
    \end{bmatrix}
    -
    \frac{\ccontinuous^2}{h^2}
    % \frac{K}{\rho h^2}
    % \begin{bmatrix}
    %   -2 & 1 & 0 & 0 & 0 & 0 \\
    %   1 & -2 & 1 & 0 & 0 & 0 \\
    %   0 & 1 & -2 & 1 & 0 & 0 \\
    %   0 & 0 & 1 & -2 & 1 & 0 \\
    %   0 & 0 & 0 & 1 & -2 & 1 \\
    %   0 & 0 & 0 & 0 & 1 & -2 \\
    % \end{bmatrix}
    \begin{bmatrix}
      -2 & 1 &  &  &  &  \\
      1 & -2 & 1 &  &  &  \\
       & 1 & -2 & 1 &  &  \\
       &  & 1 & -2 & 1 &  \\
       &  &  & 1 & -2 & 1 \\
       &  &  &  & 1 & -2 \\
    \end{bmatrix}
    \begin{bmatrix}
      u_{h, 1} \\
      u_{h, 2} \\
      u_{h, 3} \\
      u_{h, 4} \\
      u_{h, 5} \\
      u_{h, 6} \\
    \end{bmatrix}
    =
    \begin{bmatrix}
      0 \\
      0 \\
      0 \\
      0 \\
      0 \\
      0
    \end{bmatrix}
    .
  \end{equation}
  The banded matrix in~\eqref{eq:286} is the well-known symmetric tridiagonal matrix that arises in the numerical analysis by the discretisation of the second order derivative operator (with zero Dirichlet boundary conditions) by the centred second order finite difference scheme.

  We also note that the matrix in~\eqref{eq:286} is almost the same as the matrix~\eqref{eq:186} encountered in the periodic case. The difference between the periodic lattice and the fixed ends lattice is in the top-right corner and the bottom-left corner of the system matrix---the system matrix in the periodic case is a \emph{circulant matrix}, while the system matrix in the fixed ends case is a \emph{banded matrix}. The loss of \emph{circulant matrix} structure might be a problem in analysis, since our analysis heavily relied on the convolution structure of the governing equations and on convolution-to-multiplication property of Fourier transform. But as we shall see,  the Fourier transform based toolchain resurfaces even in the fixed ends/zero Dirichlet boundary conditions case.

\subsection{Reconstruction procedure---from discrete grid values to a function of real variable with zero Dirichlet boundary conditions}
\label{sec:reconstr-proc-from-1}
We must first propose a reconstruction procedure that would allow us to reconstruct a real valued function $u_h(t, x)$ out of discrete grid values $ \{u_{h, j} (t)\}_{j=1}^{M}$ while preserving the zero Dirichlet boundary conditions~\eqref{eq:281} at the continuous level, that is
\begin{equation}
  \label{eq:287}
  \left. u_h(t, x) \right|_{x=0, \pi} = 0.
\end{equation}
Since we already know that the bandwidth limited interpolant is a good choice in the periodic case, we would like to exploit the same methodology also in the zero Dirichlet boundary conditions case. This brings us to the realm of discrete sine and cosine transforms, see, for example, \cite{martucci.sa:symmetric}, \cite{strang.g:discrete} and~\cite{britanak.v.yip.pc.ea:discrete}.

\subsubsection{Discrete sine transform and bandwidth limited interpolant}
\label{sec:discr-sine-transf}
We start with $M$ grid values $\{u_{h, j} (t)\}_{j=1}^{M}$ in the interval~$[0, \pi]$, and we extend them in such a way that we can exploit the setting for the periodic lattice, see Section~\eqref{sec:periodic-lattice}. The extension is constructed as follows, see Figure~\ref{fig:dirichlet-lattice}. First, we define the grid value at the right endpoint of $[0, \pi]$ using the desired zero Dirichlet boundary condition~\eqref{eq:283}, that is we set $u_{h, M+1} = 0$. Second, we extend the available grid values to the interval $[\pi, 2\pi]$. In particular, we do \emph{odd} extension of grid values along the point $x_{h, M+1} = \pi$, that is for $j=1, \dots, M$ we set
\begin{equation}
  \label{eq:odd-extension}
  u_{h, (M+1)+j} =_{\bydefinition}- u_{h, (M+1)-j},
\end{equation}
and we interpret $u_{h, (M+1)+j}$ as the grid value at point $x_{h, (M+1)+j} = \left((M+1)+j \right) h$. Finally, we define the grid value at the right endpoint of $[0, 2\pi]$ to be zero. Doing so we obtain grid values~$\{u_{h, j} (t)\}_{j=1}^{N}$,
\begin{equation}
  \label{eq:288}
  N =_{\bydefinition} 2(M+1) 
\end{equation}
at equispaced grid points $\left\{x_{h, n} \right\}_{n=1}^N$, $x_{h, l} =lh$, in the interval $[0, 2\pi]$ with the interpoint distance
\begin{equation}
  \label{eq:289}
  h = \frac{\pi}{M+1} = \frac{2 \pi}{N}.  
\end{equation}
This is the setting we know from the periodic case, and the rest of the reconstruction procedure is the same as in the periodic case---we exploit the discrete Fourier transform, see Section~\ref{sec:reconstr-proc-from-2}.

  In virtue of the proposed reconstruction procedure, the bandwidth limited interpolant $u_h(t, x)$ of extended grid values clearly satisfies the conditions
  \begin{subequations}
    \begin{align}
      \label{eq:290}
      \left. u_h(t, x) \right|_{x = x_{h, M+1} = \pi} &= 0\\      
      \label{eq:291}
      \left. u_h(t, x) \right|_{x = 0, 2\pi} = 0,
    \end{align}
  \end{subequations}
hence the bandwidth limited interpolant satisfies the desired zero Dirichlet boundary condition~\eqref{eq:281}. Furthermore, the odd extension guarantees that the reconstructed function is odd with respect to the midpoint~$\pi$.

The whole procedure in fact leads to discrete sine transformation. The extension procedure applied to the grid values $\{u_{h, j} (t)\}_{j=1}^{M}$ gives us
\begin{equation}
  \label{eq:292}
  \vec{u}_h
  =
  \begin{bmatrix}
    u_{h, 1} \\
    u_{h, 2} \\
    u_{h, 3} \\
    \vdots \\
    u_{h, M}
  \end{bmatrix}
  \mapsto
  \vec{u}_h^{\text{odd}}
  =
  \begin{bmatrix}
    u_{h, 1} \\
    u_{h, 2} \\
    u_{h, 3} \\
    \vdots \\
    u_{h, M} \\
    0 \\
    -u_{h, M} \\
    \vdots \\
    -u_{h, 3} \\
    -u_{h, 2} \\
    -u_{h, 1} \\
    0
  \end{bmatrix}
  ,
\end{equation}
where $\vec{u}_h^{\text{odd}}$ is a vector of length $N = 2 (M+1)$. The extended vector $\vec{u}_h^{\text{odd}}$ is then processed by the standard discrete Fourier transform/discrete inverse Fourier transform~\eqref{eq:discrete-fourier-transform}. If we evaluate the discrete Fourier transform of the extended vector~$\vec{u}_h^{\text{odd}}$, then we get
\begin{multline}
  \label{eq:293}
      \tensor{\left( \FourierTransformDiscrete{\vec{u}_h^{\text{odd}}} \right)}{_k}
       =
       h
       \sum_{j= 1}^{N}
       \tensor{\left(\vec{u}_h^{\text{odd}}\right)}{_j}
       \exponential{- \iunit k x_{h, j}}
       =
       h
       \sum_{j= 1}^{M}
       \tensor{\left(\vec{u}_h^{\text{odd}}\right)}{_j}
       \exponential{- \iunit k x_{h, j}}
       +
       h
       \sum_{n= 1}^{M}
       \tensor{\left(\vec{u}_h^{\text{odd}}\right)}{_{(M+1) + n}}
       \exponential{- \iunit k x_{h, (M+1) + n}}
       \\
       =
       h
       \sum_{j= 1}^{M}
       \tensor{\left(\vec{u}_h^{\text{odd}}\right)}{_j}
       \exponential{- \iunit k x_{h, j}}
       -
       h
       \sum_{n= 1}^{M}
       \tensor{\left(\vec{u}_h^{\text{odd}}\right)}{_{(M+1) - n}}
       \exponential{- \iunit k x_{h, (M+1) + n}}
       =
       h
       \sum_{j= 1}^{M}
       \tensor{\left(\vec{u}_h^{\text{odd}}\right)}{_j}
       \exponential{- \iunit k x_{h, j}}
       -
       h
       \sum_{l= M}^{1}
       \tensor{\left(\vec{u}_h^{\text{odd}}\right)}{_{l}}
       \exponential{- \iunit k x_{h, 2(M+1) - l}}
       \\
       =
       -
       h
       \sum_{j= 1}^{M}
       \tensor{\left(\vec{u}_h\right)}{_j}
       \left( \exponential{\iunit k x_{h, j}} - \exponential{ -\iunit k x_{h, j}} \right)
       =
       -
       2 \iunit h
       \sum_{j= 1}^{M}
       \tensor{\left(\vec{u}_h\right)}{_j}
       \sin \left(  k x_{h, j} \right)
       =
       -
       2 \iunit h
       \sum_{j= 1}^{M}
       \tensor{\left(\vec{u}_h\right)}{_j}
       \sin \left(  \frac{\pi}{M+1} k j \right)
       ,
     \end{multline}
     where we have exploited the odd symmetry~\eqref{eq:odd-extension} and the fact that $\exponential{- \iunit k x_{h, 2(M+1) - l}} = \exponential{- \iunit k \left( 2\pi - x_{h, l} \right)} = \exponential{\iunit k x_{h, l}}$. This manipulation clearly shows that the extension--discrete Fourier transform is, up to a scaling factor, equivalent to the discrete sine transform of type I, DST-I, see, for example, \cite[Section 2.7]{britanak.v.yip.pc.ea:discrete}. 

     Four our purposes we introduce the \emph{discrete sine transformation} as a transformation that transforms $M$ grid values $\{u_{h, j} (t)\}_{j=1}^{M}$ to $M$ values $\left\{ \tensor{\left(\SineTransformDiscrete{\vec{u}_n}\right)}{_k} \right\}_{k=1}^M$ in the Fourier space as 
     \begin{equation}
       \label{eq:discrete-sine-transform}
       \tensor{\left(\SineTransformDiscrete{\vec{u}_n}\right)}{_k} =_{\bydefinition} - \frac{1}{2} {\ensuremath{\Im}} \left( \tensor{\left( \FourierTransformDiscrete{\vec{u}_h^{\text{odd}}} \right)}{_k} \right), \qquad k=1, \dots, M,
     \end{equation}
     where $\Im$ denotes the imaginary part of the corresponding expression. We thus have the following definition.
     \begin{definition}[Discrete sine transform]
       \label{dfn:9}
       Let $\vec{u}_h = \{u_{h, j} (t)\}_{j=1}^{M}$ be a collection of grid values at the grid $\left\{x_{h, j} \right\}_{j=1}^M$, $x_{h, j}=jh$, $h = \frac{\pi}{M+1}$, in the interval~$[0, \pi]$. Let $\vec{u}_h^{\text{odd}}$ be the odd extension of the collection $\vec{u}_h$, see~\eqref{eq:292}, to the grid $\left\{x_{h, l} \right\}_{l=1}^N$, $x_{h, l}=lh$, $h = \frac{2\pi}{N}$, $N = 2(M+1)$. The collection $\SineTransformDiscrete{\vec{u}_n}$ whose components in the Fourier space are given as
       \begin{equation}
         \label{eq:442}
         \tensor{\left(\SineTransformDiscrete{\vec{u}_n}\right)}{_k} =_{\bydefinition} - \frac{1}{2} {\ensuremath{\Im}} \left( \tensor{\left( \FourierTransformDiscrete{\vec{u}_h^{\text{odd}}} \right)}{_k} \right), \qquad k=1, \dots, M,
       \end{equation}
       where ${\ensuremath{\Im}}$ denotes the imaginary part, is called the \emph{discrete sine transform} of collection $\vec{u}_h$.
     \end{definition}
     An analogue of the explicit formula for the bandwidth limited interpolant in the physical space, that is an analogue of~\eqref{eq:bandwidth-limited-interpolant-semidiscrete-physical-space} and~\eqref{eq:bandwidth-limited-interpolant-periodic-physical-space}, is obtained by the same manipulation as in Section~\ref{sec:bandw-limit-interp} and Section~\ref{sec:discr-four-transf}. We start with the $m$-th discrete Dirac distribution $\vec{\diracdelta}_{h, \mathrm{Dirichlet}}^{h, m}$, 
\begin{equation}
  \label{eq:294}
  \vec{\diracdelta}_{h, \mathrm{Dirichlet}}^{h, m}
  =_{\bydefinition}
  \begin{bmatrix}
    0 \\
    \vdots \\
    0 \\
    1 \\
    0 \\
    \vdots \\
    0
  \end{bmatrix}
  ,  
\end{equation}
where the nonzero element is placed at the $m$-th position in this vector of length $M$. (The $m$-th discrete Dirac distribution~$\vec{\diracdelta}_{h, \mathrm{Dirichlet}}^{h, m}$ is defined on the space of grid values $x_{h, j}$, $j=1, \dots, M$. The $m$-the position in the vector~\eqref{eq:294} corresponds to the grid value at the point $x_{h, m}$.) Following the construction based on odd periodic extension, see~\eqref{eq:292}, we extend this vector to the vector $\vec{\diracdelta}_{h, \mathrm{Dirichlet}}^{h, m, \text{odd}}$ of length $2(M+1)$. The vector $\vec{\diracdelta}_{h, \mathrm{Dirichlet}}^{h, m, \text{odd}}$ is zero everywhere except at its $m$-th position and $(2(M+1)-m)$-th position, and we can write it as a sum of two   denotes the discrete Dirac functions $\vec{\diracdelta}_h^{h, l}$ we have used in the periodic case, see~\eqref{eq:202},
% that is
% \begin{equation}
%   \label{eq:295}
%   \tensor{
%     \left(
%       \vec{\diracdelta}_{h, \mathrm{Dirichlet}}^{h, m, \text{odd}}
%     \right)
%   }{_j}
%   =
%   \begin{cases}
%     1,& j = m, \\
%     -1, & j = 2(M+1) - m, \\
%     0, & \text{otherwise},
%   \end{cases}
% \end{equation}
% or
\begin{equation}
  \label{eq:296}
  \vec{\diracdelta}_{h, \mathrm{Dirichlet}}^{h, m, \text{odd}}
  =
  \vec{\diracdelta}_h^{h, m}
  -
  \vec{\diracdelta}_h^{h, 2(M+1)-m},
\end{equation}
The odd extension procedure brings us to the periodic setting, hence we can from now on follow the corresponding discussion on the periodic case, see Section~\ref{sec:discr-four-transf}, especially formula~\eqref{eq:204}. The forward discrete Fourier transform~\eqref{eq:188} of $\vec{\diracdelta}_{h, \mathrm{Dirichlet}}^{h, m, \text{odd}}$ yields
\begin{equation}
  \label{eq:297}
  \tensor{
    \left(
      \FourierTransformDiscrete{\vec{\diracdelta}_{h, \mathrm{Dirichlet}}^{h, m, \text{odd}}}
    \right)
  }{_k}
  =
  h
  \exponential{- \iunit k x_{h, m}}
  -
  h
  \exponential{- \iunit k x_{h, 2(M+1) - m}}
  =
  h
  \exponential{- \iunit k x_{h, m}}
  -
  h
  \exponential{\iunit k x_{h, m}},
\end{equation}
and the inverse \emph{continuous} Fourier transform of~\eqref{eq:297} then gives us the sought explicit expression for the bandwidth limited interpolant $\diracdelta_{h, \mathrm{Dirichlet}}^{h, m}$ of the Discrete Dirac function for fixed ends lattice $\vec{\diracdelta}_{h, \mathrm{Dirichlet}}^{h, m}$,
\begin{equation}
  \label{eq:298}
  \diracdelta_{h, \mathrm{Dirichlet}}^{h, m}
  =
  \InverseFourierTransform{
    \FourierTransformDiscrete{\vec{\diracdelta}_{h, \mathrm{Dirichlet}}^{h, m, \text{odd}}}
  }
  =
  \Sdiracdelta_h^h \left( x - x_{h, m}\right)
  -
  \Sdiracdelta_h^h \left( x + x_{h, m}\right)
  ,
\end{equation}
where we have reused the calculation done in~\eqref{eq:204}. We can thus conclude that the bandwidth limited interpolant $u_h$ of grid values $\vec{u}_h$ is in the fixed ends/zero Dirichlet boundary conditions case given as
  \begin{subequations}
    \label{eq:bandwidth-limited-interpolant-dirichlet-physical-space}
    \begin{align}
      \label{eq:299}
      u_h(x) &= \sum_{j=1}^{M}\tensor{\left(\vec{u}_h\right)}{_j} \left( \Sdiracdelta_h^h \left( x - x_{h, j}\right) - \Sdiracdelta_h^h \left( x + x_{h, j}\right) \right), \\
      \label{eq:300}
      \Sdiracdelta_h^h \left(x \right)
             &=_{\bydefinition}
      \frac{h}{2\pi}
      \cos \left( \frac{x}{2} \right)
      \frac{\sin \left( \frac{\pi}{h} x \right)}{\sin \left( \frac{x}{2} \right)}
      .
    \end{align}
  \end{subequations}  
  This is the fixed ends/zero Dirichlet boundary conditions case analogue to the infinite domain formula~\eqref{eq:bandwidth-limited-interpolant-semidiscrete-physical-space} and the periodic domain formula~\eqref{eq:bandwidth-limited-interpolant-periodic-physical-space}.
  We can summarise our findings as a lemma.
  \begin{lemma}[Bandwidth limited interpolant, finite lattice, fixed ends/zero Dirichlet boundary conditions]
    \label{lm:10}
    Let $\vec{u}_h = \{u_{h, j} (t)\}_{j=1}^{M}$ be a collection of grid values at the grid $\left\{x_{h, j} \right\}_{j=1}^M$, $x_{h, j}=jh$, $h = \frac{\pi}{M+1}$, in the interval~$[0, \pi]$. Let $u_h$ be the bandwidth limited interpolant constructed out of the odd extension $\vec{u}_h^{\text{odd}}$ of the collection $\vec{u}_h$, see~\eqref{eq:292}, to the grid $\left\{x_{h, l} \right\}_{l=1}^N$, $x_{h, l}=lh$, $h = \frac{2\pi}{N}$, $N = 2(M+1)$, that is
    \begin{equation}
      \label{eq:445}
      u_h =_{\bydefinition}
      \InverseFourierTransform{
    \sideset{}{^\prime}\sum_{k = -\frac{N}{2}}^{\frac{N}{2}}
    \tensor{\left( \FourierTransformDiscrete{\vec{u}_h^{\text{odd}}} \right)}{_k}
    \diracdelta_{k}
  }.
\end{equation}
(Concerning the bandwidth limited interpolant construction see Section~\ref{sec:discr-four-transf}.) Then
    \begin{align}
      \label{eq:443}
      u_h(x) &= \sum_{j=1}^{M}\tensor{\left(\vec{u}_h\right)}{_j} \left( \Sdiracdelta_h^h \left( x - x_{h, j}\right) - \Sdiracdelta_h^h \left( x + x_{h, j}\right) \right), \\
      \label{eq:444}
      \Sdiracdelta_h^h \left(x \right)
             &=_{\bydefinition}
               \frac{h}{2\pi}
               \cos \left( \frac{x}{2} \right)
               \frac{\sin \left( \frac{\pi}{h} x \right)}{\sin \left( \frac{x}{2} \right)}
               .
    \end{align}
  \end{lemma}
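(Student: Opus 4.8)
The plan is to reduce the statement to the periodic-lattice result, Lemma~\ref{lm:7}, exploiting the linearity of the whole reconstruction chain --- odd extension, then discrete Fourier transform, then the symmetrised (primed) sum, then inverse \emph{continuous} Fourier transform --- together with the explicit computation~\eqref{eq:204} already performed in the periodic case. First I would expand the collection of grid values over the discrete Dirac basis on $\left\{x_{h,j}\right\}_{j=1}^{M}$, writing $\vec{u}_h = \sum_{j=1}^{M}\tensor{\left(\vec{u}_h\right)}{_j}\,\vec{\diracdelta}_{h,\mathrm{Dirichlet}}^{h,j}$; by linearity it then suffices to compute the bandwidth limited interpolant of a single $\vec{\diracdelta}_{h,\mathrm{Dirichlet}}^{h,m}$ and afterwards sum against the weights $\tensor{\left(\vec{u}_h\right)}{_j}$.

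Second, I would use the identity~\eqref{eq:296}, which expresses the odd extension of the $m$-th Dirichlet discrete Dirac as a difference of two of the \emph{periodic}-case discrete Diracs of Section~\ref{sec:periodic-lattice}, namely $\vec{\diracdelta}_{h,\mathrm{Dirichlet}}^{h,m,\text{odd}} = \vec{\diracdelta}_h^{h,m} - \vec{\diracdelta}_h^{h,2(M+1)-m}$. Passing this through the forward discrete Fourier transform and then the inverse continuous Fourier transform with the primed sum --- precisely the construction of Section~\ref{sec:discr-four-transf} applied with $N = 2(M+1)$ --- and invoking the calculation~\eqref{eq:204}, which gave the periodic interpolant of $\vec{\diracdelta}_h^{h,l}$ as $\Sdiracdelta_h^h \left(x - x_{h,l}\right)$, I obtain that the interpolant of $\vec{\diracdelta}_{h,\mathrm{Dirichlet}}^{h,m}$ equals $\Sdiracdelta_h^h \left(x - x_{h,m}\right) - \Sdiracdelta_h^h \left(x - x_{h,2(M+1)-m}\right)$, which is~\eqref{eq:298}.

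Third, I would simplify the second term. Since $x_{h,2(M+1)-m} = \bigl(2(M+1)-m\bigr)h = 2\pi - x_{h,m}$ (using $(M+1)h = \pi$) and $\Sdiracdelta_h^h$ is $2\pi$-periodic, one has $\Sdiracdelta_h^h \left(x - x_{h,2(M+1)-m}\right) = \Sdiracdelta_h^h \left(x + x_{h,m} - 2\pi\right) = \Sdiracdelta_h^h \left(x + x_{h,m}\right)$. The $2\pi$-periodicity is a one-line check: under $x \mapsto x + 2\pi$ the factors $\cos(x/2)$ and $\sin(x/2)$ each change sign, while $\sin\left(\frac{\pi}{h}x\right) = \sin\left(\frac{N}{2}x\right)$ is multiplied by $(-1)^{N} = 1$ since $N = 2(M+1)$ is even. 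Summing the single-Dirac result over $j$ against the weights $\tensor{\left(\vec{u}_h\right)}{_j}$ then produces~\eqref{eq:443}, and~\eqref{eq:444} merely restates the definition of $\Sdiracdelta_h^h$.

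There is no substantial obstacle here, since the work has already been done in the periodic case; the one place that warrants care is the index bookkeeping. One must check that the odd extension~\eqref{eq:296} is genuinely the difference of the two periodic Diracs as vectors of length $N = 2(M+1)$, and that on the grid the reflected index $2(M+1)-m$ corresponds to the point $2\pi - x_{h,m}$, so that the odd symmetry of the reconstruction becomes the sign and argument flip in~\eqref{eq:443}. It is also worth recording, as a consistency remark, that the discrete Fourier transform of an odd extension automatically vanishes at the top wavenumber $k = \frac{N}{2} = M+1$ --- because $\sin\left(\frac{\pi}{M+1}(M+1)j\right) = \sin(\pi j) = 0$ for all integers $j$, as already seen in~\eqref{eq:293} --- so the symmetrised term of the primed sum~\eqref{eq:445} contributes nothing and cannot spoil the real-valuedness of $u_h$. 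Once this is noted, the remainder of the argument is purely mechanical.
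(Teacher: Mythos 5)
Your proposal is correct and follows essentially the same route as the paper: decompose into discrete Dirac distributions, use the identity~\eqref{eq:296} to reduce to the periodic case, and reuse the computation~\eqref{eq:204}. The only cosmetic difference is that the paper converts the reflected index into the argument $x + x_{h,m}$ already at the level of the discrete Fourier transform (via $\exponential{-\iunit k x_{h,2(M+1)-m}} = \exponential{\iunit k x_{h,m}}$ in~\eqref{eq:297}), whereas you do it after the inverse transform via the $2\pi$-periodicity of $\Sdiracdelta_h^h$ --- both steps are valid and equivalent.
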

  
\subsubsection{Discrete sine transform and evaluation of bandwidth limited interpolant at arbitrary point}
\label{sec:discr-sine-transf-2}

As in the periodic case, the bandwidth limited interpolant $u_h$ can be evaluated at any point $y \in [0, \pi]$ in the physical space \emph{without the need to sum the series representation} \eqref{eq:bandwidth-limited-interpolant-dirichlet-physical-space}. This can be done using the same manipulation as in Section~\eqref{sec:discr-four-transf-4}, but one needs to work with the extended grid values vector, see~\eqref{eq:292}. 

     \subsubsection{Discrete sine transform and symmetric tridiagonal matrices}
     \label{sec:discr-sine-transf-1}
     The matrix
     \begin{equation}
       \label{eq:301}
       \tensorq{L}_h^{\text{Dirichlet}}
       =_{\bydefinition}
       % \begin{bmatrix}
       %   -2 & 1 & 0 & 0 & 0 & 0 \\
       %   1 & -2 & 1 & 0 & 0 & 0 \\
       %   0 & 1 & -2 & 1 & 0 & 0 \\
       %   0 & 0 & 1 & -2 & 1 & 0 \\
       %   0 & 0 & 0 & 1 & -2 & 1 \\
       %   0 & 0 & 0 & 0 & 1 & -2 \\
       % \end{bmatrix}
       \begin{bmatrix}
         -2 & 1 &  &  &  &  \\
         1 & -2 & 1 &  &  &  \\
          & 1 & -2 & 1 &  &  \\
          &  & 1 & -2 & 1 &  \\
          &  &  & 1 & -2 & 1 \\
          &  &  &  & 1 & -2 \\
       \end{bmatrix}
       ,
     \end{equation}
     in~\eqref{eq:286} is not a circulant matrix. It thus seems that the convenient convolution-to-multiplication property we know from the continuous/semidiscrete/discrete Fourier transform is of no use in the case of fixed ends/Dirichlet lattice. In particular, if we are interested in the eigenvalues of the differentiation matrix~\eqref{eq:301}, it seems that we can non make use of the fact that the discrete Fourier transform diagonalises circulant matrices. However, it turns out that the circulant matrices are still present and can be used in solving the eigenproblem for the differentiation matrix~\eqref{eq:301}.

     We define \emph{extension}~$\tensorq{E}_{\left(2M + 2\right) \times M}$ and \emph{reduction}~$\tensorq{R}_{M \times \left(2M + 2\right)}$ matrices/operators as
     \begin{subequations}
       \label{eq:302}
       \begin{align}
         \label{eq:303}
         \tensorq{E}_{\left(2M + 2\right) \times M}
         =_{\bydefinition}
         \begin{bmatrix}
           \identity_{M \times M} \\
           \transpose{\vec{0}_{M}} \\
           - \jmatrix_{M \times M} \\
           \transpose{\vec{0}_{M}} \\
         \end{bmatrix}
         ,
         \\
         \label{eq:304}
         \tensorq{R}_{M \times \left(2M + 2\right)}
         =_{\bydefinition}
         \frac{1}{2}
         \begin{bmatrix}
           \identity_{M \times M} & \vec{0}_{M} & - \jmatrix_{M \times M} & \vec{0}_{M}
         \end{bmatrix}
         ,
       \end{align}
     \end{subequations}
     where the symbol $\vec{0}_{M}$ denotes the zero column vector of length $M$, and where $\identity_{M \times M}$ denotes the $M \times M$ identity matrix and $\jmatrix_{M \times M}$ denotes the $M \times M$ matrix with the secondary diagonal filled with ones and with zero elements otherwise,
     \begin{subequations}
       \label{eq:305}
       \begin{align}
         \label{eq:306}
         \identity_{M \times M} &=_{\bydefinition}
                                  % \begin{bmatrix}
                                  %   1 & 0 & 0 & \dots & 0 \\
                                  %   0 & 1 & 0 & \dots & 0 \\
                                  %   0 & 0 & 1 & \dots & 0 \\
                                  %   \vdots & \vdots & \vdots & \ddots & \vdots \\
                                  %   0 & 0 & 0 & \dots & 1
                                  % \end{bmatrix}
                                  \begin{bmatrix}
                                    1 &  &  &  &  \\
                                     & 1 &  &  &  \\
                                     &  & 1 &  &  \\
                                     &  &  & \ddots &  \\
                                     &  &  &  & 1
                                  \end{bmatrix}
                                  , \\
         \label{eq:307}
         \jmatrix_{M \times M} &=_{\bydefinition}
                                  % \begin{bmatrix}
                                  %   0 & \dots & 0 & 0 & 1 \\
                                  %   0 & \dots & 0 & 1 & 0 \\
                                  %   0 & \dots & 1 & 0 & 0 \\
                                  %   \vdots & \reflectbox{$\ddots$} & \vdots & \vdots & \vdots \\
                                  %   1 & \dots & 0 & 0 & 0
                                  % \end{bmatrix}
                                  \begin{bmatrix}
                                     &  &  &  & 1 \\
                                     &  &  & 1 &  \\
                                     &  & 1 &  &  \\
                                     & \reflectbox{$\ddots$} &  &  &  \\
                                    1 &  &  &  & 
                                  \end{bmatrix}
       \end{align}
     \end{subequations}
     Using these matrices we can represent the extension transform~\eqref{eq:292} as
     \begin{subequations}
       \label{eq:308}
       \begin{align}
         \label{eq:309}
         \vec{u}_h^{\text{odd}} = \tensorq{E}_{\left(2M + 2\right) \times M}  \vec{u}_h, \\
         \label{eq:310}
         \vec{u}_h = \tensorq{R}_{M \times \left(2M + 2\right)} \vec{u}_h^{\text{odd}}.
       \end{align}
     \end{subequations}
     Note that
     \begin{equation}
       \label{eq:311}
       \tensorq{R}_{M \times \left(2M + 2\right)} \tensorq{E}_{\left(2M + 2\right) \times M} = \identity_{M \times M},  
     \end{equation}
     and that $\tensorq{R}_{M \times \left(2M + 2\right)} = \frac{1}{2} \transpose{\left(\tensorq{E}_{\left(2M + 2\right) \times M}\right)}$. Furthermore, if we restrict ourselves to the subspace of vectors with the structure~$\vec{u}_h^{\text{odd}}$, then we even have the inverse relation between the extension and reduction matrices, 
     \begin{equation}
       \label{eq:312}
       \tensorq{E}_{\left(2M + 2\right) \times M} \tensorq{R}_{M \times \left(2M + 2\right)} = \identity_{\left(2M + 2\right) \times \left(2M + 2\right)}.
     \end{equation}
     Finally, we note that if the elements of collection $\FourierTransformDiscrete{\vec{u}_h^{\text{odd}}}$ are arranged accordingly into a column vector, then we get the matrix representation of discrete sine transform~\eqref{eq:discrete-sine-transform} in the form
     \begin{equation}
       \label{eq:313}
       \SineTransformDiscreteMatrix = \frac{\iunit}{2}   \tensorq{R}_{M \times \left(2M + 2\right)}    \FourierTransformDiscreteMatrix \tensorq{E}_{\left(2M + 2\right) \times M}.
     \end{equation}
     Here we assume that $\FourierTransformDiscreteMatrix$ is the matrix representing the discrete Fourier transform that takes the column vector of grid values arranged as
     $
     \transpose{
       \begin{bmatrix}
         v_{h, 1} & \cdots & v_{h, 2(M+1)}
       \end{bmatrix}
     }
     $
     and returns the collection $\FourierTransformDiscrete{\vec{v}_h}$ arranged in a column vector such that the elements $\tensor{\left( \FourierTransformDiscrete{\vec{u}_h^{\text{odd}}} \right)}{_1}$, \dots , $\tensor{\left( \FourierTransformDiscrete{\vec{u}_h^{\text{odd}}} \right)}{_{M+1}}$ are listed first, then they are followed by the elements $\tensor{\left( \FourierTransformDiscrete{\vec{u}_h^{\text{odd}}} \right)}{_{- \left(M+1 \right) +1}}$, \dots, $\tensor{\left( \FourierTransformDiscrete{\vec{u}_h^{\text{odd}}} \right)}{_{-1}}$, and the element corresponding to the zero wavenumber, that is the element $\tensor{\left( \FourierTransformDiscrete{\vec{u}_h^{\text{odd}}} \right)}{_{0}}$,  is listed as the \emph{last} component of this vector. (This might not be the typical arrangement in the computer codes for discrete Fourier transform.). Furthermore, the factor $h$ is present in our definition of discrete Fourier transform. We also note that in virtue of \eqref{eq:293} we have
     \begin{subequations}
       \label{eq:314}
       \begin{align}
         \label{eq:315}
         \tensor{\left( \FourierTransformDiscrete{\vec{u}_h^{\text{odd}}} \right)}{_k}
         &
           =
           \tensor{\left( \FourierTransformDiscrete{\vec{u}_h^{\text{odd}}} \right)}{_{-k}}, \\
         \label{eq:316}
         \tensor{\left( \FourierTransformDiscrete{\vec{u}_h^{\text{odd}}} \right)}{_{M+1}} & = 0, \\
         \label{eq:317}
         \tensor{\left( \FourierTransformDiscrete{\vec{u}_h^{\text{odd}}} \right)}{_0} & = 0,
       \end{align}
     \end{subequations}
     which means that the application of matrix $\FourierTransformDiscreteMatrix \tensorq{E}_{\left(2M + 2\right) \times M}$ generates the vector that has the same structure as the one implied by the extension operation~\eqref{eq:292}.

     Now we make the key observation---\emph{the application of a symmetric tridiagonal $M \times M$ matrix to a vector $\vec{u}_h$ can be rewritten as an application of a ``big'' circulant matrix to the extended vector~$\vec{u}_h^{\text{odd}}$}. Indeed, if $\tensorq{B}_{M \times M}$ denotes a symmetric tridiagonal matrix of size $M \times M$,
     \begin{subequations}
       \label{eq:318}
     \begin{equation}
       \label{eq:319}
       \tensorq{B}_{M \times M} =_{\bydefinition}
         % \begin{bmatrix}
         %   c_{h, 0} & c_{h, 1}& 0 & \dots & 0 & 0 & 0 \\
         %   c_{h, 1}& c_{h, 0} & c_{h, 1}& \dots & 0 & 0 & 0 \\
         %   0 & c_{h, 1}& c_{h, 0} & \dots & 0 & 0 & 0 \\
         %   \vdots & \vdots & \vdots & \ddots & \vdots & \vdots &\vdots \\
         %   0 & 0 & 0 & \dots & c_{h, 0} & c_{h, 1}& 0 \\
         %   0 & 0 & 0 & \dots & c_{h, 1}& c_{h, 0} & c_{h, 1}\\
         %   0 & 0 & 0 & \dots & 0 & c_{h, 1}& c_{h, 0} \\
         % \end{bmatrix}_{M \times M}
         \begin{bmatrix}
           c_{h, 0} & c_{h, 1}&  &  &  &  &  \\
           c_{h, 1}& c_{h, 0} & c_{h, 1}&  &  &  &  \\
            & c_{h, 1}& c_{h, 0} &  &  &  &  \\
            &  &  & \ddots &  &  & \\
            &  &  &  & c_{h, 0} & c_{h, 1}&  \\
            &  &  &  & c_{h, 1}& c_{h, 0} & c_{h, 1}\\
            &  &  &  &  & c_{h, 1}& c_{h, 0} \\
         \end{bmatrix}_{M \times M}
         ,
       \end{equation}
       and if
       \begin{equation}
         \label{eq:320}
         \tensorq{C}_{\left(2M + 2\right) \times \left(2M + 2\right)} =_{\bydefinition}
         % \begin{bmatrix}
         %   c_{h, 0}& c_{h, 1}& 0 & \dots & 0 & 0 & c_{h, 1}\\
         %   c_{h, 1}& c_{h, 0}& c_{h, 1}& \dots & 0 & 0 & 0 \\
         %   0 & c_{h, 1}& c_{h, 0}& \dots & 0 & 0 & 0 \\
         %   \vdots & \vdots & \vdots & \ddots & \vdots & \vdots &\vdots \\
         %   0 & 0 & 0 & \dots & c_{h, 0}& c_{h, 1}& 0 \\
         %   0 & 0 & 0 & \dots & c_{h, 1}& c_{h, 0}& c_{h, 1}\\
         %   c_{h, 1}& 0 & 0 & \dots & 0 & c_{h, 1}& c_{h, 0}\\
         % \end{bmatrix}_{\left(2M + 2\right) \times \left(2M + 2\right)}
         \begin{bmatrix}
           c_{h, 0}& c_{h, 1}&  &  &  &  & c_{h, 1}\\
           c_{h, 1}& c_{h, 0}& c_{h, 1}&  &  &  &  \\
             & c_{h, 1}& c_{h, 0}&  &  &  &  \\
             &  &  & \ddots &  &  & \\
             &  &  &  & c_{h, 0}& c_{h, 1}&  \\
             &  &  &  & c_{h, 1}& c_{h, 0}& c_{h, 1}\\
           c_{h, 1}&  &  &  &  & c_{h, 1}& c_{h, 0}\\
         \end{bmatrix}_{\left(2M + 2\right) \times \left(2M + 2\right)}
       \end{equation}
       denotes the ``big'' circulant matrix of size $\left(2M + 2\right) \times \left(2M + 2\right)$, then it is straightforward to check that
       \begin{equation}
         \label{eq:321}
         \tensorq{R}_{M \times \left(2M + 2\right)}\tensorq{C}_{\left(2M + 2\right) \times \left(2M + 2\right)} \tensorq{E}_{\left(2M + 2\right) \times M} \vec{u}_h = \tensorq{B}_{M \times M} \vec{u}_h.
       \end{equation}
     \end{subequations}

     Naturally, we are mainly interested in the key observation~\eqref{eq:318} if the coefficients take values
     \begin{equation}
       \label{eq:322}
       c_{h, 0} =_{\bydefinition} - \frac{2}{h^2}, \qquad c_{h, 1} =_{\bydefinition} \frac{1}{h^2}.
     \end{equation}
     In this case the matrix $\tensorq{B}_{M \times M}$ is the matrix representing the standard second order finite differences approximation of the second derivative operator with fixed ends/zero Dirichlet boundary conditions, $\tensorq{B}_{M \times M} \equiv \tensorq{L}_h^{\text{Dirichlet}}$. The approximation of the second derivative operator then reads
     \begin{equation}
       \label{eq:323}
           \begin{bmatrix}
             \left. \ddd{u}{x} \right|_{x = x_{h, 1}} \\
             \left. \ddd{u}{x} \right|_{x = x_{h, 2}} \\
             \left. \ddd{u}{x} \right|_{x = x_{h, 3}} \\
             \vdots \\
             \left. \ddd{u}{x} \right|_{x = x_{h, M-1}} \\
             \left. \ddd{u}{x} \right|_{x = x_{h, M}} \\
           \end{bmatrix}
           \approx
           \begin{bmatrix}
             c_{h, 0} & c_{h, 1} &  &  &  &    \\
             c_{h, 1} & c_{h, 0} & c_{h, 1} &  &  &    \\
              & c_{h, 1} & c_{h, 0} & c_{h, 1} &  &    \\
                                 &  & \ddots & \ddots & \ddots &  \\
                                 &  &  & c_{h, 1} & c_{h, 0} & c_{h, 1} \\
                                 &  &  &  & c_{h, 1} & c_{h, 0} \\
           \end{bmatrix}
           \begin{bmatrix}
             u_{h, 1} \\
             u_{h, 2} \\
             u_{h, 3} \\
             \vdots \\
             u_{h, M-1} \\
             u_{h, M} \\
           \end{bmatrix}
           .
         \end{equation}
       
       Since the matrix $\tensorq{C}_{\left(2M + 2\right) \times \left(2M + 2\right)}$ in the key identity~\eqref{eq:321} is a \emph{circulant matrix}, it is still possible to solve the eigenvalue problem for the symmetric tridiagonal matrix
      \begin{equation}
        \label{eq:324}
        \tensorq{B}_{M \times M} \vec{u}_h = \lambda \vec{u}_h
      \end{equation}
      by referring to the diagonalisation properties of discrete Fourier transform, see diagonalisation of circulant matrices in Section~\ref{sec:discr-four-transf-3}. %
      First we rewrite the eigenvalue problem~\eqref{eq:324} using the key identity~\eqref{eq:321}, and we get
      \begin{equation}
        \label{eq:338}
        \tensorq{R}_{M \times \left(2M + 2\right)}\tensorq{C}_{\left(2M + 2\right) \times \left(2M + 2\right)} \tensorq{E}_{\left(2M + 2\right) \times M} \vec{u}_h
        =
        \lambda \vec{u}_h.
      \end{equation}
      This equation can be manipulated as
      \begin{equation}
        \label{eq:339}
        \tensorq{R}_{M \times \left(2M + 2\right)}
        \inverse{\left(\FourierTransformDiscreteMatrix \right)}
        \left(
        \FourierTransformDiscreteMatrix
        \tensorq{C}_{\left(2M + 2\right) \times \left(2M + 2\right)}
        \inverse{\left( \FourierTransformDiscreteMatrix \right)}
        \right)
        \FourierTransformDiscreteMatrix
        \tensorq{E}_{\left(2M + 2\right) \times M} \vec{u}_h
        =
        \lambda \vec{u}_h.
      \end{equation}
      Using the inverse relations~\eqref{eq:311} and \eqref{eq:312} we further rewrite~\eqref{eq:339} as
      \begin{multline}
        \label{eq:340}
        \left[
          \frac{2}{\iunit}
          \tensorq{R}_{M \times \left(2M + 2\right)}
          \inverse{\left(\FourierTransformDiscreteMatrix \right)}
          \tensorq{E}_{\left(2M + 2\right) \times M}
        \right]
        \left[
          \tensorq{R}_{M \times \left(2M + 2\right)}
          \left(
            \FourierTransformDiscreteMatrix
            \tensorq{C}_{\left(2M + 2\right) \times \left(2M + 2\right)}
            \inverse{\left( \FourierTransformDiscreteMatrix \right)}
          \right)
          \tensorq{E}_{\left(2M + 2\right) \times M}
        \right]
        \\
        \left[
          \frac{\iunit}{2}
          \tensorq{R}_{M \times \left(2M + 2\right)}
          \FourierTransformDiscreteMatrix
          \tensorq{E}_{\left(2M + 2\right) \times M}
        \right]
        \vec{u}_h
        =
        \lambda \vec{u}_h.
      \end{multline}
      We make use of the matrix form of the discrete sine transform~\eqref{eq:313} and the diagonalisation property of the discrete Fourier transform~\eqref{eq:234}, and we further rewrite~\eqref{eq:340} as
      \begin{equation}
        \label{eq:341}
        \inverse{\left(\SineTransformDiscreteMatrix\right)}
        \left[
          \tensorq{R}_{M \times \left(2M + 2\right)}
          \left(
            \diag \left( \FourierTransformDiscreteMatrix \vec{c}_h \right)
          \right)
          \tensorq{E}_{\left(2M + 2\right) \times M}
        \right]
        \SineTransformDiscreteMatrix
        \vec{u}_h
        =
        \lambda \vec{u}_h.
      \end{equation}
      We note that $\diag \left( \FourierTransformDiscreteMatrix \vec{c}_h \right)$ has a specific structure, see Lemma~\eqref{lm:1}, and we observe that the successive application of extension/reduction operation to a properly structured diagonal matrix yields the top left $M \times M$ block of the diagonal matrix $\diag \left( \FourierTransformDiscreteMatrix \vec{c}_h \right)$. In particular, we have
      \begin{multline}
        \label{eq:342}
        \tensorq{R}_{M \times \left(2M + 2\right)}
        \left(
          \diag \left( \FourierTransformDiscreteMatrix \vec{c}_h \right)
        \right)
        \tensorq{E}_{\left(2M + 2\right) \times M}
        =
        \tensorq{R}_{M \times \left(2M + 2\right)}
        \begin{bmatrix}
          \lambda_{h, 1} & & & & & & & \\
                    & \ddots & & & & & & \\ 
                    & & \lambda_{h, M} & & & & & \\
                    & & & \lambda_{h, M+1} & & & & \\
                    & & & & \lambda_{h, M} & & & \\
                    & & & & & \ddots & & \\
                    & & & & & & \lambda_{h, 1} & \\
                    & & & & & & & \lambda_{h, 0} 
        \end{bmatrix}
        \tensorq{E}_{\left(2M + 2\right) \times M}
        \\
        =
        \begin{bmatrix}
          \lambda_{h, 1} & & & & & \\
                    & \lambda_{h, 2} & & & & \\
           & & \ddots & & & \\
                    & & & & \lambda_{h, M-1}& \\ 
                    & & & & & \lambda_{h, M} 
        \end{bmatrix}
        ,
      \end{multline}
      with
      \begin{equation}
        \label{eq:343}
        \lambda_{h, m} = h \left( c_{h, 0} + 2c_{h, 1} \cos \left( m h \right) \right), \qquad m=1, \dots, M.
      \end{equation}

      This observation together with~\eqref{eq:341} recovers a well known fact usually phrased as ``discrete sine transform diagonalises symmetric tridiagonal matrices'' and that ``columns of discrete sine transform matrix are eigenvectors of symmetric tridiagonal matrices''. The typical analysis of the eigenvalue problem~\eqref{eq:324} however proceeds differently, see, for example, \cite[Example~7.2.5]{meyer.c:matrix} and \cite[Chapter~9, page~111]{gantmacher.fr.krein.mg:oscillation}. The eigenvalues are typically obtained directly by solving the eigenvalue problem via conversion to a difference equation. Once the eigenvalue formula~\eqref{eq:343} is obtained, it is then noted that it is in fact the cosine transform of the matrix elements. Here we have argued differently---we have shown that that Fourier transform tools are natural for the eigenproblem analysis, and that they directly reflect the properties of the corresponding continuous eigenproblem.
      
      \subsubsection{Discrete sine transform and differentiation}
\label{sec:discr-sine-transf-3}
Concerning the differentiation of a function reconstructed from the grid values $\vec{u}_h = \{u_{h, j} (t)\}_{j=1}^{N}$, we again define the differentiation via the differentiation of the corresponding bandwidth limited interpolant $u_h$. The grid values of the derivative are thus defined as
\begin{equation}
  \label{eq:344}
  \tensor{\left( \dd{^n}{x^n} \vec{u}_h \right)}{_j} =_{\bydefinition} \left. \dd{^n}{x^n} u_h \right|_{x = x_{h, j}},
\end{equation}
$j = 1, \dots, M$, where the bandwidth limited interpolant $u_h$ is the bandwidth limited interpolant constructed from the odd extension $\vec{u}_h^{\text{odd}}$ of grid values $\left\{u_{h, j} \right\}_{j=1}^m$, see Section~\ref{sec:discr-sine-transf}. The formula for the derivative is particularly interesting when $n$ is even. In this case the generic discrete Fourier transform formula~\eqref{eq:244} reduces, in its matrix form, as follows
\begin{equation}
  \label{eq:345}
  \dd{^n}{x^n} \vec{u}_h 
  =
  \left[
    \tensorq{R}_{M \times \left(2M + 2\right)}
    \inverse{\left( \FourierTransformDiscreteMatrix \right)}
  \right]
  \diag \left( (\iunit \vec{k})^n \right)
  \left[
    \vphantom{\inverse{\left( \FourierTransformDiscreteMatrix \right)}}
    \FourierTransformDiscreteMatrix
    \tensorq{E}_{\left(2M + 2\right) \times M}
    \vec{u}_h
  \right]
  ,
\end{equation}
where we have used the extension/reduction matrices and the matrix $\FourierTransformDiscreteMatrix$ representing the discrete Fourier transform. (For the discrete Fourier transform we use the same convention as in Section~\eqref{sec:discr-sine-transf-1}.) The matrix $\diag \left(\iunit \vec{k} \right)^n$ is the matrix with the main diagonal composed of properly ordered elements of the collection $\left(\iunit \vec{k} \right)^n$, see~\eqref{eq:244},
\begin{equation}
  \label{eq:346}
  \diag \left( \left(\iunit \vec{k} \right)^n \right)
  =
  \begin{bmatrix}
          (\iunit 1)^n & & & & & & & \\
                    & \ddots & & & & & & \\ 
                    & & (\iunit M)^n & & & & & \\
                    & & & \left(\iunit (M+1) \right)^n & & & & \\
                    & & & & (\iunit M)^n & & & \\
                    & & & & & \ddots & & \\
                    & & & & & & (\iunit 1)^n & \\
                    & & & & & & & (\iunit 0)^n     
  \end{bmatrix}.
\end{equation}
Using~\eqref{eq:346}, the inverse relations for extension/reduction matrices~\eqref{eq:311} and~\eqref{eq:312}, and the identity~\eqref{eq:342} for extension/reduction matrices and structured diagonal matrices, we see that~\eqref{eq:345} can be further manipulated as
\begin{multline}
  \label{eq:347}
    \dd{^n}{x^n} \vec{u}_h 
    =
    \left[
      \tensorq{R}_{M \times \left(2M + 2\right)}
      \inverse{\left( \FourierTransformDiscreteMatrix \right)}
    \right]
    \diag \left( (\iunit \vec{k})^n \right)
    \left[
      \vphantom{\inverse{\left( \FourierTransformDiscreteMatrix \right)}}
      \FourierTransformDiscreteMatrix
      \tensorq{E}_{\left(2M + 2\right) \times M}
      \vec{u}_h
    \right]
  \\
  =
  \left[
    \tensorq{R}_{M \times \left(2M + 2\right)}
    \inverse{\left( \FourierTransformDiscreteMatrix \right)}
    \tensorq{E}_{\left(2M + 2\right) \times M}
  \right]
  \left[
    \vphantom{\inverse{\left( \FourierTransformDiscreteMatrix \right)}}
    \tensorq{R}_{M \times \left(2M + 2\right)}
    \diag \left( (\iunit \vec{k})^n \right)
    \tensorq{E}_{\left(2M + 2\right) \times M}
  \right]
  \left[
    \vphantom{\inverse{\left( \FourierTransformDiscreteMatrix \right)}}
    \tensorq{R}_{M \times \left(2M + 2\right)}
    \FourierTransformDiscreteMatrix
    \tensorq{E}_{\left(2M + 2\right) \times M}
  \right]
  \vec{u}_h
    \\
    =
    \inverse{\left(\SineTransformDiscreteMatrix\right)}
    \begin{bmatrix}
      (\iunit 1)^n & &  \\
                   & \ddots &  \\
                   & & (\iunit M)^n \\
    \end{bmatrix}
    \SineTransformDiscreteMatrix
    \vec{u}_h
    .
  \end{multline}
  This shows that the differentiation of the bandwidth limited interpolant is straightforward even in the fixed ends/zero Dirichlet boundary conditions case. It suffices to take the discrete sine transform of grid values, multiply the resulting vector elements by fixed values and transform back to the physical space using the inverse discrete sine transform.

  The operator form of the Fourier based differentiation is summarised in the following definition. Note that the definition is just a variant of Definition~\eqref{dfn:8} applied to the odd extension and that the definition is of practical importance only for the even order differentiation.
  
  \begin{definition}[Fourier transform based differentiation, finite lattice, fixed ends/zero Dirichlet boundary conditions]
    \label{dfn:10}
    Let~$\vec{f}_h =  \{f_{h, j} \}_{j=1}^{M}$ be a vector of grid values at the grid $\{x_{h, j}\}_{j=1}^{M}$, $x_{h, j} = jh$, $h = \frac{\pi}{M+1}$, in the interval $[0, \pi]$, see~\eqref{eq:277}. Let $n$ be an even number. We set
    \begin{equation}
      \label{eq:446}
      \dd{^n}{x^n} \vec{f}_h =_{\bydefinition} \InverseSineTransformDiscrete{\left( \tensorschur{\iunit \vec{k} \right)^n}{\SineTransformDiscrete{\vec{f}_h}}},
    \end{equation}
    where the elements of the Fourier space collection $\left(\iunit \vec{k} \right)^n$, $m=1, \dots, M$, read
    \begin{equation}
      \label{eq:447}
      \tensor{
        \left(
          \left(\iunit \vec{k} \right)^n
        \right)
      }{_m}
      =_{\bydefinition}
      \left(\iunit m\right)^n.
  \end{equation}
\end{definition}

      \subsection{Correspondence between discrete and continuous models}
\label{sec:corr-betw-discr-2}

Now we are in a position to prove an analogue of Theorem~\eqref{thr:2} and Theorem~\ref{thr:3} on the equivalence between discrete and continuous models. In the fixed ends/zero Dirichlet boundary conditions case we are---at the moment---limited to the nearest neighbour interaction. The rationale for this restriction is discussed in the following Section~\ref{sec:beyond-near-neighb}

\subsubsection{Nearest neighbour interaction}
\label{sec:near-neighb-inter}
In the case of the nearest neighbour interaction it is straightforward to derive the following counterpart to Theorem~\ref{thr:2} and Theorem~\ref{thr:3}.

\begin{theorem}[Equivalence between discrete system of ordinary differential equations for grid values and the corresponding partial differential equation---nearest neighbour interaction, fixed ends/zero Dirichlet boundary condition]
  \label{thr:4}
  Let $M \in \N$ and let $\left\{x_{h, j}\right\}_{j=1}^{M}$, where $x_{h, j} =_{\bydefinition} jh$ with $h =_{\bydefinition} \frac{\pi}{M+1}$, be the corresponding grid on the interval $[0, \pi]$. Let~$\vec{u}_h(t) = \left\{ u_{h, j} (t)\right\}_{j=1}^{M}$ be a vector of grid values on the grid $\left\{x_{h, j} \right\}_{j=1}^{M}$, and let~$u_h(x,t)$ be the corresponding bandwidth limited interpolant of~$\vec{u}_h(t)$ obtained via the odd extension of grid values~$\vec{u}_h(t) = \left\{ u_{h, j} (t)\right\}_{j=1}^{M}$, that is
  \begin{subequations}
    \label{eq:348}
  \begin{equation}
    \label{eq:349}
    u_h(x)
    =
    \sum_{j=1}^{M}\tensor{\left(\vec{u}_h\right)}{_j} \left( \Sdiracdelta_h^h \left( x - x_{h, j}\right) - \Sdiracdelta_h^h \left( x + x_{h, j}\right) \right)
    ,
  \end{equation}
  where
  \begin{equation}
    \label{eq:350}
      \Sdiracdelta_h^h \left(x \right)
             =_{\bydefinition}
      \frac{h}{2\pi}
      \cos \left( \frac{x}{2} \right)
      \frac{\sin \left( \frac{\pi}{h} x \right)}{\sin \left( \frac{x}{2} \right)}.
  \end{equation}
  \end{subequations}
  Let the grid values $\vec{u}_h(t)  = \left\{ u_{h, j} (t)\right\}_{j=1}^{N}$ solve the initial value problem for the system of ordinary differential equations
  \begin{subequations}
    \label{eq:351}
    \begin{align}
      \label{eq:352}
      \ddd{u_{h,j}}{t}
      -
      \ccontinuous^2
      \left(
      \frac{u_{h, j+1} - 2u_{h, j} + u_{h, j-1}}{h^2}
      \right)
      &=
        0,
      \\
      \label{eq:353}
      \left. u_{h, j} \right|_{t=0} &= u_{h, j}^0,\\
      \label{eq:354}
      \left. \dd{u_{h, j}}{t} \right|_{t=0} &= v_{h, j}^0,
    \end{align}
  \end{subequations}
  where in~\eqref{eq:352} we set $u_{h, 0} =_{\bydefinition} 0$ and $u_{h, M+1} =_{\bydefinition} 0$. The matrix form of~\eqref{eq:352} thus reads
  \begin{equation}
    \label{eq:355}
    \ddd{}{t}
    \begin{bmatrix}
      u_{h, 1} \\
      u_{h, 2} \\
      u_{h, 3} \\
      \vdots \\
      u_{h, M-1} \\
      u_{h, M} \\
    \end{bmatrix}
    -
    \frac{\ccontinuous^2}{h^2}
    \begin{bmatrix}
      -2 & 1 &  &  &  &  \\
      1 & -2 & 1 &  &  &  \\
       & 1 & -2 & 1 &  &  \\
       &  &  & \ddots &  &  \\
       &  &  & 1 & -2 & 1 \\
       &  &  &  & 1 & -2 \\
    \end{bmatrix}
    \begin{bmatrix}
      u_{h, 1} \\
      u_{h, 2} \\
      u_{h, 3} \\
      \vdots \\
      u_{h, M-1} \\
      u_{h, M} \\
    \end{bmatrix}
    =
    \begin{bmatrix}
      0 \\
      0 \\
      0 \\
      \vdots \\
      0 \\
      0
    \end{bmatrix}
    .
  \end{equation}
  Let $u$ solve, for $x \in \R$ on the real line, the initial value problem 
  \begin{subequations}
    \label{eq:356}
    \begin{align}
      \label{eq:357}
    \ppd{u}{t}
    +
    \convolution{
      \left(
      \frac{1}{h}
      \InverseFourierTransform{\frac{\FourierTransformSemidiscrete{\vec{C}_h}}{\xi^2}}
      \right)
    }
    {
      \ppd{u}{x}
    }
    &=
    0
      ,
      \\
      \label{eq:358}
      \left. u \right|_{t=0} &= u_h^0,\\
      \label{eq:359}
      \left. \dd{u}{t} \right|_{t=0} &= v_h^0,
    \end{align}
  \end{subequations}
  where the symbol $\vec{C}_h$ denotes the infinite collection~$\left\{ C_{h, i}\right\}_{i=-\infty}^{+\infty}$ constructed from the vector $\widetilde{\vec{c}}_h$ of length $N = 2(M+1)$
  \begin{equation}
    \label{eq:360}
    \widetilde{\vec{c}}_h =
    \begin{bmatrix}
          c_{h, 0} \\ c_{h, 1} \\ c_{h, 2} \\  \vdots \\ c_{h, \frac{N}{2} - 2} \\ c_{h, \frac{N}{2} - 1} \\ c_{h, \frac{N}{2}} \\ c_{h, \frac{N}{2} -1} \\ \vdots \\ c_{h, 2} \\ c_{h, 1}
    \end{bmatrix}
    =_{\bydefinition}
    \frac{\ccontinuous^2}{h^2}
    \begin{bmatrix}
      -2 \\ 1 \\ 0 \\ \vdots \\ 0 \\ 0 \\ 0 \\ 0 \\ \vdots \\ 0 \\ 1 
    \end{bmatrix},
  \end{equation}
  via the procedure defined in~\eqref{eq:263}, and let the initial data $u_h^0$ and $v_h^0$ in~\eqref{eq:358} and \eqref{eq:359} be the bandwidth limited interpolants obtained by the odd extension of the initial data~\eqref{eq:353} and \eqref{eq:354}. The symbols $\FourierTransformSemidiscrete{\cdot}$ and $\InverseFourierTransform{\cdot}$ denote the \emph{semidiscrete} Fourier transform~\eqref{eq:69} and the \emph{continuous} inverse Fourier transform~\eqref{eq:20}. Then the function $u$ is the solution to the continuous problem~\eqref{eq:356} if and only if $u = u_h$, where $u_h$ is the bandwidth limited interpolant obtained by the odd extension of grid values $\vec{u}_h(t)  = \left\{ u_{h, j} (t)\right\}_{j=1}^{M}$ that solve the discrete problem~\eqref{eq:351}.
\end{theorem}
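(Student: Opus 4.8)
The plan is to reduce the statement to the periodic case, Theorem~\ref{thr:3}, via the odd extension of Section~\ref{sec:reconstr-proc-from-1}, and then to supply the single new ingredient specific to the Dirichlet setting: the reflection symmetry of the solution about the endpoint $x_{h, M+1} = \pi$.

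\textbf{From the discrete system to the partial differential equation.} First I would odd-extend the grid values, setting $\vec{u}_h^{\text{odd}}(t) = \tensorq{E}_{\left(2M + 2\right) \times M}\,\vec{u}_h(t)$ as in~\eqref{eq:309}, a vector of length $N = 2(M+1)$. Using the key identity~\eqref{eq:321}, the relation $\tensorq{E}_{\left(2M + 2\right) \times M}\tensorq{R}_{M \times \left(2M + 2\right)} = \identity$ valid on the subspace of odd-extended vectors, see~\eqref{eq:312}, and the observation that the ``big'' circulant matrix $\tensorq{C}_{\left(2M + 2\right) \times \left(2M + 2\right)}$ of~\eqref{eq:320} maps this subspace into itself (its generating vector being symmetric), one verifies that $\vec{u}_h^{\text{odd}}(t)$ solves the periodic system with $N$ particles associated with the coefficient vector $\widetilde{\vec{c}}_h$ of~\eqref{eq:360}, and with initial data the odd extensions of~\eqref{eq:353} and~\eqref{eq:354}. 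Theorem~\ref{thr:3} then applies directly and shows that the bandwidth limited interpolant of $\vec{u}_h^{\text{odd}}(t)$ solves~\eqref{eq:356}; by Lemma~\ref{lm:10} this interpolant is exactly the function~\eqref{eq:349}, which settles one implication.

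\textbf{From the partial differential equation to the discrete system.} Conversely, let $u$ solve~\eqref{eq:356} with the prescribed initial data. Exactly as in the proof of Theorem~\ref{thr:3}, subtracting the equations for $u(t, x)$ and $u(t, x + 2\pi)$ and using that the initial data are $2\pi$ periodic (they are bandwidth limited interpolants carrying only integer wavenumbers) shows, by uniqueness for the initial value problem, that $u$ is $2\pi$ periodic and that $\FourierTransform{u}$ is carried by the integer wavenumbers in $\left[-\tfrac{N}{2}, \tfrac{N}{2}\right]$; hence $u = u_h$ for the bandwidth limited interpolant of some periodic grid values $\vec{u}_h^{\text{periodic}}(t)$ that solve the periodic system. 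The new step is to show that $u$ is additionally \emph{odd about $\pi$}, that is $u(t, x) = -u(t, 2\pi - x)$. For this I would observe that on the integer wavenumbers $k$ actually carried by $u$ the symbol $\left.\FourierTransformSemidiscrete{\vec{C}_h}\right|_{\xi = k}$ of the convolution operator in~\eqref{eq:357} coincides with $\tensor{\left(\FourierTransformDiscrete{\widetilde{\vec{c}}_h}\right)}{_k}$, see~\eqref{eq:267}, which by Lemma~\ref{lm:1} equals the real and even discrete cosine transform of the symmetric vector $\widetilde{\vec{c}}_h$; consequently $x \mapsto -u(t, -x)$ again solves~\eqref{eq:357}, and since $u_h^0$ and $v_h^0$ are odd about $\pi$, uniqueness forces $u(t, x) = -u(t, -x) = -u(t, 2\pi - x)$ for all $t$. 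Sampling this identity at the grid points shows that $\vec{u}_h^{\text{periodic}}(t)$ has the odd-extension structure~\eqref{eq:odd-extension}; in particular $u_{h, 0}(t) = u_{h, M+1}(t) = 0$, so the virtual wall conditions are recovered automatically. Finally, restricting $\vec{u}_h^{\text{periodic}}(t)$ to its first $M$ components and reading~\eqref{eq:321} in the form $\tensorq{R}_{M \times \left(2M + 2\right)}\tensorq{C}_{\left(2M + 2\right) \times \left(2M + 2\right)}\tensorq{E}_{\left(2M + 2\right) \times M} = \tensorq{L}_h^{\text{Dirichlet}}$ shows that these $M$ grid values solve the discrete Dirichlet system~\eqref{eq:351}, while the bandwidth limited interpolant built from their odd extension is $u$ itself.

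\textbf{Main obstacle.} The only genuinely new point beyond the machinery already assembled for Theorems~\ref{thr:2} and~\ref{thr:3} is the propagation of the oddness about $\pi$: one has to check with care that the reflection $x \mapsto -u(t, -x)$ of a solution is again a solution of~\eqref{eq:357}, which hinges on the evenness of the operator symbol on the integer wavenumbers carried by $u$ (guaranteed by the symmetry $c_{h, i} = c_{h, -i}$ of the data together with the $2\pi$ periodicity established beforehand) and on a uniqueness statement for the second-order-in-time problem~\eqref{eq:356}, used precisely as in the periodicity argument of Theorem~\ref{thr:3}. Everything else is bookkeeping with the extension and reduction operators $\tensorq{E}_{\left(2M + 2\right) \times M}$ and $\tensorq{R}_{M \times \left(2M + 2\right)}$ and a direct appeal to the periodic result.
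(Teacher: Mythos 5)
Your proposal is correct and follows essentially the same route as the paper: reduce to the periodic case via the odd extension and the extension/reduction matrices $\tensorq{E}_{\left(2M + 2\right) \times M}$, $\tensorq{R}_{M \times \left(2M + 2\right)}$, then invoke the machinery of Theorem~\ref{thr:3}. In fact your treatment of the converse direction is more explicit than the paper's, which merely asserts that $u$ ``inherits the properties of the initial conditions''; your argument---that $x \mapsto -u(t,-x)$ is again a solution because the operator symbol is real and even on the integer wavenumbers carried by $u$, so uniqueness propagates the oddness about $\pi$---supplies precisely the detail the paper leaves implicit.
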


Since the solution $u$ to the problem~\eqref{eq:356} is the bandwidth limited interpolant $u_h$ obtained by the odd extension of grid values that solve~\eqref{eq:351}, we see that
\begin{equation}
  \label{eq:361}
  \left. u \right|_{x = l \pi} = 0,
\end{equation}
for all $l \in \Z$. (The same holds for bandwidth limited interpolants of the initial data.) In this sense $u$ represents a solution to the problem~\eqref{eq:351} in the interval $(0, \pi)$ with the ``boundary conditions'' $ \left. u  \right|_{x = 0, \pi} = 0$, which is the original domain of interest for the discrete problem.

\begin{proof}
  The matrix problem~\eqref{eq:355} can be rewritten in the form
  \begin{equation}
    \label{eq:362}
    \ddd{\vec{u}_h}{t}
    -
    \tensorq{B}_{M \times M}
    \vec{u}_h
    =
    \vec{0},
  \end{equation}
  where $\tensorq{B}_{M \times M}$ is the symmetric tridiagonal matrix introduced in~\eqref{eq:319} with the elements
  \begin{equation}
    \label{eq:363}
    c_{h, 0} =_{\bydefinition}   -2\frac{\ccontinuous^2}{h^2}, \qquad
    c_{h, 1} =_{\bydefinition}   \frac{\ccontinuous^2}{h^2}.
  \end{equation}
  We use the key observation~\eqref{eq:321} and we rewrite~\eqref{eq:362} in terms of the extension matrix~$\tensorq{E}_{\left(2M + 2\right) \times M}$, the reduction matrix~$\tensorq{R}_{M \times \left(2M + 2\right)}$ and the ``big'' circulant matrix $\tensorq{C}_{\left(2M + 2\right) \times \left(2M + 2\right)}$, see~\eqref{eq:320}. This yields
  \begin{equation}
    \label{eq:364}
    \ddd{\vec{u}_h}{t}
    -
    \tensorq{R}_{M \times \left(2M + 2\right)}\tensorq{C}_{\left(2M + 2\right) \times \left(2M + 2\right)} \tensorq{E}_{\left(2M + 2\right) \times M}
    \vec{u}_h
    =
    \vec{0}
    .
  \end{equation}
  Next we multiply the whole equation from the left by the extension matrix $\tensorq{E}_{\left(2M + 2\right) \times M}$, and we use inverse relation~\eqref{eq:312}. (Recall that the action of ``big'' circulant matrix preserves the symmetry of the odd extension vector~\eqref{eq:292}.) Doing so we arrive at
  \begin{equation}
    \label{eq:365}
    \ddd{\vec{u}_h^{\text{odd}}}{t}
    -
    \tensorq{C}_{\left(2M + 2\right) \times \left(2M + 2\right)} \vec{u}_h^{\text{odd}} 
    =
    \vec{0}
    ,
  \end{equation}
  where $\vec{u}_h^{\text{odd}} =  \tensorq{E}_{\left(2M + 2\right) \times M}  \vec{u}_h$ denotes the odd extension of the original vector $\vec{u}_h$. Problem~\eqref{eq:365} for grid values $\vec{u}_h^{\text{odd}}$ is however a problem we have already dealt with in the periodic case. In particular~\eqref{eq:365} has the desired convolution structure as it can be rewritten as
  \begin{equation}
    \label{eq:366}
    \ddd{\vec{u}_h^{\text{odd}}}{t}
    -
    \frac{1}{h}
    \discreteperiodicconvolution{\widetilde{\vec{c}}_h}{\vec{u}_h^{\text{odd}}}
    =
    \vec{0},
  \end{equation}
  where the vector $\widetilde{\vec{c}}_h$ is the first column of the ``big'' circulant matrix $ \tensorq{C}_{\left(2M + 2\right) \times \left(2M + 2\right)}$. At this point we are at the same position as in the proof of Theorem~\eqref{thr:3} for the periodic lattice, see equation~\eqref{eq:261}. We can thus follow the same steps as in proof of Theorem~\eqref{thr:3} and conclude that $u_h$, which is the bandwidth limited interpolant obtained from the odd extension~$\vec{u}_h^{\text{odd}}$ of grid values~$\vec{u}_h$, solves
  \begin{equation}
    \label{eq:367}
    \ppd{u_h}{t}
    +
    \convolution{
      \left(
      \frac{1}{h}
      \InverseFourierTransform{\frac{\FourierTransformSemidiscrete{\vec{C}_h}}{\xi^2}}
      \right)
    }
    {
      \ppd{u_h}{x}
    }
    =
    0,
  \end{equation}
  which was to prove.

  On the other hand, assume that $u$ is a function that solves~\eqref{eq:356}. Following similar arguments as in the proof of Theorem~\ref{thr:3} we can conclude that the function $u$ inherits the properties of initial conditions. In particular, if the initial conditions are constructed as bandwidth limited interpolants obtained by odd extensions of grid values, then the function~$u$ itself must be, for all positive times, also a bandwidth limited interpolant obtained by an odd extension of grid values. Once we get this structure preserving property we can follow word by word the proof of Theorem~\ref{thr:3}, and we get the desired conclusion that $u = u_h$, where $u_h$ is constructed as the bandwidth limited interpolant based on odd extension of grid values that solve~\eqref{eq:351}.
  
\end{proof}

\subsubsection{Beyond the nearest neighbour interaction}
\label{sec:beyond-near-neighb}

Theorem~\eqref{thr:4} provides complete characterisation of discrete \emph{nearest} neighbour interaction lattices with fixed ends/zero Dirichlet boundary conditions and their continuous counterparts. The extension to \emph{multiple} neighbours interaction is however not so straightforward as in the infinite/periodic setting. The problem is that a finite domain has boundaries, and if a particle is close to the boundary, then it is not \emph{a priori} clear as how the particle should interact with its neighbours beyond the domain boundary---in fact there are no such neighbours. In what follows we investigate how such beyond-the-domain interaction should look like if we want to exploit the circulant structure.

\label{sec:pent-matr}
As in Section~\eqref{sec:corr-betw-discr-2} we start with a ``big'' circulant matrix $\tensorq{C}_{\left(2M + 2\right) \times \left(2M + 2\right)}$, see~\eqref{eq:320}. However instead of a tridiagonal matrix we now consider a \emph{pentadiagonal circulant matrix}
\begin{subequations}
  \label{eq:big-circulant-to-small-multidiagonal}  
\begin{equation}
  \label{eq:368}
  \tensorq{C}_{\left(2M + 2\right) \times \left(2M + 2\right)} 
  =_{\bydefinition}
  \begin{bmatrix}
    c_{h, 0}& c_{h, 1}& c_{h, 2}& &  &  &   c_{h, 2} & c_{h, 1}\\
    c_{h, 1}& c_{h, 0}& c_{h, 1}& c_{h, 2}&  &  &  &   c_{h, 2} \\
    c_{h, 2}& c_{h, 1}& c_{h, 0}&  c_{h, 1} & c_{h, 2} &  &  &  \\
            & c_{h, 2}& c_{h, 1}& c_{h, 0}&   c_{h, 1} & c_{h, 2} &  &  & \\
            &  &  \ddots &  \ddots & \ddots &  \ddots &  \ddots  & \\
            &  &  & c_{h, 2} & c_{h, 1} & c_{h, 0}& c_{h, 1}&  c_{h, 2}\\
    c_{h, 2}       &  &  & & c_{h, 2} & c_{h, 1}& c_{h, 0}& c_{h, 1}\\
    c_{h, 1}&  c_{h, 2}  &  & & & c_{h,2}  & c_{h, 1}& c_{h, 0}\\
  \end{bmatrix}_{\left(2M + 2\right) \times \left(2M + 2\right)}
  .
\end{equation}
(The treatment of a pentadiagonal matrix is straightforward to extend to multidiagonal case.) Now we apply extension/reduction matrices to the ``big'' circulant matrix, see the same step in the previous Section~\eqref{sec:near-neighb-inter}, equation~\eqref{eq:321}. The application of extension/reduction matrices yields
\begin{equation}
  \label{eq:369}
  \tensorq{R}_{M \times \left(2M + 2\right)}\tensorq{C}_{\left(2M + 2\right) \times \left(2M + 2\right)} \tensorq{E}_{\left(2M + 2\right) \times M} = \tensorq{B}_{M \times M},
\end{equation}
where
\begin{equation}
  \label{eq:370}
  \tensorq{B}_{M \times M}
  =
% \begin{bmatrix}
%  c_{h, 0}-c_{h, 2} & c_{h, 1} & c_{h, 2} & 0 & 0 & 0 & 0 & 0 \\
%  c_{h, 1} & c_{h, 0} & c_{h, 1} & c_{h, 2} & 0 & 0 & 0 & 0 \\
%  c_{h, 2} & c_{h, 1} & c_{h, 0} & c_{h, 1} & c_{h, 2} & 0 & 0 & 0 \\
%  0 & c_{h, 2} & c_{h, 1} & c_{h, 0} & c_{h, 1} & c_{h, 2} & 0 & 0 \\
%  0 & 0 & \ddots & \ddots & \ddots & \ddots & \ddots & 0 \\
%  0 & 0 & 0 & c_{h, 2} & c_{h, 1} & c_{h, 0} & c_{h, 1} & c_{h, 2} \\
%  0 & 0 & 0 & 0 & c_{h, 2} & c_{h, 1} & c_{h, 0} & c_{h, 1} \\
%  0 & 0 & 0 & 0 & 0 & c_{h, 2} & c_{h, 1} & c_{h, 0}-c_{h, 2} \\
% \end{bmatrix}
  \begin{bmatrix}
 \left( c_{h, 0}-c_{h, 2} \right) & c_{h, 1} & c_{h, 2} &  &  &  &  &  \\
 c_{h, 1} & c_{h, 0} & c_{h, 1} & c_{h, 2} &  &  &  &  \\
 c_{h, 2} & c_{h, 1} & c_{h, 0} & c_{h, 1} & c_{h, 2} &  &  &  \\
  & c_{h, 2} & c_{h, 1} & c_{h, 0} & c_{h, 1} & c_{h, 2} &  &  \\
  &  & \ddots & \ddots & \ddots & \ddots & \ddots &  \\
  &  &  & c_{h, 2} & c_{h, 1} & c_{h, 0} & c_{h, 1} & c_{h, 2} \\
  &  &  &  & c_{h, 2} & c_{h, 1} & c_{h, 0} & c_{h, 1} \\
  &  &  &  &  & c_{h, 2} & c_{h, 1} & \left( c_{h, 0}-c_{h, 2} \right) \\
\end{bmatrix}
.
\end{equation}
\end{subequations}
The matrix $\tensorq{B}_{M \times M}$ now describes the interactions in between $M$ particles in the original domain $[0, \pi]$. If we want to interpret $\tensorq{B}_{M \times M}$ as a finite difference type scheme for the second derivative, that is
\begin{equation}
  \label{eq:371}
  \left. \ddd{u}{x} \right|_{x = \vec{x}_h}
  =_{\bydefinition}
  \begin{bmatrix}
    \left. \ddd{u}{x} \right|_{x = x_{h, 1}} \\
    \left. \ddd{u}{x} \right|_{x = x_{h, 2}} \\
    \left. \ddd{u}{x} \right|_{x = x_{h, 3}} \\
    \left. \ddd{u}{x} \right|_{x = x_{h, 4}} \\
    \vdots \\
    \left. \ddd{u}{x} \right|_{x = x_{h, M-2}} \\
    \left. \ddd{u}{x} \right|_{x = x_{h, M-1}} \\
    \left. \ddd{u}{x} \right|_{x = x_{h, M}} \\
  \end{bmatrix}
  \stackrel{?}{\approx}
    \begin{bmatrix}
 \left(c_{h, 0}-c_{h, 2}\right) & c_{h, 1} & c_{h, 2} &  &  &  &  &  \\
 c_{h, 1} & c_{h, 0} & c_{h, 1} & c_{h, 2} &  &  &  &  \\
 c_{h, 2} & c_{h, 1} & c_{h, 0} & c_{h, 1} & c_{h, 2} &  &  &  \\
  & c_{h, 2} & c_{h, 1} & c_{h, 0} & c_{h, 1} & c_{h, 2} &  &  \\
  &  & \ddots & \ddots & \ddots & \ddots & \ddots &  \\
  &  &  & c_{h, 2} & c_{h, 1} & c_{h, 0} & c_{h, 1} & c_{h, 2} \\
  &  &  &  & c_{h, 2} & c_{h, 1} & c_{h, 0} & c_{h, 1} \\
  &  &  &  &  & c_{h, 2} & c_{h, 1} & \left(c_{h, 0}-c_{h, 2}\right) \\
    \end{bmatrix}
    \begin{bmatrix}
          u_{h, 1} \\
          u_{h, 2} \\
          u_{h, 3} \\
          u_{h, 4} \\
          \vdots \\
          u_{h, M-2} \\
          u_{h, M-1} \\
          u_{h, M} \\
    \end{bmatrix}
\end{equation}
we must identify the corresponding values of $\left\{ c_{h, i} \right\}_{i=0}^2$. Clearly, the middle rows of  $\tensorq{B}_{M \times M}$ must be tantamount to the centred higher order finite difference schemes for the second derivative. The values $\left\{ c_{h, i} \right\}_{i=0}^2$ are well-known and they read
\begin{equation}
  \label{eq:372}
  c_{h, 0} = -\frac{5}{2} \frac{1}{h^2}, \qquad c_{h, 1} = \frac{4}{3} \frac{1}{h^2}, \qquad c_{h, 2} = - \frac{1}{12} \frac{1}{h^2},
\end{equation}
(The same can be done for higher order schemes---multidiagonal matrices---see especially~\cite{fornberg.b:calculation}.) Using these values we get the following approximation of the second derivative $\left. \ddd{u}{x} \right|_{x = x_i}$ at point $x_i$,
\begin{equation}
  \label{eq:373}
  \left. \ddd{u}{x} \right|_{x = x_i}
  =
  \frac{-u_{h, i-2} + 16 u_{h, i-1} - 30 u_{h, i} + 16 u_{h, i+1} - u_{h, i+2}}{12 h^2}
  +
  \bigo{h^4}
  ,
\end{equation}
and this formula holds for $i=2, \dots, M-1$. (Provided that we use the fixed ends/zero Dirichlet boundary condition and convention $u_{h, 0} = 0$, $u_{h, M+1}$). The application of matrix $\tensorq{B}_{M \times M}$ with the coefficients chosen as in~\eqref{eq:372} to the column vector of grid values is thus identical to the application of the finite differences approximation~\eqref{eq:373}.  The only problematic rows in~\eqref{eq:371} are the \emph{first} and the \emph{last} row, where the finite differences formula~\eqref{eq:373} would require us to work with~$u_{h, -1}$ and~$u_{h, M+2}$.

Using the Taylor expansion centred at $x_{h, 1}$ and $x_{h, M}$ we however find that the application of matrix~$\tensorq{B}_{M \times M}$ with the coefficients chosen as in~\eqref{eq:372} to the column vector of grid values yields in the first row the following approximation
\begin{equation}
    \label{eq:374}
    \left( c_{h, 0}-c_{h, 2} \right) \left. u  \right|_{x = x_{h, 1}} +  c_{h, 1}  \left. u  \right|_{x = x_{h, 2}} +  c_{h, 2}  \left. u  \right|_{x = x_{h, 3}}
    \\
    =
    -\frac{7}{6 h^2} \left. u \right|_{x = x_{h, 1}} + \frac{7}{6 h} \left. \dd{u}{x} \right|_{x = x_{h, 1}}+ \frac{1}{2} \left. \ddd{u}{x} \right|_{x = x_{h, 1}} + \bigo{h}.
  \end{equation}
  (A similar formula is obtained also for the last row.) It thus seems that the first/last row do not give an approximation of the second derivative. However, if we exploit the fixed ends/zero Dirichlet boundary conditions $\left. u \right|_{x =x_{h, 0}} = 0$ and $\left. u \right|_{x =x_{h, M+1}} = 0$, we can use the Taylor expansion centred at $x_{h, 0}$ and $x_{h, M+1}$ respectively, and we find that
  \begin{equation}
    \label{eq:375}
    0 = \left. u \right|_{x = x_{h, 0} = x_{h, 1} - h} = \left. u \right|_{x =x_{h, 1}} - \left. \dd{u}{x} \right|_{x =x_{h, 1}}h + \frac{1}{2} \left. \ddd{u}{x} \right|_{x =x_{h, 1}} h^2 + \bigo{h^3},
  \end{equation}
  which reveals that
  \begin{equation}
    \label{eq:376}
    \left. u \right|_{x =x_{h, 1}} =  \left. \dd{u}{x} \right|_{x =x_{h, 1}}h - \frac{1}{2} \left. \ddd{u}{x} \right|_{x =x_{h, 1}} h^2 + \bigo{h^3}.
  \end{equation}
  Substituting~\eqref{eq:376} into~\eqref{eq:374} then yields
  \begin{equation}
    \label{eq:377}
    \left( c_{h, 0}-c_{h, 2} \right) \left. u  \right|_{x = x_{h, 1}} +  c_{h, 1}  \left. u  \right|_{x = x_{h, 2}} +  c_{h, 2}  \left. u  \right|_{x = x_{h, 3}}
    =
    \frac{13}{12} \left. \ddd{u}{x} \right|_{x =x_{h, 1}} + \bigo{h},
  \end{equation}
  and the same formula can be derived for the last row as well. We can thus conclude that the application of matrix $\tensorq{B}_{M \times M}$ to the vector of grid values \emph{does not provide a consistent finite differences approximation of the second derivative}, the consistency is ruined by the factor $\frac{13}{12}$.

  But this inconsistency is easy to fix by premultiplication of  $\tensorq{B}_{M \times M}$ by a simple diagonal matrix $\tensorq{M}_{M \times M}$. If we thus set
  \begin{equation}
    \label{eq:378}
    \tensorq{M}_{M \times M}
    =_{\bydefinition}
    \begin{bmatrix}
 \frac{12}{13} &  &  &  &  &  &  &  \\
  & 1 &  &  &  &  &  &  \\
  &  & 1 &  &  &  &  &  \\
  &  &  & 1 &  &  &  &  \\
  &  &  &  & \ddots &  &  &  \\
  &  &  &  &  & 1 &  &  \\
  &  &  &  &  &  & 1 &  \\
  &  &  &  &  &  &  & \frac{12}{13}
    \end{bmatrix}
    ,
  \end{equation}
  and if we choose the coefficients $\left\{ c_{h, i} \right\}_{i=0}^2$ as in~\eqref{eq:372}, then the matrix $\tensorq{M}_{M \times M}\tensorq{B}_{M \times M}$ provides us a consistent approximation of the second derivative in the sense that
  \begin{equation}
    \label{eq:379}
    \left. \ddd{u}{x} \right|_{x = \vec{x}_h} \approx \tensorq{M}_{M \times M}\tensorq{B}_{M \times M} \left. \vec{u} \right|_{x = \vec{x}_h},
  \end{equation}
which, if written in full, reads  
  \begin{multline}
    \label{eq:380}
  \begin{bmatrix}
    \left. \ddd{u}{x} \right|_{x = x_{h, 1}} \\
    \left. \ddd{u}{x} \right|_{x = x_{h, 2}} \\
    \left. \ddd{u}{x} \right|_{x = x_{h, 3}} \\
    \left. \ddd{u}{x} \right|_{x = x_{h, 4}} \\
    \vdots \\
    \left. \ddd{u}{x} \right|_{x = x_{h, M-2}} \\
    \left. \ddd{u}{x} \right|_{x = x_{h, M-1}} \\
    \left. \ddd{u}{x} \right|_{x = x_{h, M}} \\
  \end{bmatrix}
  \approx
      \begin{bmatrix}
 \frac{12}{13} &  &  &  &  &  &  &  \\
  & 1 &  &  &  &  &  &  \\
  &  & 1 &  &  &  &  &  \\
  &  &  & 1 &  &  &  &  \\
  &  &  &  & \ddots &  &  &  \\
  &  &  &  &  & 1 &  &  \\
  &  &  &  &  &  & 1 &  \\
  &  &  &  &  &  &  & \frac{12}{13}
      \end{bmatrix}
      \\
    \begin{bmatrix}
 \left( c_{h, 0}-c_{h, 2} \right) & c_{h, 1} & c_{h, 2} &  &  &  &  &  \\
 c_{h, 1} & c_{h, 0} & c_{h, 1} & c_{h, 2} &  &  &  &  \\
 c_{h, 2} & c_{h, 1} & c_{h, 0} & c_{h, 1} & c_{h, 2} &  &  &  \\
  & c_{h, 2} & c_{h, 1} & c_{h, 0} & c_{h, 1} & c_{h, 2} &  &  \\
  &  & \ddots & \ddots & \ddots & \ddots & \ddots &  \\
  &  &  & c_{h, 2} & c_{h, 1} & c_{h, 0} & c_{h, 1} & c_{h, 2} \\
  &  &  &  & c_{h, 2} & c_{h, 1} & c_{h, 0} & c_{h, 1} \\
  &  &  &  &  & c_{h, 2} & c_{h, 1} & \left( c_{h, 0}-c_{h, 2} \right) \\
    \end{bmatrix}
    \begin{bmatrix}
          u_{h, 1} \\
          u_{h, 2} \\
          u_{h, 3} \\
          u_{h, 4} \\
          \vdots \\
          u_{h, M-2} \\
          u_{h, M-1} \\
          u_{h, M} \\
    \end{bmatrix}
    .
  \end{multline}
  The order of the approximation is $\bigo{h}$ in the first and the last row, and $\bigo{h^4}$ in the remaining ``middle rows''. We note that the matrix $\tensorq{M}_{M \times M}\tensorq{B}_{M \times M}$ is not a \emph{symmetric} matrix, hence we \emph{so far} do not know whether its eigenvalues are real and negative, which is something we would expect from a differentiation matrix.

  The calculation above for pentadiagonal circulant matrices can be done for multidiagonal circulants as well. This leads to higher order finite differences schemes in ``middle rows'' and modified schemes in the ``boundary layer'', that is in the first few/last few rows. For example, if we start with a \emph{heptadiagonal circulant matrix}, and if we fix the coefficients in the ``big'' circulant matrix as
  \begin{subequations}
    \label{eq:381}
    \begin{equation}
      \label{eq:382}
      c_{h, 0}= -\frac{49}{18 h^2}, \qquad c_{h, 1}= \frac{3}{2 h^2}, \qquad c_{h, 2}= -\frac{3}{20 h^2}, \qquad c_{h, 3}= \frac{1}{90 h^2},
    \end{equation}
    then the corresponding discretisation of the second order derivative operator reads
    \begin{multline}
      \label{eq:383}
      \begin{bmatrix}
        \left. \ddd{u}{x} \right|_{x = x_{h, 1}} \\
        \left. \ddd{u}{x} \right|_{x = x_{h, 2}} \\
        \left. \ddd{u}{x} \right|_{x = x_{h, 3}} \\
        \left. \ddd{u}{x} \right|_{x = x_{h, 4}} \\
        \vdots \\
        \left. \ddd{u}{x} \right|_{x = x_{h, M-2}} \\
        \left. \ddd{u}{x} \right|_{x = x_{h, M-1}} \\
        \left. \ddd{u}{x} \right|_{x = x_{h, M}} \\
      \end{bmatrix}
      \approx
      \begin{bmatrix}
        \frac{180}{199} &  &  &  &  &  &  &  \\
                       & \frac{90}{89} &  &  &  &  &  &  \\
                       &  & 1 &  &  &  &  &  \\
                       &  &  & 1 &  &  &  &  \\
                       &  &  &  & \ddots &  &  &  \\
                       &  &  &  &  & 1 &  &  \\
                       &  &  &  &  &  & \frac{90}{89} &  \\
                       &  &  &  &  &  &  & \frac{180}{199} \\
      \end{bmatrix}
      \\
      \begin{bmatrix}
 \left( c_{h, 0}-c_{h, 2} \right)  & \left( c_{h, 1} -c_{h, 3} \right)  & c_{h, 2}  & c_{h, 3}  &  &  &  &  &  &  \\
 \left( c_{h, 1} -c_{h, 3} \right)  & c_{h, 0} & c_{h, 1}  & c_{h, 2}  & c_{h, 3}  &  &  &  &  &  \\
 c_{h, 2}  & c_{h, 1}  & c_{h, 0} & c_{h, 1}  & c_{h, 2}  & c_{h, 3}  &  &  &  &  \\
 c_{h, 3}  & c_{h, 2}  & c_{h, 1}  & c_{h, 0} & c_{h, 1}  & c_{h, 2}  & c_{h, 3}  &  &  &  \\
  & c_{h, 3}  & c_{h, 2}  & c_{h, 1}  & c_{h, 0} & c_{h, 1}  & c_{h, 2}  & c_{h, 3}  &  &  \\
  &  & \ddots & \ddots & \ddots & \ddots & \ddots & \ddots & \ddots &  \\
  &  &  & c_{h, 3}  & c_{h, 2}  & c_{h, 1}  & c_{h, 0} & c_{h, 1}  & c_{h, 2}  & c_{h, 3}  \\
  &  &  &  & c_{h, 3}  & c_{h, 2}  & c_{h, 1}  & c_{h, 0} & c_{h, 1}  & c_{h, 2}  \\
  &  &  &  &  & c_{h, 3}  & c_{h, 2}  & c_{h, 1}  & c_{h, 0} & \left( c_{h, 1} -c_{h, 3} \right)  \\
  &  &  &  &  &  & c_{h, 3}  & c_{h, 2}  & \left( c_{h, 1} -c_{h, 3} \right)  & \left( c_{h, 0}-c_{h, 2} \right)  \\
      \end{bmatrix}
      \begin{bmatrix}
          u_{h, 1} \\
          u_{h, 2} \\
          u_{h, 3} \\
          u_{h, 4} \\
          \vdots \\
          u_{h, M-2} \\
          u_{h, M-1} \\
          u_{h, M} \\
    \end{bmatrix}
    ,
  \end{multline}
\end{subequations}
where the order of the second derivative approximation is $\bigo{h}$ in the first and second row and in the last and next to the last row. In the remaining rows we however get approximation order $\bigo{h^6}$.

Concerning the eigenvalues of~$\tensorq{M}_{M \times M}\tensorq{B}_{M \times M}$ we can not easily find closed formulae for the eigenvalues, but we can provide a \emph{tight bound on the eigenvalues}. We exploit the fact that we know the eigenvalues of~$\tensorq{B}_{M \times M}$, see Lemma~\eqref{lm:2}, and that the correction matrix~$\tensorq{M}_{M \times M}$ is close to the identity matrix. These two facts allow us to derive a tight bound on the Rayleigh quotient and hence on the eigenvalues of~$\tensorq{M}_{M \times M}\tensorq{B}_{M \times M}$, see Lemma~\ref{lm:3}.

\begin{lemma}[Eigenvalues of matrices $\tensorq{B}_{M \times M}$ generated by extension/reduction procedure]
  \label{lm:2}
  Let $\tensorq{B}_{M \times M}$ be the $M \times M$ matrix generated from the $\left(2M + 2\right) \times \left(2M + 2\right)$ circulant matrix $\tensorq{C}_{\left(2M + 2\right) \times \left(2M + 2\right)}$,
  \begin{equation}
    \label{eq:384}
    \tensorq{C}_{\left(2M + 2\right) \times \left(2M + 2\right)}
    =_{\bydefinition}
    \begin{bmatrix}
      c_{h, 0} & c_{h, 1} & \cdots &  c_{h, \frac{N}{2} - 1} &  c_{h, \frac{N}{2}} &  c_{h, \frac{N}{2} - 1} & \cdots & c_{h, 2} & c_{h, 1} \\
      c_{h, 1} & c_{h, 0} & \cdots &  c_{h, \frac{N}{2} - 2} &  c_{h, \frac{N}{2} - 1} &  c_{h, \frac{N}{2} } & \cdots & c_{h, 3} & c_{h, 2} \\
      \vdots & \vdots & \ddots & \vdots & \vdots & \vdots & \reflectbox{$\ddots$} & \vdots & \vdots \\
      c_{h, \frac{N}{2} - 2} & c_{h, \frac{N}{2} - 1} & \cdots & c_{h, 0}  &  c_{h, 1} & c_{h, 2} & \cdots & c_{h, \frac{N}{2}} & c_{h, \frac{N}{2} - 1} \\

      c_{h, \frac{N}{2} - 1} & c_{h, \frac{N}{2} - 2} & \cdots & c_{h, 1}  &  c_{h, 0} & c_{h, 1} & \cdots & c_{h, \frac{N}{2} - 1} & c_{h, \frac{N}{2}} \\
      c_{h, \frac{N}{2}} & c_{h, \frac{N}{2} - 1} & \cdots & c_{h, 2}  &  c_{h, 1} & c_{h, 0} & \cdots & c_{h, \frac{N}{2} - 2} & c_{h, \frac{N}{2} - 1} \\
      \vdots & \vdots & \reflectbox{$\ddots$} & \vdots & \vdots & \vdots & \ddots & \vdots & \vdots \\
      c_{h, 2} & c_{h, 3} & \cdots &  c_{h, \frac{N}{2}-1} &  c_{h, \frac{N}{2}-2} &  c_{h, \frac{N}{2} - 1} & \cdots & c_{h, 0} & c_{h, 1} \\
      c_{h, 1} & c_{h, 2} & \cdots &  c_{h, \frac{N}{2}} &  c_{h, \frac{N}{2}-1} &  c_{h, \frac{N}{2} - 2} & \cdots & c_{h, 1} & c_{h, 0}
    \end{bmatrix}
    ,
  \end{equation}
  with $c_{h, \frac{N}{2}} = 0$ via the extension/reduction procedure
  \begin{equation}
    \label{eq:385}
    \tensorq{B}_{M \times M} =_{\bydefinition} \tensorq{R}_{M \times \left(2M + 2\right)}\tensorq{C}_{\left(2M + 2\right) \times \left(2M + 2\right)} \tensorq{E}_{\left(2M + 2\right) \times M},
  \end{equation}
  where $\tensorq{E}_{\left(2M + 2\right) \times M}$ and $\tensorq{R}_{M \times \left(2M + 2\right)}$ denote the extension/reduction matrices defined in~\eqref{eq:302}. Then the eigenvalues of~$\tensorq{B}_{M \times M}$ are given by the formula
  \begin{equation}
    \label{eq:386}
        \lambda_{h, m}
    =
    c_{h, 0}
    +
    2
    \sum_{j= 1}^{M}
    c_{h, j}
    \cos \left(j mh \right)
    ,
    \qquad
    m = 1, \dots, M.
  \end{equation}
\end{lemma}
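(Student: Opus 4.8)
The plan is to produce an explicit complete eigenbasis of $\tensorq{B}_{M \times M}$ by lifting each eigenvector, through the extension matrix, to an eigenvector of the ``big'' circulant $\tensorq{C}_{\left(2M + 2\right) \times \left(2M + 2\right)}$, whose spectrum is already known from Lemma~\ref{lm:1}. Throughout I would write $N =_{\bydefinition} 2M+2$, $h = \frac{\pi}{M+1} = \frac{2\pi}{N}$, and abbreviate $\tensorq{E} = \tensorq{E}_{\left(2M + 2\right) \times M}$, $\tensorq{R} = \tensorq{R}_{M \times \left(2M + 2\right)}$, $\tensorq{C} = \tensorq{C}_{\left(2M + 2\right) \times \left(2M + 2\right)}$, $\tensorq{B} = \tensorq{B}_{M \times M}$, so that $\tensorq{B} = \tensorq{R}\tensorq{C}\tensorq{E}$ by~\eqref{eq:385}. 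For $m = 1, \dots, M$ let $\vec{w}_m$ be the $M$-vector whose $j$-th component is $\sin\left(j m h\right)$. These are, up to normalisation, the vectors of the discrete sine transform basis of Section~\ref{sec:discr-sine-transf-1}, so they are linearly independent, and it suffices to verify $\tensorq{B}\vec{w}_m = \lambda_{h, m}\vec{w}_m$ with $\lambda_{h, m}$ as in~\eqref{eq:386}.

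The first point to check is that the odd extension $\tensorq{E}\vec{w}_m$ equals the $N$-vector $\vec{s}_m$ whose $l$-th component is $\sin\left(l m h\right)$, $l = 1, \dots, N$: this holds because $\sin\left((M+1)mh\right) = \sin(m\pi) = 0$ and $\sin\left(Nmh\right) = \sin(2m\pi) = 0$, while $(M+1)mh = m\pi$ forces $\sin\left(((M+1)+l)mh\right) = -\sin\left(((M+1)-l)mh\right)$, which is exactly the sign pattern in~\eqref{eq:292}. Next, by~\eqref{eq:220}--\eqref{eq:223} (Lemma~\ref{lm:9}) the circulant $\tensorq{C}$ has the sampled exponentials $\bigl(\exponential{\iunit m l h}\bigr)_{l}$ as eigenvectors, with eigenvalue $\lambda_{h, m}$; since $\tensorq{C}$ is \emph{real} and symmetric with the structure~\eqref{eq:384} these eigenvalues are real (Lemma~\ref{lm:1}), hence $\bigl(\exponential{-\iunit m l h}\bigr)_{l}$ is also an eigenvector for $\lambda_{h, m}$, and therefore so is the imaginary part $\vec{s}_m$ (which is nonzero for $m = 1, \dots, M$). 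Combining this with $\tensorq{R}\tensorq{E} = \identity_{M \times M}$, see~\eqref{eq:311}, yields
\begin{equation*}
  \tensorq{B}\vec{w}_m
  =
  \tensorq{R}\tensorq{C}\tensorq{E}\vec{w}_m
  =
  \tensorq{R}\tensorq{C}\vec{s}_m
  =
  \lambda_{h, m}\,\tensorq{R}\vec{s}_m
  =
  \lambda_{h, m}\,\tensorq{R}\tensorq{E}\vec{w}_m
  =
  \lambda_{h, m}\vec{w}_m .
\end{equation*}
As $\vec{w}_1, \dots, \vec{w}_M$ are linearly independent they exhaust the spectrum of $\tensorq{B}$, and Lemma~\ref{lm:1}, formula~\eqref{eq:238} (with $c_{h, N/2} = 0$, so the sum truncates at $M$), identifies $\lambda_{h, m} = c_{h, 0} + 2 \sum_{j=1}^{M} c_{h, j} \cos\left(j m h\right)$, which is~\eqref{eq:386}.

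An alternative, and arguably more in the spirit of the preceding section, is to reuse the similarity transform already assembled in Section~\ref{sec:discr-sine-transf-1}: conjugating $\tensorq{B}$ by $\SineTransformDiscreteMatrix$ via the matrix form~\eqref{eq:313} of the sine transform, the diagonalisation~\eqref{eq:234} of the big circulant, and its palindromic diagonal structure~\eqref{eq:239}, exactly as in the chain~\eqref{eq:338}--\eqref{eq:342}, collapses $\tensorq{B}$ onto the top-left $M \times M$ block $\diag\left(\lambda_{h, 1}, \dots, \lambda_{h, M}\right)$; nothing in that computation used tridiagonality of the auxiliary matrix, only the circulant and symmetry structure that~\eqref{eq:384} shares with the tridiagonal case, so it applies verbatim here.

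The only step I expect to require genuine care is the bookkeeping surrounding the extension/reduction round trip. In the first route this is the verification that the sampled sine $\vec{s}_m$ already lies in the odd-extension subspace --- equivalently, that $\tensorq{E}\tensorq{R}$ acts as the identity on it --- together with the (routine) basis property of $\{\vec{w}_m\}_{m=1}^{M}$. In the second route the corresponding care goes into tracking the factor $\tfrac{1}{h}$ built into the definition of the discrete Fourier transform and the invariance of the odd subspace under the structured diagonal~\eqref{eq:239}. Everything else is routine, so beyond this I anticipate no real obstacle.
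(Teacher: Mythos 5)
Your proposal is correct, and it actually contains two proofs: your ``alternative'' route is essentially the paper's own argument, while your primary route is genuinely different and more elementary. The paper proves the lemma exactly as in your second paragraph---it invokes Lemma~\ref{lm:1} to note that the discrete Fourier diagonalisation of $\tensorq{C}_{\left(2M + 2\right) \times \left(2M + 2\right)}$ produces the palindromic diagonal matrix~\eqref{eq:239}, and then reruns the conjugation chain~\eqref{eq:338}--\eqref{eq:342} to collapse $\tensorq{B}_{M \times M}$ onto $\diag\left(\lambda_{h,1},\dots,\lambda_{h,M}\right)$ via $\SineTransformDiscreteMatrix$; as you observe, nothing in that chain uses tridiagonality, only the symmetric circulant structure of~\eqref{eq:384}. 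Your primary route---exhibiting the sampled sines $\vec{w}_m$ with components $\sin\left(jmh\right)$ as an explicit eigenbasis by lifting them through $\tensorq{E}_{\left(2M + 2\right) \times M}$ to the sampled sines $\vec{s}_m$ on the big grid, identifying $\vec{s}_m$ as the imaginary part of a complex exponential eigenvector of the real circulant with real eigenvalue $\lambda_{h,m}$, and then using $\tensorq{R}_{M \times \left(2M + 2\right)}\tensorq{E}_{\left(2M + 2\right) \times M}=\identity_{M \times M}$---is not what the paper does, and it buys something: it is self-contained at the level of a direct verification (the key identities $\sin\left((M+1)mh\right)=\sin\left(m\pi\right)=0$ and $\sin\left(((M+1)+l)mh\right)=-\sin\left(((M+1)-l)mh\right)$ are checked by hand), it produces the eigenvectors explicitly rather than implicitly through the sine-transform similarity, and it sidesteps the bookkeeping about the factor $\frac{1}{h}$ and the precise inverse of $\SineTransformDiscreteMatrix$ that the paper's route (and your second route) must track. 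The only points needing the care you flag are indeed the nonvanishing of $\vec{s}_m$ for $m=1,\dots,M$ (its first entry is $\sin\left(\frac{m\pi}{M+1}\right)\neq 0$) and the linear independence of $\left\{\vec{w}_m\right\}_{m=1}^{M}$, which follows from the standard DST-I orthogonality; both are routine, so your primary argument closes completely.
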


\begin{proof}
  Lemma~\ref{lm:1} on eigenvalues of circulant matrices states that the circulant matrix $\tensorq{C}_{\left(2M + 2\right) \times \left(2M + 2\right)}$ has $M$ eigenvalues of algebraic multiplicity \emph{two}---these are the eigenvalues associated to the wavenumbers $m=1, \dots, M = \frac{N}{2} -1$---and two eigenvalues of multiplicity \emph{one}, namely the eigenvalues associated to wavenumbers $m = 0$ and $m = M + 1 = \frac{N}{2}$. The diagonalisation of the circulant matrix $\tensorq{C}_{\left(2M + 2\right) \times \left(2M + 2\right)} $ via the discrete Fourier transform thus leads to the diagonal matrix with the structure
  \begin{equation}
    \label{eq:387}
    \begin{bmatrix}
      \lambda_{h, 1} & & & & & & & \\
                & \ddots & & & & & & \\ 
                & & \lambda_{h, M} & & & & & \\
                & & & \lambda_{h, M+1} & & & & \\
                & & & & \lambda_{h, M} & & & \\
                & & & & & \ddots & & \\
                & & & & & & \lambda_{h, 1} & \\
                & & & & & & & \lambda_{h, 0} 
    \end{bmatrix}
    .
  \end{equation}

Once we know that the diagonalisation of $\tensorq{C}_{\left(2M + 2\right) \times \left(2M + 2\right)}$ invariably leads to the diagonal matrix with the structure~\eqref{eq:387}, we can follow the same steps as in Section~\eqref{sec:discr-four-transf-3}, see the discussion following equation~\eqref{eq:338}. In particular, we can conclude that any matrix $\tensorq{B}_{M \times M}$ generated out of the ``big'' circulant matrix~$\tensorq{C}_{\left(2M + 2\right) \times \left(2M + 2\right)} $ by the reduction/extension procedure
\begin{equation}
  \label{eq:388}
  \tensorq{B}_{M \times M} = \tensorq{R}_{M \times \left(2M + 2\right)}\tensorq{C}_{\left(2M + 2\right) \times \left(2M + 2\right)} \tensorq{E}_{\left(2M + 2\right) \times M}
\end{equation}
is diagonalised by the discrete sine transform. Furthermore, its eigenvalues are determined by the non-simple eigenvalues $\lambda_{h, 1}, \dots \lambda_{h, M}$ of the ``big'' circulant matrix $\tensorq{C}_{\left(2M + 2\right) \times \left(2M + 2\right)}$, which are calculated by the discrete Fourier transform of the generating vector of the circulant matrix $\tensorq{C}_{\left(2M + 2\right) \times \left(2M + 2\right)}$, see formula~\eqref{eq:238}. Using~\eqref{eq:238}, we thus see that the eigenvalues of $\tensorq{B}_{M \times M}$ are given by~\eqref{eq:386}.  
\end{proof}

\begin{lemma}[Bound on eigenvalues of product $-\tensorq{M}_{M \times M}\tensorq{B}_{M \times M}$]
  \label{lm:3}
  Let $\tensorq{B}_{M \times M}$ be a negative definite matrix, and let $\tensorq{M}_{M \times M}$ be a symmetric positive definite matrix. Then the eigenvalues~$\left\{\lambda_{- \tensorq{M}_{M \times M} \tensorq{B}_{M \times M}, i} \right\}_{i = 1}^M$ in the eigenvalue problem
  \begin{equation}
    \label{eq:389}
    - \tensorq{M}_{M \times M}\tensorq{B}_{M \times M} \vec{v} = \lambda_{-\tensorq{M}_{M \times M} \tensorq{B}_{M \times M}} \vec{v} 
  \end{equation}
  can be estimated in terms of eigenvalues  $\left\{ \lambda_{-\tensorq{B}_{M \times M}, i} \right\}_{i=1}^M$ in the eigenvalue problem
  \begin{equation}
    \label{eq:390}
    -\tensorq{B}_{M \times M} \vec{x} = \lambda_{-\tensorq{B}_{M \times M}} \vec{x}
  \end{equation}
  as
  \begin{equation}
    \label{eq:391}
    \lambda_{\tensorq{M}_{M \times M}, \, \min}
    \lambda_{-\tensorq{B}_{M \times M}, i}
    \leq
    \lambda_{-\tensorq{M}_{M \times M} \tensorq{B}_{M \times M}, i}
    \leq
    \lambda_{\tensorq{M}_{M \times M}, \, \max}
    \lambda_{-\tensorq{B}_{M \times M}, i},
  \end{equation}
  where $\lambda_{\tensorq{M}_{M \times M}, \, \min}$ and $\lambda_{\tensorq{M}_{M \times M}, \, \max}$ denote the smallest/largest eigenvalue of $\tensorq{M}_{M \times M}$.
\end{lemma}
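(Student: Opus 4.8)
The plan is to reduce the (non-symmetric) product $-\tensorq{M}_{M \times M}\tensorq{B}_{M \times M}$ to a \emph{symmetric} matrix by a similarity transformation, and then to combine the Courant--Fischer min--max principle with a linear change of variables in the Rayleigh quotient. Throughout I abbreviate $\tensorq{B} =_{\bydefinition} \tensorq{B}_{M \times M}$ and $\tensorq{M} =_{\bydefinition} \tensorq{M}_{M \times M}$; here ``negative definite'' is understood in the sense of \emph{symmetric} negative definite, which is the case relevant for us (cf.\ the matrix~\eqref{eq:370}, which is symmetric by construction), so that $-\tensorq{B}$ is symmetric positive definite and the eigenvalue problem~\eqref{eq:390} is unambiguous. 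Since $\tensorq{M}$ is symmetric positive definite it has a unique symmetric positive definite square root $\tensorq{M}^{1/2}$, which is invertible.

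First I would record the similarity
\[
  -\tensorq{M}\tensorq{B}
  =
  \tensorq{M}^{1/2}
  \left(
    -\tensorq{M}^{1/2}\tensorq{B}\tensorq{M}^{1/2}
  \right)
  \inverse{\left( \tensorq{M}^{1/2} \right)}
  ,
\]
so that $-\tensorq{M}\tensorq{B}$ is similar to $\tensorq{A} =_{\bydefinition} -\tensorq{M}^{1/2}\tensorq{B}\tensorq{M}^{1/2}$. The matrix $\tensorq{A}$ is symmetric and, since $-\tensorq{B}$ is positive definite, also positive definite. Hence all eigenvalues of $-\tensorq{M}\tensorq{B}$ are real and positive, and they coincide with those of $\tensorq{A}$ including multiplicities; ordering both spectra increasingly we obtain $\lambda_{-\tensorq{M}_{M \times M}\tensorq{B}_{M \times M}, i} = \lambda_{\tensorq{A}, i}$ for $i = 1, \dots, M$.

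Second I would apply the Courant--Fischer characterisation, $\lambda_{\tensorq{A}, i} = \min_{\dim\mathcal{S}=i}\ \max_{\vec{v}\in\mathcal{S}\setminus\{\vec{0}\}} \transpose{\vec{v}}\tensorq{A}\vec{v}\,/\,\transpose{\vec{v}}\vec{v}$, and substitute $\vec{w} = \tensorq{M}^{1/2}\vec{v}$, which is a bijection of $\R^M$. A short computation gives
\[
  \frac{\transpose{\vec{v}}\tensorq{A}\vec{v}}{\transpose{\vec{v}}\vec{v}}
  =
  \frac{\transpose{\vec{w}}\left(-\tensorq{B}\right)\vec{w}}{\transpose{\vec{w}}\inverse{\tensorq{M}}\vec{w}}
  =
  \left(
    \frac{\transpose{\vec{w}}\left(-\tensorq{B}\right)\vec{w}}{\transpose{\vec{w}}\vec{w}}
  \right)
  \frac{\transpose{\vec{w}}\vec{w}}{\transpose{\vec{w}}\inverse{\tensorq{M}}\vec{w}}
  .
\]
The spectral bounds for $\inverse{\tensorq{M}}$ show that the last factor lies in $\left[\lambda_{\tensorq{M}_{M \times M},\, \min},\ \lambda_{\tensorq{M}_{M \times M},\, \max}\right]$ for every $\vec{w}\neq\vec{0}$, so the Rayleigh quotient of $\tensorq{A}$ at $\vec{v}$ is pinched between $\lambda_{\tensorq{M}_{M \times M},\, \min}$ and $\lambda_{\tensorq{M}_{M \times M},\, \max}$ times the Rayleigh quotient of $-\tensorq{B}$ at $\vec{w}$. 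Because $\vec{v}\mapsto\vec{w}$ maps each $i$-dimensional subspace onto an $i$-dimensional subspace, and because $\lambda_{\tensorq{M}_{M \times M},\, \min}$ and $\lambda_{\tensorq{M}_{M \times M},\, \max}$ are positive constants that pass through the inner $\max$ and the outer $\min$, substituting the pinching bounds into the min--max formula yields $\lambda_{\tensorq{M}_{M \times M},\, \min}\,\lambda_{-\tensorq{B}_{M \times M}, i} \le \lambda_{\tensorq{A}, i} \le \lambda_{\tensorq{M}_{M \times M},\, \max}\,\lambda_{-\tensorq{B}_{M \times M}, i}$. Combined with the eigenvalue identification of the first step this is precisely~\eqref{eq:391}.

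I do not expect a genuinely hard step: the argument is a standard similarity-plus-min--max manipulation. The two points requiring care are that the product $\tensorq{M}_{M \times M}\tensorq{B}_{M \times M}$ is \emph{not} symmetric, so one must first pass to the symmetrised matrix $\tensorq{A}$ (this is also what guarantees that the eigenvalues in~\eqref{eq:389} are real, so that their ordering is meaningful), and that the orderings on the two sides of~\eqref{eq:391} must be chosen consistently---both increasing---so that the index $i$ is preserved under the change of variables in the Courant--Fischer formula.
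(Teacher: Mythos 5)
Your proposal is correct and follows essentially the same route as the paper's proof: similarity of $-\tensorq{M}_{M \times M}\tensorq{B}_{M \times M}$ to the symmetric positive definite matrix $-\tensorq{M}_{M \times M}^{1/2}\tensorq{B}_{M \times M}\tensorq{M}_{M \times M}^{1/2}$, the same factorisation of the Rayleigh quotient under the change of variables $\vec{w} = \tensorq{M}_{M \times M}^{1/2}\vec{v}$ (your second factor $\transpose{\vec{w}}\vec{w}/\transpose{\vec{w}}\inverse{\tensorq{M}_{M\times M}}\vec{w}$ is literally the paper's Rayleigh quotient of $\tensorq{M}_{M\times M}$ in the original variable), and the same conclusion via Courant--Fischer. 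Your explicit remarks that $\tensorq{B}_{M \times M}$ must be understood as \emph{symmetric} negative definite and that the subspace bijection preserves dimension are points the paper also relies on, only more implicitly.
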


\begin{proof}
  Since $\tensorq{M}_{M \times M}$ is a symmetric positive definite matrix, we can take its square root, and we can write
  \begin{equation}
    \label{eq:392}
    \tensorq{M}_{M \times M}\tensorq{B}_{M \times M} = \tensorq{M}_{M \times M}^{\frac{1}{2}} \left( \tensorq{M}_{M \times M}^{\frac{1}{2}}\tensorq{B}_{M \times M} \tensorq{M}_{M \times M}^{\frac{1}{2}} \right) \tensorq{M}_{M \times M}^{-\frac{1}{2}},
  \end{equation}
  This reveals that the matrix $-\tensorq{M}_{M \times M}\tensorq{B}_{M \times M}$ is similar, but not unitarily similar, to the \emph{symmetric positive definite matrix} $-\tensorq{M}_{M \times M}^{\frac{1}{2}}\tensorq{B}_{M \times M} \tensorq{M}_{M \times M}^{\frac{1}{2}}$. (Provided that $-\tensorq{B}_{M \times M}$ is symmetric positive definite matrix, which is our case.) Furthermore, the eigenvalue problem
  \begin{equation}
    \label{eq:393}
    -\tensorq{M}_{M \times M}\tensorq{B}_{M \times M} \vec{v} = \lambda \vec{v}
  \end{equation}
  can be rewritten as
  \begin{equation}
    \label{eq:394}
    -\tensorq{M}_{M \times M}^{\frac{1}{2}} \left( \tensorq{M}_{M \times M}^{\frac{1}{2}}\tensorq{B}_{M \times M} \tensorq{M}_{M \times M}^{\frac{1}{2}} \right) \tensorq{M}_{M \times M}^{-\frac{1}{2}} \vec{v} = \lambda \vec{v}.
  \end{equation}
  This implies that the eigenvalues of matrix $-\tensorq{M}_{M \times M}\tensorq{B}_{M \times M}$ are the same as the eigenvalues of matrix $-\tensorq{M}_{M \times M}^{\frac{1}{2}}\tensorq{B}_{M \times M} \tensorq{M}_{M \times M}^{\frac{1}{2}}$, and that the eigenvectors of matrix $-\tensorq{M}_{M \times M}\tensorq{B}_{M \times M}$ are obtained by the transformation $\vec{v} =  \tensorq{M}_{M \times M}^{\frac{1}{2}} \vec{w}$, where $\vec{w}$ are eigenvectors of  matrix $-\tensorq{M}_{M \times M}^{\frac{1}{2}}\tensorq{B}_{M \times M} \tensorq{M}_{M \times M}^{\frac{1}{2}}$, that is
  \begin{equation}
    \label{eq:395}
     -\tensorq{M}_{M \times M}^{\frac{1}{2}}\tensorq{B}_{M \times M} \tensorq{M}_{M \times M}^{\frac{1}{2}} \vec{w} = \lambda \vec{w}.
   \end{equation}

   The eigenvalue problem~\eqref{eq:395} for the symmetric matrix $- \tensorq{M}_{M \times M}^{\frac{1}{2}}\tensorq{B}_{M \times M} \tensorq{M}_{M \times M}^{\frac{1}{2}}$ can be solved using the Rayleigh quotient technique. The Rayleigh quotient for $- \tensorq{M}_{M \times M}^{\frac{1}{2}}\tensorq{B}_{M \times M} \tensorq{M}_{M \times M}^{\frac{1}{2}}$ can be however rewritten as follows
   \begin{multline}
     \label{eq:396}
     -
     \frac{\left( \vectordot{\tensorq{M}_{M \times M}^{\frac{1}{2}}\tensorq{B}_{M \times M} \tensorq{M}_{M \times M}^{\frac{1}{2}} \vec{u} \right)}{\vec{u}}}{\vectordot{\vec{u}}{\vec{u}}}
     =
     -
     \frac{\left( \vectordot{\tensorq{B}_{M \times M} \tensorq{M}_{M \times M}^{\frac{1}{2}} \vec{u} \right)}{\tensorq{M}_{M \times M}^{\frac{1}{2}}\vec{u}}}{\vectordot{\tensorq{M}_{M \times M}^{\frac{1}{2}} \vec{u}}{\tensorq{M}_{M \times M}^{\frac{1}{2}} \vec{u}}}
     \frac{\vectordot{\tensorq{M}_{M \times M}^{\frac{1}{2}} \vec{u}}{\tensorq{M}_{M \times M}^{\frac{1}{2}} \vec{u}}}{\vectordot{\vec{u}}{\vec{u}}}
     \\
     =
     -
     \frac{\vectordot{\tensorq{B}_{M \times M} \vec{x}}{\vec{x}}}{\vectordot{\vec{x}}{\vec{x}}}
     \frac{\vectordot{\tensorq{M}_{M \times M} \vec{u}}{\vec{u}}}{\vectordot{\vec{u}}{\vec{u}}}
     ,
   \end{multline}
   where we have denoted $\vec{x} = _{\bydefinition}  \tensorq{M}_{M \times M}^{\frac{1}{2}} \vec{u}$. This reveals that
   \begin{equation}
     \label{eq:397}
     -
     \lambda_{\tensorq{M}_{M \times M}, \, \min}
     \frac{\vectordot{\tensorq{B}_{M \times M} \vec{x}}{\vec{x}}}{\vectordot{\vec{x}}{\vec{x}}}
     \leq
     -
     \frac{\left( \vectordot{\tensorq{M}_{M \times M}^{\frac{1}{2}}\tensorq{B}_{M \times M} \tensorq{M}_{M \times M}^{\frac{1}{2}} \vec{u} \right)}{\vec{u}}}{\vectordot{\vec{u}}{\vec{u}}}
     \leq
     -
     \lambda_{\tensorq{M}_{M \times M}, \, \max}
     \frac{\vectordot{\tensorq{B}_{M \times M} \vec{x}}{\vec{x}}}{\vectordot{\vec{x}}{\vec{x}}}
     ,
    \end{equation}
    where $\lambda_{\tensorq{M}_{M \times M}, \, \min}$ and $\lambda_{\tensorq{M}_{M \times M}, \, \max}$ denote the smallest/largest eigenvalue of $\tensorq{M}_{M \times M}$. The Courant–-Fischer variational characterisation of eigenvalues, see, for example~\cite[Chapter 7]{meyer.c:matrix}, then gives us
    \begin{equation}
      \label{eq:398}
      \lambda_{\generictensor, i}
      =
      \max_{\dim V = i}
      \min_{\vec{y} \in V, \vec{u} \not = \vec{0}}
      \frac{\vectordot{\generictensor \vec{y}}{\vec{y}}}{\vectordot{\vec{y}}{\vec{y}}},
    \end{equation}
    where $\lambda_{\generictensor, i}$ is the $i$-th eigenvalue of the corresponding matrix $\generictensor \in \R^{m \times m}$ and $V$ is a subspace of $\R^{m}$. (The eigenvalues are sorted in ascending order.) In our case thus have
    \begin{equation}
      \label{eq:399}
      \lambda_{-\tensorq{M}_{M \times M}^{\frac{1}{2}}\tensorq{B}_{M \times M} \tensorq{M}_{M \times M}^{\frac{1}{2}}, i}
      =
      \max_{\dim V = i}
      \min_{\vec{u} \in V, \  \vec{u} \not = \vec{0}}
      \frac{\left( \vectordot{-\tensorq{M}_{M \times M}^{\frac{1}{2}}\tensorq{B}_{M \times M} \tensorq{M}_{M \times M}^{\frac{1}{2}} \vec{u} \right)}{\vec{u}}}{\vectordot{\vec{u}}{\vec{u}}},
    \end{equation}
    which in the virtue of~\eqref{eq:397} implies that the $i$-th eigenvalue of matrix $-\tensorq{M}_{M \times M}^{\frac{1}{2}}\tensorq{B}_{M \times M} \tensorq{M}_{M \times M}^{\frac{1}{2}}$, and thus also the $i$-th eigenvalue of matrix $-\tensorq{M}_{M \times M} \tensorq{B}_{M \times M}$ is bounded by rescaled $i$-th eigenvalue of matrix $-\tensorq{B}_{M \times M}$ 
    \begin{equation}
      \label{eq:400}
      \lambda_{\tensorq{M}_{M \times M}, \, \min}
      \lambda_{-\tensorq{B}_{M \times M}, i}
      \leq
      \lambda_{-\tensorq{M}_{M \times M} \tensorq{B}_{M \times M}, i}
      \leq
      \lambda_{\tensorq{M}_{M \times M}, \, \max}
      \lambda_{-\tensorq{B}_{M \times M}, i}.
    \end{equation}
  \end{proof}

Let us now investigate how good are various approximations of the second derivative operator with fixed ends/zero Dirichlet boundary conditions. In particular, we focus on the quality of approximation of \emph{eigenvalues} $\lambda^{\text{continuous}}$  of the \emph{continuous eigenproblem}
\begin{subequations}
  \label{eq:401}
  \begin{align}
    \label{eq:402}
    -\ddd{u}{x} &= \lambda^{\text{continuous}} u, \\
    \label{eq:403}
    \left. u \right|_{x=0, \pi} &= 0.
  \end{align}
\end{subequations}
The first $M$ eigenvalues of the continuous eigenproblem are given by the formula $\lambda_n^{\text{continuous}} =  n^2$, $n = 1, \dots , M$. We want to compare these eigenvalues with the $M$ eigenvalues of the approximations of~\eqref{eq:401} by various $M \times M$ differentiation matrices $\tensorq{M}_{M \times M} \tensorq{B}_{M \times M}$, that is with the eigenvalues obtained by the solution of the \emph{discrete eigenproblem}
\begin{equation}
  \label{eq:404}
  -\tensorq{M}_{M \times M}\tensorq{B}_{M \times M} \vec{u} =  \lambda^{\text{discrete}} \vec{u}.
\end{equation}

The comparison results are summarised in Table~\ref{tab:spectrum-fd-versus-dst}, where $\tensorq{M}_{M \times M,\, 2j+1} \tensorq{B}_{M \times M,\, 2j+1}$ denote the matrices approximating the second derivative operator with fixed ends/zero Dirichlet boundary conditions, and where $2j+1$ is the number of non-zero diagonals in $\tensorq{B}_{M \times M,\, 2j+1}$. (For $j=1$ the matrix $\tensorq{B}_{M \times M,\, 3}$ represents the second derivative approximation by the classical tridiagonal matrix~\eqref{eq:323}; for $j=2$ we get the pentadiagonal matrix $\tensorq{B}_{M \times M,\, 5}$ and the diagonal correction matrix $\tensorq{M}_{M \times M,\, 5}$ representing the finite differences approximation~\eqref{eq:379}; for $j=3$ we get the heptadiagonal matrix $\tensorq{B}_{M \times M,\, 7}$ and the diagonal correction matrix $\tensorq{M}_{M \times M,\, 7}$ representing the finite differences approximation~\eqref{eq:381}, and so forth.) Computations reported in Table~\ref{tab:spectrum-fd-versus-dst} confirm the theoretically expected behaviour.

\emph{The first few eigenvalues for the discrete problems correspond very well to the first few eigenvalues of the continuous problem}. However, the \emph{high index eigenvalues of the discrete problems are nowhere near to the corresponding eigenvalues of the continuous problem}, albeit the quality of the eigenvalue approximation increases with the order of the approximation (number of non-zero diagonals in $\tensorq{B}_{M \times M,\, 2j+1}$). On the other hand, the sine transform based discretisation of eigenproblem~\eqref{eq:401}, see Section~\eqref{sec:discr-sine-transf-3} and especially formula~\eqref{eq:347}, gives by the construction \emph{$M$ discrete eigenvalues that exactly correspond to the first $M$ eigenvalues} of the continuous problem~\eqref{eq:401}. 

% 
% Computed using ./wolfram/spectrum-fd-versus-dst.wls
%
\begin{table}[h]
  \centering
  \begin{tabular}{lllllll}
    \toprule
    & Continuous & \multicolumn{5}{c}{Numerically computed eigenvalues $\lambda_n^{\text{discrete}, \, 2j+1}$ of} \\
                           & operator $-\ddd{}{x}$ & \multicolumn{5}{c}{discrete operator $-\tensorq{M}_{M \times M,\, 2j+1} \tensorq{B}_{M \times M,\, 2j+1}$} \\
    \cmidrule(lr){2-2} \cmidrule(lr){3-7}
    $n$  & $\lambda_n^{\text{continuous}}$ & $\lambda_n^{\text{discrete}, \,3}$ & $\lambda_n^{\text{discrete}, \,5}$ & $\lambda_n^{\text{discrete}, \,7}$ & $\lambda_n^{\text{discrete}, \,9}$ & $\lambda_n^{\text{discrete}, \,{11}}$ \\
    \midrule
%     1& -1& -0.999684& -0.999975& -0.999982& -0.999983& -0.999983\\ 
% 2& -4& -3.99494& -3.99959& -3.99971& -3.99972& -3.99973\\ 
% 3& -9& -8.97442& -8.99789& -8.9985& -8.9986& -8.99863\\ 
% 4& -16& -15.9192& -15.9931& -15.9952& -15.9955& -15.9956\\ 
%     5& -25& -24.803& -24.9824& -24.9881& -24.989& -24.9893\\
%     \multicolumn{7}{c}{$\vdots$} \\
%     46& -2116& -1029.34& -1363.73& -1538.35& -1647.91& -1723.98\\ 
% 47& -2209& -1038.23& -1378.65& -1557.79& -1671.17& -1750.56\\ 
% 48& -2304& -1045.17& -1390.35& -1573.07& -1689.5& -1771.59\\ 
% 49& -2401& -1050.15& -1398.76& -1584.07& -1702.74& -1786.8\\ 
% 50& -2500& -1053.15& -1403.83& -1590.71& -1710.73& -1796\\
    1& 1& 0.999684& 0.999975& 0.999982& 0.999983& 0.999983\\ 
2& 4& 3.99494& 3.99959& 3.99971& 3.99972& 3.99973\\ 
3& 9& 8.97442& 8.99789& 8.9985& 8.9986& 8.99863\\ 
4& 16& 15.9192& 15.9931& 15.9952& 15.9955& 15.9956\\ 
    5& 25& 24.803& 24.9824& 24.9881& 24.989& 24.9893\\
    \multicolumn{7}{c}{$\vdots$} \\
    46& 2116& 1029.34& 1363.73& 1538.35& 1647.91& 1723.98\\ 
47& 2209& 1038.23& 1378.65& 1557.79& 1671.17& 1750.56\\ 
48& 2304& 1045.17& 1390.35& 1573.07& 1689.5& 1771.59\\ 
49& 2401& 1050.15& 1398.76& 1584.07& 1702.74& 1786.8\\ 
50& 2500& 1053.15& 1403.83& 1590.71& 1710.73& 1796\\
    \bottomrule
  \end{tabular}
  \caption{Eigenvalues of discrete operators $-\tensorq{M}_{M \times M,\, 2j+1} \tensorq{B}_{M \times M,\, 2j+1}$  approximating the second derivative operator $-\ddd{}{x}$ with fixed ends/zero Dirichlet boundary conditions. Matrix size $M \times M$, $M=50$.}
  \label{tab:spectrum-fd-versus-dst}
\end{table}

Finally, the performance of the eigenvalue estimate~\eqref{eq:400} is documented in Table~\ref{tab:spectrum-fd-estimate-via-circulant}. Here we report numerically computed eigenvalues of differentiation matrix $-\tensorq{M}_{M \times M,\, 2j+1} \tensorq{B}_{M \times M,\, 2j+1}$ for $j=3$ and the eigenvalue estimates based on the knowledge of eigenvalues of $\tensorq{B}_{M \times M,\, 2j+1}$ and the matrix $\tensorq{M}_{M \times M,\, 2j+1}$. We see that the estimates give decent bounds on the actual eigenvalues and that the eigenvalues of  $\tensorq{M}_{M \times M,\, 2j+1} \tensorq{B}_{M \times M,\, 2j+1}$ are, in this particular case, very close to the eigenvalues of~$\tensorq{B}_{M \times M,\, 2j+1}$.

% 
% Computed using ./wolfram/spectrum-fd-estimates.wls
%
\begin{table}[h]
  \centering
  \begin{tabular}{lllll}
    \toprule
    &Numerically computed eigenvalues $\lambda_n^{\text{discrete}, \, 2j+1}$ of & Eigenvalues of & Lower bound on & Upper bound on\\
    &discrete operator $-\tensorq{M}_{M \times M,\, 2j+1} \tensorq{B}_{M \times M,\, 2j+1}$ & $-\tensorq{B}_{M \times M,\, 2j+1}$ & $\lambda_n^{\text{discrete}, \, 2j+1}$ & $\lambda_n^{\text{discrete}, \, 2j+1}$ \\
    \cmidrule(lr){2-2} \cmidrule(lr){3-3} \cmidrule(lr){4-5}
    $n$ & $\lambda_n^{\text{discrete}, \,5}$ & $\lambda_{-\tensorq{B}_{M \times M, \, 2j + 1}, n}^{\text{discrete}, \,5}$ & & \\
    \midrule
    1& 0.999982& 1.& 0.904523& 1.01124\\ 
    2& 3.99971& 4.& 3.61809& 4.04494\\ 
    3& 8.9985& 9.& 8.1407& 9.10112\\ 
    4& 15.9952& 16.& 14.4724& 16.1798\\ 
    5& 24.9881& 25.& 22.613& 25.2809\\
    \multicolumn{5}{c}{$\vdots$} \\
    46& 1538.35& 1538.87& 1391.94& 1556.16\\ 
    47& 1557.79& 1558.12& 1409.35& 1575.62\\ 
    48& 1573.07& 1573.25& 1423.04& 1590.93\\ 
    49& 1584.07& 1584.16& 1432.9& 1601.95\\ 
    50& 1590.71& 1590.73& 1438.85& 1608.61\\                                                                          \bottomrule
  \end{tabular}
  \caption{Eigenvalues of discrete differentiation matrix  $-\tensorq{M}_{M \times M,\, 2j+1} \tensorq{B}_{M \times M,\, 2j+1}$, $j=3$ (heptadiagonal matrix~$\tensorq{B}_{M \times M,\, 2j+1}$), and their estimate~\eqref{eq:400} based on theoretically known eigenvalues of $-\tensorq{B}_{M \times M,\, 2j+1}$. Matrix size $M \times M$, $M = 50$.}
  \label{tab:spectrum-fd-estimate-via-circulant}
\end{table}

\section{Remark on more complex second order spatial differential operators}
\label{sec:remark-more-complex}

\subsection{Regular Sturm--Liouville operators and Liouville substitution}
\label{sec:regul-sturm-liouv}

The analysis outlined in previous sections heavily relied on the \emph{a priori} knowledge of the eigenvalues/eigenvectors of the differential operator $\ddd{}{x}$ with the corresponding boundary conditions. It might thus seem that such analysis is inapplicable for a generic wave-like equation of type
\begin{equation}
  \label{eq:405}
  \rho \ppd{u}{t} - \left( \pd{}{x} \left( p \pd{u}{x} \right) + q u \right) = 0,
\end{equation}
where $\rho$, $p$ and $q$ are given functions of position such that $p(x) >  0$ and $\rho(x) > 0$ for all $x$ in the domain of interest, $x \in (a, b)$. We however show that the Fourier transform tools, in particular the discrete sine transform, are of use even in this case.

The key qualitative features of solution to~\eqref{eq:405} are again encoded in the dispersion relation, while the dispersion relation reduces to an eigenvalue problem. Indeed, taking the Fourier transform in \emph{time} we get the eigenvalue problem
\begin{subequations}
  \label{eq:406}
  \begin{align}
    \label{eq:407}
    \dd{}{x} \left( p \dd{\widehat{u}}{x} \right) + q \widehat{u} &= - \omega^2 \rho \widehat{u}, \\
    \label{eq:408}
    \left. \widehat{u} \right|_{x=a, b} &=0,
  \end{align}
\end{subequations}
where $\widehat{u} = \fouriertransforms_{t \to \omega}\left[u\right]$ is the time Fourier transform of the original function $u$. Using the standard Liouville substitution, see, for example, \cite[Chapter 10, Theorem 6]{birkhoff.g.rota.g:ordinary}, we can convert the eigenvalue problem~\eqref{eq:406} into the canonical Schr\"odinger form
 \begin{equation}
   \label{eq:410}
   \ddd{\widehat{w}}{z} - \widehat{q} \widehat{w} = - \omega^2 \widehat{w},
 \end{equation}
 where the new unknown quantity $w$, the new spatial variable $z$, and the coefficient $\widehat{q}$ are given by the formulae
 \begin{equation}
   \label{eq:liouville-substitution}
   \widehat{u}  =_{\bydefinition} \frac{\widehat{w}}{ \left( p \rho \right)^{\frac{1}{4}}}, \\
     \qquad
     z =_{\bydefinition} \int_{\xi = a}^x \left( \frac{\rho}{p} \right)^{\frac{1}{2}} \, \diff \xi, \\
     \qquad
     \widehat{q} =_{\bydefinition} \frac{q}{\rho} + \frac{1}{\left( p \rho \right)^{\frac{1}{4}}} \ddd{}{z} \left( \left( p \rho \right)^{\frac{1}{4}} \right).
 \end{equation}
 The new eigenvalue problem~\eqref{eq:410} has the same eigenvalues as the original eigenvalue problem~\eqref{eq:406}, but it has to be solved in the interval
 \begin{equation}
   \label{eq:414}
   z \in \left( 0, \int_{\xi = a}^b \left( \frac{p}{\rho} \right)^{\frac{1}{2}} \, \diff \xi \right).
 \end{equation}
 (The fixed ends/zero Dirichlet boundary conditions are inherited from the original problem, but they are enforced at the endpoints of the interval for the $z$ variable.)
 We can further rescale the spatial variable $z$ as
 \begin{equation}
   \label{eq:415}
   Z =_{\bydefinition} \frac{\pi z}{z_{\max}}, \qquad z_{\max} =_{\bydefinition} \int_{\xi = a}^b \left( \frac{p}{\rho} \right)^{\frac{1}{2}} \, \diff \xi,
 \end{equation}
 which gives us the eigenvalue problem
 \begin{subequations}
   \label{eq:416}
   \begin{align}
     \label{eq:417}
     \ddd{W}{Z} - Q W &= - \Omega^2 W, \\
     \label{eq:418}
     \left. W \right|_{Z = 0, \pi} &= 0, 
   \end{align}
 \end{subequations}
 on the canonical interval $Z \in \left( 0, \pi \right)$, where we use the obvious identification
   \begin{equation}
     \label{eq:420}
     W (Z) =_{\bydefinition} \widehat{w} \left( \frac{Z \pi}{z_{\max}} \right), \\
     \qquad
     Q (Z)  =_{\bydefinition}  \left( \frac{z_{\max}}{\pi} \right)^2 \widehat{q}\left( \frac{Z \pi}{z_{\max}} \right), \\
     \qquad
     \Omega^2 =_{\bydefinition}4 \left( \frac{z_{\max}}{\pi} \right)^2 \omega^2.
   \end{equation}
   The eigenvalues of the original problem~\eqref{eq:406} are thus rescaled eigenvalues of the rescaled problem~\eqref{eq:420}.

   The transformed/rescaled eigenvalue problem~\eqref{eq:416} is however almost the same as the eigenvalue problem we have been discussing in previous sections---problem~\eqref{eq:416} is an eigenvalue problem for the operator that consists of the second derivative and a multiplication by a known function. In general, there is no simple analytical formula for the eigenvalues and eigenvectors of~\eqref{eq:416}, hence we seemingly gained nothing by the transformation of the original eigenvalue~\eqref{eq:406} into the form~\eqref{eq:416}. However, the \emph{asymptotic behaviour of eigenfunctions and eigenvalues} for the transformed/rescaled problem~\eqref{eq:416} is well known, and it turns out that \emph{the high index eigenvalues in~\eqref{eq:416} are}, up to a simple shift, \emph{asymptotically the same as the eigenvalues of the simple problem}
   \begin{subequations}
     \label{eq:424}
     \begin{align}
       \label{eq:425}
       \ddd{W}{Z} &= - \Omega^2 W, \\
       \label{eq:426}
       \left. W \right|_{Z = 0, \pi} &= 0,
     \end{align}
   \end{subequations}
   see~\cite[Corollary 3]{fix.g:asymptotic} and comprehensive analysis and references therein. Namely, the asymptotic formula for the $n$-th eigenvalue of~\eqref{eq:416} reads
   \begin{equation}
     \label{eq:428}
     \Omega^2_{n} =  n ^2 + \frac{1}{\pi} \int_{\xi = 0}^\pi Q(\xi) \, \diff \xi + \bigo{\frac{1}{n^2}}.
   \end{equation}

   Since the sine transform based differentiation~\eqref{eq:347} is the best option regarding the discretisation of~\eqref{eq:424}---differentiation matrix of size $M \times M$ exactly preserves the first $M$ eigenvalues of the continuous operator---we can \emph{conjecture} that the discretisation of~\eqref{eq:416} using the sine transform based differentiation will perform very well. The asymptotic behaviour  (large eigenvalues, highly oscillatory eigenfunctions) is well captured by the differentiation matrix due to the asymptotic formula~\eqref{eq:428}, while the first few eigenvalues (slowly varying eigenfunctions) are well captured due to the good approximation properties of Fourier series.

   \subsection{Numerical experiment}
\label{sec:numerical-experiment}

   This expectation is confirmed by the numerical experiment reported in Table~\eqref{tab:sample-problem}. In the domain $(0, \pi)$ we solve eigenvalue problem
   \begin{subequations}
     \label{eq:429}
     \begin{align}
       \label{eq:430}
       - \ddd{u}{x} + \exponential{x} u &= \lambda u, \\
       \label{eq:431}
       \left. u \right|_{x =0, \pi} & =0,
     \end{align}
   \end{subequations}
   which is the standard test problem in numerical solution of eigenvalue problems for regular Sturm--Liouville operators, see~\cite{paine.jw.hoog.fr.ea:on}, \cite{ledoux.v.van-daele.m.ea:efficient} or~\cite{zhang.x:mapped}. In particular, we numerically compute the eigenvalues of the discrete operator
   \begin{equation}
     \label{eq:432}
     -
     \left(
       \inverse{\left(\SineTransformDiscreteMatrix\right)}
       \begin{bmatrix}
         -1 & & & \\
            & -4 &  &\\
            & & \ddots & \\
            & & & -M^2
       \end{bmatrix}
       \SineTransformDiscreteMatrix
     \right)
    \vec{u}_h
    +
    \diag \left( \vec{q}_h \right)
    \vec{u}_h
    =
    \lambda_n^{\text{discrete}}
    \vec{u}_h
    ,
  \end{equation}
  which is a straightforward discretisation of~\eqref{eq:429} that exploits the sine transform based differentiation~\eqref{eq:347}; the symbol~$\vec{q}_h$ denotes the collection of grid values of function $q(x) = \exponential{x}$.

  In Table~\eqref{tab:sample-problem}, second column, we show several eigenvalues of~\eqref{eq:429} computed by the mapped barycentric Chebyshev differentiation matrix (MBCDM) of size $800 \times 800$, see \cite{zhang.x:mapped}. These eigenvalues are assumed to provide accurate representation of the corresponding eigenvalues of the continuous problem~\eqref{eq:429}. The third column in Table~\ref{tab:sample-problem} shows the eigenvalues of the discrete problem~\eqref{eq:432} computed using $M \times M$ differentiation matrix, $M = 500$, and the default matrix eigenvalue computation algorithm in \texttt{Wolfram Mathematica}. \emph{Note that the size of the differentiation matrix in~\eqref{eq:432} corresponds to the highest reported index eigenvalue~$\lambda_n^{\text{discrete}}$.} We see that the computed eigenvalues of the discrete problem~\eqref{eq:432} perfectly match the ``exact'' eigenvalues reported by~\cite{zhang.x:mapped}. 
  
   % 
   % generated using dst-slp.wls
   %
   \begin{table}[h]
     \centering
     \begin{tabular}[h]{lll}
       \toprule
 %      & \multicolumn{2}{c}{Numerically computed eigenvalues $\lambda_n^{\text{discrete}}$} \\
       & \cite[Table 2]{zhang.x:mapped} & discrete eigenproblem~\eqref{eq:432}\\
       \cmidrule(lr){2-3}
       $n$ & $\lambda_n^{\text{discrete}}$ & $\lambda_n^{\text{discrete}}$\\
       \midrule
       1& 4.89666938& 4.89666938\\ 
50& 2507.050434& 2507.050434\\ 
100& 10007.04831& 10007.04831\\ 
150& 22507.04792& 22507.04791\\ 
200& 40007.04778& 40007.04777\\ 
250& 62507.04771& 62507.0477\\ 
300& 90007.04768& 90007.04765\\ 
350& 122507.0477& 122507.0476\\ 
400& 160007.0476& 160007.0475\\ 
450& 202507.0476& 202507.047\\ 
500& 250007.0476& 250005.6482\\
       \bottomrule
     \end{tabular}
     \caption{Eigenvalues of~\eqref{eq:429} numerically computed by the mapped barycentric Chebyshev differentiation matrix (MBCDM) of size $800 \times 800$, see \cite{zhang.x:mapped} and eigenvalues of~\eqref{eq:429} computed numerically using the discrete sine transform based discretisation~\eqref{eq:432} with matrix size $500 \times 500$.}
     \label{tab:sample-problem}
   \end{table}

   Contrary to~\cite{zhang.x:mapped} we have however matched the first $M$ ``exact'' eigenvalues of the continuous problem~\eqref{eq:429} via discrete eigenvalue problem~\eqref{eq:432} with $M \times M$ matrix. This anecdotical observation suggests that the sine transform based discretisation of regular Sturm--Liouville operators might indeed provide \emph{a method to construct discrete operators represented by $M \times M$ matrices whose eigenvalues very well match the first $M$ eigenvalues of the corresponding continuous operator}. Such conjecture however requires a careful analysis which is beyond the scope of the current contribution and will be carried out elsewhere.
   
\section{Conclusion}
\label{sec:conclusion}
We have investigated the interplay between the continuous problem~\eqref{eq:434} and its discrete counterpart~\eqref{eq:435},
\begin{subequations}
  \label{eq:433}
  \begin{align}
    \label{eq:434}
    \ppd{u_h}{t} - \mathcal{L}_h u_h &= 0, \\
    \label{eq:435}
    \ddd{\vec{u}_h}{t} - \tensorq{L}_h \vec{u}_h &= 0,
  \end{align}
\end{subequations}
where $u_h = u_h(x, t)$ and $\vec{u}_h$ is a vector of point values obtained by a sampling of $u_h$ on a grid, and where $\mathcal{L}_h$ is a linear differential operator acting in the spatial/physical domain, and $\tensorq{L}_h$ is a matrix representing the discretisation of~$\mathcal{L}_h$. The interplay between~\eqref{eq:434} and \eqref{eq:435} has been analysed especially from the perspective of the \emph{dispersion relation}, which is also tantamount to the \emph{comparison of eigenvalues} $\mathcal{L}_h$ and $ \tensorq{L}_h$.

First, we have analysed the interplay problem~\eqref{eq:433} for an \emph{infinite} and for \emph{a periodic domain}. Using Fourier analysis methods we have provided a complete characterisation of the equivalence between~\eqref{eq:434} and~\eqref{eq:435} for a class of operators~$\mathcal{L}_h$/$ \tensorq{L}_h$. In particular, we have studied the case where the matrix~$\tensorq{L}_h$ is an (infinite) symmetric circulant matrix, see Theorem~\ref{thr:2} and Theorem~\ref{thr:3}, which is the physically relevant situation where~\eqref{eq:435} represents a lattice of multiple-neighbours interacting particles. The obtained characterisation provides answer to the \emph{discretisation problem}---how to construct $\tensorq{L}_h$ out of given $\mathcal{L}_h$ such that important qualitative properties of~\eqref{eq:434} are preserved---and to the \emph{continualisation problem}---how to construct $\mathcal{L}_h$ and $u_h$ out of given $\tensorq{L}_h$ and $\vec{u}_h$  such that important qualitative properties of~\eqref{eq:435} are preserved. Besides that the characterisation also answers the \emph{backward error analysis problem} for~\eqref{eq:435}, see~\cite{higham.nj:accuracy}, meaning that it provides an answer to the question: ``If~$\vec{u}_h$ is an \emph{approximate} solution to~\eqref{eq:434} in the sense that it solves~\eqref{eq:435} where $\tensorq{L}_h$ is a discrete counterpart of $\mathcal{L}_h$, and if $u_h$ is constructed out of the solution $\vec{u}_h$, what \emph{nearby} problem of type~\eqref{eq:434} is actually \emph{exactly} solved by $u_h$?''

Second, the previous analysis in \emph{infinite} and \emph{periodic domain} has been further extended to \emph{finite domain} with \emph{fixed ends/zero Dirichlet} boundary conditions. Even here we have been able to exploit the Fourier analysis toolbox, and we have provided complete characterisation of the equivalence between~\eqref{eq:434} and~\eqref{eq:435} in the case of nearest-neighbour interactions. For the multiple-neighbour interactions we have been only able to characterise eigenvalues of a class of discrete operators representing higher order finite differences type discretisations of the second derivative operator. (We have not been able to construct continuous counterparts of these operators.) We have also briefly discussed the application of Fourier analysis toolbox beyond the analysis of the simple second derivative $\ddd{}{x}$ operator. Namely, we have focused on the discrete sine transform based discretisation of regular Sturm--Liouville operators, which is typically discouraged practice especially in the numerical mathematics community. For example, \cite[Chapter 3, page 47]{shen.j.tang.t.ea:spectral} claim:
\begin{quote}
The Fourier spectral method is only appropriate for problems with periodic
boundary conditions. If a Fourier method is applied to a non-periodic problem,
it inevitably induces the so-called Gibbs phenomenon, and reduces the global
convergence rate to $\bigo{N^{-1}}$ (cf. \cite{gottlieb.d.orszag.sa:numerical}%  Gottlieb and Orszag (1977)
). Consequently,
one should not apply a Fourier method to problems with non-periodic boundary
conditions. Instead, one should use orthogonal polynomials which are eigenfunctions of some singular Sturm--Liouville problems. The commonly used orthogonal
polynomials include the Legendre, Chebyshev, Hermite and Laguerre polynomials.
\end{quote}
While it is true that Fourier spectral method based of the discrete sine transform does not---in the non-periodic setting---provide a particularly appealing \emph{function approximation} properties, the discrete sine transform based discretisation has another extremely convenient property. The discrete sine transform based discretisation is superior in terms of \emph{eigenvalue approximation} properties of the simple second derivative operator $\ddd{}{x}$, and our numerical experiment indicates that this property is preserved even for complex regular Sturm--Liouville type differential operators. Based on this observation, we have conjectured that the discrete sine transform is an ideal candidate for the \emph{dispersion relation preserving discretisation} of~\eqref{eq:434} where $\mathcal{L}_h$ is a regular Sturm--Liouville operator. This conjecture has to be further investigated.

Finally, we note that in our analysis we have worked with the continuous time variable, which is the setting interesting from the perspective of correspondence between lattice and continuous models in physics. From the perspective of numerical analysis it would be however interesting to discretise the time variable as well. So far some analysis of this type exists only for conventional time-stepping schemes, see, for example, \cite{trefethen.ln:group}, but it might be worthwhile to analyse complete space-time discretisation based on the \emph{Fourier transform both in the spatial and temporal variable}. This might lead to dispersion relation preserving schemes even on the fully discrete level.

%%% Local Variables:
%%% mode: LaTeX
%%% TeX-master: "../discrete-versus-continuous-paper"
%%% End:

\newpage
\appendix

\section{Implementation}
\label{sec:implementation-2}

As we have already noted the particular arrangement of grid values collections $\left\{ u_{h, j}\right\}_{j=1}^N$ does matter if we want to use standard computer codes dealing with the discrete Fourier transform and the discrete sine transform. Below we discuss how to implement the key manipulations using standard computer codes.

\subsection{Periodic lattice}
\label{sec:periodic-lattice-1}
We start with the periodic lattice, that is with the \emph{discrete Fourier transform} toolbox.

\subsubsection{Discrete Fourier transform}
\label{sec:discr-four-transf-5}
 \texttt{Matlab} implementation of discrete Fourier transform \texttt{fft} transforms vector $\vec{X}$ of length~$N$ to vector $\vec{Y}$ of length~$N$ according to the formula
\begin{equation}
  \label{eq:466}
  Y(k) = \sum_{j=1}^N X(j) \exponential{- \frac{2 \pi \iunit}{n} \left(k-1\right)\left(j-1\right)},
\end{equation}
while the inverse transformation \texttt{ifft} is defined as
\begin{equation}
  \label{eq:467}
  X(j) = \frac{1}{N}\sum_{k=1}^N Y(k) \exponential{\frac{2 \pi \iunit}{n} \left(k-1\right)\left(j-1\right)},
\end{equation}
where the symbols $Y(k)$ and $X(j)$ denote the $k$-th component and $j$-th component of the corresponding vectors. Components of vectors are in \texttt{Matlab} numbered from one.

\texttt{Wolfram Mathematica} implementation of discrete Fourier transform transforms a list $\vec{u}$ of length $N$ to the list $\vec{v}$ of length~$N$ via the formula
\begin{equation}
  \label{eq:468}
  v_s
  =
  \frac{1}{N^{\frac{1-a}{2}}}
  \sum_{r=1}^N
  u_r \exponential{\frac{2\pi \iunit}{n} b \left(r-1\right)\left(s-1\right)}
  ,
\end{equation}
where $a$ and $b$ are parameters with default values $a = 0$, $b = 1$. The parameters \texttt{\{a, b\}} are called \texttt{FourierParameters}, while the corresponding function is simply called \texttt{Fourier}. The inverse discrete Fourier transform is defined as
\begin{equation}
  \label{eq:469}
  u_s
  =
  \frac{1}{N^{\frac{1+a}{2}}}
  \sum_{r=1}^N
  v_r \exponential{-\frac{2\pi \iunit}{n} b \left(r-1\right)\left(s-1\right)}
  ,
\end{equation}
and the corresponding function is called \texttt{InverseFourier}. If we however set the option \texttt{FourierParameters} as
\begin{equation}
  \label{eq:470}
  \mathtt{FourierParameters -> \{1, -1\}},
\end{equation}
then we get the same definition as in \texttt{Matlab}. Components of lists are in \texttt{Wolfram Mathematica} numbered from one. We thus have the following ``equality'' between \texttt{Matlab} and \texttt{Wolfram Mathematica} implementation of the discrete Fourier transform, 
\begin{equation}
  \label{eq:471}
  \mathtt{fft(X)} = \mathtt{Fourier[X, FourierParameters -> \{1, -1\}]}.
\end{equation}
Finally, we note that \texttt{Matlab} and \texttt{Wolfram Mathematica} can generate matrices representing the application of discrete Fourier transform, check \texttt{FourierMatrix} function in \texttt{Wolfram Mathematica}.

Formula for the Fourier transform~\eqref{eq:466} can be manipulated as follows
\begin{multline}
  \label{eq:472}
  Y(k)
  =
  \sum_{j=1}^N X(j) \exponential{- \frac{2 \pi \iunit}{n} \left(k-1\right)\left(j-1\right)}
  =
  \sum_{j=1}^N X(j) \exponential{- \iunit \left(k-1\right) x_{h, j-1}}
  =
  \sum_{j=1}^N \tensor{\left( \vec{u}_h \right)}{_{j-1}} \exponential{- \iunit \left(k-1\right) x_{h, j-1}}
  =
  \tensor{\left( \vec{u}_h \right)}{_{N}}
  +
  \sum_{j=2}^N \tensor{\left( \vec{u}_h \right)}{_{j-1}} \exponential{- \iunit \left(k-1\right) x_{h, j-1}}
  \\
  =
  \tensor{\left( \vec{u}_h \right)}{_{N}}
  +
  \sum_{l=1}^{N-1} \tensor{\left( \vec{u}_h \right)}{_l} \exponential{- \iunit \left(k-1\right) x_{h, l}}
  =
  \sum_{l=1}^{N} \tensor{\left( \vec{u}_h \right)}{_l} \exponential{- \iunit \left(k-1\right) x_{h, l}}
  =
  \frac{1}{h}
  \tensor{\left( \FourierTransformDiscrete{\vec{u}_h} \right)}{_{k-1}}
  ,
\end{multline}
where we have exploited the definition of grid points $x_{h, j} = hj = \frac{2 \pi}{N} j$, the periodicity $u_{h, 0} = u_{h, N}$, and where we have arranged the grid values $\{u_{h, j} (t)\}_{j=1}^{N}$ into the column vector $\vec{X}$ such that $ X(j) =_{\bydefinition} \tensor{\left( \vec{u}_h \right)}{_{j-1}}$, that is
\begin{equation}
  \label{eq:473}
  \begin{bmatrix}
    X_1 \\
    X_2 \\
    X_3 \\
    \vdots \\
    X_{N-1} \\
    X_N \\
  \end{bmatrix}
  =_{\bydefinition}
  \begin{bmatrix}
    u_{h, N} \\
    u_{h, 1} \\
    u_{h, 2} \\
    \vdots \\
    u_{h, N-2} \\
    u_{h, N-1}
  \end{bmatrix}
  .
\end{equation}
Note that the grid values vector on the right-hand side of~\eqref{eq:439} can be, in virtue of the periodicity, also rewritten as
\begin{equation}
  \label{eq:474}
    \begin{bmatrix}
    u_{h, 0} \\
    u_{h, 1} \\
    u_{h, 2} \\
    \vdots \\
    u_{h, N-2} \\
    u_{h, N-1}
    \end{bmatrix}
    .
\end{equation}

Consequently, if we arrange the grid values $\vec{u}_h$ to a vector in a lexicographical order\footnote{Recall that $u_{h, N} = u_{h, 0}$ hence the order is indeed lexicographical.}, that is if we set 
\begin{equation}
  \label{eq:475}
  \vec{u}_h
  =
  \begin{bmatrix}
    u_{h, N} \\
    u_{h, 1} \\
    u_{h, 2} \\
    u_{h, 3} \\
    \vdots \\
    u_{h, N-1}
  \end{bmatrix}
  ,
\end{equation}
then we can find the elements of the discrete Fourier transform~\eqref{eq:188},
\begin{equation}
  \label{eq:476}
      \tensor{\left( \widehat{\vec{u}_h} \right)}{_l}
       =
       h
       \sum_{j= 1}^{N}
       \tensor{\left(\vec{u}_h\right)}{_j}
       \exponential{- \frac{2 \pi}{N} \iunit l j}
       ,
\end{equation}
by the application of \texttt{fft} to the vector~\eqref{eq:475}. The application of $\texttt{fft}$ to the vector~\eqref{eq:475} yields a vector wherein the elements of $\widehat{\vec{u}_h}$ are arranged as
\begin{equation}
  \label{eq:477}
  \begin{bmatrix}
    \tensor{\left(\widehat{\vec{u}_h}\right)}{_0} \\
    \tensor{\left(\widehat{\vec{u}_h}\right)}{_1} \\
    \tensor{\left(\widehat{\vec{u}_h}\right)}{_2} \\
    \vdots \\
    \tensor{\left(\widehat{\vec{u}_h}\right)}{_{\frac{N}{2} -1}} \\
    \tensor{\left(\widehat{\vec{u}_h}\right)}{_{\frac{N}{2}}}  \\
    \tensor{\left(\widehat{\vec{u}_h}\right)}{_{-\frac{N}{2} + 1}}  \\
    \tensor{\left(\widehat{\vec{u}_h}\right)}{_{-\frac{N}{2} + 2}}  \\
    \vdots \\
    \tensor{\left(\widehat{\vec{u}_h}\right)}{_{-1}}
  \end{bmatrix}
  =
  h
  \,
  \mathtt{fft}
  \left(
    \begin{bmatrix}
      u_{h, N} \\
      u_{h, 1} \\
      u_{h, 2} \\
      \vdots \\
      u_{h, \frac{N}{2} -1} \\
      u_{h, \frac{N}{2} } \\
      u_{h, \frac{N}{2} + 1} \\
      u_{h, \frac{N}{2} + 2} \\
      \vdots \\
      u_{h, N-1}
    \end{bmatrix}
  \right)
  .
\end{equation}
Similarly, for the inverse discrete Fourier transform we have
\begin{equation}
  \label{eq:478}
     \begin{bmatrix}
      u_{h, N} \\
      u_{h, 1} \\
      u_{h, 2} \\
      \vdots \\
      u_{h, \frac{N}{2} -1} \\
      u_{h, \frac{N}{2} } \\
      u_{h, \frac{N}{2} + 1} \\
      u_{h, \frac{N}{2} + 2} \\
      \vdots \\
      u_{h, N-1}
     \end{bmatrix}
     =
     \frac{N}{2 \pi}
     \;
     \mathtt{ifft}
     \left(
       \begin{bmatrix}
         \tensor{\left(\widehat{\vec{u}_h}\right)}{_0} \\
         \tensor{\left(\widehat{\vec{u}_h}\right)}{_1} \\
       \tensor{\left(\widehat{\vec{u}_h}\right)}{_2} \\
         \vdots \\
         \tensor{\left(\widehat{\vec{u}_h}\right)}{_{\frac{N}{2} -1}} \\
         \tensor{\left(\widehat{\vec{u}_h}\right)}{_{\frac{N}{2}}}  \\
         \tensor{\left(\widehat{\vec{u}_h}\right)}{_{-\frac{N}{2} + 1}}  \\
         \tensor{\left(\widehat{\vec{u}_h}\right)}{_{-\frac{N}{2} + 2}}  \\
         \vdots \\
         \tensor{\left(\widehat{\vec{u}_h}\right)}{_{-1}}
       \end{bmatrix}
     \right)
     .
   \end{equation}

   \subsubsection{Differentiation}
   \label{sec:differentiation}
   The formula~\eqref{eq:244} for derivative can be implemented as
\begin{equation}
  \label{eq:479}
  \begin{bmatrix}
    \tensor{\left( \dd{^nu_h}{x^n} \right)}{_N} \\
    \tensor{\left( \dd{^nu_h}{x^n} \right)}{_1} \\
    \tensor{\left( \dd{^nu_h}{x^n} \right)}{_2} \\
    \vdots \\
    \tensor{\left( \dd{^nu_h}{x^n} \right)}{_{\frac{N}{2} -1}} \\
    \tensor{\left( \dd{^nu_h}{x^n} \right)}{_{\frac{N}{2}}} \\
    \tensor{\left( \dd{^nu_h}{x^n} \right)}{_{\frac{N}{2} + 1}} \\
    \tensor{\left( \dd{^nu_h}{x^n} \right)}{_{\frac{N}{2} + 2}} \\
    \vdots \\
    \tensor{\left( \dd{^nu_h}{x^n} \right)}{_{N-1}}
  \end{bmatrix}
  =
  \begin{cases}
    \mathtt{ifft}
    \left(
      \tensorschur{
        \begin{bmatrix}
          0 \\
          \left( \iunit 1 \right)^n \\
          \left( \iunit 2 \right)^n \\
          \vdots \\
          \left( \iunit \left( \frac{N}{2} - 1 \right)  \right)^n \\
          \left( \iunit \frac{N}{2} \right)^n \\
          \left( \iunit \left( - \frac{N}{2} + 1\right) \right)^n \\
          \left( \iunit \left( - \frac{N}{2} + 2\right) \right)^n \\
          \vdots \\
          \left( - \iunit  \right)^n
        \end{bmatrix}
      }
      {
        \mathtt{fft}
        \left(
          \begin{bmatrix}
            u_{h, N} \\
            u_{h, 1} \\
            u_{h, 2} \\
            \vdots \\
            u_{h, \frac{N}{2} -1} \\
            u_{h, \frac{N}{2} } \\
            u_{h, \frac{N}{2} + 1} \\
            u_{h, \frac{N}{2} + 2} \\
            \vdots \\
            u_{h, N-1}
          \end{bmatrix}
        \right)
      }
    \right), & $n$ \text{ is even}, \\
    & \\
        \mathtt{ifft}
    \left(
      \tensorschur{
        \begin{bmatrix}
          0 \\
          \left( \iunit 1 \right)^n \\
          \left( \iunit 2 \right)^n \\
          \vdots \\
          \left( \iunit \left( \frac{N}{2} - 1 \right)  \right)^n \\
          0 \\
          \left( \iunit \left( - \frac{N}{2} + 1\right) \right)^n \\
          \left( \iunit \left( - \frac{N}{2} + 2\right) \right)^n \\
          \vdots \\
          \left( - \iunit  \right)^n
        \end{bmatrix}
      }
      {
        \mathtt{fft}
        \left(
          \begin{bmatrix}
            u_{h, N} \\
            u_{h, 1} \\
            u_{h, 2} \\
            \vdots \\
            u_{h, \frac{N}{2} -1} \\
            u_{h, \frac{N}{2} } \\
            u_{h, \frac{N}{2} + 1} \\
            u_{h, \frac{N}{2} + 2} \\
            \vdots \\
            u_{h, N-1}
          \end{bmatrix}
        \right)
      }
    \right),
    &
       $n$ \text{ is odd}.
  \end{cases}
    \end{equation}
Recall that for the odd order differentiation we set $\tensor{\left( \dd{^nu_h}{x^n} \right)}{_{\frac{N}{2}}} =_{\bydefinition} 0$, see also \cite[Chapter 3]{trefethen.ln:spectral}.

\subsubsection{Eigenvalue problem for circulant matrices}
\label{sec:eigenv-probl-circ}

The eigenvalues in the eigenvalue problem~\eqref{eq:218} for a circulant matrix are given as discrete Fourier transform of the first column. In particular, if we have the matrix
\begin{equation}
  \label{eq:480}
  \tensorq{A}_h
  =_{\bydefinition}
      \begin{bmatrix}
      a_N & a_{N-1} & a_{N-2} & \dots & a_1 \\
      a_1 & a_N & a_{N-1} & \dots & a_2 \\
      a_2 & a_1 & a_N & \dots & a_3 \\
      \vdots & \vdots & \vdots& \ddots & \vdots \\
      a_{N-1} & a_{N-2} & a_{N-3} & \dots & a_N
      \end{bmatrix}
      ,
    \end{equation}
    then its eigenvalues are given by the formula
    \begin{equation}
      \label{eq:481}
    \vec{\lambda}_{\tensorq{A}_h}
    =
    \mathtt{fft}
    \left(
      \begin{bmatrix}
        a_{N} \\
        a_1 \\
        a_2 \\
        \vdots \\
        a_{N-1} \\
      \end{bmatrix}
    \right)
    .
  \end{equation}
  Note that due to the particular arrangement of vector elements we are literally taking $\mathtt{fft}$ of the first column. Compare with the definition of the generating vector in~\eqref{eq:216}.

\subsection{Fixed ends/zero Dirichlet boundary conditions lattice}
\label{sec:fixed-endsz-dirichl}

\subsubsection{Discrete sine transform}
\label{sec:discr-sine-transf-4}
Due to particular arrangement of grid values in a vector, see~\eqref{eq:475}, we must shift the odd extension~\eqref{eq:292} accordingly. (This also holds for the extension/reduction matrices.) Furthermore, the discrete sine transform \texttt{dst} in \texttt{Matlab} is defined as
\begin{equation}
  \label{eq:482}
  Y(k)
  =
  \sum_{j= 1}^{M}
  X(j)
  \sin \left(  \frac{\pi}{M+1} k j \right),
\end{equation}
where $\vec{X}$ and $\vec{Y}$ are vectors of length $M$. (Note that in \texttt{Matlab} the \texttt{dst} function is labeled as ``not recommended''.) \texttt{Wolfram Mathematica} implements the discrete sine transform (DST-I) with a different normalisation, namely
\begin{equation}
  \label{eq:462}
  v_k
  =
  \sqrt{\frac{2}{M+1}}
  \sum_{j= 1}^{M}
  u_j
  \sin \left(  \frac{\pi}{M+1} k j \right),
\end{equation}
where $\vec{u}$ and $\vec{v}$ are lists of length $M$. The function name is \texttt{FourierDST}[$\cdot$, \texttt{I}]. Taking into account the normalisation factor in Wolfram Mathematica we thus have the following equality
\begin{equation}
  \label{eq:483}
  \begin{bmatrix}
    0
    \\
    \sqrt{\frac{M+1}{2}}
    \mathtt{FourierDST}
    [
    \begin{bmatrix}
      u_{h, 1} \\
        u_{h, 2} \\
        u_{h, 3} \\
        \vdots \\
        u_{h, M} \\
    \end{bmatrix},
    \mathtt{I}
    ]
    \\
    0
    \\
    -
    \mathtt{Reverse}
    [
    \sqrt{\frac{M+1}{2}}
    \mathtt{FourierDST}
    [
    \begin{bmatrix}
      u_{h, 1} \\
        u_{h, 2} \\
        u_{h, 3} \\
        \vdots \\
        u_{h, M} \\
    \end{bmatrix}
    ],
    \mathtt{I}
    ]
  \end{bmatrix}
  =
  -
  \frac{1}{2}
  \mathtt{Im}
  [
    \mathtt{Fourier}
    [
      \begin{bmatrix}
        0 \\
        u_{h, 1} \\
        u_{h, 2} \\
        u_{h, 3} \\
        \vdots \\
        u_{h, M} \\
        0 \\
        -u_{h, M} \\
        \vdots \\
        -u_{h, 3} \\
        -u_{h, 2} \\
        -u_{h, 1} \\
      \end{bmatrix}
      ,
      \mathtt{FourierParameters->\{1, -1\}}
    ]
  ],
\end{equation}
see also \eqref{eq:477} for a particular arrangement of Fourier space components in a column vector/list. In \texttt{Matlab} we however have
\begin{equation}
  \label{eq:464}
  \begin{bmatrix}
    0
    \\
    \mathtt{dst}
    (
    \begin{bmatrix}
      u_{h, 1} \\
        u_{h, 2} \\
        u_{h, 3} \\
        \vdots \\
        u_{h, M} \\
    \end{bmatrix}
    )
    \\
    0
    \\
    -
    \mathtt{flip}
    (
    \mathtt{dst}
    (
    \begin{bmatrix}
      u_{h, 1} \\
        u_{h, 2} \\
        u_{h, 3} \\
        \vdots \\
        u_{h, M} \\
    \end{bmatrix}
    )
    )
  \end{bmatrix}
  =
  -
  \frac{1}{2}
  \mathtt{imag}
  (
    \mathtt{fft}
    (
      \begin{bmatrix}
        0 \\
        u_{h, 1} \\
        u_{h, 2} \\
        u_{h, 3} \\
        \vdots \\
        u_{h, M} \\
        0 \\
        -u_{h, M} \\
        \vdots \\
        -u_{h, 3} \\
        -u_{h, 2} \\
        -u_{h, 1} \\
      \end{bmatrix}
    )
  ),
\end{equation}
Equalities~\eqref{eq:483} are \eqref{eq:464} are particular implementations of~\eqref{eq:discrete-sine-transform}.

\subsubsection{Differentiation}
\label{sec:differentiation-1}
As the inverse to the discrete sine transform DST-I is the discrete sine transform DST-I, the differentiation formula~\eqref{eq:446} reads
\begin{equation}
  \label{eq:465}
  \begin{bmatrix}
    \left. \dd{^n}{x^n} u_h \right|_{x = x_{h, 1}} \\
    \left. \dd{^n}{x^n} u_h \right|_{x = x_{h, 2}} \\
    \left. \dd{^n}{x^n} u_h \right|_{x = x_{h, 3}} \\
    \vdots \\
    \left. \dd{^n}{x^n} u_h \right|_{x = x_{h, M}}
  \end{bmatrix}
  =
  \mathtt{FourierDST}
  [
  \tensorschur{
    \begin{bmatrix}
      \left( \iunit \right)^n \\
      \left( 2 \iunit \right)^n \\
      \left( 3 \iunit \right)^n \\
      \vdots \\
      \left( M \iunit \right)^n
    \end{bmatrix}
  }
  {
    \mathtt{FourierDST}
    [
    \begin{bmatrix}
      u_{h, 1} \\
      u_{h, 2} \\
      u_{h, 3} \\
      \vdots \\
      u_{h, M} \\
    \end{bmatrix}
    ,
    \mathtt{I}
    ]
  },
  \mathtt{I}
  ].
\end{equation}

\subsubsection{Eigenvalues computation}
\label{sec:eigenv-comp}

The solution to the eigenvalues problem
\begin{subequations}
  \label{eq:456}
     \begin{align}
       \label{eq:457}
       - \ddd{u}{x} + q(x) u &= \lambda u, \\
       \label{eq:458}
       \left. u \right|_{x =0, \pi} & =0,
     \end{align}
   \end{subequations}
in the domain $(0, \pi)$ can be found using the code shown in Listing~\ref{lst:matlab-eigs}. The function \lstinline{dst_eig(mm, q)} takes a given function handler \lstinline{q} and matrix size \lstinline{mm} and it returns \lstinline{mm} eigenvalues of the operator~\eqref{eq:456} discretised using the Fourier transform technique discussed in Section~\ref{sec:numerical-experiment}.
   
\lstinputlisting[float, language=Matlab, morekeywords={dst, idst, arguments}, caption={\texttt{Matlab} code for DST based solution of generic eigenvalue problem~\eqref{eq:456} with a given function~$q$; function~\lstinline{dst_eig}, for source code see~\href{matlab/dst\_eig.m}{matlab/dst\_eig.m}.}, label={lst:matlab-eigs}]{matlab/dst_eig.m}

\lstinputlisting[float, language=Matlab, morekeywords={dst, idst, arguments}, caption={\texttt{Matlab} code for DST based solution of eigenvalue problem~\eqref{eq:429}; sample use of function~\lstinline{dst_eig}, for source code see~\href{matlab/dst\_eig\_run.m.m}{matlab/dst\_eig\_run.m}.}, label={lst:matlab-eigs-run}, linerange={4-}]{matlab/dst_eig_run.m}

%   \todo[inline]{Do the same thing in \texttt{Matlab}.}

%%% Local Variables:
%%% mode: latex
%%% TeX-master: "../discrete-versus-continuous-paper"
%%% End:

\bibliographystyle{chicago}
\bibliography{vit-prusa}

%\addtocontents{toc}{\protect\end{multicols}} % workaround for table of contents in two columns in amsart documentclass
\end{document}